\newtheorem{theorem}{Theorem}[section]
\newtheorem{proposition}[theorem]{Proposition}
\newtheorem{observation}[theorem]{Observation}
\newtheorem{lemma}[theorem]{Lemma}
\newtheorem{corollary}[theorem]{Corollary}
\newtheorem{definition}[theorem]{Definition}
\newcommand{\bE}{\ensuremath{\mathbf{E}}\xspace}
\newcommand{\bP}{\ensuremath{\mathbf{P}}\xspace}
\begin{document}
\title{Improved bounds and algorithms for graph cuts and network reliability$^{1}$}
\author{
{\sc David G.~Harris}$^{2}$
\and
{\sc Aravind Srinivasan}$^{3}$
}

\setcounter{footnote}{0}

\addtocounter{footnote}{1}
\footnotetext{
{A preliminary version of this work appeared in the \emph{Proc.\ ACM-SIAM Symposium on Discrete Algorithms} (SODA), 2014}.}
\addtocounter{footnote}{1}
\footnotetext{Department of Computer Science, University of Maryland, 
College Park, MD 20742. 
Research supported in part by NSF Awards CNS-1010789 and CCF-1422569.
Email: \texttt{davidgharris29@gmail.com}.}
\addtocounter{footnote}{1}
\footnotetext{Department of Computer Science and
Institute for Advanced Computer Studies, University of Maryland, 
College Park, MD 20742. 
Research supported in part by NSF Awards CNS-1010789 and CCF-1422569, and a research award by Adobe, Inc.
Email: \texttt{srin@cs.umd.edu}.}

\date{}
\maketitle

\sloppy
\pagestyle{plain}

\begin{abstract}
Karger (\emph{SIAM Journal on Computing}, 1999) developed the first fully-polynomial approximation scheme to estimate the probability that a graph $G$ becomes disconnected, given that its edges are removed independently with probability $p$. This algorithm runs in $n^{5+o(1)} \epsilon^{-3}$ time to obtain an estimate within relative error $\epsilon$.

We improve this run-time through algorithmic and graph-theoretic advances. First, there is a certain key sub-problem encountered by Karger, for which a generic estimation procedure is employed; we show that this has a special structure for which a much more efficient algorithm can be used. Second, we show better bounds on the number of edge cuts which are likely to fail. Here, Karger's analysis uses a variety of bounds for various graph parameters; we show that these bounds cannot be simultaneously tight. We describe a new graph parameter, which simultaneously influences all the bounds used by Karger, and obtain much tighter estimates of the cut structure of $G$. These techniques allow us to improve the runtime to $n^{3+o(1)} \epsilon^{-2}$; our results also rigorously prove certain experimental observations of Karger \& Tai (\emph{Proc.\ ACM-SIAM Symposium on Discrete Algorithms}, 1997). Our rigorous proofs are motivated by certain non-rigorous differential-equation approximations which, however, provably track the worst-case trajectories of the relevant parameters. 

A key driver of Karger's approach (and other cut-related results) is a bound on the number of small cuts: we improve these estimates when the min-cut size is ``small'' and odd, augmenting, in part, a result of Bixby (\emph{Bulletin of the AMS}, 1974). 

 \ \\
KEYWORDS: Graph reliability, graph cuts, contraction algorithm

 \ \\
AMS Classification: 05C31, 05C40, 05C85.  
\end{abstract}

\section{Introduction}
Let $G$ be a connected undirected multi-graph with vertex set $V$; as usual, we let $|V| = n$. Unless stated
otherwise, the graphs we deal with will be \emph{multi-graphs with no self-loops}, presented in \emph{adjacency-matrix format}. We define $R(p)$, the reliability polynomial of $G$, to be the probability that the graph remains connected when edges are removed independently with probability $p$. One can verify that this is a polynomial in $p$. This polynomial has various physical applications, for example determining the reliability of a computer network or power grid.
Although there is no currently known algorithm for estimating $R(p)$, the complementary probability $U(p) = 1 - R(p)$, which we call the \emph{unreliability} of the graph, can be estimated in polynomial time.  In a breakthrough paper (\cite{karger}, see also \cite{karger2}), Karger developed the first fully-polynomial randomized approximation scheme (FPRAS) to estimate $U(p)$ up to relative error $\epsilon$ in time polynomial in $n$ and $1/\epsilon$; even a constant-factor approximation was not known prior to his work.  

We let $\exp(x)$ denote $e^x$; all logarithms are to \textbf{base} $\mathbf{e}$ unless indicated otherwise.
With an appropriate choice of parameters, Karger's algorithm can run in time $\exp \bigl( 3 \log n + \sqrt{\log n (4 \log n + \sqrt{2} \log (1/\epsilon))} + o(\log n) \bigr)$. In particular, if $\epsilon = \Theta(1)$, this is $n^{5 + o(1)}$. (We assume that $\epsilon \leq O(1)$, say $\epsilon \leq 1/2$, throughout.)
Karger's algorithm is based on an algorithm for finding graphs cuts which have close to minimal weight. This algorithm is the \emph{Contraction Algorithm}, first introduced by \cite{kar93}. This algorithm has emerged as an important building block for many graph algorithms. Just as importantly, it can be viewed as a stochastic process which provides structural results about the graph cuts, for example, a bound on the number of small cuts.

In this paper, we provide a more detailed analysis of the Contraction Algorithm and its consequences. This enables us to show a variety of improved bounds and algorithms for graph problems. The focus of this paper is on improving the algorithm for estimating $U(p)$.

The following definition will be useful: 
\begin{definition}
The minimum cut-size in $G$, also known as the edge-connectivity of $G$, will be denoted by $c$. Given $\alpha \geq 1$, an ``$\alpha$-cut'' in $G$ is a cut with at most $\alpha c$ edges. 
\end{definition}

\smallskip \noindent \textbf{Our results.}

\smallskip
The main focus of our analysis is a much faster algorithm for estimate $U(p)$:
\begin{theorem}
There is an algorithm to estimate $U(p)$ in time $n^{3+o(1)} \epsilon^{-2}$.
\end{theorem}

Our analysis of the Contraction Algorithm will also allow purely structural bounds on the cut structure of certain graphs:
\begin{theorem}
Suppose $c$ is odd and $\alpha \geq 1$. Then there are at most $O(n^{2 \alpha \frac{c}{c+1}})$ $\alpha$-cuts. This exponent is optimal, in the sense that there are counter-examples to bounds of the form $O(n^{\alpha x})$ for any $x < \frac{2 c}{c+1}$.
\end{theorem}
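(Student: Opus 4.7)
The approach is to refine Karger's Contraction Algorithm using a density dichotomy specific to odd $c$. Recall that Karger's standard argument bounds the one-step probability of destroying a fixed $\alpha$-cut $C$ at step $i$ by $|C|/|E(G_i)| \le 2\alpha/n_i$, via the trivial density bound $|E(G_i)| \ge c n_i /2$; this gives a survival probability of order $n^{-2\alpha}$ and hence $O(n^{2\alpha})$ $\alpha$-cuts. To sharpen the exponent to $\tfrac{2 \alpha c}{c+1}$ when $c$ is odd, I would show that at every contraction step, either (i) $|E(G_i)| \ge (c+1) n_i/2$, so that the one-step failure probability drops to $\tfrac{2 \alpha c}{(c+1) n_i}$, or (ii) $G_i$ belongs to a finite list of ``exceptional'' $c$-regular multigraphs on $O(c)$ vertices (notably $K_{c+1}$ and $K_{c,c}$), in which case I would terminate contraction early and enumerate the $\alpha$-cuts of $G_i$ directly, absorbing the count into an $n^{o(1)}$ factor.

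The density dichotomy is the technical crux, and it is where Bixby's structural theorem for graphs of odd edge-connectivity is augmented. I would apply Nash-Williams/Tutte tree packing to extract $(c-1)/2$ edge-disjoint spanning trees from $G_i$, leaving a residual subgraph $H_i$ with $|E(G_i)| - (c-1)(n_i - 1)/2$ edges. Exploiting the parity slack available when $c$ is odd (so that cuts of size $c$ and $c+1$ have opposite parities, and a near-$c$-regular structure can often be ruled out), I would argue that outside the exceptional list $H_i$ must support an additional near-spanning connected subgraph of at least $n_i/2 + \Omega(1)$ further edges, thereby forcing $|E(G_i)| \ge (c+1) n_i/2$ and yielding the improved one-step survival estimate.

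For the matching lower bound, I would construct, for each $n$, the graph $G_n$ obtained by arranging $N = n/c$ copies of $K_{c+1}$ in a cyclic chain, glued by identifying a single shared vertex between consecutive cliques. A direct computation gives $|V(G_n)| = n$, edge-connectivity exactly $c$ (realised at each of the $N(c-1)$ non-shared vertices, all of degree $c$), and density $(c+1)/2$ per vertex. For growing $\alpha$, a counting of vertex subsets drawn from contiguous arcs of cliques together with small internal splits at the arc boundaries yields $\Omega(n^{2 \alpha c/(c+1) - o(1)})$ $\alpha$-cuts, which rules out bounds of the form $O(n^{\alpha x})$ for any $x < 2c/(c+1)$.

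The principal obstacle throughout is the density dichotomy in the upper bound: since $c$-regular graphs with odd $c$ realise density exactly $c/2$, the inequality $|E(G_i)| \ge (c+1) n_i/2$ cannot hold uniformly, and the hard work goes into the careful structural classification of the small exceptional graphs on which the density bound fails, together with their direct enumeration as base cases so as not to damage the overall exponent.
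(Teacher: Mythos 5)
Your upper bound rests on a density dichotomy that is false. For every odd $c \geq 3$ there are $c$-regular, $c$-edge-connected graphs on arbitrarily many vertices (for $c=3$, any $3$-edge-connected cubic graph such as a prism; in general Harary/circulant-type constructions give $c$-regular graphs of edge-connectivity $c$ whenever $cn$ is even). Such a graph has exactly $cn/2$ edges, so $|E(G_i)| \geq (c+1)n_i/2$ fails, yet the graph is neither $K_{c+1}$ nor $K_{c,c}$ nor anything on $O(c)$ vertices; no finite exceptional list can absorb these cases, and no tree-packing or parity argument can establish a statement with infinitely many counterexamples. What odd $c$ actually buys (Bixby, Chandran--Shankar) is that the number of \emph{min-cuts} is $O(n)$. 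Since only a vertex whose star is a min-cut can have degree $c$, one gets $M_i \geq K_i c/2 + (i-K_i)(c+1)/2$, where $K_i$ is the number of min-cuts surviving in $G_i$, and the paper tracks the decay of $\mathbf{E}[K_i]$ through the contraction process (an induction combined with Jensen's inequality, guided by a differential-equation heuristic) to conclude $\mathbf{E}_{\mathcal{CA}(G)}[S_i] \leq \tfrac{2c}{c+1}\log(n/i) + O(1)$, and then, after ``factoring out'' the target cut, a selection probability of $\Omega\bigl(n^{-2\alpha c/(c+1)}\bigr)$ for any fixed $\alpha$-cut. The density improvement is an on-average, over-time phenomenon as min-cuts are destroyed; a pointwise dichotomy of the kind you propose cannot be repaired.

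The lower bound construction is also insufficient. In your cyclic chain of $K_{c+1}$'s glued at shared vertices, cutting ``through'' a clique means separating its two shared vertices inside $K_{c+1}$, which costs at least $c$ edges (the local edge-connectivity of $K_{c+1}$ is $c$), and every other way of splitting a clique also costs at least $c$. Hence a cut of weight $\alpha c$ can split at most $\alpha$ cliques, and the total number of $\alpha$-cuts of your graph is $O_c\bigl(n^{\alpha+o(1)}\bigr)$ --- far below $n^{2\alpha c/(c+1)}$ for $c\geq 3$, so it rules out no exponent $x$ with $1 < x < \tfrac{2c}{c+1}$. The paper's witness is a multigraph: a cycle on $n$ vertices with bundles of $(c+1)/2$ parallel edges between consecutive vertices, one bundle having only $(c-1)/2$ edges so that the minimum cut is exactly $c$. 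There each additional cut point costs only $(c+1)/2$, so a budget of $\alpha c$ buys $k = \tfrac{2\alpha c}{c+1}$ bundle removals and $\binom{n-1}{k} = \Omega\bigl((n/(2\alpha))^{2\alpha c/(c+1)}\bigr)$ distinct $\alpha$-cuts. Cheap ``half-cost'' cut points (parallel edge bundles) are essential here, and a simple-graph clique chain cannot simulate them.
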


In \cite{karger-stein}, an algorithm called the \emph{Recursive Contraction Algorithm} was developed for finding all $\alpha$-cuts.  We obtain a small improvement to this algorithm for the general problem of finding $\alpha$-cuts. Unlike the previous improvements, this only reduces the logarithmic terms:
\begin{theorem}
There is an algorithm that accepts as input a graph $G$ and a real number $\alpha \geq 1$. This algorithm succinctly enumerates, with high probability, all $\alpha$-cuts of $G$ in time $O(n^{2 \alpha} \log  n)$.
\end{theorem}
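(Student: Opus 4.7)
The plan is to build on the Recursive Contraction Algorithm (RCA) of \cite{karger-stein}, generalized from min-cuts to $\alpha$-cuts. In this generalization, given a graph on $n$ vertices, one contracts random edges (weighted by their multiplicity) until the graph has $\lceil n \cdot 2^{-1/(2\alpha)} \rceil$ vertices -- a reduction that preserves any fixed $\alpha$-cut with probability at least $1/2$ -- and then makes two independent recursive calls on the contracted graph. In the adjacency-matrix model each contraction phase costs $O(n^2)$, and the recurrence $T(n) = 2\, T(n \cdot 2^{-1/(2\alpha)}) + O(n^2)$ solves to $T(n) = O(n^{2\alpha})$ for $\alpha > 1$ (and to $O(n^2 \log n)$ for $\alpha = 1$, which already matches the target bound).

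The standard survival-probability recursion $q(n) \geq 1 - (1 - q(n \cdot 2^{-1/(2\alpha)})/2)^2$ yields $q(n) = \Omega(1/\log n)$, i.e., each fixed $\alpha$-cut is output with probability $\Omega(1/\log n)$ per invocation. Combined with the $O(n^{2\alpha})$ bound on the number of $\alpha$-cuts from Theorem~1.2, the naive strategy of $\Theta(\log^2 n)$ independent trials plus a union bound would succeed with high probability in time $O(n^{2\alpha} \log^2 n)$; the remaining task is to save one logarithmic factor.

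To shave this factor, my plan is to boost the per-trial success probability from $\Theta(1/\log n)$ to $\Omega(1)$ while keeping the per-trial runtime at $O(n^{2\alpha})$, so that only $O(\log n)$ trials are needed. The modification is to slightly inflate the effective branching factor at each recursion level -- for instance, by launching two recursive calls deterministically plus a third with some probability $\epsilon_n$, chosen so that the compounded inflation across the $\Theta(\log n)$ levels is an arbitrary positive constant -- and then apply a second-moment (Paley--Zygmund) argument to the random variable $X$ counting recursion leaves that preserve a given $\alpha$-cut. If the modification ensures $E[X] = \Omega(1)$ and $E[X^2] = O((E[X])^2)$, then $\Pr[X \geq 1] = \Omega(1)$ per trial, and $O(\log n)$ independent trials with a union bound over the $O(n^{2\alpha})$ cuts enumerate all $\alpha$-cuts with high probability in total time $O(n^{2\alpha} \log n)$. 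The succinct enumeration is then a bookkeeping matter: each leaf encodes a partition of $V$ via its root-to-leaf contraction sequence, so outputting the recursion tree with its contraction labels gives a representation of size proportional to the runtime.

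The main obstacle is the second-moment estimate behind the boosting step: correlations between leaves sharing deep recursion ancestors are exactly what produces the $\log n$ slowdown in the standard analysis. Controlling $E[X^2]$ in the modified recursion requires a careful accounting of leaf-pair ancestor depths, summing the joint survival probability weighted by the number of leaf-pairs with common ancestor at each depth, and verifying that the additional per-level branching contributions tame the dominant near-root-ancestor terms. Making this bound hold uniformly under the chosen inflation $\epsilon_n$, without pushing the runtime past $O(n^{2\alpha})$, is the delicate piece of the argument.
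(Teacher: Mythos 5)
Your set-up (the $\alpha$-adapted critical recursion with branching factor $2$ and shrinkage $2^{-1/(2\alpha)}$, per-run time $O(n^{2\alpha})$, per-run success $\Omega(1/\log n)$, hence $O(n^{2\alpha}\log^2 n)$ naively) is fine, but the step you rely on to shave the last $\log n$ does not work as parametrized. If the third child is launched with probability $\epsilon_n$ chosen so that the \emph{total} compounded inflation over the $\Theta(\log n)$ levels is a constant, then $\epsilon_n=\Theta(1/\log n)$ and the offspring mean per level is only $1+\Theta(1/\log n)$: the process is still essentially critical. Concretely, in the second-moment computation the pairs of surviving leaves whose lowest common ancestor is at depth $j$ contribute roughly $(1+\epsilon_n/2)^{-j}\,\bE[X]^2$, so
$$
\bE[X^2] \;\approx\; \bE[X]^2 \sum_{j=0}^{d}(1+\epsilon_n/2)^{-j} \;=\; \Theta\bigl(\min(d,\,1/\epsilon_n)\bigr)\,\bE[X]^2 \;=\;\Theta(\log n)\,\bE[X]^2 ,
$$
and Paley--Zygmund returns exactly the $\Omega(1/\log n)$ bound you started with; the near-root ancestor terms you flag as ``delicate'' are in fact fatal at this inflation level. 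To make the ratio $O(1)$ you would need the per-level inflation to be a constant (genuinely supercritical with constant margin), but then the expected number of leaves, and hence the per-run time, grows by a factor $n^{\Theta(\alpha)}$, destroying the $O(n^{2\alpha})$ budget. So the proposal as written only re-derives the $O(n^{2\alpha}\log^2 n)$ bound of Karger--Stein.

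The paper removes the extra logarithm by moving in the opposite direction: it makes the recursion \emph{subcritical} rather than supercritical. Its variant (RCA2) branches only twice but contracts all the way to $n\cdot 2^{-2/5}$ vertices per level, so a single run costs $O(n^{5/2})$ (the level costs form a geometric series, with no $\log n$) and finds a fixed $\alpha$-cut with probability about $n^{5/2-2\alpha}$ (expected surviving leaves $\ll 1$, so there is no critical-branching $1/\log n$ loss either). Running $O(n^{2\alpha-5/2}\log n)$ independent trials and union-bounding over the at most $n^{2\alpha}$ cuts (coupon collector) gives $O(n^{2\alpha}\log n)$ total, with the small-$\alpha$ regime ($\alpha<3/2$) handled separately by Karger's near-linear cut data structure. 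If you want to salvage your route, you would need a mechanism other than constant total inflation to decouple the leaves; rebalancing the time/success trade-off as the paper does is the simpler repair.
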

By contrast,  the algorithm of \cite{karger-stein} requires time $O(n^{2 \alpha} \log ^2 n)$.

We note that our algorithm focuses on estimating the parameter $U(p)$, but there are other connectivity measures for which a strategy of enumerating ``representative'' cuts may be used. \cite{karger} gives a similar type of algorithm for estimating multi-way connectivity and \cite{weinshall} discusses an algorithm for estimating parameters related to clustering. Obtaining an FPRAS -- or even a constant-factor approximation -- for $R(p)$ remains a very intriguing open problem. 

\subsection{Overview of Karger's algorithm, and our improvements}
Karger's algorithm for estimating $U(p)$ essentially consists of two separate algorithms, depending on the size of $U(p)$. When $U(p) > n^{-2-\delta}$, where $\delta$ is some constant, then Monte Carlo sampling provides an accurate estimate. As the samples are unbiased with relative variance $1/U(p)$, then after $n^{2+\delta} \epsilon^{-2}$ samples we estimate $U(p)$ accurately. Naively, it might appear to require $O(m)$ time per sample (one must process each edge). A clever sparsification strategy described in Karger \cite{karger} reduces this to $n^{1+o(1)}$ time per sample. We will not modify Karger's algorithm for Monte Carlo sampling.

When $U(p)$ is small, Monte Carlo sampling no longer can produce an accurate estimate in reasonable time. In the regime $U(p) < n^{-2-\delta}$, Karger develops an alternative algorithm. In this case, Karger shows that the event of graph failure is dominated by the event that  a \emph{small} cut of $G$ has failed. This is done by analyzing the number of cuts of small weight. In particular, there is some $\alpha^*$ such that $U(p)$ is closely approximated by the probability that an $\alpha^*$-cut has failed. Karger provides an upper bound on the critical value $\alpha^*$; for example, when $\epsilon = \Theta(1)$, we have $\alpha^* \leq 1 + 2/\delta = O(1)$. 

This significantly simplifies the problem, because instead of having to consider the exponentially large collection of all cuts of $G$, we can analyze only the polynomial-size collection of $\alpha^*$-cuts. Using an algorithm developed in \cite{karger-stein}, the \emph{Recursive Contraction Algorithm}, Karger's algorithm can catalogue all such $\alpha^*$-cuts in time $n^{2 \alpha^* + o(1)}$. We refer to this as the \emph{cut-enumeration} phase of Karger's algorithm. Note that there is a tradeoff between the Monte Carlo phase and cut-enumeration phase, depending on the size of $\delta$. 

Having catalogued all such small cuts, we need to piece them together to estimate the probability that one of them has failed. For this, Karger uses as a subroutine a statistical procedure developed by Karp, Luby, Madras \cite{lubydnf}. This procedure can examine any collection of partially overlapping clauses, to provide an unbiased estimate with relative error $O(1)$ that one such clause is satisfied. By running $\epsilon^{-2}$ iterations of this, we achieve the desired error bounds for $U(p)$. The running time is linear in the size of the collection of clauses. We refer to this as \emph{statistical sampling} phase of Karger's algorithm.

We will improve both the cut-enumeration phase and statistical sampling phase of Karger's algorithm. To improve the cut-enumeration phase, we will show a tighter bound on $\alpha^*$. In \cite{implementing}, a version of Karger's algorithm was programmed and tested on real graphs. This found experimentally that the average number of cuts needed to accurately estimate the graph failure probability was about $n^3$, not the worst case $n^{5}$ as predicted by \cite{karger}. In this paper, we show this fact rigorously. We will show that the number of small graph cuts in reliable graphs is far smaller than in general graphs.

The basic idea is to analyze the dynamics of the Contraction Algorithm, and to show that any small cut $C$ has a large probability of being selected by this algorithm. This analysis is quite difficult technically, because it requires analyzing the three-way interconnections between the number of $\alpha$-cuts of $G$, the effectiveness of the Contraction Algorithm, and the value of $U(p)$. The critical parameter that ties these three together is the \emph{expected number of failed graph cuts}. 

Our improvement to the statistical sampling phase is more straightforward. The algorithm of \cite{lubydnf} is able to handle very general collections of clauses, which can overlap in complicated ways. However, this is unnecessary for our purposes. We will show that the collection of cuts produced by the cut-enumeration phase has a simple distribution, both statistically and computationally. This allows us to use a faster and simpler statistical procedure to piece together the $\alpha^*$-cuts. As a result, the running time of each sample reduces to about $n^2$; in particular, we do not need to read the entire collection of $\alpha^*$-cuts for each sample. 
 
In total, we reduce the running time of Karger's algorithm to about $n^{3+o(1)} \epsilon^{-2}$. Obtaining faster run-times with Karger's recipe appears to be quite difficult; for if $c$ is the size of the min-cut, then in the regime where $p^c \geq n^{-2 + \Omega(1)}$, it is not clear how to stop only at ``small'' cuts, and one appears to need Monte-Carlo sampling -- which requires a runtime that matches ours. 

We note that another approach to estimating $U(p)$ has been discussed in \cite{rta}. This algorithm finds a pair of minimal cuts which are mostly disjoint, and hence there is a negligible probability that both cuts fail simultaneously. By continuing this process, it is found in \cite{rta} experimentally that a branching process can efficiently enumerate the probability that some small cut fails. While this algorithm is promising, the analysis of \cite{rta} is  fully experimental, and it appears that in the worst case this approach might require super-polynomial time.

Around the time of our finalizing the published version of this work, a newer algorithm by Karger was developed for estimating $U(p)$, which is not based on cut-enumeration \cite{karger3}. This algorithm has a run-time of $O(n^3 \text{polylog}(n)/\epsilon^2)$, which is strictly better than ours.

\subsection{Analyzing the trajectory of stochastic processes}
\label{stoch-proc-sec}
The Contraction Algorithm is at the heart of our algorithm, and our main contribution is giving an improved and tighter analysis of it. This is a complex stochastic process, which applies a sequence of random transformations to the graph $G$. A single step can radically alter the trajectory of this process. In the original paper \cite{karger-stein} which introduced this algorithm, very simple and elementary methods were used to bound its behavior. These bounds can be tight for general graphs, but for other types of graphs these bounds are not accurate. However the analysis has not been improved since \cite{karger-stein}.

To analyze the Contraction Algorithm, we introduce a new methodology for approximating stochastic systems by continuous, deterministic dynamical systems. We believe this method may have applications to other types of systems, and we will describe it here at a high level of generality.

Suppose we want to upper-bound $\bE[f(G)]$, where $f$ is some function of the stochastic process we are interested in (for us, essentially $f$ counts the number of edges found in the graphs produced at intermediate stages of the Contraction Algorithm). We begin by making any number of heuristic guesses as to the behavior of the process. These guesses can be totally unjustified and are often nothing more than wishful thinking. For example, we may assume that any or all random variables are deterministic; we may assume that certain parameters are as large as possible; we replace the discrete time intervals with continuous time. In this way, we develop a dynamical system which we believe should approximate the behavior of $\bE[f(G)]$. This dynamical system can be solved to produce some function $\tilde f(G)$; we guess (hope) that $\bE[f(G)] \leq \tilde f(G)$ for all graphs $G$.

So far, our approach mirrors that of \cite{wormald}, which gave general strategies for analyzing stochastic systems. According to \cite{wormald}, the next stage of the solution would be to prove that our assumptions were approximately correct. For example, instead of treating a random variable as deterministic, we may use concentration inequalities to prove that it is centered around its mean (and thus is nearly deterministic). In the case of the Contraction Algorithm, this approach does not seem to work; we are not able to show concentration inequalities for the relevant random variables. The reason is that a single step of the Contraction Algorithm, namely contracting an edge, can change the graph in a far-reaching way; this makes it hard to show the type of ``local effect'' required for a concentration inequality. 

By contrast, we will prove that any violation of these heuristic assumptions will only help us; we do not necessarily show that the heuristic assumptions are true or approximately true. We will use induction on $G$ to prove that for any graph $G$ we have $\bE[f(G)] \leq \tilde f(G)$. 

To explain how this works, for expository purposes suppose that the function $\tilde f$ only depends on the final stage of the stochastic process and that $\tilde f$ is a function not of the full graph $G$ but only a single, simple-to-compute parameter $\phi(G)$, so that $\tilde f(G) = g( \phi(G) )$ (for example, $\phi(G)$ is the number min-cuts of $G$). Now, after the initial time-step of the stochastic process, the graph $G$ transforms into the graph $H$ (where now $H$ is a random variable). Thus, we now have $\bE[f(G)] =\bE_H [ \bE[f(H)] ]$; here we are splitting the expectation into two pieces --- the random choice for the first time-step, and the random choices for the remaining time-steps. 

We will now use our induction hypothesis applied to the subgraph $H$; this gives us that $\bE[f(H)] \leq \tilde f(H)$. Thus, so far we have shown that $\bE[f(G)] \leq \bE_H [ \tilde f (\phi(H))]$.

Now, suppose that the function $g(x)$ happens to be an increasing, concave-down function of $x$. In this case, we can apply Jensen's inequality to this expression, deriving
$\bE_H [ \tilde f (H)] \leq g( \bE[\phi(H)] )$, and hence that $\bE[f(G)] \leq g( \bE[\phi(H)] )$.

Thus, in order to continue the induction proof, we merely need to show that
$$
g( \bE[ \phi (H) ] ) \leq g( \phi(G) )
$$
 
We have thus reduced our trajectory analysis to computing $\bE[ \phi(H) ]$ (which involves only a single time-step of the stochastic process and the relatively simple function $\phi$) and a simple inequality for a concrete function $g$.

Overall, this process depends very heavily on the concavity and monotonicity properties of the function $\tilde f$. This is very similar to how one can apply Jensen's inequality to interchange the expectation of a function with the function of its expectation. So, in a sense we can view our method as a ``generalized Jensen's inequality'' for stochastic systems.

\subsection{Outline}
In Sections~\ref{Aprelimsec} and ~\ref{Afailedcuts}, we begin by reviewing some key results and definitions concerning graph cuts. Most of these results are recapitulations of Karger \cite{karger}.

In Section~\ref{Asec:statistical}, we describe a new statistical sampling algorithm for the following problem: given a fixed collection $\mathcal A$ of cuts, what is the probability that at least one cut from $\mathcal A$ fails?  We show that this algorithm has a faster running time, and is just as accurate, compared to the algorithm of \cite{lubydnf} used in \cite{karger,karger2}.  Although this sampling algorithm is the final part of our overall algorithm to estimate $U(p)$, we will discuss it first because it is relatively self-contained and because it is less technically challenging than the rest of the paper.

In Section~\ref{Asec:contraction}, we introduce the Contraction Algorithm, an algorithm described by \cite{karger-stein}. This algorithm provides an efficient randomized procedure for finding small-weight cuts in $G$. It also can be viewed as a probabilistic process which can provide purely structural bounds on $G$.  We will be interested in the following question, which is more general than that considered by \cite{karger-stein}: suppose we are given a fixed target cut $C$ of $G$. Under what circumstances does the Contraction Algorithm select $C$? What can we say about the dynamics of the Contraction Algorithm \emph{in those cases in which $C$ will ultimately be selected}? We will prove a series of technical Lemmata which describe these dynamics. Roughly speaking, we ``factor out'' $C$ to show that the Contraction Algorithm has a \emph{uniform behavior} regardless of which target cut $C$ (if any) we are interested in.

In Section~\ref{Asec:oddc} we apply this machinery to analyze the Contraction Algorithm in the case in which the (unweighted) graph $G$ has $c$ an odd number. These results are not necessary for our analysis of Karger's algorithm. We include them here for two reasons. First, the number of small cuts is basically known in the case of even $c$ --- the worst case behavior comes from a cycle graph, in which there are $n$ vertices in a ring with $c/2$ edges between successive vertices. The case of odd $c$ has been mostly overlooked in the literature.  Second, the case of small odd $c$ provides an easier warm-up exercise for the analysis in Section~\ref{Azsec}. In Section~\ref{Asec:oddc}, we keep track of the number of min-cuts remaining in $G$ during the evolution of the Contraction Algorithm; in Section~\ref{Azsec} we must keep track of the number of cuts of all sizes. Section~\ref{Azsec} analyzes how the dynamics of the Contraction Algorithm are affected by the magnitude of $U(p)$. The critical parameter is $\bar Z$, the \emph{expected number of failed cuts} in $G$. We show how $\bar Z$ affects the number of edges which are available in any round of the Contraction Algorithm, and we show how $\bar Z$ itself changes during the execution.

The proof method used in Sections~\ref{Asec:oddc} and \ref{Azsec} follows the outline of Section~\ref{stoch-proc-sec}. We give a function $\tilde f$, and prove by induction that the probability of a cut surviving the Contraction Algorithm is at least $\tilde f$. This function $\tilde f$ is very complicated and could not have been guessed from first principles.  Instead, we first simplify our stochastic process using unwarranted independence and monotonicity assumptions (for example, we assume that the subgraphs produced during the Contraction Algorithm have as few edges as possible subject to the minimum-cut condition), and then translate the stochastic time-steps into a differential equation which can be solved in closed form. This heuristic analysis gives us the mysterious function $\tilde f$, which we then rigorously prove correct. In particular, these ``unwarranted independence and monotonicity assumptions'' actually correspond to the worst-case behavior of our stochastic process! 

In Section~\ref{Asec:smallcutunreliability}, we apply the analysis of Section~\ref{Azsec} to show that $\alpha^*$, which is the bound of the size of the cuts necessary to approximate $U(p)$, is smaller than the bound given by Karger. We note that while Karger gave a simple bound which could be computed explicitly, our bound depends on the graph parameter $\bar Z$ which we cannot determine easily. However, using this formula, one can determine an upper bound on the total number of iterations of the Contraction Algorithm that are needed to find the $\alpha^*$-cuts; this bound is irrespective of $\bar Z$.

In Section~\ref{Asec:rca}, we analyze the Recursive Contraction Algorithm. This is an algorithm which allows us to run the Contraction Algorithm ``in bulk". If we are interested in finding many graph cuts, we must run the Contraction Algorithms multiple times. However, most of the work for processing the graph can be amortized across these multiple runs. In particular, we can essentially reduce the running time for the Contraction Algorithm from $n^2$ (dominated by reading the original graph) to $n^{a}$ where $a$ is a small constant.

In Section~\ref{put-it-sec}, we describe how to combine all the pieces and obtain a full algorithm. This includes deciding when to use Monte Carlo sampling and when to enumerate the small cuts, a detail omitted in Karger \cite{karger}. Finally, Section~\ref{sec:concl} concludes. 

\section{Preliminaries}
\label{Aprelimsec}
For a multi-graph $G$, we define a \emph{cut} of $G$ to be a partition of the vertices into two classes $V = A \sqcup A'$, with $A, A' \neq \emptyset$. The sets $A, A'$ are the \emph{shores} of the cut. We distinguish this from an \emph{edge-cut}, which is a subset $E'$ of the edges of $G$ such that removal of $E'$ disconnects $G$. Every cut of $G$ induces an edge-cut of $G$, namely the edges crossing from $A$ to $A'$. The \emph{weight} or \emph{size} of an edge-cut is the number of edges it contains. We say an edge-cut is \emph{minimum} if it has the smallest size of all edge-cuts, and we denote by $c$ the size of the smallest edge-cut. We say an edge-cut is \emph{minimal} if no proper subset is an edge-cut.  

We will often abuse notation so that a cut $C$ may refer to either the vertex-partition \emph{or} the edge-cut it induces. Unless stated otherwise, whenever we refer to a cut $C$, we view $C$ as a set of \emph{edges}. 

As mentioned before, for any $\alpha \geq 1$ we define an $\alpha$-cut to be a cut (not necessarily minimal) whose corresponding edge-cut has at most $\alpha c$ edges, and we define a \emph{min-cut} to be a $1$-cut, i.e. a cut of weight $c$ exactly.

In this paper, we seek to estimate the probability that the graph becomes disconnected when edges are removed independently with probability $p$. When edges are removed in this way, we say that an edge-cut \emph{fails} if and only if all the corresponding edges are removed. In this case, we may concern ourselves solely with cuts. The reason for this is that graph $G$ becomes disconnected if and only if some cut of $G$ fails.

This method can be generalized to allow each edge to have its own independent failure probability $p_e$. As described in \cite{karger}, given a graph $G$ with non-uniform edge failure probabilities, one can transform this to a multi-graph $G'$ with uniform edge failure probabilities by replacing each edge of $G$ with a bundle of edges, and setting $p$ appropriately. 

We will assume that the graph $G$ is presented as an adjacency matrix with $n^2$ words. Each word records the number of edges that are present between the indicated vertices.  As described in \cite{kar93}, it is possible to transform an arbitrary graph, in which the number of edges may be unbounded, into one with cells of size $O(\log (1/\epsilon) + \log n)$ with similar reliability; hence the arithmetic operations will take polylogarithmic time in any computational model. This transformation may have a running time which is super-polynomial in $n$, although it is close to linear time as a function of the input data size. We will ignore these issues, and simply assume that we can perform arithmetic operations to precision $\epsilon$ and random number generation of an entire word in a single time step.

As our algorithm closely parallels Karger's, we use the notation of \cite{karger} wherever possible. Recall that $c$ is the size of the smallest edge-cut.
As in \cite{karger}, we let $p^c = n^{-2-\delta}$, with $\delta > 0$ being the case of primary interest (the complementary case can be handled by simple Monte-Carlo sampling). We note that $\delta$ can be determined in time $n^{2+o(1)}$, simply by finding the size of the minimum cut. Hence we can assume $\delta$ is known.  Often we will derive estimates assuming $\delta \geq \delta_0$; here $\delta_0$ can be an arbitrarily small positive constant. This assumption allows us to simplify many of the asymptotic notations, whose constant terms may depend on $\delta_0$. In fact, our algorithm is best balanced when $\delta \rightarrow 0$. It is not hard to see that all of our asymptotic bounds when $\delta \geq \delta_0$ can be relaxed to allow $\delta \rightarrow 0$ sufficiently slowly, for example as $\delta = 1/\log \log \log \log n$. 

When we analyze $\alpha$-cuts, we are primarily interested in the case when $\alpha$ is a slow-growing function. (In fact, the case where $\alpha = O(1)$ would be basically sufficient to analyze all of our algorithmic improvements.) We will show how to bound the number of such cuts.

The following is a well-known theorem of \cite{karger:thesis}, and indeed, the rest of this section is basically a recapitulation of Karger's work from \cite{karger:thesis}. 

\begin{theorem}
\label{Acutsetthm1} 
(\textbf{\cite{karger:thesis}})
The number of $\alpha$-cuts is at most $n^{2 \alpha}$.
\end{theorem}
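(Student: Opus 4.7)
The plan is to prove this via the Contraction Algorithm (which will be formally introduced in Section~\ref{Asec:contraction}, but whose basic description I may assume: iteratively pick a uniformly random edge of the current multigraph and contract it, discarding self-loops). I will fix a target $\alpha$-cut $C$ and show that it has a non-trivial chance of ``surviving'' a partial run of the algorithm; summing over all $\alpha$-cuts, the total count is then controlled by the number of possible outputs.

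First, I would observe that throughout the contraction every intermediate multigraph still has minimum cut at least $c$, because contracting an edge can only shrink a cut, never a proper edge-cut. Hence when $i$ vertices remain, every vertex has degree $\geq c$, so the current edge-count is at least $ic/2$. Since $C$ has at most $\alpha c$ edges (this quantity is also preserved, as long as no edge of $C$ has been contracted), the conditional probability of contracting an edge of $C$ at that step is at most $\frac{\alpha c}{ic/2} = \frac{2\alpha}{i}$. So the probability that $C$ is still a cut of the contracted graph after we stop at $k$ vertices is at least
\[
\prod_{i=k+1}^{n}\Bigl(1-\frac{2\alpha}{i}\Bigr).
\]

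Next I would choose the stopping point $k = \lceil 2\alpha \rceil$ and evaluate this product in closed form: telescoping gives a lower bound of order $1/\binom{n}{2\alpha}$ (precisely $\frac{(n-2\alpha)!\,(2\alpha)!}{n!}$ in the integer case, with a routine adjustment via $\lceil\cdot\rceil$ if $2\alpha$ is not an integer). At termination there are only $k$ vertices, hence at most $2^{k-1}-1$ distinct cuts available, and every $\alpha$-cut of $G$ that survived must appear as one of them. The events ``algorithm stops in a state in which $C$ is still a valid partition'' for distinct $\alpha$-cuts $C$ need not be disjoint, but we only need the union bound that the expected number of surviving $\alpha$-cuts is at most $2^{k-1}-1$. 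Combined with the per-cut survival lower bound, this yields
\[
(\text{\# of $\alpha$-cuts}) \;\le\; (2^{\lceil 2\alpha\rceil -1}-1)\cdot \binom{n}{\lceil 2\alpha\rceil},
\]
which I would then simplify to $n^{2\alpha}$ using $\binom{n}{\lceil 2\alpha\rceil}\le n^{\lceil 2\alpha\rceil}/\lceil 2\alpha\rceil!$ and checking that the combinatorial prefactor is $\le 1$.

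The mildly delicate step is the non-integer case: $2\alpha$ may not be an integer, so one has to be careful either by using $\lceil 2\alpha\rceil$ as the stopping point and bounding the resulting binomial back to $n^{2\alpha}$, or equivalently by invoking the well-known fact that both $\prod_{i=k+1}^{n}(1-2\alpha/i)$ behaves like $(k/n)^{2\alpha}$ up to constants. Verifying that the constants absorb correctly so that the final bound is exactly $n^{2\alpha}$ (rather than merely $O(n^{2\alpha})$) is the only non-mechanical part of the argument; everything else is bookkeeping on top of the degree-lower-bound observation.
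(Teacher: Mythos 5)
The paper itself never proves Theorem~\ref{Acutsetthm1} --- it is quoted from Karger's thesis --- so there is no in-paper argument to compare against; your contraction-based route is essentially the standard (cited) proof, and it is consistent with the machinery the paper develops in Section~\ref{Asec:contraction} (Corollary~\ref{Asimplecorr} is exactly the ``up to constants'' form of your survival estimate). Two small remarks on the outline: the reason every intermediate graph has minimum cut at least $c$ is that every cut of a contraction-subgraph corresponds to a cut of $G$ with the \emph{same} edge set (contraction does not ``shrink'' surviving cuts); and while your expected-count step ($\mathbf{E}[\#\text{ surviving }\alpha\text{-cuts}] \le 2^{k-1}-1$) is valid, the cleaner standard device is to have the algorithm output one uniformly random cut of the final $\lceil 2\alpha\rceil$-vertex graph, so that the events ``$C$ is output'' are disjoint over distinct $\alpha$-cuts and their probabilities sum to at most $1$.

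The one genuine gap is the non-integer case, and it is not just a prefactor check on the bound you displayed: the inequality $(\#\text{ of }\alpha\text{-cuts}) \le (2^{\lceil 2\alpha\rceil-1}-1)\binom{n}{\lceil 2\alpha\rceil}$ does \emph{not} simplify to $n^{2\alpha}$, since $\binom{n}{\lceil 2\alpha\rceil}\le n^{\lceil 2\alpha\rceil}/\lceil 2\alpha\rceil!$ only yields $n^{\lceil 2\alpha\rceil}$, which can exceed $n^{2\alpha}$ by nearly a factor of $n$ (for $\alpha=1.01$ your right-hand side is of order $n^{3}$ against a target of $n^{2.02}$). The loss comes from replacing $1-2\alpha/i$ by $1-\lceil 2\alpha\rceil/i$ in order to telescope; you must keep $2\alpha$ in the product. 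The fix you allude to does close, and the constants work out: with $k=\lceil 2\alpha\rceil$ and $2\alpha=k-\theta$, $\theta\in[0,1)$, one has exactly $\prod_{i=k+1}^{n}\frac{i-2\alpha}{i}=\frac{\Gamma(n-k+1+\theta)\,k!}{\Gamma(1+\theta)\,n!}$, so your argument gives $(\#\text{ of }\alpha\text{-cuts}) \le (2^{k-1}-1)\,\Gamma(1+\theta)\,\frac{\Gamma(n+1)}{k!\,\Gamma(n+1-k+\theta)}$. Now $\Gamma(n+1)/\Gamma(n+1-k+\theta)\le n^{k-1}\cdot n^{1-\theta}=n^{2\alpha}$ (the integer part contributes $k-1$ factors each at most $n$, and the fractional part is handled by log-convexity of $\Gamma$, namely $\Gamma(y+s)\le y^{s}\Gamma(y)$ for $s\in[0,1]$), while $\Gamma(1+\theta)\le 1$ and $2^{k-1}-1\le k!$ for $k\ge 2$; together these give at most $n^{2\alpha}$, as required. (The degenerate case $n\le\lceil 2\alpha\rceil$ is trivial, since the total number of cuts is $2^{n-1}-1\le n^{2\alpha}$.) So the approach is correct, but the final bookkeeping above is genuinely needed and was left undone in your write-up.
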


The following combinatorial principle, which is basically a form of integration by parts, will be used in a variety of contexts.
\begin{proposition}
\label{Aexpsumlemma}
Let $F: \mathbf R \rightarrow [0, \infty)$ be any increasing function with distributional derivative $dF(x)$ and with the property that, for any $\alpha \geq 1$, the number of $\alpha$-cuts is at most $F(\alpha)$. Let $g: \mathbf R \rightarrow \mathbf [0, \infty)$ be any continuous decreasing function. Then
$$
\sum_{\substack{\text{Cuts $C$ of weight} \\ |C| \geq \alpha c}} \hspace{-0.3in} g(|C|/c) \leq F(\alpha) g(\alpha) + \int_{x = \alpha}^{\infty} g(x) dF(x)
$$
\end{proposition}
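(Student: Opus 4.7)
The plan is to reduce the sum on the left-hand side to a Riemann--Stieltjes integral of $F$ against $g$, and then integrate by parts. The cleanest route uses the \emph{layer-cake} identity: because $g \geq 0$,
\[
\sum_{C:\,|C|/c\geq\alpha} g(|C|/c) \;=\; \int_{0}^{\infty} \#\bigl\{C : |C|/c\geq\alpha \text{ and } g(|C|/c)\geq y\bigr\}\,dy.
\]
Since $g$ is continuous and decreasing, the condition $g(|C|/c)\geq y$ is equivalent to $|C|/c\leq h(y)$, where $h(y)=\sup\{x:g(x)\geq y\}$ is the generalized inverse of $g$. The inner set is therefore empty unless $y\leq g(\alpha)$, in which case it is contained in the set of all $h(y)$-cuts; the hypothesis then gives $\#\{C : |C|/c\leq h(y)\}\leq F(h(y))$, so
\[
\text{LHS} \;\leq\; \int_{0}^{g(\alpha)} F(h(y))\,dy.
\]

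Next I would perform the change of variable $y=g(x)$. On the range $y\in[g(\infty),g(\alpha)]$, this sends $x$ over $[\alpha,\infty)$ with $dy=dg(x)$, producing the Stieltjes integral $\int_{\alpha}^{\infty} F(x)\,d(-g)(x)$; on the remaining range $y\in[0,g(\infty))$, one has $h\equiv\infty$, contributing a flat slab $g(\infty)F(\infty)$. Riemann--Stieltjes integration by parts then gives
\[
\int_{\alpha}^{T} F(x)\,d(-g)(x) \;=\; F(\alpha)g(\alpha) - F(T)g(T) + \int_{\alpha}^{T} g(x)\,dF(x),
\]
and as $T\to\infty$ the boundary term $F(T)g(T)$ exactly cancels the $g(\infty)F(\infty)$ contribution from the slab, leaving $F(\alpha)g(\alpha)+\int_{\alpha}^{\infty} g(x)\,dF(x)$, which is exactly the claimed upper bound.

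The main obstacle is technical rather than conceptual: $F$ is only assumed to be increasing (so ``$dF$'' is a distributional, possibly atomic measure), and $g$ is only assumed to be continuous, not differentiable, so everything must be interpreted in Riemann--Stieltjes rather than Riemann sense. Care is needed at jump points of $F$ to choose the endpoint convention that keeps the inequalities tight, and at infinity to ensure the boundary limits exist. Both issues are harmless in context, since a graph has only finitely many cuts, so the left-hand side is effectively a finite sum and one may truncate the integrals at any weight beyond the largest cut; alternatively, one can approximate $F$ from above by a smooth $\tilde F$ with $\tilde F\geq F$, apply ordinary integration by parts, and pass to the infimum over such $\tilde F$.
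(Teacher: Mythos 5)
Your argument is correct, but it takes a genuinely different route from the paper. The paper works entirely discretely: it sorts the (finitely many) relevant cut weights as $r_1 \leq \dots \leq r_k$, notes that the hypothesis gives $i \leq F(r_i/c)$, and performs an Abel summation-by-parts on the finite sum, using only the monotonicity of $g$ to pass to the integral $\int_\alpha^\infty g(x)\,dF(x)$; no measure theory beyond the formal symbol $dF$ is ever needed. You instead invoke the layer-cake identity, convert the condition $g(|C|/c)\geq y$ via the generalized inverse $h(y)$, bound the level-set count by $F(h(y))$, and then undo the change of variables by Riemann--Stieltjes integration by parts. This is a clean continuous analogue of the same idea, and it does yield exactly the stated bound, but it imports exactly the technicalities you flag: endpoint conventions at atoms of $dF$ (so that $F(\alpha)+\int_\alpha^{b} dF = F(b)$ in the paper's sense), and the behavior of the boundary term at infinity, where ``$F(T)g(T)$ cancels $g(\infty)F(\infty)$'' is not meaningful if $F(\infty)=\infty$ and $g(\infty)>0$ (though the inequality is then vacuous since the right-hand side is infinite). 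Your truncation fix is the right repair and is legitimate precisely because there are finitely many cuts: bounding the level-set count by $F(\min(h(y),T))$ for $T$ beyond the largest cut weight, integrating by parts on $[\alpha,T]$, and then using $g\geq 0$, $dF\geq 0$ to extend to $\int_\alpha^\infty g\,dF$ closes the argument (a Tonelli/Fubini computation in place of the boundary cancellation would do the same even more cleanly). In short, your proof buys generality and a conceptual layer-cake viewpoint at the cost of these measure-theoretic caveats, while the paper's discrete summation by parts gets the identical bound with no analytic care required.
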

\begin{proof}
Enumerate the weights of all cuts whose weight is at least $\alpha c$ in sorted order as $\alpha c \leq r_1 \leq r_2 \leq \dots \leq r_k$. Observe that for any $i = 1, \dots, k$ the definition of $F$ implies that $i \leq F(r_i/c)$. 

Then 
{\allowdisplaybreaks
\begin{align*}
\sum_{\substack{\text{Cuts $C$ of weight} \\ |C| \geq \alpha c}} \hspace{-0.3in} g(|C|/c) &= \sum_{i=1}^{k} g(r_i/c) = k g(r_k/c) + \sum_{i=1}^{k-1} i \Bigl( g(r_i/c) - g( r_{i+1}/c ) \Bigr) \\
&\leq F(r_k/c) g(r_k/c) + \sum_{i=1}^{k-1} F(r_i/c) \Bigl( g (r_i/c) - g( r_{i+1}/c ) \Bigr) \\
& \qquad \qquad \text{as $i \leq F(r_i/c)$ and $g (r_i/c) - g(r_{i+1}/c ) \geq 0$} \\
&=g(r_1/c) F(r_1/c) + \sum_{i=2}^k g(r_i/c) \Bigl( F(r_i/c) - F(r_{i-1}/c) \Bigr) \\
&= g(r_1/c) \Bigl( F(\alpha) + \int_{x=\alpha}^{r_1/c} dF(x) \Bigr) + \sum_{i=2}^{k} g(r_i/c) \int_{x=r_{i-1}/c}^{r_i/c} dF(x) \\
&= F(\alpha) g(r_1/c) + \int_{x=\alpha}^{r_1/c} g(r_1/c) dF(x) + \sum_{i=2}^k \int_{x=r_{i-1}/c}^{r_i/c} g(r_i/c) dF(x) \\
&\leq F(\alpha) g(\alpha) + \int_{x=\alpha}^{r_1/c} g(x) dF(x) + \sum_{i=2}^k \int_{x=r_{i-1}/c}^{r_i/c} g(x) dF(x)  \qquad \text{as $g$ is decreasing} \\
&= F(\alpha) g(\alpha) + \int_{x=\alpha}^{r_k/c} g(x) dF(x) \leq F(\alpha) g(\alpha) + \int_{x=\alpha}^{\infty} g(x) dF(x)
\end{align*}
}

\end{proof}
Combining Proposition~\ref{Aexpsumlemma} and Theorem~\ref{Acutsetthm1}, gives a simple proof of the following result from \cite{karger}:
\begin{corollary}
\label{Aexpsumcorr1}
Let $0 < q < n^{-2 - \Omega(1)}$, $\alpha \geq 1$. Then
$$
\sum_{\substack{\text{Cuts $C$ of weight} \\ |C| \geq \alpha c}} \hspace{-0.3in} q^{|C|/c} = O( (n^2 q)^{\alpha})
$$
\end{corollary}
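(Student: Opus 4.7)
The plan is to apply Proposition~\ref{Aexpsumlemma} directly, taking $F(\alpha) = n^{2\alpha}$ (which is a valid upper bound on the number of $\alpha$-cuts by Theorem~\ref{Acutsetthm1}) and $g(x) = q^x$. Note that $g$ is continuous and decreasing since $q < 1$, and $F$ is increasing and differentiable with $dF(x) = 2 n^{2x} \log n \, dx$. So Proposition~\ref{Aexpsumlemma} immediately yields
$$
\sum_{\substack{\text{Cuts $C$ of weight} \\ |C| \geq \alpha c}} \hspace{-0.3in} q^{|C|/c} \;\leq\; n^{2\alpha} q^{\alpha} + \int_{\alpha}^{\infty} q^x \cdot 2 n^{2x} \log n \, dx \;=\; (n^2 q)^{\alpha} + 2 \log n \int_{\alpha}^{\infty} (n^2 q)^x \, dx.
$$

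Next I would evaluate the integral. Since $q < n^{-2 - \Omega(1)}$, we have $n^2 q < 1$ and more precisely $\log(n^2 q) = 2\log n + \log q \leq -\Omega(\log n)$, so the integral converges and equals $(n^2 q)^{\alpha} / \bigl(-\log(n^2 q)\bigr)$. Substituting,
$$
\sum_{\substack{\text{Cuts $C$ of weight} \\ |C| \geq \alpha c}} \hspace{-0.3in} q^{|C|/c} \;\leq\; (n^2 q)^{\alpha} \left( 1 + \frac{2 \log n}{-\log(n^2 q)} \right).
$$

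Finally, the hypothesis $-\log(n^2 q) \geq \Omega(\log n)$ shows that the bracketed factor is $O(1)$, giving the claimed $O\bigl((n^2 q)^{\alpha}\bigr)$ bound. There is no substantive obstacle here: the work is essentially just recognizing that Proposition~\ref{Aexpsumlemma} with $F$ and $g$ chosen as above reduces the sum to an elementary exponential integral, and that the hypothesis on $q$ is precisely what is needed to keep the correction factor bounded. The one point to be careful about is confirming that the $\Omega(1)$ slack in the exponent of $q$ translates into a constant factor that depends only on that slack (i.e., on the implicit constant $\delta_0$ in $n^{-2-\Omega(1)}$), not on $\alpha$ or $n$.
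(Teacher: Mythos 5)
Your proposal is correct and follows essentially the same route as the paper: apply Proposition~\ref{Aexpsumlemma} with $F(x) = n^{2x}$ and $g(x) = q^x$, evaluate the resulting exponential integral, and use $q < n^{-2-\Omega(1)}$ to bound the correction factor $1 + \frac{2\log n}{-\log(n^2 q)}$ by a constant depending only on the $\Omega(1)$ slack. Your closing remark about the constant depending on the slack rather than on $\alpha$ or $n$ matches the paper's treatment exactly.
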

\begin{proof}
Apply Proposition~\ref{Aexpsumlemma}, using $F(x) = n^{2 x}$ and $g(x) = q^{x}$. We have
\begin{align*}
\sum_{\substack{\text{Cuts $C$ of weight} \\ |C| \geq \alpha c}} \hspace{-0.3in} q^{|C|/c} &\leq q^{\alpha} n^{2 \alpha} + \int_{x = \alpha}^{\infty} q^{x} \times 2 n^{2x} \log n \ dx \\
&= (n^2 q)^{\alpha} - \frac{2 (n^2 q)^{\alpha} \log n}{\log (n^2 q)} = (n^2 q)^{\alpha} \Bigl(1 - \frac{\log n^2}{\log (n^2 q)} \Bigr) \\
&\leq (n^2 q)^{\alpha} \Bigl(1 - \frac{\log n^2}{\log (n^2 n^{-2 - \Omega(1)})} \Bigr) \\
&= O( (n^2 q)^{\alpha} )
\end{align*}
\end{proof}

The following result is as example of this principle; it will later be strengthened in Proposition~\ref{Afaillemma}.
\begin{corollary}
\label{Acutfailcorr1}
(\textbf{\cite{karger}}) 
Suppose $\delta \geq \delta_0$ for some constant $\delta_0 > 0$. Then the probability that a cut of weight $\geq \alpha c$ fails is at most $O(n^{-\alpha \delta})$.

\end{corollary}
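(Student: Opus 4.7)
The plan is to bound the failure probability by a union bound over all heavy cuts, then recognize the resulting sum as exactly the one handled by Corollary~\ref{Aexpsumcorr1}.

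First I would note that, for any fixed cut $C$, the probability that $C$ fails (i.e., every edge of $C$ is independently removed) is exactly $p^{|C|}$. By a union bound, the probability that \emph{some} cut of weight at least $\alpha c$ fails is at most
\[
\sum_{\substack{\text{Cuts } C \\ |C| \geq \alpha c}} p^{|C|} \;=\; \sum_{\substack{\text{Cuts } C \\ |C| \geq \alpha c}} (p^{c})^{|C|/c}.
\]
Setting $q = p^c = n^{-2-\delta}$, the hypothesis $\delta \geq \delta_0 > 0$ immediately gives $q < n^{-2-\Omega(1)}$, which is the precondition for Corollary~\ref{Aexpsumcorr1}.

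Applying that corollary yields
\[
\sum_{\substack{\text{Cuts } C \\ |C| \geq \alpha c}} q^{|C|/c} \;=\; O\bigl((n^2 q)^{\alpha}\bigr) \;=\; O\bigl((n^2 \cdot n^{-2-\delta})^{\alpha}\bigr) \;=\; O(n^{-\alpha \delta}),
\]
which is the desired bound.

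There is no real obstacle here: the work has already been done in Proposition~\ref{Aexpsumlemma} and Corollary~\ref{Aexpsumcorr1}. The only thing to be mildly careful about is that the constant hidden in $O(\cdot)$ may depend on $\delta_0$ (through the factor $1 - \tfrac{\log n^2}{\log(n^2 q)}$ appearing in the proof of Corollary~\ref{Aexpsumcorr1}), but this is harmless since $\delta_0$ is treated as a fixed constant. Also, implicitly the bound must cover the case where no such heavy cut exists, in which the probability is $0$ and the claim is trivial.
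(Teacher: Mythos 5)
Your proposal is correct and follows exactly the paper's argument: a union bound expressing the failure probability as $\sum_{|C| \geq \alpha c} (p^c)^{|C|/c}$, followed by an application of Corollary~\ref{Aexpsumcorr1} with $q = p^c = n^{-2-\delta} < n^{-2-\Omega(1)}$. Your added remarks about the constant depending on $\delta_0$ and the trivial empty case are fine but not needed beyond what the paper states.
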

\begin{proof}
By the union-bound, 
$$
\bP(\text{Cut of weight $\geq \alpha c$ fails}) \leq \hspace{-0.2in} {\displaystyle \sum_{\substack{\text{Cuts $C$ of weight} \\ |C| \geq \alpha c}}} \hspace{-0.2in}  (p^c)^{|C|/c}.
$$
Now apply Corollary~\ref{Aexpsumcorr1}.
\end{proof}

This leads to one of the key theorems of Karger's original work:
\begin{theorem}
\label{Akargerthm1}
(\textbf{\cite{karger}}) 
Suppose $\delta \geq \delta_0$ for some constant $\delta_0 > 0$. Then $U(p)$ can be approximated, up to relative error $O(\epsilon)$, by the probability that a cut of weight $\leq \alpha^* c$ fails, where
$$
\alpha^* = 1 + 2/\delta - \frac{\log \epsilon}{\delta \log n}
$$
\end{theorem}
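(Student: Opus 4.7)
\medskip

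\noindent\textbf{Proof plan for Theorem~\ref{Akargerthm1}.}

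The plan is to split the event ``$G$ gets disconnected'' into two sub-events based on whether the failure is witnessed by a ``small'' cut (weight $\leq \alpha^* c$) or by a ``large'' cut (weight $> \alpha^* c$), and to argue that the large-cut contribution is only an $\epsilon$-fraction of $U(p)$. Let $U_s(p)$ denote the probability that some cut of weight $\leq \alpha^* c$ fails and $U_\ell(p)$ the probability that some cut of weight $> \alpha^* c$ fails. Since any min-cut, by itself, has failure probability exactly $p^c = n^{-2-\delta}$, we get the clean lower bound $U(p) \geq p^c$. On the other hand, every disconnection event is witnessed by at least one failing cut, so $U(p) \leq U_s(p) + U_\ell(p)$, and also $U_s(p) \leq U(p)$.

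The main computation is to bound $U_\ell(p)$. Applying Corollary~\ref{Acutfailcorr1} with $\alpha = \alpha^*$ yields $U_\ell(p) = O(n^{-\alpha^* \delta})$. Now I would just plug in $\alpha^* = 1 + 2/\delta - \frac{\log \epsilon}{\delta \log n}$ and simplify the exponent:
\[
\alpha^* \delta \;=\; \delta + 2 - \frac{\log \epsilon}{\log n},
\]
so that, using $n^{\log \epsilon / \log n} = \epsilon$,
\[
n^{-\alpha^* \delta} \;=\; n^{-2-\delta} \cdot \epsilon \;=\; \epsilon \, p^c \;\leq\; \epsilon \, U(p).
\]
Thus $U_\ell(p) = O(\epsilon \, U(p))$.

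Combining the sandwich $U_s(p) \leq U(p) \leq U_s(p) + U_\ell(p)$ with the bound on $U_\ell(p)$ gives $|U(p) - U_s(p)| \leq O(\epsilon) \cdot U(p)$, which is exactly the claimed relative error. The only part that requires any real work is the exponent arithmetic, and the only ``structural'' input is Corollary~\ref{Aexpsumcorr1} (via Corollary~\ref{Acutfailcorr1})---the choice of $\alpha^*$ is precisely engineered so that the union bound over large cuts collapses into $\epsilon \cdot p^c$. There is no genuine obstacle here beyond verifying that $\delta \geq \delta_0$ is used at the single place where Corollary~\ref{Acutfailcorr1} is invoked, and noting that the trivial lower bound $U(p) \geq p^c$ suffices (one never needs the sharper probability $1-(1-p)^c \approx p^c$, only the monotone inequality).
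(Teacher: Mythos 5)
Your proposal is correct and follows essentially the same route as the paper: bound the contribution of cuts of weight $> \alpha^* c$ by $O(n^{-\alpha^* \delta})$ via Corollary~\ref{Acutfailcorr1}, use the trivial lower bound $U(p) \geq p^c = n^{-2-\delta}$, and observe that the choice of $\alpha^*$ makes the ratio $O(\epsilon)$. The exponent arithmetic you write out explicitly is exactly what the paper leaves implicit in its final display.
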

\begin{proof}
The \emph{absolute error} committed by ignoring cuts of weight $\geq \alpha^* c$ is at most the probability that such a cut fails; by Corollary~\ref{Acutfailcorr1} it is at most $O(n^{-\alpha^* \delta})$.

The minimum cut of $G$ fails with probability $p^c = n^{-2-\delta}$, so $U(p) \geq p^c = n^{-2-\delta}$.

Hence the \emph{relative error} committed by ignoring cuts of weight $\geq \alpha^* c$ is at most
$$
\text{Rel err} = O(\frac{ n^{-\alpha^* \delta} }{n^{-2-\delta} }) = O(\epsilon)
$$
\end{proof}

We do not want to get ahead of ourselves, but one of the main goals of this paper will be to improve on the estimate of Theorem~\ref{Akargerthm1}. In proving Theorem~\ref{Akargerthm1}, we used two quite different bounds. These bounds are tight separately, but for very different kinds of graphs. In estimating the absolute error, we bound the number of $\alpha$-cuts that may appear in $G$ by $O(n^{2 \alpha})$. This is tight for cycle graphs, as discussed in Lemma~\ref{Amultiple-cut-lemma1}, which have a very large number of cuts. On the other hand, when we estimate $U(p) \geq p^c$, we are assuming that the graph has just a single small cut. As we will show, no graph can have both bounds be simultaneously tight.

\section{The total number of failed cuts}
\label{Afailedcuts}
We define the random variable $Z$ to be the number of cuts of $G$ which fail when edges are removed independently with probability $p$. The expectation of $Z$, also known as the \emph{partition function} is $$
\bar Z = \sum_{\text{cuts $C$}} p^{|C|}.
$$
This is an overestimate of the graph failure probability which ignores the overlap between the cuts. The  number of failed cuts $Z$, and its expectation $\bar Z$, will play crucial roles in our estimates. We take here an approach based on \cite{bk15}, inspired by establishing a connection between the number of failed cuts and an appropriately scaled binomial random variable. 

\begin{proposition}
Suppose we remove a subset $L$ of edges from $G$, and the resulting graph $H$ has $R$ connected components. Then $L$ causes exactly $2^{R-1} - 1$ cuts of $G$ to fail.
\end{proposition}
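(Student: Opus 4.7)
The plan is to set up a bijection between failed cuts of $G$ and partitions of the component set of $H$ into two non-empty classes, then count those partitions.

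First I would unpack the definition of failure: a cut $C$ with shores $(A, A')$ fails under $L$ precisely when every edge of $G$ between $A$ and $A'$ lies in $L$. Equivalently, in the graph $H = G - L$ there are no edges between $A$ and $A'$. This in turn is equivalent to saying that each connected component of $H$ is contained entirely in $A$ or entirely in $A'$, since a component cannot straddle the partition without producing a surviving crossing edge.

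Next I would use this equivalence to set up the counting. Let $K_1, \dots, K_R$ be the connected components of $H$. The observation above says that failed cuts of $G$ correspond bijectively to unordered partitions $\{A, A'\}$ of $V$ into two non-empty classes such that each $K_i$ lies entirely in one class. Such partitions are determined by a function $\sigma : \{K_1, \dots, K_R\} \to \{A, A'\}$ with both preimages non-empty, modulo swapping $A \leftrightarrow A'$.

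Finally, a direct count: there are $2^R$ total functions $\sigma$, from which we subtract the $2$ functions that are constant (these would give $A$ or $A'$ equal to $\emptyset$, which is not a valid cut), leaving $2^R - 2$ ordered partitions. Dividing by $2$ to account for the $(A,A') \sim (A',A)$ identification gives exactly $2^{R-1} - 1$ cuts. There is no real obstacle here; the only thing to be careful about is making sure the $\emptyset$ shores are correctly excluded and the ordered-vs-unordered factor of $2$ is applied, which both fall out of the definition of a cut as an unordered partition into two non-empty sets.
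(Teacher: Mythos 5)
Your proof is correct and follows essentially the same route as the paper, which simply notes that failed cuts correspond to partitions of the $R$ components into two non-empty classes and leaves the count $2^{R-1}-1$ implicit. Your version just makes the bijection and the $\frac{2^R-2}{2}$ count explicit.
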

\begin{proof} Any partition of the $R$ components corresponds to a failed cut.
\end{proof}

\begin{lemma}
\label{Amultiple-cut-lemma1}
Let $p^c = n^{-2-\delta}$ for $\delta \geq \delta_0 > 0$. Suppose we remove edges from $G$ with probability $p$. Let $H$ denote the resulting graph and let $R$ be its number of connected components. Then for $n$ sufficiently large and $r \geq 1$ we have
$$
\bP(R \geq r) \leq 1.01 n^{-r \delta/2} /r!
$$ 
\end{lemma}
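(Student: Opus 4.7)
The approach will be a union bound over $r$-partitions of $V$.  First, I will verify the combinatorial equivalence: $\{R \geq r\}$ holds if and only if there exists an unordered $r$-partition $\pi = \{A_1,\ldots,A_r\}$ of $V$ into non-empty parts such that every edge of $G$ crossing $\pi$ has failed.  The forward implication is obtained by coarsening the component partition of $H$ into any $r$ non-empty groups; conversely, if such a $\pi$ exists, each part lies in a distinct component of $H$, so $R \geq r$.  By the union bound,
\[
\bP(R \geq r) \;\leq\; \sum_{\pi \in \Pi_r} p^{|\operatorname{cross}(\pi)|} \;=:\; \bar Z_r,
\]
where $\Pi_r$ is the set of unordered $r$-partitions into non-empty parts.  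Since $|\partial_G A_i| \geq c$ for every part and each cross-edge lies in exactly two such boundaries, $|\operatorname{cross}(\pi)| = \tfrac12 \sum_i |\partial_G A_i| \geq rc/2$, so each term of the sum is at most $p^{rc/2} = n^{-r(2+\delta)/2}$.

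To bound $\bar Z_r$ I would develop an $r$-partition analogue of Proposition~\ref{Aexpsumlemma}, parameterized by $\alpha = 2|\operatorname{cross}(\pi)|/(rc) \geq 1$, with combinatorial input a cutset-style bound on the number of $r$-partitions of weight at most $\alpha \cdot rc/2$.  The extremal case is the cycle $C_n$ (with $c=2$), whose minimum-weight $r$-partitions are exactly the $\binom{n}{r} \sim n^r/r!$ choices of which $r$ edges to ``break'', contributing $\binom{n}{r} p^r \sim n^{-r\delta/2}/r!$ and saturating the target bound.  For a general graph the dominant contribution remains of order $n^{-r\delta/2}/r!$, since near-minimum partitions dominate and partitions of weight $(1+\eta)\cdot rc/2$ are suppressed by the exponential factor $p^{\eta \cdot rc/2}$.

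The main technical obstacle is obtaining the sharp leading constant $1.01$ uniformly in $\delta \geq \delta_0$: a direct integration via Proposition~\ref{Aexpsumlemma} produces a prefactor of the form $1 + O(1/\delta)$, which blows up as $\delta_0 \to 0$.  The fix, as flagged in the paragraph preceding the lemma, is to follow the approach of \cite{bk15}: one couples the count of failed cuts to an appropriately scaled binomial random variable, whose tail estimates are sharper than a generic union bound and absorb the $O(1/\delta)$ slack, so that the leading constant converges to $1$ as $n \to \infty$ uniformly in $r$ and in $\delta \geq \delta_0$.  (For $r = 1$ the stated inequality $\bP(R \geq 1) \leq 1.01\,n^{-\delta/2}$ is evidently false since the left-hand side equals $1$; the lemma should be interpreted as $r \geq 2$, which is the case of interest in subsequent applications.)
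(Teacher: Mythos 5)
Your proposal has a genuine gap at exactly the point where the real work lies. The paper does not prove this lemma by a union bound over $r$-partitions; it invokes the extremal result of Lomonosov and Polesskii (\cite{lomonosov1}, \cite{lomonosov2}, as used in \cite{karger}): the probability that $G$ has at least $r$ components after independent edge failures with probability $p$ is at most the corresponding probability for the $n$-cycle with single edges and failure probability $q = p^{c/2}$. For the cycle, having $\geq r \geq 2$ components is exactly the event that a $\mathrm{Bin}(n,q)$ variable is at least $r$, and the binomial tail $\sum_{i \geq r}\binom{n}{i}q^i(1-q)^{n-i} \leq \frac{n^{(\delta/2)(1-r)}}{r!\,(n^{\delta/2}-1)}$ gives $1.01\,n^{-r\delta/2}/r!$ for $n$ large. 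In your write-up this coupling is precisely the step you defer to with ``follow the approach of \cite{bk15}: one couples the count of failed cuts to an appropriately scaled binomial'' --- but that sentence is the theorem, not a proof of it, and nothing in your argument supplies it. Moreover, your primary route (union bound over $r$-partitions, each of crossing weight $\geq rc/2$) is also incomplete on its own terms: it needs a counting bound on the number of $r$-partitions of $G$ of weight at most $\alpha\,rc/2$ (a multiway analogue of Theorem~\ref{Acutsetthm1}), which you only posit as ``an $r$-partition analogue of Proposition~\ref{Aexpsumlemma}'' and never state or prove; and even granting such a bound, you concede the resulting prefactor is not $1.01$ uniformly in $\delta \geq \delta_0$, so the stated constant is not obtained. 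The assertion that ``for a general graph the dominant contribution remains of order $n^{-r\delta/2}/r!$ since near-minimum partitions dominate'' is exactly the extremality statement that must be proved (and for which the cycle comparison is the known tool), not something that follows from the per-partition bound $p^{rc/2}$.

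Two smaller points. Your observation about $r=1$ is correct: $\bP(R \geq 1) = 1$, so the displayed inequality cannot hold literally at $r=1$; the paper's own proof in effect covers $r \geq 2$ (for the cycle, $\geq r$ components forces $\geq r$ edge failures only when $r \geq 2$), and the later uses either start the sum at $r=2$ or are protected by a $\min(1,\cdot)$, so this is a statement-level slip rather than a substantive error --- good catch, but it does not repair the missing core argument. Finally, your reduction of each partition's crossing weight to $\geq rc/2$ via vertex-degree/shore-boundary counting is fine and matches the spirit of the paper's $M_r \geq rc/2$ estimates; the missing ingredients are the extremal coupling (or an explicit multiway-cut counting theorem with the right constants), without which the bound with the constant $1.01$ is not established.
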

\begin{proof}
As shown in \cite{lomonosov1}, \cite{lomonosov2}, \cite{karger}, the probability that $H$ has $\geq r$ connected components under edge-failure probability $p$, is at most the probability that the graph $C$ has $\geq r$ connected components under edge failure $q = p^{c/2}$, where $C$ is the cycle graph which contains $n$ vertices and a single edge between successive vertices in the cycle. If $r$ edges fail in $C$, then the resulting number of distinct components is $\max(1,r)$. Hence the probability that $C$ results in $r$ distinct components is the probability that a binomial random variable, with $n$ trials and success probability $q$, is at least $r$:
\begin{align*}
\bP(\text{$\geq r$ connected components}) &\leq \sum_{i=r}^n \binom{n}{i} q^i (1-q)^{n-i} \leq \sum_{i=r}^n n^i (n^{i (-2-\delta)/2}/r!)  \leq \frac{n^{ (\delta/2)(1 - r)}}{r! (n^{\delta/2} - 1) }
\end{align*}

As $\delta > \delta_0 > 0$, then $n^{\delta/2} \geq 1000$ for $n \geq \Omega(1)$. Thus $n^{\delta/2} - 1 \geq 0.999 n^{\delta/2}$ and we have
$$
\bP(\text{$\geq r$ connected components}) \leq \frac{n^{ (\delta/2)(1 - r)}}{0.999 r! n^{\delta/2}} \leq 1.01 n^{-\delta r/2}/r!
$$
\end{proof}

Thus $\bar Z$ is bounded within a relatively small range:
\begin{corollary}
\label{Azbarboundcorr}
For $\delta \geq \delta_0 > 0$ and for $n$ sufficiently large,
$$
n^{-2-\delta} \leq \bar Z \leq n^{-\delta}
$$
\end{corollary}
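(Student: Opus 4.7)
The lower bound is essentially free: by the definition $\bar Z = \sum_{\text{cuts } C} p^{|C|}$, and since $G$ has at least one min-cut of size $c$, the single term $p^c = n^{-2-\delta}$ already gives the desired lower bound.

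For the upper bound, the plan is to combine the preceding proposition (which says that if the random residual graph $H$ has $R$ components, then exactly $2^{R-1}-1$ cuts of $G$ fail) with the tail bound on $R$ from Lemma~\ref{Amultiple-cut-lemma1}. So $Z = 2^{R-1}-1$ pointwise, and I would convert this expectation into a tail sum via Abel summation:
\begin{equation*}
\bar Z \;=\; \bE[2^{R-1}-1] \;=\; \sum_{r \geq 2} 2^{r-2}\,\bP(R \geq r),
\end{equation*}
using that the increments of $f(r)=2^{r-1}-1$ are $2^{r-2}$ and that $\bP(R\geq 1)=1$ kills the $r=1$ contribution against $f(0)=-1/2$.

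Now I would plug in Lemma~\ref{Amultiple-cut-lemma1}'s bound $\bP(R\geq r) \leq 1.01\, n^{-r\delta/2}/r!$ to get
\begin{equation*}
\bar Z \;\leq\; \sum_{r \geq 2} \frac{2^{r-2}}{r!} \cdot 1.01\, n^{-r\delta/2}
\;=\; \frac{1.01}{4}\sum_{r\geq 2} \frac{(2 n^{-\delta/2})^r}{r!}.
\end{equation*}
Since $\delta \geq \delta_0 > 0$, the quantity $2 n^{-\delta/2}$ is $o(1)$, so the $r=2$ term dominates: it contributes $\frac{1.01}{4}\cdot\frac{(2n^{-\delta/2})^2}{2} = \frac{1.01}{2}\,n^{-\delta}$, and the remaining tail is bounded (by e.g.\ comparison with the exponential series) by $O(n^{-3\delta/2}) = o(n^{-\delta})$. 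Together these give $\bar Z \leq 0.505\,n^{-\delta}(1+o(1)) \leq n^{-\delta}$ for $n$ sufficiently large.

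There isn't really a conceptual obstacle here — the whole point of setting up the proposition together with Lemma~\ref{Amultiple-cut-lemma1} in the preceding text was to enable exactly this kind of calculation. The only thing to be a little careful about is getting the Abel-summation identity right (making sure the $f(0) = -1/2$ and $\bP(R\geq 1) = 1$ contributions cancel correctly so that the sum really starts from $r=2$), which is what allows the $r=2$ term to be $O(n^{-\delta})$ rather than merely $O(n^{-\delta/2})$.
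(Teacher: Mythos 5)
Your proposal is correct and follows essentially the same route as the paper: the same lower bound via a single min-cut, the same identity $\bar Z = \bE[2^{R-1}-1] = \sum_{r\geq 2} 2^{r-2}\,\bP(R\geq r)$, and the same application of Lemma~\ref{Amultiple-cut-lemma1}. The only (cosmetic) difference is in the final estimate of the series: you isolate the dominant $r=2$ term and bound the tail, whereas the paper sums the whole series to an expression of the form $0.26(e^{2n^{-\delta/2}}-1-2n^{-\delta/2})$ and then Taylor-expands; both yield $\bar Z \leq (0.5+o(1))\,n^{-\delta}$.
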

\begin{proof}
The lower bound $\bar Z\geq n^{-2-\delta}$ follows since the probability that any one min-cut fails is $n^{-2-\delta}$.

To show the upper bound, we have
\begin{align*}
\bar Z = \bE[Z] &= \bE[2^{R-1}-1] =\sum_{r \geq 2} \bP( R \geq r ) 2^{r-2} \\
&\leq 1.01 \sum_{r \geq 2} \frac{2^{r-2} n^{-r \delta/2}}{r!} \qquad \text{by Lemma~\ref{Amultiple-cut-lemma1}} \\
&\leq 0.26 \left(-2 n^{-\delta /2}+e^{2 n^{-\delta /2}}-1\right)
\end{align*}

For $n$ sufficiently large, the exponent $2 n^{-\delta/2}$ approaches $0$. As $e^x \leq 1 + x + 0.51 x^2$ for $x$ sufficiently close to $0$, it follows that
$$
\bar Z \leq 0.26 \bigl(-2 n^{-\delta /2}+(1 + 2 n^{-\delta /2} + 0.51 \times 4 n^{-\delta})-1 \bigr) = 0.5304 n^{-\delta}
$$
as desired.

\end{proof}

In light of Corollary~\ref{Azbarboundcorr}, it is often convenient to write 
$$
p^c = n^{-2-\delta} \qquad \bar Z = n^{-2-\delta + \beta}
$$
Here, $\delta, \beta$ should be thought of as ``parameters'' of the graph; $\delta$ measures the probability that the smallest cut of $G$ fails, while $\beta \in [0,2]$ counts the number of small cuts. The case $\beta \approx 0$ means there is only a single small cut and the other cuts are very large; the case $\beta \approx 2$ corresponds to the cycle graph. This parametrization will appear many times throughout this paper. We will always assume that $\delta > \delta_0 > 0$, where $\delta_0$ is a constant which may be made arbitrarily small. All of the hidden asymptotic terms may depend on $\delta_0$. We will not state this explicitly for the remainder of the paper.

We next show that the expected number of failed cuts remains small, even after conditioning on rare events:
\begin{proposition}
\label{zbar-cond-prop}
Let $\mathcal E$ be an event with probability $\bP(\mathcal E)$. Then
$$
\bE[Z \mid \mathcal E] \leq O( \bP(\mathcal E)^{-\frac{2}{\delta \log_2 n}} )
$$
\end{proposition}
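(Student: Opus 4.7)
The plan is to obtain this conditional bound from a uniform bound on a high moment of $Z$, via a standard Hölder-type argument. Specifically, for any integer $k \geq 1$, applying Hölder's inequality with exponents $k$ and $k/(k-1)$ gives
$$
\bE[Z \cdot \mathbf 1_{\mathcal E}] \;\leq\; \bigl(\bE[Z^k]\bigr)^{1/k} \, \bP(\mathcal E)^{1 - 1/k},
$$
so that $\bE[Z \mid \mathcal E] \leq (\bE[Z^k])^{1/k} \, \bP(\mathcal E)^{-1/k}$. If I can show $\bE[Z^k] = O(1)$ for some $k = \Theta(\delta \log_2 n)$, then choosing this $k$ yields exactly the claimed bound $O(\bP(\mathcal E)^{-2/(\delta \log_2 n)})$.

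The main step is therefore to bound $\bE[Z^k]$. Using the identity $Z = 2^{R-1} - 1$ from the preceding proposition, I have $Z^k \leq 2^{k(R-1)}$. Since $R \geq 1$, the Abel-summation identity $\bE[f(R)] = f(1) + \sum_{r \geq 2}\bP(R \geq r)(f(r)-f(r-1))$ applied to $f(r) = 2^{k(r-1)}$ gives
$$
\bE[2^{k(R-1)}] \;=\; 1 + (2^k - 1)\sum_{r \geq 2} \bP(R \geq r)\, 2^{k(r-2)}.
$$
Plugging in $\bP(R \geq r) \leq 1.01\, n^{-r\delta/2}/r!$ from Lemma~\ref{Amultiple-cut-lemma1} transforms the tail sum into
$$
\sum_{r \geq 2} \bP(R \geq r)\, 2^{k(r-2)} \;\leq\; \frac{1.01}{4^k}\sum_{r \geq 2} \frac{(2^k n^{-\delta/2})^r}{r!}.
$$

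Now I pick $k := \lfloor (\delta/2)\log_2 n \rfloor$, so that $2^k n^{-\delta/2} \leq 1$. The exponential series above is then bounded by a constant ($\leq e - 2$), and so $\sum_{r \geq 2}\bP(R \geq r)\, 2^{k(r-2)} = O(4^{-k})$. Combining yields $\bE[Z^k] \leq 1 + O((2^k-1)\cdot 4^{-k}) = 1 + O(2^{-k}) = O(1)$, hence $(\bE[Z^k])^{1/k} = O(1)$, and with our choice of $k$ we get $\bE[Z \mid \mathcal E] \leq O(1)\cdot \bP(\mathcal E)^{-1/k} = O(\bP(\mathcal E)^{-2/(\delta \log_2 n)})$, as required.

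The key subtlety — and the only place where the argument is not entirely mechanical — is that the $1/r!$ factor in Lemma~\ref{Amultiple-cut-lemma1} is essential: without it, the moment $\bE[2^{k(R-1)}]$ would diverge once $k$ is large enough that $2^k n^{-\delta/2} > 1$. The factorial tames the exponential growth of $2^{kr}$ for exactly the range of $k$ we need, namely $k$ just below $(\delta/2)\log_2 n$, which is precisely the value that optimizes the trade-off between the moment bound and the probability penalty $\bP(\mathcal E)^{-1/k}$. One minor bookkeeping item is that $k$ must be an integer at least $1$, but this is easily arranged by assuming $n$ is large enough (depending on $\delta_0$) and absorbing small cases into the $O(\cdot)$ constant.
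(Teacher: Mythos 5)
Your proof is correct in substance but takes a genuinely different route from the paper's. The paper works directly with the conditional distribution: it bounds $\bP(R \geq r \mid \mathcal E) \leq \min\bigl(1, \bP(R\geq r)/\bP(\mathcal E)\bigr)$, plugs in Lemma~\ref{Amultiple-cut-lemma1}, and truncates the sum $\sum_r 2^{r-1}\bP(R\geq r\mid\mathcal E)$ at the threshold $r \approx 2\log(1/\bP(\mathcal E))/(\delta\log n)$, with the two pieces giving the stated bound. You instead prove an unconditional moment bound $\bE[Z^k] = O(1)$ for $k \approx (\delta/2)\log_2 n$ (again resting on Lemma~\ref{Amultiple-cut-lemma1}, with the $1/r!$ doing exactly the work you identify) and then transfer it to every conditioning event at once via H\"older. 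Your version is more modular --- a single moment estimate serves all events $\mathcal E$ simultaneously --- while the paper's truncation argument is more hands-on but never needs to speak of moments; both yield the same exponent.

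One step in your write-up does not hold as stated: with $k = \lfloor(\delta/2)\log_2 n\rfloor$ you have $1/k \geq 2/(\delta\log_2 n)$, and since $\bP(\mathcal E)\le 1$ the final claim that $O(\bP(\mathcal E)^{-1/k}) = O(\bP(\mathcal E)^{-2/(\delta\log_2 n)})$ points the wrong way: the discrepancy factor $\bP(\mathcal E)^{-(1/k - 2/(\delta\log_2 n))}$ is unbounded as $\bP(\mathcal E)\to 0$, and the proposition really is invoked for events of very small probability (e.g.\ in Proposition~\ref{z-exp-cond-prop}, where the conditioning event is the failure of a possibly large cut). The fix is immediate and stays entirely within your argument: H\"older does not require integer exponents, so take $k = (\delta/2)\log_2 n$ exactly (then $2^k n^{-\delta/2} = 1$ and $\bP(\mathcal E)^{-1/k}$ is precisely the claimed bound); alternatively, keep $k$ an integer by rounding up, $k = \lceil(\delta/2)\log_2 n\rceil$, noting that then $2^k n^{-\delta/2}\le 2$ so the exponential series is still bounded by the constant $e^2-3$, while now $1/k \le 2/(\delta\log_2 n)$ and the last inequality goes in the right direction. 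With either correction your proof is complete, including your handling of small $n$ by absorbing those cases into the constant.
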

\begin{proof}

Let $q = \bP(\mathcal E)$.  If $n < O(1)$, then $Z \leq 2^n \leq O(1)$ while $q^{-\frac{2}{\delta \log_2 n}} \geq 1$, and so the result holds. So we may assume that $n$ is larger than any necessary constant throughout this proof.

Then:
{\allowdisplaybreaks
\begin{align*}
\bE[Z \mid \mathcal E] &= \bE[2^{R-1} - 1 \mid \mathcal E] \leq \sum_{r=1}^{\infty} 2^{r-1} \bP( R \geq r \mid \mathcal E) \leq \sum_{r=1}^{\infty} 2^{r-1} \min(1, \frac{\bP( R \geq r)}{q}) \\
&\leq \sum_{r=1}^{\infty} 2^{r-1} \min(1, \frac{1.01}{n^{r \delta/2} q}) \qquad \text{by Proposition~\ref{Amultiple-cut-lemma1}} \\
&\leq \sum_{r=1}^{\lceil x \rceil -1} 2^{r-1} + \sum_{r=\lceil x \rceil}^{\infty} \frac{1.01 \times 2^{r-1}}{q n^{r \delta/2}}  \qquad \text{where $x = \frac{2 \log(1/q)}{\delta \log n}$} \\
&\leq 2^{\lceil x \rceil} + \frac{1.01 \times 2^{\lceil x \rceil - 1} n^{ (\delta/2) (1 - \lceil x \rceil) }}{q (n^{\delta/2} - 2) } \qquad \text{as $2 n^{-\delta/2} < 1$} \\
&\leq 2^{x+1} + \frac{2^{x+1} n^{ (\delta/2) (1 -  x ) }}{q (n^{\delta/2} - 2) }  \\
&\leq 2^{x+1} + \frac{2^{x+1} n^{ (\delta/2) (1 -  x ) }}{q (n^{\delta/2}/2)} \qquad \text{for $n > 2^{4/\delta}$} \\
&\leq O( 2^{\frac{2 \log(1/q)}{\delta \log n}} ) = O( q^{-\frac{2}{\delta \log_2 n}} ) 
\end{align*}
 }
\end{proof}

As an application of Proposition~\ref{zbar-cond-prop}, we show that $\bar Z$ is nearly equivalent asymptotically to $U(p)$, although the former is much more tractable.
\begin{proposition}
\label{Aestimatelemma}
For $\delta \geq \delta_0 > 0$, we have $U(p) = \Theta(\bar Z)$ .
\end{proposition}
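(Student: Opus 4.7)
The plan is to prove the two directions of $U(p) = \Theta(\bar Z)$ separately. The easy direction is $U(p) \le \bar Z$: since $Z$ is a nonnegative integer-valued random variable, $U(p) = \bP(\text{graph disconnects}) = \bP(Z \ge 1) \le \bE[Z] = \bar Z$. Equivalently, this is just the union bound over all cuts.

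For the other direction, I would use the identity
\[
\bar Z \;=\; \bE[Z] \;=\; \bE[Z \mid Z \ge 1]\cdot \bP(Z \ge 1) \;=\; \bE[Z \mid Z \ge 1]\cdot U(p),
\]
so it suffices to show that $\bE[Z \mid Z \ge 1] = O(1)$. This is exactly the regime in which Proposition~\ref{zbar-cond-prop} is designed to be applied: take the event $\mathcal E = \{Z \ge 1\}$, whose probability is $U(p)$. The proposition then gives
\[
\bE[Z \mid Z \ge 1] \;\le\; O\!\left(U(p)^{-\frac{2}{\delta \log_2 n}}\right).
\]

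Now I would use the trivial lower bound $U(p) \ge p^c = n^{-2-\delta}$ (which holds because the minimum cut fails with probability $p^c$). Plugging in and using the identity $n^{a/\log_2 n} = 2^a$,
\[
U(p)^{-\frac{2}{\delta \log_2 n}} \;\le\; n^{(2+\delta)\cdot \frac{2}{\delta \log_2 n}} \;=\; 2^{2(2+\delta)/\delta}.
\]
Since $\delta \ge \delta_0 > 0$, this is bounded by a constant depending only on $\delta_0$, which is absorbed into the $O(\cdot)$ notation as per the convention stated after Corollary~\ref{Azbarboundcorr}. Thus $\bar Z = O(U(p))$, completing the proof.

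There is no real obstacle here; the work has already been done in Proposition~\ref{zbar-cond-prop}, and the proof reduces to choosing the right event to condition on and then checking that the exponent $n^{\text{const}/\log_2 n}$ degenerates to a constant. The only point that needs a brief sanity check is that the bound $U(p) \ge n^{-2-\delta}$ is strong enough to tame the $U(p)^{-2/(\delta\log_2 n)}$ factor, which it is precisely because the exponent scales as $1/\log_2 n$.
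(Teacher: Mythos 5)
Your proposal is correct and follows essentially the same route as the paper's own proof: union bound for $U(p) \leq \bar Z$, then the identity $\bar Z = \bE[Z \mid Z \geq 1]\, U(p)$ together with Proposition~\ref{zbar-cond-prop} applied to $\mathcal E = \{Z \geq 1\}$ and the lower bound $U(p) \geq p^c = n^{-2-\delta}$ to conclude $\bE[Z \mid Z \geq 1] = O(2^{2+4/\delta_0}) = O(1)$. Your constant $2^{2(2+\delta)/\delta}$ is the same quantity the paper writes as $2^{2+4/\delta}$, so there is nothing further to add.
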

\begin{proof}

By the union bound, $U(p) \leq \bar Z$.

Now note that $\bE[Z] = \bE[Z \mid Z\geq 1] \bP(Z \geq 1)$. So $U(p) = \bP(Z \geq 1) = \frac{\bE[Z]}{\bE[Z \mid Z \geq 1]}$. So it suffices to show that $\bE[Z \mid Z \geq 1] = O(1)$. For this we have:
\begin{align*}
\bE[Z \mid Z \geq 1] &\leq O( U(p)^{-\frac{2}{\delta \log_2 n}} ) \qquad \text{by Proposition~\ref{zbar-cond-prop}} \\
&\leq O( (n^{-2-\delta})^{-\frac{2}{\delta \log_2 n}} ) \qquad \text{as $U(p) \geq p^{c} = n^{-2-\delta}$} \\
&=O( 2^{2 + 4/\delta}) \leq O(2^{2 + 4/\delta_0}) = O(1)
\end{align*}
\end{proof}

As in Karger, our strategy will be to use Monte Carlo estimation for $U(p)$ when $U(p)$ is large. The cut-enumeration will be only be performed when $U(p)$ is small. For this, we have the following useful bounds:
\begin{proposition}
\label{k-prop}
Let $K > 2$ be an arbitrary constant. Then for $U(p) < n^{-K}$ and $n$ sufficiently large, we have the following:
\begin{enumerate}
\item $\delta \geq K-2 > 0$.
\item $\delta \geq \beta + \Omega(1)$.
\item $\beta \in [0,2]$.
\end{enumerate}
\end{proposition}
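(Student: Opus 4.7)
The plan is to read each claim off directly from the inequalities already established, so the proof is essentially bookkeeping rather than a new argument. I would prove the three parts roughly in the order (3), (1), (2).

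First, part (3) is immediate from Corollary~\ref{Azbarboundcorr}. That corollary says $n^{-2-\delta} \leq \bar Z \leq n^{-\delta}$, and since we have \emph{defined} $\beta$ by the identity $\bar Z = n^{-2-\delta + \beta}$, taking logarithms base $n$ of both sides yields $-2 - \delta \leq -2 - \delta + \beta \leq -\delta$, which is exactly $0 \leq \beta \leq 2$. No use of the hypothesis $U(p) < n^{-K}$ is needed here.

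For part (1), the only fact I need is the trivial lower bound $U(p) \geq p^c$: any single min-cut fails with probability $p^c$, so $U(p)$, which is the probability that \emph{some} cut fails, is at least $p^c = n^{-2-\delta}$. Combining with the hypothesis $U(p) < n^{-K}$ gives $n^{-2-\delta} < n^{-K}$, hence $\delta > K - 2$, which is a positive constant since $K > 2$.

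For part (2), the point is that $\bar Z$ is asymptotically the same as $U(p)$, by Proposition~\ref{Aestimatelemma}, which tells us $U(p) = \Theta(\bar Z)$. Therefore there is an absolute constant $c_1 > 0$ with $\bar Z \leq U(p)/c_1 < n^{-K}/c_1$. Substituting $\bar Z = n^{-2-\delta+\beta}$ and taking $\log_n$ gives
\[
-2 - \delta + \beta \;<\; -K - \frac{\log c_1}{\log n},
\]
i.e.\ $\delta - \beta > K - 2 - o(1)$. Since $K - 2 > 0$ is a fixed constant, for $n$ sufficiently large this gives $\delta \geq \beta + (K-2)/2 = \beta + \Omega(1)$, as claimed.

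There is really no technical obstacle here; the only thing to be careful about is invoking Proposition~\ref{Aestimatelemma} (which requires $\delta \geq \delta_0$ for some constant $\delta_0 > 0$)—but part~(1) has just furnished exactly such a lower bound $\delta \geq K-2$, so the hypotheses of Proposition~\ref{Aestimatelemma} are met with $\delta_0 := K - 2$, and the chain of implications (3) $\Rightarrow$ (1) $\Rightarrow$ (2) goes through cleanly for all $n$ beyond some threshold depending on $K$ and the implicit constant in $U(p) = \Theta(\bar Z)$.
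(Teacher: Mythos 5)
Your proof is correct and follows essentially the same route as the paper: part (1) from $U(p) \geq p^c = n^{-2-\delta}$, part (2) from $U(p) = \Theta(\bar Z) = \Theta(n^{-2-\delta+\beta})$ combined with $U(p) < n^{-K}$, and part (3) as a restatement of Corollary~\ref{Azbarboundcorr}. Your explicit remark that part (1) supplies the hypothesis $\delta \geq \delta_0 = K-2$ needed to invoke Proposition~\ref{Aestimatelemma} matches what the paper does implicitly when it sets $\delta_0 = K-2$.
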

\begin{proof}
First, observe that if a min-cut fails, then $G$ fails, so $U(p) \geq p^c = n^{-2-\delta}$. This implies that $\delta \geq K - 2$. So we may set $\delta_0 = K - 2 > 0$, which satisfies (1).

Also, by Proposition~\ref{Aestimatelemma}, we have $U(p) = \Theta(\bar Z) = \Theta(n^{-2-\delta+\beta})$.  Hence $\beta \leq 2 + \delta - K + O(1/\log n) \leq -\Omega(1) + \delta + O(1/\log n)$. This implies that for $n$ sufficiently large $\beta \leq \delta - \Omega(1)$.

The bound $\beta \in [0,2]$ is simply a restatement of Proposition~\ref{Azbarboundcorr}.
\end{proof}

\textbf{For the majority of this paper,  we will assume that $U(p) < n^{-K}$ for some fixed $K > 2$, and that $n$ is sufficiently large. Thus, all of the conclusions of Proposition~\ref{k-prop} hold. We will note this at the beginning of each relevant section; all the results in that section may depend on this assumption and all the resulting asymptotic notations may depend upon $K$.}

\section{Estimating the failure probability for a collection of cuts}
\label{Asec:statistical}
In this section, we address the following statistical problem. Suppose we are given some collection $\mathcal A$ consisting of $N$ cuts, which includes at least one min-cut. We define the \emph{failure probability of $\mathcal A$} as 
$$
U_{\mathcal A}(p) = \bP( \text{at least one cut from $\mathcal A$ fails} )
$$

\textbf{In this section, we will assume $U(p) < n^{-K}$ for some constant $K > 2$; thus the conclusions of Proposition~\ref{k-prop} hold.}

Our goal here is to estimate $U_{\mathcal A} (p)$ to within relative error $\epsilon$. This is a key subroutine for Karger's algorithm, and it will also be used for ours. For this, Karger uses an elegant algorithm developed by Karp, Luby, Madras \cite{lubydnf}; in this context, that algorithm has a running time of roughly $n^2 N \epsilon^{-2}$. We will give a new algorithm running in time $O( N \text{polylog}(Nn) + n^2 \epsilon^{-2})$.

We make some elementary observations first. Let us define the random variable
$$
Z_{\mathcal A} = \sum_{C \in \mathcal A} [\text{$C$ fails}],
$$
where we use the Iverson notation so that $[\text{$C$ fails}]$ is $1$ if $C$ fails, and 0 otherwise. 

Then $U_{\mathcal A}(p) = \bP( Z_{\mathcal A} \geq 1 ) $. Clearly $U_{\mathcal A}(p) \leq U(p)$ and $Z_{\mathcal A} \leq Z$. Also, $U_{\mathcal A} (p) \geq p^c$ as $\mathcal A$ contains at least one min-cut.

\subsection{Data structures for $\mathcal A$} 
\label{dsa-sec} Each cut in $\mathcal A$ might require $\Omega(n)$ space to store explicitly and thus even reading $\mathcal A$ might require $\Omega(N n)$ time. We will give a running time which may be smaller than this, and thus we must store and generate $\mathcal A$ in a compressed form. 

Instead of storing the full information about each cut $C \in \mathcal A$, we will store three smaller pieces of information, each taking just $O(\log(N n))$ words of storage, instead: 
\begin{enumerate} 
\item A unique identifier for $C$
\item The weight of $C$
\item A pointer which allows us to fully reconstruct $C$ in time $O(n^2)$
\end{enumerate}

In order to assign every cut of $\mathcal A$ a unique identifier, we may use a simple random hashing scheme. We choose a random function $H: V \rightarrow [2^b]$, where $b = \phi \log N$ and $\phi > 0$ is some constant. We then define the identifier for a cut $C$ which has shores $A, V - A$ with vertex $1$ in $A$, by $H(C) = \sum_{v \in A} b(v)$ where the sum is taken modulo $2^b$. It is not hard to see that for $\phi$ a sufficiently large that with high probability all cuts of $\mathcal A$ receive distinct identifiers.

Because the cuts of $C$ are assigned distinct identifiers, we may sort $\mathcal A$ and remove any duplicates in time $O( N \log^2 (N n))$. Thus, we may assume that after an additional pre-processing step, we have that $N \leq 2^n$. This is a crude bound, but it will be sufficient for many purposes of analyzing our algorithm.

\begin{proposition}
\label{check-in-a-prop}
Suppose that $\mathcal A$ has been sorted. Then given any cut $C$ of $G$, there is an algorithm running in time $O(n^2)$ to determine if $C \in \mathcal A$.
\end{proposition}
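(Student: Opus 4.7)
The plan is to use the hash-based identifiers as the lookup key. Given the query cut $C$ (specified, say, by the indicator vector of its shore), first compute its identifier $H(C) = \sum_{v \in A} b(v) \pmod{2^b}$, where $A$ is the shore of $C$ containing vertex $1$. Since $|V| = n$, this is an $O(n)$-time computation. Next, perform a binary search on the sorted list $\mathcal{A}$ using the identifier as the key; by the pre-processing discussion, $N \leq 2^n$, so the search uses $O(\log N) = O(n)$ comparisons, each of which is $O(1)$ in the word model. Finally, if a matching identifier is found, use the associated pointer to reconstruct the corresponding candidate cut $C' \in \mathcal{A}$ explicitly in time $O(n^2)$ (this is exactly what the stored pointer is designed to support), and compare $C$ to $C'$ directly, which costs an additional $O(n)$ time. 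Report yes if and only if the comparison succeeds.

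For the time bound, the three phases cost $O(n)$, $O(n)$, and $O(n^2)$ respectively, so the total is $O(n^2)$ as claimed. For correctness, the essential observation is that the pre-processing step has (with high probability, which we assume throughout) assigned distinct identifiers to all cuts of $\mathcal{A}$, so at most one entry of $\mathcal{A}$ can match $H(C)$; the binary search therefore finds the unique candidate or reports absence. If $C \in \mathcal{A}$, its stored entry is found and reconstruction recovers $C$ itself, so the final comparison succeeds. If $C \notin \mathcal{A}$ but $H(C)$ happens to coincide with the identifier of some $C' \in \mathcal{A}$, the explicit reconstruct-and-compare step will detect $C \neq C'$ and correctly reject.

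There is no real obstacle here; the statement is essentially an accounting claim about the data structure introduced in Section~\ref{dsa-sec}. The one point that requires minor care is keeping the comparison inside the $O(n^2)$ budget — this is why we do not attempt to match $C$ against $\mathcal{A}$ by comparing full shore vectors, which could cost $\Omega(N n)$ in the worst case, but instead filter first via the $O(\log N)$-word identifier and only reconstruct once. The correctness relies on the high-probability distinctness of identifiers established in Section~\ref{dsa-sec}, so the conclusion holds with high probability over the choice of $H$.
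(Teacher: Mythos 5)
Your proposal is correct and follows essentially the same route as the paper: compute $H(C)$ in $O(n)$ time, binary-search the sorted list of identifiers, then use the stored pointer to reconstruct the single candidate $C'$ in $O(n^2)$ and compare it to $C$ explicitly, relying on the high-probability distinctness of identifiers. The only cosmetic difference is in accounting for the binary-search cost (the paper bounds it as $O(b\log N)\leq O(n^2)$ via the crude bound $N\leq 2^n$, while you charge $O(1)$ per comparison), which does not affect the $O(n^2)$ total.
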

\begin{proof}
We compute $H(C)$ in time $O(n)$. As all the cuts of $\mathcal A$ receive distinct identifiers, there is at most one $C' \in \mathcal A$ with $H(C') = H(C)$. An index to the cut $C'$ can be found via binary search in time $O(b \log N) \leq O(\log^2 N)$. Using our crude bound $N \leq 2^n$, this is at most $O(n^2)$.

Now, if there is no such $C' \in \mathcal A$, then we know that $C \notin \mathcal A$ and we are done. Otherwise, $C \in \mathcal A$ if and only if $C = C'$. To check this, we use the pointer to fully reconstruct $C'$ from its index; this step takes $O(n^2)$. At this stage, both $C$ and $C'$ are represented explicitly so they can be compared in time $O(n)$.
\end{proof}

\subsection{The statistical estimation algorithm}
We take as our starting point a simpler algorithm, which was presented as a toy or warm-up exercise in Karp, Luby, Madras \cite{lubydnf}. We will see how to implement it faster than their alternative, more sophisticated algorithms.
\ttfamily
\begin{enumerate}
\item[1.] Select a cut $Q$ from $\mathcal A$ with probability $\propto p^{|Q|}$.
\item[2.] Let $L$ be a random subset of the edges in $G - Q$, in which each such edge is chosen independently with probability $p$. Let $H = G - Q - L$.
\item[3.] Count how many cuts in $\mathcal A$ have failed in $H$; let $J$ denote the number of such cuts.
\item[4.] Estimate $\hat U_{\mathcal A} = \frac{\sum_{C \in \mathcal A} p^{|C|}}{J}$.
\end{enumerate}
\rmfamily

We can perform step (1) by choosing a random real number in the range $[0, \sum p^{|C|}]$ and then performing a binary search; as $\mathcal A$ explicitly stores the weight of all the cuts, this step can be implemented (with some pre-processing) in time $O(\log (N n))$.

The main cost of this algorithm is step (3). If $\mathcal A$ is an arbitrary collection of edge sets, then seemingly the only way to do this would be to check each member of $\mathcal A$ to see whether it failed in $H$. This would require roughly $O(m N)$ time per execution. This high running time is the main motivation for the more sophisticated algorithm of \cite{lubydnf}. 

However, $\mathcal A$ is a collection of \emph{cuts}, and we can significantly accelerate step (3) by taking advantage of this fact. We implement step (3) by enumerating all the failed cuts of $H$, and checking for each if it is present in $\mathcal A$. We will show that $H$ typically has a small number of cuts, and so this step is relatively inexpensive.

We first show that $\hat U_{\mathcal A}$ is a good estimator for $U_{\mathcal A}(p)$, which in turn is a good estimator for $U(p)$.
\begin{proposition}
\label{zaprop}
For any cut $C \in \mathcal A$, the distribution of $H$ in step (2) of this estimation procedure, conditional on $Q = C$, is the same as the distribution of the graph $G'$ resulting from edge failures, conditional on the event that $C$ fails.

For any cut $C \in \mathcal A$, the distribution of $J$ conditional on $Q = C$ is the same as the distribution of $Z_{\mathcal A}$ conditional on the event that $C$ fails.
\end{proposition}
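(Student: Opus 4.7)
The plan is to verify both parts by a direct distributional comparison; no clever inequality is needed, just a careful bookkeeping of which edges are ``removed'' in each scenario.

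For part (1), I would fix $C \in \mathcal{A}$ and describe the joint distribution of the edge-removal pattern on each side. On the estimator side, conditional on $Q = C$: every edge of $C$ is deterministically deleted (since they are excluded from $L$ and then subtracted off explicitly), while every edge not in $C$ is independently deleted with probability $p$ via the random set $L$. On the natural-process side, let $G'$ be obtained from $G$ by deleting each edge independently with probability $p$, and condition on the event $\mathcal{F}_C$ that $C$ fails, i.e.\ all edges of $C$ are removed. Since the edges fail independently, conditioning on $\mathcal{F}_C$ only forces the edges of $C$ to be deleted; the edges outside $C$ retain their independent failure distribution with probability $p$. Thus the edge-removal patterns have the same distribution on both sides, so $H \overset{d}{=} G' \mid \mathcal{F}_C$.

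For part (2), observe that $J$ is defined as $|\{D \in \mathcal{A} : D \text{ fails in } H\}|$, and $Z_{\mathcal{A}}$ is the same function $|\{D \in \mathcal{A} : D \text{ fails in } G'\}|$ applied to the naturally-generated graph. Since $J$ and $Z_{\mathcal{A}}$ are identical (measurable) functionals of $H$ and $G'$ respectively, part (1) immediately gives that their conditional distributions agree.

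I do not expect any real obstacle here; the only thing worth being careful about is to make the probability-space argument fully explicit, perhaps by writing the joint density (for finite product spaces, really just a product of Bernoullis) on each side and checking they agree term-by-term on the set of configurations in which all edges of $C$ are absent. The proposition is essentially a formalization of the intuition that ``delete $C$, then flip independent coins for the rest'' is a clean way to sample from the conditional law given $C$ fails, which is valid precisely because edge failures are mutually independent.
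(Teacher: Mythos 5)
Your proposal is correct and is essentially the same argument as the paper's: both rest on the observation that, by independence of edge failures, conditioning on the failure of $C$ is equivalent to deterministically removing $C$ and letting the remaining edges fail independently with probability $p$, which is exactly step (2) of the estimator; the statement about $J$ versus $Z_{\mathcal A}$ then follows since they are the same functional of the respective graphs. The paper states this in two sentences, while you spell out the product-of-Bernoullis bookkeeping a bit more explicitly, but there is no substantive difference.
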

\begin{proof}
Because the edge failures are independent, then conditioning on the event that $C$ fails is equivalent to removing the edges of $C$ and letting the remaining edges fail independently. This is exactly the random distribution of graphs selected in step (2) of our estimation algorithm.
\end{proof}

\begin{proposition}[\cite{lubydnf}]
The statistic $\hat U_{\mathcal A}$ is an unbiased estimator, viz.
$$
\bE[\hat U_{\mathcal A}] = U_A(p)
$$
\end{proposition}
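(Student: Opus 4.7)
The plan is a direct unbiasedness calculation of the Karp--Luby type, using Proposition~\ref{zaprop} as the key input to change the conditioning from ``draw $Q$ from $\mathcal A$'' to ``condition on a particular cut failing.''

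First I would abbreviate $S = \sum_{C \in \mathcal A} p^{|C|}$, so $\hat U_{\mathcal A} = S/J$, and note that by construction $Q \subseteq$ edges that fail in $H$, so $Q$ itself is always one of the failed cuts counted in $J$; hence $J \geq 1$ and the statistic is well-defined. Then I would expand the expectation by conditioning on the choice of $Q$:
\begin{equation*}
\bE[\hat U_{\mathcal A}] \;=\; \sum_{C \in \mathcal A} \bP(Q = C)\cdot \bE\bigl[S/J \,\bigm|\, Q = C\bigr]
\;=\; \sum_{C \in \mathcal A} \frac{p^{|C|}}{S}\cdot S \cdot \bE\bigl[1/J \,\bigm|\, Q = C\bigr].
\end{equation*}
By Proposition~\ref{zaprop}, the distribution of $J$ conditional on $Q = C$ equals the distribution of $Z_{\mathcal A}$ conditional on the event that $C$ fails under the usual independent edge-failure process on $G$. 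So the inner conditional expectation can be rewritten as an unconditional expectation:
\begin{equation*}
p^{|C|}\cdot \bE\bigl[1/J \,\bigm|\, Q = C\bigr]
\;=\; p^{|C|}\cdot \bE\bigl[1/Z_{\mathcal A} \,\bigm|\, \text{$C$ fails}\bigr]
\;=\; \bE\!\left[\frac{\mathbf 1[\text{$C$ fails}]}{Z_{\mathcal A}}\right],
\end{equation*}
where the last step uses $\bP(\text{$C$ fails}) = p^{|C|}$ and the convention that the summand is $0$ when $C$ does not fail (in which case $Z_{\mathcal A}$ may be $0$ but the indicator kills the term).

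Finally I would interchange the finite sum over $C$ with the expectation and telescope:
\begin{equation*}
\bE[\hat U_{\mathcal A}]
\;=\; \bE\!\left[\sum_{C \in \mathcal A} \frac{\mathbf 1[\text{$C$ fails}]}{Z_{\mathcal A}}\right]
\;=\; \bE\!\left[\frac{Z_{\mathcal A}}{Z_{\mathcal A}}\cdot \mathbf 1[Z_{\mathcal A} \geq 1]\right]
\;=\; \bP(Z_{\mathcal A} \geq 1)
\;=\; U_{\mathcal A}(p).
\end{equation*}
There is essentially no obstacle here; the only thing one has to be slightly careful about is the $Z_{\mathcal A} = 0$ case, which is handled precisely by the indicator arising from the reweighting $\bP(\text{$C$ fails}) = p^{|C|}$. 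The entire proof is a few lines, and the content is really just that sampling $Q$ proportional to $p^{|Q|}$ implements importance sampling from the uniform (over failing cuts in $\mathcal A$) distribution, so dividing by $J$ corrects exactly for the over-counting of each realization in which multiple cuts in $\mathcal A$ fail.
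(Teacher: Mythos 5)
Your proof is correct and is essentially the paper's argument run in reverse: the paper starts from $U_{\mathcal A}(p)$, uses a ``mark one failed cut uniformly at random'' device to obtain $U_{\mathcal A}(p)=\sum_{C\in\mathcal A}\bP(\text{$C$ fails})\,\bE[1/Z_{\mathcal A}\mid \text{$C$ fails}]$, and then applies Proposition~\ref{zaprop}, whereas you expand $\bE[\hat U_{\mathcal A}]$ over the choice of $Q$, apply Proposition~\ref{zaprop}, and telescope via the identity $\sum_{C\in\mathcal A}\mathbf 1[\text{$C$ fails}]/Z_{\mathcal A}=\mathbf 1[Z_{\mathcal A}\geq 1]$, which is exactly what the marking trick encodes. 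Your handling of the $Z_{\mathcal A}=0$ case (and the observation that $J\geq 1$ since $Q$ itself fails) is fine, so there is nothing missing.
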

\begin{proof}
The proof is shown originally in \cite{lubydnf}; we present a modified proof here for completeness.

Consider the following random process: we randomly allow edges to fail with probability $p$. If $Z_{\mathcal A} \geq 1$, then we choose one failed cut of $\mathcal A$ uniformly at random, and mark it. 

Note that some cut of $\mathcal A$ fails if and only if exactly cut in $\mathcal A$ is marked. So
{\allowdisplaybreaks
\begin{align*}
U_{\mathcal A}(p) &= \sum_{C \in {\mathcal A}} P(\text{$C$ marked}) \\
&= \sum_{C \in \mathcal A} P(\text{$C$ fails}) \bE[1/Z_{\mathcal A} \mid \text{$C$ fails}] \\
&= \sum_{C \in \mathcal A} p^{|C|} \bE[1/J \mid Q = C] \qquad \text{by Proposition~\ref{zaprop}} \\
&= \sum_{C \in \mathcal A} P(Q = C) \bigl( \sum_{C \in A} p^{|C|} \bigr) \bE[1/J \mid Q = C] \\
&= \bigl( \sum_{C \in \mathcal A} p^{|C|} \bigr) \bE[1/J] = \bE[ \hat U_{\mathcal A} ]
\end{align*}
}
\end{proof}

\begin{proposition}
\label{z-exp-cond-prop}
Suppose that $C \in \mathcal A$. Let $H$ be the graph resulting from removing $C$ and allowing other edges to fail independently. Let $Z$ denote the total number of failed cuts of $H$. Then
$$
\bE[Z \mid Q = C] \leq e^{O(|C|/c)} 
$$
\end{proposition}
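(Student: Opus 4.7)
The plan is to combine Proposition~\ref{zaprop} with Proposition~\ref{zbar-cond-prop}; the result will follow from a short calculation converting between the parametrizations.

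First, by Proposition~\ref{zaprop}, the distribution of $H$ conditional on $Q = C$ is identical to the distribution of the random graph $G'$ produced by independent edge failures on $G$, conditioned on the event $\mathcal E$ that $C$ fails. Consequently $\bE[Z \mid Q = C] = \bE[Z \mid \mathcal E]$, where on the right $Z$ is the number of failed cuts of $G$ under independent edge failures with probability $p$ — which is the same $Z$ studied in Section~\ref{Afailedcuts}.

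Second, since the edges of $C$ fail independently with probability $p$, we have $\bP(\mathcal E) = p^{|C|}$. Applying Proposition~\ref{zbar-cond-prop} directly with this event gives
\[
\bE[Z \mid Q = C] \;=\; \bE[Z \mid \mathcal E] \;\leq\; O\!\left(p^{-\frac{2|C|}{\delta \log_2 n}}\right).
\]

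Third, I would convert this bound into the desired form by using the parametrization $p^c = n^{-2-\delta}$, so that $p^{|C|} = n^{-(2+\delta)|C|/c}$ and therefore
\[
p^{-\frac{2|C|}{\delta \log_2 n}} \;=\; n^{\frac{2(2+\delta)|C|}{\delta c \log_2 n}} \;=\; 2^{\frac{2(2+\delta)|C|}{\delta c}}.
\]
Finally, the standing assumption (in force throughout this section by Proposition~\ref{k-prop}) that $\delta \geq \delta_0 > 0$ gives $(2+\delta)/\delta \leq 2 + 2/\delta_0 = O(1)$, so the exponent is $O(|C|/c)$, and we conclude $\bE[Z \mid Q = C] \leq e^{O(|C|/c)}$.

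There is essentially no substantive obstacle: the heavy lifting has already been done in Proposition~\ref{zbar-cond-prop}, and the only step here is the change of variables between the failure probability of $\mathcal E$ and the ratio $|C|/c$. The one point worth being careful about is to check that the hidden constant in the $O(\cdot)$ of Proposition~\ref{zbar-cond-prop} does not interact badly with taking $\bP(\mathcal E)$ very small; but since the statement of Proposition~\ref{zbar-cond-prop} already handles arbitrarily small $\bP(\mathcal E)$ and the exponent simplification is exact, this is automatic.
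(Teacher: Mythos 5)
Your proposal is correct and follows essentially the same route as the paper: apply Proposition~\ref{zaprop} to replace conditioning on $Q = C$ by conditioning on the event that $C$ fails, invoke Proposition~\ref{zbar-cond-prop} with $\bP(\mathcal E) = p^{|C|}$, and then use $p^c = n^{-2-\delta}$ together with $\delta \geq \delta_0$ to simplify the exponent to $O(|C|/c)$. The change of variables $(p^{|C|})^{-2/(\delta \log_2 n)} = 2^{2(2+\delta)(|C|/c)/\delta}$ is exactly the computation in the paper's proof, so there is nothing further to add.
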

\begin{proof}
We have that
\begin{align*}
\bE[Z \mid Q = C] &= \bE[ Z \mid \text{$C$ fails} ] \qquad \text{by Proposition~\ref{zaprop}} \\
&\leq O( (p^{|C|} )^{-\frac{2}{\delta \log_2 n}}) \qquad \text{by Proposition~\ref{zbar-cond-prop}} \\
&= O(2^{\frac{2 (|C|/c) (2+\delta)}{\delta}}) \\
&= 2^{O(|C|/c)}
\end{align*}
\end{proof}

\begin{proposition}
Suppose that $\mathcal A$ contains a min-cut. Then
$$
\bE[ \hat U_{\mathcal A}^2] \leq O( \bE[ \hat U_{\mathcal A}]^2 )
$$
That is, the statistic $\hat U_{\mathcal A}$ has relative variance $O(1)$.
\end{proposition}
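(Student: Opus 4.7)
The plan is to reduce the second-moment bound to the inequality $S \leq O(U_{\mathcal A}(p))$, where $S = \sum_{C \in \mathcal A} p^{|C|}$. First I would observe that $\hat U_{\mathcal A} = S/J$, and that $J \geq 1$ deterministically: in step~(2) every edge of the sampled cut $Q$ is removed, so $Q \in \mathcal A$ is always a failed cut of $H$. Hence $\hat U_{\mathcal A} \leq S$ pointwise, which gives
$$
\bE[\hat U_{\mathcal A}^2] \;\leq\; S \cdot \bE[\hat U_{\mathcal A}] \;=\; S \cdot U_{\mathcal A}(p).
$$
It therefore suffices to show $S = O(U_{\mathcal A}(p))$.

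To see this, rewrite $S = \bE[Z_{\mathcal A}]$ and $U_{\mathcal A}(p) = \bP(Z_{\mathcal A} \geq 1)$, so that
$$
\frac{S}{U_{\mathcal A}(p)} \;=\; \bE[Z_{\mathcal A} \mid Z_{\mathcal A} \geq 1] \;\leq\; \bE[Z \mid Z_{\mathcal A} \geq 1],
$$
using $Z_{\mathcal A} \leq Z$ pointwise. This conditional expectation is exactly what Proposition~\ref{zbar-cond-prop} is designed to bound: taking $\mathcal E = \{Z_{\mathcal A} \geq 1\}$, and using the hypothesis that $\mathcal A$ contains a min-cut so that $\bP(\mathcal E) = U_{\mathcal A}(p) \geq p^c = n^{-2-\delta}$, the proposition yields
$$
\bE[Z \mid Z_{\mathcal A} \geq 1] \;\leq\; O\bigl(U_{\mathcal A}(p)^{-2/(\delta \log_2 n)}\bigr) \;\leq\; O\bigl(n^{(2+\delta)\cdot 2/(\delta \log_2 n)}\bigr) \;=\; O(2^{4/\delta + 2}) \;=\; O(1),
$$
where I used $n^{1/\log_2 n} = 2$ and the standing assumption $\delta \geq \delta_0 > 0$ from Proposition~\ref{k-prop}. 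Combining with the first display gives $\bE[\hat U_{\mathcal A}^2] \leq O(U_{\mathcal A}(p)^2) = O(\bE[\hat U_{\mathcal A}]^2)$, as required.

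The main obstacle is recognizing the right packaging. A naive decomposition $\bE[\hat U_{\mathcal A}^2] = S \sum_{C} p^{|C|}\, \bE[1/J^2 \mid Q=C]$ invites controlling $\bE[1/J^2 \mid Q=C]$ cut by cut, but the sharpest elementary pointwise bound $\bE[1/J^2 \mid Q=C] \leq \bE[1/J \mid Q=C]$ only recovers $\bE[\hat U_{\mathcal A}^2] \leq S \cdot U_{\mathcal A}(p)$ — the same bound that already falls out of $\hat U_{\mathcal A} \leq S$. The real content lies in converting the ratio $S/U_{\mathcal A}(p)$ into the conditional expectation $\bE[Z_{\mathcal A} \mid Z_{\mathcal A} \geq 1]$, which is what lets the rare-event conditioning bound of Proposition~\ref{zbar-cond-prop} (established in Section~\ref{Afailedcuts}) do its work. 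The estimate $U_{\mathcal A}(p) \geq p^c$, obtained solely from the assumption that $\mathcal A$ contains a min-cut, is precisely what keeps the exponent $-2/(\delta \log_2 n)$ from blowing the bound up.
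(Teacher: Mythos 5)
Your proof is correct, and it takes a genuinely different route from the paper's. The paper reduces the claim to lower-bounding $\bE[1/J] \geq \Omega(1)$: it first shows, via Corollary~\ref{Aexpsumcorr1} and the min-cut hypothesis, that $|Q| \leq x c$ with probability $\geq 1/2$ for a suitable constant $x$, and then invokes Proposition~\ref{z-exp-cond-prop} to control $\bE[Z]$ on that event. You instead exploit the deterministic fact $J \geq 1$ (the sampled cut $Q$ always fails in $H$) to get the pointwise bound $\hat U_{\mathcal A} \leq S$, which together with unbiasedness reduces everything to $S = O(U_{\mathcal A}(p))$, i.e.\ to $\bE[Z_{\mathcal A} \mid Z_{\mathcal A} \geq 1] = O(1)$; that follows in one step from Proposition~\ref{zbar-cond-prop} applied to the event $\{Z_{\mathcal A} \geq 1\}$, with the min-cut hypothesis entering only through $U_{\mathcal A}(p) \geq p^c$. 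Your argument is essentially the relativization of Proposition~\ref{Aestimatelemma} (the proof that $U(p) = \Theta(\bar Z)$) to the sub-collection $\mathcal A$, and it is arguably cleaner: it avoids the weight-splitting of $Q$ and the cut-counting bound of Corollary~\ref{Aexpsumcorr1} entirely, and it isolates the quantitative content in the single inequality ``relative variance $\leq \bE[Z_{\mathcal A} \mid Z_{\mathcal A} \geq 1]$.'' What the paper's route buys in exchange is a direct handle on $\bE[1/J]$ and on the typical size of the sampled configuration, which is closer in spirit to the Karp--Luby--Madras analysis and to the subsequent running-time estimate; but for the statement as posed, both proofs are complete and use the min-cut hypothesis in the same essential way.
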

\begin{proof} 
We have $\bE[ U_{\mathcal A}^2 ] / \bE[U_{\mathcal A}]^2 = \bE[1/J^2] / \bE[1/J]^2 \leq 1/\bE[1/J]^2$. Thus, it will suffices to show $\bE[1/J] \geq \Omega(1)$. 

First, we claim that there is some constant $x > 1$ such that $\bP(|Q| \leq x c) \geq 1/2$. For,
{\allowdisplaybreaks
\begin{align*}
\bP(|Q| \geq x c) &= \frac{\sum_{C \in \mathcal A,|C| \geq x c} p^c}{\sum_{C \in \mathcal A} p^c} \\
&\leq \frac{\sum_{|C| \geq x c} p^{|C|}}{n^{-2-\delta}} \qquad \text{as $\mathcal A$ contains a min-cut} \\
&\leq O(n^{2+\delta} n^{-(2+\delta) x} n^{2 x}) \qquad \text{by Corollary~\ref{Aexpsumcorr1}} \\
&< 1/2, \qquad \text{for $x$ sufficiently large constant}
\end{align*}
}

Now, conditional on the event that $|Q| \leq x c$, by Proposition~\ref{z-exp-cond-prop} we have that $\bE[Z] \leq e^{O(x)}$. 

As $J \leq Z$, this implies that $\bE[1/J] \geq \bE[1/J \mid Q \leq x c] \bP( Q \leq x c) \geq 1/(x c) P( Q \leq x c) \geq e^{-O(x)} \times 1/2 \geq \Omega(1)$.

\end{proof}

Finally, we prove that we can execute this estimation procedure relatively quickly --- much more quickly than by reading in the entire collection $\mathcal A$:
\begin{proposition}
Suppose that $\mathcal A$ contains a min-cut. Then, after preprocessing steps, the expected running time of the estimation algorithm is $O(n^2)$.
\end{proposition}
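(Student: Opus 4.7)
The plan is to bound the cost of each step separately, with step (3) emerging as the bottleneck. Assume the preprocessing has sorted $\mathcal A$ by the identifier $H(\cdot)$ (so Proposition~\ref{check-in-a-prop} gives $O(n^2)$-time membership tests) and precomputed the cumulative sums of $p^{|C|}$. Then step (1) reduces to sampling a uniform real in $[0,S]$ and performing a binary search, at cost $O(\log(Nn))$; step (4) is $O(1)$; and step (2) takes $O(n^2)$ to read the adjacency matrix of $G$ and independently sample $L$.

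For step (3), I would first compute the $R$ connected components of $H$ in $O(n^2)$. The failed cuts of $G$ in $H$ are exactly the $Z_H = 2^{R-1}-1$ nontrivial bipartitions of these components, which I would enumerate by a Gray-code-style traversal: precompute the per-component hash contributions $h_i = \sum_{v \in C_i} b(v)$ in total $O(n)$ time; each successive bipartition then toggles a single component, updating $H(C)$ by $\pm h_i$ in $O(1)$. For each enumerated bipartition I invoke Proposition~\ref{check-in-a-prop} at cost $O(n^2)$ to test membership in $\mathcal A$ and increment $J$ if appropriate. Thus step (3) costs $O(n^2 (Z_H + 1))$, and the proposition reduces to showing $\bE[Z_H] = O(1)$.

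To bound $\bE[Z_H]$, I split the sum $\sum_{C \in \mathcal A} \frac{p^{|C|}}{S} \bE[Z_H \mid Q=C]$ at $|C| = xc$ for a constant $x$ to be chosen. Let $K_0$ be the constant implicit in Proposition~\ref{z-exp-cond-prop}, so $\bE[Z_H \mid Q=C] \leq 2^{K_0 |C|/c}$. On the head $\{|C| \leq xc\}$ we immediately get $\bE[Z_H \mid Q=C] \leq 2^{K_0 x} = O(1)$, contributing $O(1)$ to the total since the conditional probabilities sum to at most $1$. For the tail, using $S \geq p^c$ (because $\mathcal A$ contains a min-cut),
\[
\sum_{\substack{C \in \mathcal A \\ |C| > xc}} \frac{p^{|C|}}{S} \cdot 2^{K_0 |C|/c} \;\leq\; \frac{1}{p^c} \sum_{\substack{\text{all cuts } C \\ |C| > xc}} \bigl(p^c \cdot 2^{K_0}\bigr)^{|C|/c}.
\]
Since $p^c \cdot 2^{K_0} < n^{-2-\Omega(1)}$ for $n$ large, Corollary~\ref{Aexpsumcorr1} with $\alpha = x$ bounds the inner sum by $O\bigl((n^2 p^c \cdot 2^{K_0})^x\bigr) = O(n^{-\delta x + o(1)})$; dividing by $p^c = n^{-2-\delta}$ yields $O(n^{2+\delta - \delta x + o(1)})$, which is $o(1)$ once $x$ exceeds some constant (roughly $4/\delta_0 + 2$). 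Combining the two regimes gives $\bE[Z_H] = O(1)$, and hence the advertised $O(n^2)$ expected running time.

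The main obstacle is just picking $x$ so that the exponent $2+\delta - \delta x$ is genuinely negative; because $\delta \geq \delta_0 > 0$, such an $x$ exists as a constant depending only on $\delta_0$ and $K_0$. The tail estimate closely parallels the one used a moment ago to prove $\bE[1/J] \geq \Omega(1)$, so the final computation is routine.
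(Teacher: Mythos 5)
Your proposal is correct and follows essentially the same route as the paper: bound the per-iteration cost by $O(n^2)$ times the number of failed cuts, then show the expected number of failed cuts is $O(1)$ by splitting at $|Q| = xc$, using Proposition~\ref{z-exp-cond-prop} on the head and the min-cut lower bound $\sum_{C' \in \mathcal A} p^{|C'|} \geq p^c$ together with Corollary~\ref{Aexpsumcorr1} on the tail. The only differences are cosmetic (your explicit Gray-code enumeration of the component bipartitions, and choosing $x$ large enough to make the tail $o(1)$ rather than the paper's $x = 1 + 2/\delta$ making it $O(1)$), so no further changes are needed.
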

\begin{proof}
We can enumerate the connected component structure of $H$ in time $O(n^2)$. By Proposition~\ref{check-in-a-prop}, it requires $O(n^2)$ to check whether any cut of $H$ is present in $\mathcal A$. Thus, we may compute $J$ in overall running time is $O(Z n^2)$. Thus, the expected time for this step is $\leq O(n^2) \bE[Z]$.  We estimate $\bE[Z]$ by breaking up the sum in contributions from small cuts (weight is $\leq x c$) and big cuts (weight $> x c$); here $x$ is a parameter we will discuss later.
{\allowdisplaybreaks
\begin{align*}
\bE[Z] &= \bE[Z \mid |Q| \leq x c] \bP(|Q| \leq x c) + \sum_{\substack{C \in \mathcal A \\ |C| > x c}} \bE[Z \mid Q = C] \bP( Q = C) \\
&= 2^{O(x)} + \sum_{C \in \mathcal A, |C| > x c} \bE[Z \mid Q = C] \bP( Q = C) \qquad \text{Proposition~\ref{z-exp-cond-prop}} \\
&\leq 2^{O(x)} + \sum_{C \in \mathcal A, |C| > x c} \bE[Z \mid Q = C] \frac{p^{|C|}}{\sum_{C' \in \mathcal A} p^{|C'|}} \\
&\leq 2^{O(x)} + n^{2+\delta} \sum_{C \in \mathcal A, |C| > x c} \bE[Z \mid Q = C] p^{|C|} \qquad \text{as $\mathcal A$ contains a min-cut} \\
&\leq 2^{O(x)} + n^{2+\delta}  \sum_{C \in \mathcal A, |C| > x c} \bE[Z \mid \text{$C$ failed}] p^{|C|} \qquad \text{Proposition~\ref{zaprop}} \\
&\leq 2^{O(x)} + n^{2+\delta}  \sum_{|C| > x c} e^{O(|C|/c)} p^{|C|} \qquad \text{Proposition~\ref{z-exp-cond-prop}} \\
&\leq 2^{O(x)} + n^{2+\delta} O(n^{-\delta x}) \qquad \text{Corollary~\ref{Aexpsumcorr1}}
\end{align*}
}

Now setting $x = 1 + 2/\delta = O(1)$, we see that this overall expression is $O(1)$.

\end{proof}

We summarize our estimation procedure:
\begin{proposition}
\label{summarize-est-prop}
Let $\mathcal A$ be a collection of cuts which includes at least one min-cut. Then there is an algorithm for estimating $U_{\mathcal A} (p)$ to within relative error $\epsilon$ with probability $> 0.99$, and running in time $O(N \text{polylog}(N n) + n^{2} \epsilon^{-2})$.
\end{proposition}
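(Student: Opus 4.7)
The plan is to combine the three ingredients already established for the single-trial estimator $\hat U_{\mathcal A}$: it is unbiased ($\bE[\hat U_{\mathcal A}] = U_{\mathcal A}(p)$), it has $O(1)$ relative variance, and each execution runs in expected time $O(n^2)$. The final algorithm is simply: preprocess $\mathcal A$ once, then take the mean of sufficiently many independent copies of $\hat U_{\mathcal A}$, and if a high-probability guarantee on running time is desired, apply a median-of-means boosting.

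First I would handle the preprocessing. Using the hashing and sorting scheme of Section~\ref{dsa-sec}, sort $\mathcal A$ by its random identifiers and remove duplicates; this costs $O(N\log^2(Nn))$. Then compute the prefix sums of the sampling weights $p^{|C|}$ for $C \in \mathcal A$ so that step~(1) of the estimator (choosing a cut with probability $\propto p^{|Q|}$) can be implemented in time $O(\log(Nn))$ per call via binary search. The normalizing constant $\sum_{C \in \mathcal A} p^{|C|}$ needed by step~(4) is also available from these prefix sums. All of this fits within the $O(N\,\mathrm{polylog}(Nn))$ budget.

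Next, draw $T = \Theta(\epsilon^{-2})$ independent samples $\hat U_{\mathcal A}^{(1)},\dots,\hat U_{\mathcal A}^{(T)}$ and return their mean $\bar U = T^{-1} \sum_i \hat U_{\mathcal A}^{(i)}$. Linearity of expectation gives $\bE[\bar U] = U_{\mathcal A}(p)$, and independence combined with the $O(1)$ relative variance bound yields $\mathrm{Var}(\bar U) \le O(\epsilon^2) \cdot U_{\mathcal A}(p)^2$, so Chebyshev's inequality delivers relative error at most $\epsilon$ with some constant probability, say $\ge 2/3$. To lift this to $0.99$, run $O(1)$ independent copies of $\bar U$ and take the median; a standard Chernoff argument on the indicator ``this copy is within relative error $\epsilon$'' boosts the success probability to $> 0.99$ at only a constant-factor cost.

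The remaining point is bounding the total running time by $O(n^2 \epsilon^{-2})$. Each single trial costs expected $O(n^2)$ by the enumeration-of-components analysis combined with Proposition~\ref{check-in-a-prop}, so the expected total is $O(n^2 \epsilon^{-2})$. To convert this to a high-probability runtime bound (so the final statement is a genuine algorithmic guarantee), I would impose a hard cutoff of, say, $C \cdot n^2 \epsilon^{-2}$ steps for a sufficiently large constant $C$ and abort any execution that exceeds it; Markov's inequality applied to the total time of one boosted run shows this aborts with probability at most a small constant, and this failure event can be absorbed into the $0.99$ success probability. The main subtlety here—and the only nontrivial step—is this conversion from expected-time to worst-case time while keeping both the relative-error and success-probability guarantees intact; everything else is essentially bookkeeping on top of the lemmata already proved.
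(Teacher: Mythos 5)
Your proposal is correct and follows essentially the same route as the paper: preprocess $\mathcal A$ (sort, deduplicate, prefix sums of $p^{|C|}$) in $O(N\,\mathrm{polylog}(Nn))$ time, then average $\Theta(\epsilon^{-2})$ independent copies of the unbiased, $O(1)$-relative-variance single-trial estimator and conclude via Chebyshev. The paper simply takes $\lambda \epsilon^{-2}$ trials with $\lambda$ a sufficiently large constant to reach success probability $>0.99$ directly and states the $O(n^2\epsilon^{-2})$ bound as an expected running time, so your median-of-means boosting and hard runtime cutoff are harmless (indeed slightly more careful) refinements of the same argument rather than a different approach.
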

\begin{proof}
We have pre-preprocessing steps including sorting $\mathcal A$ and computing $\sum_{C \in \mathcal A} p^{|C|}$ which can be executed in time $O(N \text{polylog}(N n))$.

After this, a single iteration has relative error $O(1)$ and costs $O(n^{2})$ expected time. We repeat $\lambda \epsilon^{-2}$ independent trials of this algorithm. This reduces the relative variance of the resulting unbiased statistic to $O(\lambda^{-1} \epsilon^{-1})$. By Chebyshev's inequality, for  $\lambda$ sufficiently large, the resulting statistic estimates $U_{\mathcal A}(p)$ to relative error $\epsilon$ with probability $> 0.99$.
\end{proof}

\section{The Contraction Algorithm}
\label{Asec:contraction}
We now turn our attention to how to find a representative set of cuts to use to estimate $U(p)$. Key to this is an algorithm of \cite{karger-stein} for finding cuts in a graph. This algorithm is called the Contraction Algorithm, and is based on a graph transformation called \emph{contraction} which will appear throughout this paper. Given a graph $G$ and an edge $e = \langle x, y \rangle$ of $G$, then we define the \emph{contraction} $G \slash e$ by identifying $x$ and $y$. More formally, we create a new vertex labeled $xy$, we delete the vertices $x,y$ and for each edge $\langle x, v \rangle \in G$ we create a new edge $\langle xy , v \rangle$; similarly, for each each $\langle y, v \rangle \in G$ we create a new edge $\langle xy, v \rangle$. If there are any loops created by this process, we delete them.

Suppose that a graph $H$ is derived from $G$ by performing a series of contractions. Then we say that $H$ is a \emph{contraction-subgraph} of $G$. Every vertex of $H$ corresponds to a set of vertices in $G$, any edge of $H$ corresponds to an edge of $G$, and any cut of $H$ corresponds to a cut of $G$. Given a cut $C$ of $G$, we say that $C$ \emph{survives} in $H$ if $C$ corresponds to some cut of $H$; equivalently, this holds if every edge of $C$ corresponds to some edge of $H$. We often abuse notation, so that for any edge $e \in H$ we also write $e$ to refer to the corresponding edge of $G$.

For any set of edges $L = \{e_1, \dots, e_l \}$, we may also define $G \slash L$ as follows. We initially set $H_1 = G$ and proceed through the edges $e \in L$ one by one; if edge $e_i$ is present in $H_i$ then $H_{i+1} = H_i \slash e_i$; otherwise, $H_{i+1} = H_i$. We then define $G \slash L = H_{l+1}$. Although this definition depended on the ordering of the edges in $L$, it is a simple exercise to see that changing this ordering does not change $G \slash L$.

The Contraction Algorithm takes as input a parameter $\alpha$, which is roughly speaking the size of the cuts that it is designed to find. We define the Contraction Algorithm with parameter $\alpha$ as follows:
\ttfamily
\begin{enumerate}
\item[1.] Initialize $G_n = G$. 
\item[2.] Repeat for $i = n, n-1, \dots, \lceil 2 \alpha \rceil + 1$:
\begin{enumerate}
\item[3.] Select an edge $F_i$ of $G_i$ uniformly at random.
\item[4.] Set $G_{i-1} \leftarrow G_i \slash F_i$
\end{enumerate}
\item[5.] Output a cut $C$ sampled uniformly from $G_{\lceil 2 \alpha \rceil}$.  (We say the Contraction Algorithm \emph{selects $C$})
\end{enumerate}
\rmfamily

At each stage $i$ of the Contraction  Algorithm, the graph $G_i$ has exactly $i$ vertices. We let $M_i$ denote the number of edges in $G_i$. We refer to the sequence of edges $\langle F_n, F_{n-1}, \dots, F_{\lceil 2 \alpha \rceil} \rangle$ as the \emph{history} of the Contraction Algorithm; note that the full sequence of graphs $G_i$ is completely determined by $F_n, \dots, F_{\lceil 2 \alpha \rceil + 1}$. 

We may also refer to the history of the Contraction Algorithm up to stage $i$, which we denote by $F_{> i}$; by this we mean the sequence of edges $F_{> i} = \langle F_n, \dots, F_{i+1} \rangle$; these determine $G_n, \dots, G_i$ uniquely.

We are only partially interested in the Contraction Algorithm as an algorithm \emph{per se}. We will often instead view it as a stochastic process which allows us to analyze the number of cuts that are present in the graph. We refer to this stochastic process as $\mathcal{CA}(G)$. Thus we may write, for instance $\bE_{\mathcal{CA}(G)}[M_i]$; this means that we are taking the expected value of the number of edges present in the graph $G_i$, when we run the Contraction Algorithm starting with graph $G$.

We will show that, for an arbitrary cut $C$, the Contraction Algorithm selects $C$ with at least a certain probability. In this type of analysis, we regard $C$ as \emph{fixed}. We refer to $C$ as the \emph{target cut}. In order for the Contraction Algorithm to find the target cut $C$, a necessary condition is that $C$ must remain in the final $G_{\lceil 2 \alpha \rceil}$. (This is not a sufficient condition, as we still must randomly select $C$ among the cuts of $G_{\lceil 2 \alpha \rceil}$.) If $C$ remains in $G_{\lceil 2 \alpha \rceil}$, then we say that $\mathcal{CA}(G)$ \emph{succeeded} and we say that $C$ \emph{survived}. We also say that $C$ \emph{survived} the Contraction Algorithm. We emphasize that this only for the purposes of analysis; the Contraction Algorithm itself is run without any target in mind.

When we run the Contraction Algorithm with parameter $\alpha$, we say that a cut $C$ \emph{survived to stage $i$} if $C$ remains in $G_i$. Sometimes we may say that $C$ survives to stage $i$ without specifying the parameter $\alpha$; here we assume that $\alpha$ is any arbitrary real number which is at most $i/2$. This is acceptable because $G_i$ does not depend on $\alpha$ itself, only whether $i \geq 2 \alpha$.

\smallskip \noindent \textbf{The contraction process for $C$.} When we fix a target cut $C$, we may imagine a related stochastic process which we refer to as the \emph{Contraction Process}. This process may be defined more broadly in terms of any edge-set $L$, and we denote it by $\mathcal{CP}(G,L)$:
\ttfamily
\begin{enumerate}
\item[1.] Initialize $G_n = G$. 
\item[2.] Repeat for $i = n, n-1, \dots, 1$:
\begin{enumerate}
\item[3.] If all the edges of $G_i$ are in $L$, then terminate the loop.
\item[4.] Otherwise, select an edge $F_i$ of $G_i - L$ uniformly at random.
\item[5.] Set $G_{i-1} \leftarrow G_i \slash F_i$
\end{enumerate}
\end{enumerate}
\rmfamily

If this process terminates at stage $i$ (that is, all the edges of $G_i$ are in $L$), then we say that $G_j = \bot$ for $j < i$. Note that it is possible for some edge of $L$ to become contracted, even though they are not themselves selected, if another edge with the same end-points is selected. This never happens if $L$ is a cut; in the Contraction Process for $C$, all the edges of $C$ remain present in the resulting contraction-subgraphs. 

Finally, observe that one may view the Contraction Algorithm as a special case, namely $\mathcal{CA} = \mathcal{CP}(G, \emptyset)$.

For $i = 1, \dots, n$ let $M_i$ denote the number of edges in the graph $G_i$. We begin with some useful elementary bounds on $M$. Every vertex of $v$ of the graph $G_r$ defines a cut, namely, $v$ is on one shore and all the other vertices are on the other shore. As the minimum cut size is $c$, this implies that every vertex must have degree $\geq c$, so $M_r \geq r c/2$.  Thus during the process $\mathcal {CP}(G, L)$, as long as $|L| \leq r c / 2$ then $G_r \neq \bot$. In fact, this bound will always hold in our analysis, so that we will never need to deal with the case of $G_r = \bot$.

The following proposition shows the connection between the Contraction Algorithm and the Contraction Process for a cut $C$:
\begin{proposition}
\label{cp-ca-prop}
Suppose that $e_n, \dots, e_{i+1}$ are edges of $G$ but not edges of the cut $C$ and let $M_{i+1}, \dots, M_n$ denote their edge counts. 

Then
$$
\bP_{\mathcal{CA}(G)}( F_{> i} = \langle e_n, \dots, e_{i+1} \rangle ) = \Bigl( \prod_{r=i+1}^n (1 - \frac{|C|}{M_r}) \Bigr) \bP_{\mathcal{CP}(G,C)} (F_{> i} = \langle e_n, \dots, e_{i+1} \rangle)
$$
when the Contraction Algorithm is applied with parameter $\lceil 2 \alpha \rceil \leq i$.
\end{proposition}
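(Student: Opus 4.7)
The plan is to factor each side as a telescoping product of one-step conditional probabilities along the given history, compare the two factor by factor, and check one auxiliary invariant about the survival of edges of $C$.

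First I would observe that in both $\mathcal{CA}(G)$ and $\mathcal{CP}(G,C)$, the intermediate graphs $G_n, G_{n-1}, \dots, G_i$ are uniquely determined by the sequence of contracted edges $\langle e_n, \dots, e_{i+1}\rangle$ (this is noted in the text: the ordering of contractions fixes $G/L$). So conditioning on the common history $F_{>r} = \langle e_n, \dots, e_{r+1}\rangle$ produces the same graph $G_r$, with the same edge count $M_r$, in both processes. Therefore both probabilities factor as
\[
\bP\bigl(F_{>i} = \langle e_n, \dots, e_{i+1}\rangle\bigr) = \prod_{r=i+1}^{n} \bP\bigl(F_r = e_r \,\bigm|\, F_{>r} = \langle e_n, \dots, e_{r+1}\rangle\bigr),
\]
and it suffices to compare the one-step conditional probabilities.

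Next I would compute those one-step probabilities. In $\mathcal{CA}(G)$, $F_r$ is uniform over the $M_r$ edges of $G_r$, so the conditional probability equals $1/M_r$. In $\mathcal{CP}(G,C)$, $F_r$ is uniform over the edges of $G_r - C$; if the number of edges of $C$ still present in $G_r$ is exactly $|C|$, then this probability equals $1/(M_r - |C|)$. Taking the ratio term by term gives the telescoping product
\[
\prod_{r=i+1}^{n} \frac{1/M_r}{1/(M_r - |C|)} = \prod_{r=i+1}^{n} \left(1 - \frac{|C|}{M_r}\right),
\]
which is exactly the claimed formula.

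The only nontrivial point, and what I expect to be the main (minor) obstacle, is the invariant that every edge of $C$ survives in $G_r$ for every $r \ge i$, so that $|C|$ really is the correct count subtracted in the $\mathcal{CP}$ denominator. This I would prove by a one-line induction on $r$: each contracted edge $e_r$ lies outside $C$, hence (since $C$ is a cut) both endpoints of $e_r$ lie on the same shore of $C$, so contracting $e_r$ merges two same-side vertices and creates no self-loop out of any edge of $C$. Consequently no edge of $C$ is ever deleted along the history $\langle e_n, \dots, e_{i+1}\rangle$, and $|C|$ edges of $C$ persist in $G_r$ throughout. As a byproduct, $M_r \ge |C| + 1 > |C|$ whenever some $e_r \notin C$ is available to contract, so the factor $1/(M_r - |C|)$ is well-defined, completing the proof.
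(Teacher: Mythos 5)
Your proof is correct and in substance the same as the paper's: the paper establishes the identity by induction on the number of vertices of $G$, which is just the formal packaging of your telescoping comparison of the one-step probabilities $1/M_r$ (for $\mathcal{CA}$) versus $1/(M_r - |C|)$ (for $\mathcal{CP}(G,C)$), whose ratios multiply to $\prod_{r}(1 - |C|/M_r)$. The invariant you check -- that no edge of $C$ is ever contracted away or turned into a loop because every contracted edge joins vertices on the same shore -- is exactly the observation the paper records in the text preceding the proposition ("all the edges of $C$ remain present in the resulting contraction-subgraphs"), so nothing is missing.
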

\begin{proof}
We prove this by induction on $G$. The base case is when $G$ has $i$ vertices; in this case both the LHS and RHS are easily seen to be one.

So we now consider the induction step. In the first stage of the Contraction Algorithm, we select some edge $F_n$ uniformly from $G$. The probability that $F_n = e_n$ is $\frac{1}{M_n}$. Conditional on this event, the remaining stages of the Contraction Algorithm are equivalent to the Contraction Algorithm applied to the graph $G \slash F_n = G \slash e_n$. The edge $e_n$ is not present in $C$, and so cut $C$ remains a cut of $G \slash e_n$. So we can apply the induction hypothesis to it.
\begin{align*}
\bP_{\mathcal{CA}(G)}( F_{> i} = \langle e_n, \dots, e_{i+1} \rangle ) &= \frac{1}{M_n} \bP_{\mathcal{CA}(G \slash e_n)} (F_{> i} = \langle e_{n-1}, \dots, e_{i+1} \rangle) \\
&= \frac{1}{M_n}  \Bigl( \prod_{r=i+1}^{n-1} (1 - \frac{|C|}{M_r}) \Bigr) \bP_{\mathcal{CP}(G \slash e_n,C)} (F_{> i} = \langle e_{n-1}, \dots, e_{i+1} \rangle) \qquad \text{ind. hypothesis} \\
&= \Bigl( \prod_{r=i+1}^{n} (1 - \frac{|C|}{M_r}) \Bigr) \frac{1}{M_n - |C|} \bP_{\mathcal{CP}(G \slash e_n,C)} (F_{> i} = \langle e_{n-1}, \dots, e_{i+1} \rangle) \\
&= \Bigl( \prod_{r=i+1}^{n} (1 - \frac{|C|}{M_r}) \Bigr) \bP_{\mathcal{CP}(G,C)} (F_{> i} = \langle e_n, e_{n-1}, \dots, e_{i+1} \rangle)
\end{align*}
and the induction is proved.
\end{proof}

\begin{corollary}
\label{Akillmanylemma1}
Let $C$ be any cut. The probability that cut $C$ survives to the $i^{\text{th}}$ stage of the Contraction Algorithm (i.e. the probability that cut $C$ is present in the graph $G_i$), is exactly equal to  
$$
\bP(\text{$C$ survives to $G_i$}) = \bE_{\mathcal CP(G, C)} \Bigl[ \prod_{r=i+1}^n (1 - \frac{|C|}{M_r}) \Bigr]
$$
\end{corollary}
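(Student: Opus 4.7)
The plan is to obtain the corollary by summing the identity of Proposition~\ref{cp-ca-prop} over all histories that avoid $C$. First I would observe that the event ``cut $C$ survives to $G_i$'' is exactly the event that none of the randomly chosen edges $F_n, F_{n-1}, \dots, F_{i+1}$ belongs to $C$. So by the law of total probability,
$$
\bP(\text{$C$ survives to $G_i$}) = \sum_{\langle e_n, \dots, e_{i+1}\rangle} \bP_{\mathcal{CA}(G)}\bigl(F_{>i} = \langle e_n, \dots, e_{i+1}\rangle\bigr),
$$
where the sum ranges over all sequences of edges of $G$ (more precisely, edges appearing in the successive contraction-subgraphs) none of which lies in $C$.

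Next I would apply Proposition~\ref{cp-ca-prop} to each term. The key observation is that the factor $\prod_{r=i+1}^n (1 - |C|/M_r)$ is a deterministic function of the sequence $\langle e_n, \dots, e_{i+1}\rangle$, since the graphs $G_n, G_{n-1}, \dots, G_i$, and hence the edge counts $M_n, \dots, M_{i+1}$, are completely determined by this history. Substituting the identity from Proposition~\ref{cp-ca-prop} therefore yields
$$
\bP(\text{$C$ survives to $G_i$}) = \sum_{\langle e_n, \dots, e_{i+1}\rangle} \Bigl(\prod_{r=i+1}^n (1 - \tfrac{|C|}{M_r})\Bigr)\, \bP_{\mathcal{CP}(G,C)}\bigl(F_{>i} = \langle e_n, \dots, e_{i+1}\rangle\bigr).
$$

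Finally, I would recognize the right-hand side as an expectation under $\mathcal{CP}(G,C)$. By construction of the Contraction Process for $C$, no edge of $C$ is ever selected, so the support of the distribution of $F_{>i}$ under $\mathcal{CP}(G,C)$ is precisely the set of histories being summed over. Consequently, the right-hand side equals $\bE_{\mathcal{CP}(G,C)}\bigl[\prod_{r=i+1}^n (1 - |C|/M_r)\bigr]$, which is the claimed formula.

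There is essentially no obstacle; the corollary is a direct marginalization of Proposition~\ref{cp-ca-prop}. The one small point to verify is that $\mathcal{CP}(G,C)$ does not terminate before stage $i$ on the histories being summed over — but this is automatic, since on any history in which $C$ survives to $G_i$ the graph $G_r$ contains, for every $r > i$, all $|C|$ edges of $C$ together with the non-$C$ edge $e_r$ that was actually selected, so $G_r-C$ is nonempty and the process continues past stage $r$.
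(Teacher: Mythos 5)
Your proof is correct and follows essentially the same route as the paper: sum the identity of Proposition~\ref{cp-ca-prop} over all partial histories disjoint from $C$ and recognize the result as an expectation under $\mathcal{CP}(G,C)$, exactly as in the paper's proof. Your closing remark about the process not terminating before stage $i$ is a harmless extra check that the paper handles implicitly (via the observation $M_r \geq rc/2$ and the convention about $G_r = \bot$).
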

\begin{proof}
We sum  over all partial histories up to stage $i$:
\begin{align*}
\bP(\text{$C$ survives to $G_i$}) &= \sum_{\substack{e_n, \dots, e_{i+1} \\ \text{all disjoint to $C$}}} \bP_{\mathcal{CA}(G)} (F_{> i} = \langle e_n, \dots, e_{i+1} \rangle) \\
 &= \sum_{\substack{e_n, \dots, e_{i+1} \\ \text{all disjoint to $C$}}} \Bigl( \prod_{r=i+1}^n (1 - \frac{|C|}{M_r}) \Bigr) \bP_{\mathcal{CP}(G, C)} (F_{> i} = \langle e_n, \dots, e_{i+1} \rangle) \\
 &= \bE_{\mathcal{CP}(G,C)} \Bigl[ \prod_{r=i+1}^n (1 - \frac{|C|}{M_r}) \Bigr]
\end{align*}
\end{proof}

One difficulty with using Proposition~\ref{cp-ca-prop} and Corollary~\ref{Akillmanylemma1} is the fact that the expression $\prod_{r=i+1}^n (1 - |C|/M_r)$ is a non-linear function of $|C|$ and $M_{i+1}, \dots, M_n$. We often prefer to use a ``linearized'' potential function $S_i$ that approximates it. For any  integer $i \geq 0$ we define
$$
S_i = \frac{c}{M_{i+1}} + \frac{c}{M_{i+2}} + \dots + \frac{c}{M_n}
$$
Heuristically, one may think of $S_i$ as counting the expected number of times a min-cut is selected during the Contraction Algorithm. We have the following simple bound on $S_i$:
\begin{proposition}
\label{Aeqn:simple-s-bound}
For any integer $i \in [1,n]$ we have:
$$
S_i \leq 2 \log(n/i)
$$
\end{proposition}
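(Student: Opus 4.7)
The plan is to combine two elementary facts: first, the lower bound $M_r \geq rc/2$ which was already noted in the paragraph immediately preceding the definition of $S_i$ (every vertex has degree at least $c$, hence the graph $G_r$ on $r$ vertices has at least $rc/2$ edges); and second, a standard comparison of a harmonic tail sum with an integral.

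First I would substitute the lower bound $M_r \geq rc/2$ into each term of $S_i$, yielding $\frac{c}{M_r} \leq \frac{2}{r}$ for every $r \in \{i+1, \dots, n\}$. Summing this termwise gives
$$
S_i \;=\; \sum_{r=i+1}^{n} \frac{c}{M_r} \;\leq\; \sum_{r=i+1}^{n} \frac{2}{r}.
$$
Then I would bound the harmonic tail by the corresponding integral: since $1/x$ is decreasing, $\sum_{r=i+1}^{n} \frac{1}{r} \leq \int_{i}^{n} \frac{dx}{x} = \log(n/i)$. Multiplying by $2$ gives exactly $S_i \leq 2 \log(n/i)$.

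There is no real obstacle here; the only subtlety is to make sure the degree bound $M_r \geq rc/2$ is applicable throughout the range of summation, which it is because in the regimes we care about $G_r \neq \bot$ (as noted in the text, $G_r \neq \bot$ whenever $|L| \leq rc/2$, and here the Contraction Process uses $L = \emptyset$ or $L = C$ with $|C| = c$, well within this range for $r \geq 2$). Edge cases such as $i = 0$ or $r = n$ are handled directly: when $i = 0$ the sum runs over all of $\{1, \dots, n\}$ but the bound $2\log(n/i)$ is vacuous ($+\infty$); when $i = n$ the sum is empty and both sides are $0$.
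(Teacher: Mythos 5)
Your proof is correct and follows essentially the same route as the paper: bound each term via $M_r \geq rc/2$, then compare the resulting harmonic tail $\sum_{r=i+1}^n 2/r$ with the integral $\int_i^n \frac{2\,dx}{x} = 2\log(n/i)$. The extra remarks about $G_r \neq \bot$ and edge cases are harmless but not needed beyond what the paper already states.
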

\begin{proof}
We have that:
\begin{align*}
S_i &= \sum_{r=i+1}^n \frac{c}{M_r} \leq \sum_{r=i+1}^n \frac{c}{r c/2} \leq \int_{r=i}^n \frac{2 dr}{c} = 2 \log(n/i).
\end{align*}
\end{proof}

The next Lemma shows how the potential function $S_i$ gives a good approximation to Proposition~\ref{Akillmanylemma1}.
\begin{lemma}
\label{Aslemma}
Let $C$ be a cut of weight $\alpha c$. For any real number $\alpha \geq 1$ and integer $i \geq 2 \alpha$ we have that:
$$
16^{-\alpha} e^{-\alpha S_i} \leq \prod_{r=i+1}^n (1 - \frac{|C|}{M_r}) \leq e^{-\alpha S_i}
$$
\end{lemma}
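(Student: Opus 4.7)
The plan is to treat the two inequalities independently. The upper bound is the easy direction: from the elementary inequality $1-x \leq e^{-x}$ for all real $x$, applied termwise with $x = \alpha c/M_r$ and multiplied across $r = i+1, \dots, n$, I immediately get $\prod_r (1 - \alpha c/M_r) \leq e^{-\alpha \sum_r c/M_r} = e^{-\alpha S_i}$. Nothing more is needed for this direction.

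For the lower bound, I would work with the Taylor expansion $-\log(1-y) = \sum_{k \geq 1} y^k/k$ (valid whenever $|y| < 1$) and carefully bound the discrepancy between $\log(1-y_r)$ and the linear approximation $-y_r$, where $y_r := \alpha c/M_r$. First I need to confirm $y_r < 1$: every graph $G_r$ produced by the Contraction Process for $C$ retains min-cut at least $c$ (the cut $C$ of weight $\alpha c \geq c$ is still a cut of $G_r$), so every vertex has degree $\geq c$, giving $M_r \geq rc/2$. Combined with $r \geq i+1 \geq 2\alpha + 1$, this yields $y_r \leq 2\alpha/r \leq 2\alpha/(2\alpha+1) < 1$. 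Then I can write
\[
\log \prod_{r=i+1}^n (1 - y_r) = -\alpha S_i - \sum_{k \geq 2} \frac{1}{k} \sum_{r=i+1}^n y_r^k,
\]
and it will suffice to bound the correction term by $2\alpha$, since $e^{-2\alpha} \geq 16^{-\alpha}$ (using $e^2 < 16$).

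The key estimate is uniform in $k$ and in the realized $M_r$'s: using $y_r \leq 2\alpha/r$ and a standard integral comparison,
\[
\sum_{r=i+1}^n y_r^k \leq (2\alpha)^k \sum_{r > i} r^{-k} \leq \frac{(2\alpha)^k}{(k-1)\, i^{k-1}} \leq \frac{2\alpha}{k-1},
\]
where the last step uses $i \geq 2\alpha$. Summing on $k$ via the telescoping identity $\sum_{k \geq 2} 1/(k(k-1)) = 1$ produces exactly $2\alpha$, as required. The main obstacle is the case where $r$ is close to $2\alpha$: there $y_r$ can be arbitrarily near $1$, so simple termwise bounds like $1 - y \geq e^{-2y}$ (valid only on $y \leq 1/2$) cannot be applied uniformly. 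The power-series approach sidesteps this because the divergence of individual factors is controlled by the single series $\sum_{k \geq 2} (2\alpha)^k / ((k-1) i^{k-1})$, and the hypothesis $i \geq 2\alpha$ is precisely what guarantees this series is geometrically convergent with bounded sum $2\alpha$, absorbing all of the boundary loss into the single factor $16^{-\alpha}$.
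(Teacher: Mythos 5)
Your proof is correct, and the lower bound is handled by a genuinely different (and slightly sharper) mechanism than the paper's. The paper also dispatches the upper bound with $1-x\leq e^{-x}$, but for the lower bound it uses the termwise inequality $1-x \geq e^{-x}(1-x^2)$ for $x\in[0,1]$, reduces via $M_r \geq rc/2$ to the product $\prod_{r>i}(1-4\alpha^2/r^2)$, and evaluates a closed-form integral (with a $\tanh^{-1}$ antiderivative) to get exactly $16^{-\alpha}$. You instead expand $-\log(1-y_r)$ as a power series, bound $\sum_{r>i} y_r^k \leq (2\alpha)^k/((k-1)i^{k-1}) \leq 2\alpha/(k-1)$ uniformly in $k$ by integral comparison (this is where $i\geq 2\alpha$ enters), and sum the telescoping series to cap the total correction at $2\alpha$; since $e^{-2\alpha} \geq 16^{-\alpha}$, this recovers the stated bound and in fact gives the marginally stronger constant $e^{-2\alpha}$ in place of $16^{-\alpha}$, avoiding any special-function computation. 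Both proofs hinge on the same two facts ($M_r \geq rc/2$ and $r > i \geq 2\alpha$), so the difference is in execution rather than strategy. One small correction to your justification of $M_r \geq rc/2$: the reason the minimum degree in $G_r$ is at least $c$ is that every cut of a contraction-subgraph $G_r$ corresponds to a cut of $G$ and hence has weight at least $c$; the survival of the particular cut $C$ in $G_r$ is irrelevant to this (and indeed the paper states $M_r \geq rc/2$ for the Contraction Process generally, before any target cut is fixed).
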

\begin{proof}
The upper bound is easy:
\begin{align*}
\prod_{r = i+1}^n (1 - \frac{\alpha c}{M_r})] &\leq \prod_{r = i+1}^n e^{- \frac{\alpha c}{M_r}} = e^{-\sum_{r=i+1}^n \frac{\alpha c}{M_r}} = e^{-\alpha S_i}
\end{align*}

To show the lower bound:
\begin{align*}
\prod_{r=i + 1}^{n} (1 - \frac{\alpha c}{M_r}) &\geq \prod_{r=i + 1}^{n} e^{- \frac{\alpha c}{M_r}} \times \bigl( 1 - \bigl( \frac{\alpha c}{M_r} \bigr)^2 \bigr) \qquad \text{as $1 - x \geq e^{-x} (1 - x^2)$ for $x \in [0,1]$} \\
&=e^{-\alpha S_i} \times \prod_{r=i+1}^n \bigl( 1 - \bigl( \frac{\alpha c}{M_r} \bigr)^2 \bigr)  \\
&\geq e^{-\alpha S_i} \times \prod_{r=i+1}^n \bigl( 1 - \frac{\alpha^2 c^2}{r^2 c^2/4} \bigr) \qquad \text{as $M_r \geq r c/2$}  \\
&= e^{-\alpha S_i} \times \prod_{r=i+1}^n \bigl( 1 - \frac{4 \alpha^2}{r^2} \bigr)
\end{align*}

Thus, we need to estimate the expression $\prod_{r = i + 1}^n (1 - \frac{4 \alpha^2}{r^2})$. This is an exercise in calculus:
\begin{align*}
\prod_{r = i + 1}^n (1 - \frac{4 \alpha^2}{r^2}) &= \exp \Bigl( \sum_{r = i + 1}^n \log(1 - \frac{4 \alpha^2}{r^2}) \Bigr) \\
&\geq \exp\Bigl(  \int_{r = 2 \alpha }^{\infty} \log(1 - \frac{4 \alpha^2}{r^2}) dr \Bigr) \qquad \text{as $i \geq 2 \alpha$} \\
&=  \lim_{\substack{x \rightarrow \infty\\y \rightarrow  2 \alpha^{+}}} \exp\Bigl( r \log(1-\frac{4 \alpha ^2}{r^2} )+4 \alpha  \tanh^{-1}(\frac{r}{2 \alpha }) \Bigr|_{y}^x \Bigr) \\
&= \exp( -2 \alpha \log 4) = 16^{-\alpha}\\
\end{align*}
which gives us the desired lower bound.
\end{proof}

This can be applied to obtain a simple bound for the probability of selecting a given cut $C$:
\begin{lemma}
\label{Aslemma2}
Let $C$ be any cut of weight $\alpha c$. For $\alpha' \geq \alpha$, the probability that the Contraction Algorithm with parameter $\alpha'$ selects cut $C$ is at least
$$
\bP(\text{Contraction Algorithm selects cut $C$}) \geq e^{-O(\alpha')} \bE_{\mathcal CP(G, C)} [ e^{-\alpha S_{\lceil 2 \alpha \rceil}} ]
$$
\end{lemma}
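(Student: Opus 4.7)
The plan is to decompose the event ``Contraction Algorithm selects $C$'' into two parts: (a) $C$ survives all contractions down to the graph $G_{\lceil 2 \alpha' \rceil}$, and (b) once the algorithm stops, it picks $C$ uniformly from among the cuts of $G_{\lceil 2 \alpha' \rceil}$. Since $G_{\lceil 2 \alpha' \rceil}$ has $\lceil 2 \alpha' \rceil$ vertices and hence $2^{\lceil 2 \alpha' \rceil - 1} - 1$ distinct cuts, the conditional probability in (b) is at least $2^{-\lceil 2 \alpha' \rceil} = e^{-O(\alpha')}$. So the task reduces to lower-bounding the survival probability in (a).

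For (a), I would apply Corollary~\ref{Akillmanylemma1} with $i = \lceil 2 \alpha' \rceil$, giving
\[
\bP(\text{$C$ survives to $G_{\lceil 2 \alpha' \rceil}$}) = \bE_{\mathcal{CP}(G,C)} \Bigl[ \prod_{r=\lceil 2 \alpha' \rceil+1}^n \bigl(1 - \tfrac{|C|}{M_r}\bigr) \Bigr].
\]
Since $\lceil 2 \alpha' \rceil \geq 2 \alpha$, the hypothesis of Lemma~\ref{Aslemma} is satisfied, so the integrand is at least $16^{-\alpha} e^{-\alpha S_{\lceil 2 \alpha' \rceil}}$. Linearity of expectation then yields a lower bound of $16^{-\alpha} \bE_{\mathcal{CP}(G,C)}[e^{-\alpha S_{\lceil 2 \alpha' \rceil}}]$, where the factor $16^{-\alpha} \geq 16^{-\alpha'} = e^{-O(\alpha')}$ is absorbed into the overall constant.

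The final step is to replace $S_{\lceil 2 \alpha' \rceil}$ by $S_{\lceil 2 \alpha \rceil}$ in the exponent. This is immediate from the definition $S_i = \sum_{r=i+1}^n c/M_r$: since $\lceil 2 \alpha' \rceil \geq \lceil 2 \alpha \rceil$, the sum defining $S_{\lceil 2 \alpha' \rceil}$ has fewer non-negative terms than that defining $S_{\lceil 2 \alpha \rceil}$, so $S_{\lceil 2 \alpha' \rceil} \leq S_{\lceil 2 \alpha \rceil}$ pointwise in the Contraction Process. Therefore $e^{-\alpha S_{\lceil 2 \alpha' \rceil}} \geq e^{-\alpha S_{\lceil 2 \alpha \rceil}}$, and taking expectations preserves the inequality. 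Combining with the factor $e^{-O(\alpha')}$ from (b) gives exactly the stated bound.

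There is no real obstacle here; the only slightly subtle point is keeping the two parameters $\alpha$ (the weight of the target cut) and $\alpha'$ (the algorithm's parameter) distinct and checking that the index $\lceil 2 \alpha' \rceil$ is legal for Lemma~\ref{Aslemma} (which it is, because $\alpha' \geq \alpha$). All constants blow up only as $e^{O(\alpha')}$, matching the claim.
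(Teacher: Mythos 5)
Your proposal is correct and follows essentially the same route as the paper: factor the event into survival to $G_{\lceil 2 \alpha' \rceil}$ (bounded via Corollary~\ref{Akillmanylemma1} and Lemma~\ref{Aslemma}) times the uniform-selection factor $2^{1-\lceil 2 \alpha' \rceil} = e^{-O(\alpha')}$, then use the pointwise monotonicity $S_{\lceil 2 \alpha' \rceil} \leq S_{\lceil 2 \alpha \rceil}$ — a step the paper leaves implicit but which you rightly make explicit. The only difference is that the paper also treats the degenerate case $\alpha' \geq n/2$ separately, where $G_{\lceil 2 \alpha' \rceil}$ is not defined and the algorithm simply samples a cut uniformly from $G$; your argument as written tacitly assumes $\lceil 2 \alpha' \rceil \leq n$, but that missing case is trivial since $C$ is then selected with probability at least $2^{1-n} \geq e^{-O(\alpha')}$ while the expectation on the right-hand side is at most $1$.
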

\begin{proof}
First, suppose that $\alpha' \geq n/2$. In this case, the Contraction Algorithm simply selects a cut uniformly from $G$, so that $C$ is selected with probability $2^{-n-1} = e^{-O(\alpha')}$. Also, observe that $S_{\lceil 2 \alpha \rceil} \leq 0$ so that $\bE_{\mathcal{CP}(G,C)}[e^{-\alpha S_{\lceil 2 \alpha \rceil}}] \leq 1$. So the result holds easily in this case.

Next, suppose that $\alpha' \leq n/2$. By Lemma~\ref{Aslemma2}, the probability that the cut $C$ survives to the graph $G_{\lceil 2 \alpha' \rceil}$ is at least
$16^{-\alpha'} \bE_{\mathcal CP(G, C)} [ e^{-\frac{\alpha c}{c} S_{\lceil 2 \alpha' \rceil}} ]$.

Next, note that if $C$ survives to $G_{\lceil 2 \alpha' \rceil}$, then as the Contraction Algorithm selects a cut from this graph uniformly at random, $C$ is selected with probability $2^{1 - \lceil 2 \alpha' \rceil} \geq 2^{-O(\alpha')}$.

Overall, the probability that $C$ is selected is at least $e^{-O(\alpha')} \bE_{\mathcal CP(G, C)} [ e^{-\alpha S_{\lceil 2 \alpha \rceil}}]$ as desired.
\end{proof}

Note that by Jensen's inequality,  
$$
 \bE_{\mathcal CP(G, C)} [ e^{-\alpha S_{\lceil 2 \alpha \rceil}} ] \geq \exp(-\alpha \bE_{\mathcal CP(G,C)}[S_{\lceil 2 \alpha \rceil}]);
$$
thus, to apply Lemma~\ref{Aslemma2}, it will suffice to compute $\bE[S_{\lceil 2 \alpha \rceil}]$. Also observe that by combining Proposition~\ref{Aeqn:simple-s-bound} with Lemma~\ref{Aslemma}, we get a simple and well-known lower bound on the probability of retaining $C$ that does not depend on any other properties of $G$:
\begin{corollary}
\label{Asimplecorr}
The probability that the Contraction Algorithm with parameter $\alpha$ selects a given $\alpha$-cut $C$ is at least
$$
\bP(\text{select C}) \geq e^{-O(\alpha)} \bigl( n/\alpha \bigr)^{-2 \alpha}. 
$$
\end{corollary}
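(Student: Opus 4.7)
The plan is to simply chain Lemma~\ref{Aslemma2} together with Proposition~\ref{Aeqn:simple-s-bound}, which is precisely what the sentence immediately preceding the corollary suggests. The key observation is that Proposition~\ref{Aeqn:simple-s-bound} is actually a pointwise (deterministic) inequality: its proof uses only the structural bound $M_r \geq rc/2$, which holds along every sample path of $\mathcal{CP}(G,C)$. Consequently no Jensen-type argument is needed to convert the bound inside the expectation.

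First I would invoke Lemma~\ref{Aslemma2} to reduce the claim to showing
\[
\bE_{\mathcal{CP}(G,C)}\bigl[e^{-\alpha S_{\lceil 2 \alpha \rceil}}\bigr] \;\geq\; e^{-O(\alpha)} \bigl(n/\alpha\bigr)^{-2\alpha}.
\]
By Proposition~\ref{Aeqn:simple-s-bound} applied with $i = \lceil 2\alpha\rceil$, we have $S_{\lceil 2\alpha\rceil} \leq 2\log(n/\lceil 2\alpha\rceil) \leq 2\log(n/(2\alpha))$ on every trajectory. Hence pointwise
\[
e^{-\alpha S_{\lceil 2\alpha \rceil}} \;\geq\; \exp\bigl(-2\alpha \log(n/(2\alpha))\bigr) \;=\; \Bigl(\tfrac{2\alpha}{n}\Bigr)^{2\alpha},
\]
and taking expectations gives $\bE_{\mathcal{CP}(G,C)}\bigl[e^{-\alpha S_{\lceil 2\alpha\rceil}}\bigr] \geq (2\alpha/n)^{2\alpha}$.

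Plugging this into Lemma~\ref{Aslemma2} yields $\bP(\text{select } C) \geq e^{-O(\alpha)} (2\alpha/n)^{2\alpha}$. Writing $(2\alpha/n)^{2\alpha} = 2^{2\alpha} (n/\alpha)^{-2\alpha}$ and noting that the spurious factor $2^{2\alpha} = e^{O(\alpha)}$ only improves the bound (or can be absorbed into the $e^{-O(\alpha)}$ already present), one obtains the stated $\bP(\text{select } C) \geq e^{-O(\alpha)}(n/\alpha)^{-2\alpha}$. There is really no obstacle here --- the only mild subtlety is recognizing that Proposition~\ref{Aeqn:simple-s-bound} applies pointwise, so the entire derivation is a one-line substitution; all the genuine work has already been done in Lemmas~\ref{Aslemma} and~\ref{Aslemma2}.
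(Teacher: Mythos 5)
Your proof is correct and follows essentially the same route as the paper's: invoke Lemma~\ref{Aslemma2} and then apply Proposition~\ref{Aeqn:simple-s-bound} pointwise inside the expectation (no Jensen step needed), absorbing the resulting factor $2^{2\alpha}$ into $e^{-O(\alpha)}$. The only addition in the paper is a one-line treatment of the degenerate case $\alpha \geq n/2$, where $\lceil 2\alpha \rceil \geq n$ so Proposition~\ref{Aeqn:simple-s-bound} does not apply as stated and the cut is instead selected uniformly with probability $2^{-O(\alpha)}$; your argument should note this case as well.
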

\begin{proof}
This holds easily when $\alpha \geq n/2$. Otherwise,
\begin{align*}
\bP(\text{select C}) &\geq e^{-O(\alpha)} \bE_{\mathcal CP(G,C)}[ e^{-\alpha S_{\lceil 2 \alpha \rceil}} ] \geq e^{-O(\alpha)} \bE_{\mathcal CP(G,C)}[ e^{-\alpha \log( \frac{n}{\lceil 2 \alpha \rceil})}]  \\
&\geq e^{-O(\alpha)} \bE_{\mathcal CP(G,C)}[ e^{-\alpha \log( \frac{n}{2 \alpha})}]  =e^{-O(\alpha)} \bigl( n/\alpha \bigr)^{-2 \alpha}. 
\end{align*}
\end{proof}

The cut $C$ affects the dynamics of the Contraction Process, making it more difficult to analyze than the Contraction Algorithm. The following series of lemmas show how we can ``factor out $C$''.  We show that the Contraction Process for $C$ can be approximated by the Contraction Algorithm on $G \slash L$, where $L \subseteq C$ is a subset of edges from the cut $C$.

\begin{lemma}
\label{Alemma:stoch-dom}
Let $C$ be a cut of a connected graph $G$ and let $L$ a subset of the edges of $G$. Let $M_i$ be the edge counts corresponding to the random process $\mathcal{CP}(G,C)$ and let $M'_i$ be the edge counts corresponding to the random process $\mathcal{CP}(G \slash L, C)$.  

Then for $r > |L|$ the random variable $M_r$ stochastically dominates the random variable $|L| + M_{r - |L|}$.

Note that $C$ is not necessarily a cut of $G \slash L$; we interpret $C$ here only as a set of edges in $G \slash L$.
\end{lemma}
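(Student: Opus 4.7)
The plan is to first establish the case $|L| = 1$ via an explicit coupling, then extend to general $|L|$ by induction. For the induction, fix any $e \in L$ and let $L' = L \setminus \{e\}$; the inductive hypothesis applied to $(G, L')$ gives $M_r$ stochastically dominating $|L'| + \tilde M_{r - |L'|}$, where $\tilde M$ is the edge-count process in $\mathcal{CP}(G \slash L', C)$. Applying the single-edge case to $(G \slash L', \{e\})$ then gives $\tilde M_{r - |L'|}$ dominating $1 + M'_{r - |L|}$, using that $(G \slash L') \slash e = G \slash L$ by the order-invariance of set-contraction noted earlier. Composing via transitivity and translation-invariance of stochastic dominance yields the lemma.

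For the single-edge case $L = \{e\}$, $e = \{u, v\}$, I couple $A := \mathcal{CP}(G, C)$ with $B := \mathcal{CP}(G \slash e, C)$ using shared priorities: assign to each non-$C$ edge of $G$ an independent $\text{Uniform}[0,1]$ priority, and at each step let $A$ (resp.\ $B$) contract the minimum-priority non-$C$ edge currently present in $A$'s (resp.\ $B$'s) graph. Let $P$ denote the set of edges parallel to $e$ in $G$, including $e$ itself; these are exactly the edges of $G$ absent from $G \slash e$. Let $\tau$ be the first step at which $A$ merges $u$ and $v$, which occurs either when $A$ contracts the lowest-priority edge of $P$ (possible only when $e \notin C$; otherwise $P \subseteq C$ and this never happens), or when $A$ contracts a non-$P$ edge whose endpoints span $A$'s current $u$- and $v$-components. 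The key structural claim, verified by induction on the priority-ordering of clock events, is that strictly before $\tau$ the sets of edges contracted by $A$ and $B$ are identical; consequently $B$'s vertex partition is $A$'s partition with the $u$- and $v$-components merged, so $B$ has one fewer vertex than $A$ at the same coupled moment. At and after $\tau$, all $P$-edges become self-loops in $A$ and are removed, so $A$'s and $B$'s graphs coincide (up to relabeling of the merged vertex) and the two processes proceed identically thereafter.

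For the edge-count comparison, I split on whether $A$ reaches vertex count $r$ before or after $\tau$. In the before case, $B$ is at vertex count $r - 1$ at that same coupled moment with the same contracted set $F$, and the identity $E(G \slash F) = E((G \slash e) \slash F) \sqcup \{f \in E(G) : f \text{ crosses between the } u \text{- and } v \text{-components of } F\}$ gives $M_r - M'_{r-1} = d$, where $d$ is the number of such crossing edges; since $e$ itself is always such an edge and is present in $A$ before $\tau$, we get $d \geq 1$, hence $M_r \geq M'_{r-1} + 1$. In the after case, $A$ and $B$ are synchronized, so $M_r = M'_r$, and the trivial monotonicity $M'_r \geq M'_{r-1} + 1$ (each contraction removes at least the contracted edge) again gives $M_r \geq M'_{r-1} + 1$. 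Thus $M_r \geq 1 + M'_{r-1}$ pointwise in the coupling, which implies the claimed stochastic dominance. The main obstacle is verifying the structural synchronization claim: showing that $B$'s priority-minimum over its smaller non-$C$ edge set agrees with $A$'s over the full non-$C$ edge set at every step before $\tau$. This rests on the observation that any edge $A$ would contract but $B$ could not (a $P$-edge, or a non-$P$ edge with endpoints spanning the $u$- and $v$-components of $A$) is precisely the event that triggers $\tau$, so no such divergence can occur strictly earlier.
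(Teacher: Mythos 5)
Your proof is correct, and at its core it uses the same device as the paper: couple $\mathcal{CP}(G,C)$ and $\mathcal{CP}(G \slash L, C)$ through a shared random ordering of the non-$C$ edges (your i.i.d.\ uniform priorities are exactly the paper's random permutation). The difference is organizational. The paper handles arbitrary $L$ in one shot: running both processes on the common permutation, it compares the two graphs at the same stage $k$, notes that if $A$'s graph has $r$ vertices and still contains $s$ edges of $L$ then $B$'s graph has at least $r-s$ vertices and at least $M_r - s$ edges, and then uses the trivial monotonicity $M'_{j} \geq M'_{j-1}+1$ to pass from $r-s$ to $r-|L|$. You instead reduce to $|L|=1$ by induction and analyze the single contraction via a stopping time $\tau$ with a synchronization argument; this is more bookkeeping but buys a sharper structural picture (exact identification of the two processes after $\tau$, and the explicit crossing-edge count $d$ before $\tau$), none of which is needed for the stated inequality. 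Two small points to tidy up: in the inductive step the edge $e$ may already be absent from $G \slash L'$ (its endpoints identified by $L'$), in which case $G \slash L = G \slash L'$ and the single-edge step degenerates to the pointwise bound $\tilde M_s \geq \tilde M_{s-1}+1$ — worth a sentence; and your parenthetical ``otherwise $P \subseteq C$'' uses that parallel edges are all in or all out of $C$, which holds when $C$ is a genuine cut of the current graph but not when, as in the inductive application to $G \slash L'$, $C$ is merely a set of edges — fortunately your argument never actually needs it, since $\tau$ is defined by the merging of the $u$- and $v$-components however it occurs.
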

\begin{proof}
As described in \cite{karger}, there is an alternative description of the contraction process for $G, C$: we select a random permutation of the edges in $G - C$. We then process these edges in order, either contracting the edge or ignoring it if it was already contracted.

With this in mind, consider the following coupling process: given a permutation $\rho$ of the edges $e \in G - C$, we run respectively the processes $\mathcal CP(G, C)$ and $\mathcal CP(G \slash L, C)$, using the common permutation $\rho$ to order the relevant edges in both cases. Letting $M_i, M'_i$ denote the edge counts obtained in the respective processes, we claim that for a fixed $\rho$,
$$
M_r \geq M'_{r - |L|} + |L|
$$
and this will show our claim.

For a fixed $\rho$, list the edges of $G - C$ in order of $\rho$ as $e_1, e_2, \dots, e_{m-|C|}$. Observe that both Contraction Processes successively select these edges in this order, and contract them if they are still present in the graph (either $G$ or $G \slash L$ respectively). We say that the process is at stage $k$ if it has processed edges $e_1, \dots, e_k$ in order.

Suppose that when this process is at stage $k$, then the graph $G$ has reduced to $r$ vertices and contains $s$ edges of $L$. Then  $G \slash L$ at this stage must have $r' \geq r - s$ vertices, as it has at most $s$ additional edges contracted away. Also note that every edge in $G \slash L$ at stage $k$ also remains in graph $G$, and in addition $G$ has $s$ edges which are not present for $G \slash L$. Hence
$$
M_{r} \geq M_{r'} + s \geq M'_{r-s} + s
$$

Next, note that every time the number of vertices is decreased by one in the process, so too must the number of edges be reduced by at least one. Hence
$$
M'_{r-s} + s \geq M'_{r-s-1} + s + 1 \geq \dots \geq M'_{r-|L|} + |L|.
$$
and our claim is proved.
\end{proof}

\begin{corollary}

\label{corr-gslashl}
Let $C$ be any cut of $G$, and $L \subseteq C$ a subset of the edges of $C$. Then for any integers $j \geq i \geq 1$ we have
$$
\bE_{\mathcal{CP}(G,C)}[ S_{i}] \leq \bE_{\mathcal{CP}(G \slash L,C)}[ S_{j}] + 2 \log(\frac{|L|+j}{i})
$$
\end{corollary}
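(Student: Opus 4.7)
The plan is to expand $\bE_{\mathcal{CP}(G,C)}[S_i] = \sum_{r=i+1}^n \bE[c/M_r]$ and split the sum at the threshold $r = |L|+j$, handling the two pieces by different methods: stochastic dominance (Lemma~\ref{Alemma:stoch-dom}) for the upper piece, and the deterministic minimum-degree bound for the lower piece.

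For the upper piece $r \in \{|L|+j+1,\dots,n\}$, every index lies in the regime $r > |L|$ required by Lemma~\ref{Alemma:stoch-dom}, so $M_r$ stochastically dominates $|L|+M'_{r-|L|}$, where $M'$ denotes edge counts in $\mathcal{CP}(G \slash L, C)$. Since $x \mapsto c/x$ is positive and decreasing on stochastic-dominance-comparable variables, this yields $\bE[c/M_r] \leq \bE[c/(|L|+M'_{r-|L|})] \leq \bE[c/M'_{r-|L|}]$. Re-indexing with $s = r-|L|$, the upper piece is bounded by $\sum_{s=j+1}^{n-|L|}\bE[c/M'_s]$, which is in turn at most $\bE_{\mathcal{CP}(G \slash L, C)}[S_j]$, using that $G \slash L$ has at least $n-|L|$ vertices (this is exactly the $k = 0$ instance of the coupling inequality $r' \geq r-s$ established inside the proof of Lemma~\ref{Alemma:stoch-dom}).

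For the lower piece $r \in \{i+1,\dots,|L|+j\}$, I would use the elementary deterministic bound $M_r \geq rc/2$ recalled just before Proposition~\ref{cp-ca-prop}, which is valid because every vertex of any contraction-subgraph of $G$ still defines a cut of $G$ of weight at least $c$. Hence $c/M_r \leq 2/r$, and comparing the sum to the integral of the decreasing function $2/r$ gives
$$
\sum_{r=i+1}^{|L|+j} \tfrac{2}{r} \leq \int_{i}^{|L|+j} \tfrac{2}{r}\,dr = 2\log\tfrac{|L|+j}{i}.
$$
Adding the two contributions yields the claim.

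I do not foresee any substantive obstacle: the argument reduces to choosing the right split point and invoking the already-proven stochastic-dominance lemma. The only care required is index bookkeeping, namely that the split point $|L|+j$ actually lies inside the sum range (immediate from $j \geq i \geq 1$ and $|L| \geq 0$, which also covers the degenerate case $|L|=0$, $j=i$) and that the re-indexed sum stays within the defined stages of $\mathcal{CP}(G \slash L, C)$.
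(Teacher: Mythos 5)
Your proposal is correct and follows essentially the same route as the paper's own proof: split the sum defining $S_i$ at $r = |L|+j$, bound the lower piece by $\sum_{r=i+1}^{|L|+j} 2/r \leq 2\log\frac{|L|+j}{i}$ using $M_r \geq rc/2$, and handle the upper piece via the stochastic dominance of Lemma~\ref{Alemma:stoch-dom} followed by re-indexing into $\bE_{\mathcal{CP}(G \slash L,C)}[S_j]$. The index bookkeeping you flag (the split point lying in range, and $G \slash L$ having at least $n-|L|$ vertices) is exactly what the paper's proof relies on as well.
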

\begin{proof}
As $M_r \geq r c/2$ for all $r$, we have
\begin{align*}
\bE_{\mathcal{CP}(G,C)} [ S_{i}] &\leq \bE_{\mathcal{CP}(G,C)} [ S_{j + |L|}] + \sum_{r= i  + 1}^{ j  +|L|} \frac{c}{r c/2}  \\
 &\leq \sum_{r=j+|L|+1}^n \bE_{\mathcal{CP}(G,C)}[\frac{c}{M_r}] + 2\log( \frac{|L| + j}{i}  )
\end{align*}
For each $r$, the random variable $M_{r}$ stochastically dominates the random variable $|L| + M'_{r - |L|}$, where $M'_i$ denote the edge counts under $\mathcal{CP}(G \slash L, C)$. This implies that $\bE_{\mathcal{CP}(G, C)}[ \frac{c}{M_r}] \leq \bE_{\mathcal{CP}(G \slash L, C)}[\frac{c}{|L| + M_{r - |L|}}] \leq \bE_{\mathcal{CP}(G \slash L, C)}[\frac{c}{M_{r-|L|}}]$ and so:
\begin{align*}
\bE_{\mathcal{CP}(G,C)}[ S_{i}] &\leq \sum_{r= j +|L|+1}^n \bE_{\mathcal{CP}(G \slash L, C)}[\frac{c}{M_{r-|L|}}] +  2\log( \frac{|L|+j}{i} )  \\
&= \sum_{r= j +1}^{n-|L|} \bE_{\mathcal{CP}(G \slash L, C)}[\frac{c}{M_r}] +  2\log( \frac{|L|+j}{i} ) \\
&\leq \bE_{\mathcal{CP}(G \slash L, C)} [S_{j}] +  2\log( \frac{|L| + j }{i} )
\end{align*}
\end{proof}

To illustrate how we can use these results, suppose we are interested in the contraction process for $C$, where $C$ contains very few edges. In this case, if we apply Corollary~\ref{corr-gslashl}  with $L = C$ and $i = j = \lceil 2 \alpha \rceil$, then we would have
$$
\bE_{\mathcal{CP}(G,C)}[ S_{\lceil 2 \alpha \rceil}] \leq \bE_{\mathcal{CP}(G \slash C,C)}[ S_{\lceil 2 \alpha \rceil}] + 2 \log(1 + \frac{|C|}{\lceil 2 \alpha \rceil})
$$

However, note that $\mathcal {CP}(G \slash C, C)$ is equivalent to $\mathcal{CA}(G \slash C)$. Also, observe that as $|C|$ is small then $ 2 \log(1 + \frac{|C|}{2 \alpha})  \approx 0$.  Thus,
$$
\bE_{\mathcal{CP}(G,C)}[ S_{\lceil 2 \alpha \rceil}] \lesssim \bE_{\mathcal{CA}(G \slash C)} [S_{\lceil 2 \alpha \rceil}]
$$

We have thus completely factored out $C$ --- we have reduced the analysis of the Contraction Process for $C$ to the behavior of the Contraction Algorithm on a (slightly smaller) graph. 

We note that the choice of parameters $L = C, i = j = \lceil 2 \alpha \rceil$ for Corollary~\ref{corr-gslashl} was only for illustrative purposes, and is not optimal. When we apply this Corollary later, we will choose other parameters which lead to stronger, but more complicated, estimates.

\section{Bounds for small, odd $c$}
\label{Asec:oddc}
Our ultimate goal for this algorithm is show that the reliability of a graph influences the number of cuts it can have. As a warm-up exercise, we will show that graphs with connectivity $c$, where $c$ is a small odd number, have noticeably fewer $\alpha$-cuts than the worst case (where $c$ is even and the graph is a cycle with each edge having multiplicity $c/2$). \textbf{This section is not needed for our main algorithm, and can be skipped if desired.}

We use the following fact about the minimum cuts of $G$, when $c$ is odd. This is shown in \cite{bixby}, \cite{chandran}:
\begin{proposition}
\label{prop:mincuts-odd-c}
Suppose $G$ has minimum cut $c$, for $c$ odd. Then $G$ has at most $2 n$ min-cuts, which are represented by the edges of a spanning tree of $G$. 
\end{proposition}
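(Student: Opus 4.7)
The plan is to establish this structural bound via the classical cactus representation of minimum cuts due to Dinitz, Karzanov, and Lomonosov. Recall that for any connected multigraph $G$ with minimum-cut value $c$, there exists a weighted cactus $H$ (a connected graph in which every edge lies in at most one simple cycle) together with a surjective map $\pi : V(G) \to V(H)$ such that min-cuts of $G$ are in weight-preserving bijection with min-cuts of $H$. In this representation, each tree edge of $H$ has weight $c$ and corresponds to a min-cut obtained by removing that single edge, while each cycle edge has weight $c/2$ and contributes to min-cuts obtained by removing exactly two edges of the same cycle.

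The first step is the key parity observation: since $c$ is odd, $c/2$ is not an integer, so $H$ cannot contain any cycle edges at all. Consequently $H$ is a tree, and every min-cut of $G$ corresponds to the removal of a single edge of $H$. The second step is to bound $|V(H)|$: a standard structural bound on the cactus representation gives $|V(H)| \le 2n-1$, which immediately yields at most $2n-2 \le 2n$ min-cuts.

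For the ``spanning tree of $G$'' claim, the plan is to lift each tree edge of $H$ back to $G$. For each edge $e$ of $H$, select any representative edge $\hat e \in E(G)$ with $\pi(\hat e) = e$ (such an edge must exist because the tree edge has positive weight, i.e.\ at least one $G$-edge maps to it). Because $H$ is a tree, the multiset of chosen edges $\{\hat e\}$ is acyclic when viewed in the quotient graph $G/\pi$, hence acyclic in $G$ as well, so it can be extended to a spanning tree $T$ of $G$. Each min-cut of $G$ is then ``represented'' by the corresponding tree edge of $T$: deleting $\hat e$ separates $T$ into two components whose vertex sets are the two shores of the min-cut.

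The main obstacle in writing this out formally would be invoking the Dinitz--Karzanov--Lomonosov cactus representation at exactly the right level of detail, in particular the bound $|V(H)| \le 2n-1$ and the bijection between min-cuts of $G$ and min-cuts of $H$. Since the proposition is stated with explicit citations to Bixby and Chandran, the intended treatment is to quote these works rather than redevelop the cactus machinery here; the odd-$c$ improvement over the generic $O(n^2)$ bound is really a one-line parity corollary of the cactus representation, which is precisely the content of Bixby's theorem.
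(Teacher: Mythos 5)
The paper does not actually prove this proposition at all: it is quoted as a known result, with the proof deferred entirely to the cited works of Bixby and Chandran--Shankar, and the surrounding text even remarks that this is the only place the parity of $c$ is used. So your closing decision -- to invoke the citations rather than redevelop the cactus machinery -- is exactly what the paper does, and in that sense your treatment matches it.

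Your supplementary cactus sketch is a legitimate route, but two steps need tightening if you ever wanted it to stand on its own. First, the parity step as written is not yet an argument: in the weighted cactus the value $c/2$ on a cycle edge is an abstract weight, and there is nothing contradictory about a non-integral weight per se. What makes the argument work is the structural fact behind it, namely that a cactus cycle corresponds to a circular partition of $V(G)$ in which \emph{consecutive classes are joined by exactly $c/2$ edges of $G$} (equivalently, the lemma that two crossing min-cuts force $c$ to be even, via the four-corner counting argument); only then does oddness of $c$ rule out cycles and force the representation to be a tree. Second, in the lifting step you cannot always ``select $\hat e \in E(G)$ with $\pi(\hat e)=e$'': cactus nodes may have empty preimages, so no edge of $G$ need have its two endpoints mapping precisely to the endpoints of a given tree edge of $H$. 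The fix is to pick, for each tree edge of $H$, an arbitrary edge of $G$ crossing the corresponding min-cut (at least $c\ge 1$ such edges exist) and then argue about the resulting edge set; this is essentially the spanning-tree representation asserted in the cited references.
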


This is actually the only place we use the fact that $c$ is odd; in fact, our result is true for any graph $G$ which has $\leq 2 n$ min-cuts (or a similar result could be shown for a graph with $O(n)$ min-cuts).
 
Although our goal is to analyze the Contraction Process for a target cut $C$, it will suffice to analyze the unconditioned Contraction Algorithm. The basic strategy of this proof is use induction on the number of vertices to show a bound on $\bE_{\mathcal{CA}(G)}[S_i]$. We will need to track the behavior of $S_i$ not only for the original graph $G$, but for subgraphs $G_r$ which arise during the evolution of the Contraction Algorithm. The formal induction proof by itself is not very intuitive, because it requires guessing a bound on $\bE[S_i]$ and then proving that this bound is correct. So we will give an intuitive and non-rigorous derivation of the proof. 

Consider any min-cut $C$ (not the target cut). By Corollary~\ref{Akillmanylemma1}, it survives to $G_i$ with probability $\bE_{\mathcal{CP}(G,C)}[\prod_{r=i+1}^n (1 - c/M_r)] \leq \bE_{\mathcal{CP}(G,C)}[e^{-S_i}]$. Now, suppose that we ignore the distinction between $\mathcal{CP}(G,C)$ and $\mathcal{CA}(G)$ (the two processes should be almost the same, because $C$ affects only a small part of the graph); in this case, we can approximate that $C$ survives to $G_i$ with probability roughly $\bE_{\mathcal{CA}(G)} e^{-S_i}$.

As this is true for \emph{any} min-cut, the expected number of min-cuts $K_i$ remaining in $G_i$ should be about 
$$
\bE[K_i] \approx k \exp(-\bE[S_i])
$$
where we are being deliberately vague about the scope of the expectation.

Now the neighborhood of each vertex of $G_i$ defines a cut. As $G_i$ has $K_i$ min-cuts, this implies that at most $K_i$ vertices may have the minimum degree $c$, while the others must have degree at least $c+1$, so that
$$
M_i \geq K_i c/2 + (i-K_i) (c+1)/2
$$

We have a bound on the expected value of the random variable $K_i$. It will turn out that, given this fixed value of $\bE[K_i]$, the worst case distribution on $K_i$ is that either $K_i = 0$ or $K_i = i$ (the latter occurring with probability $\bE[K_i]/i$). In this case,
$$
\bE[c/M_i] \leq \frac{\bE[K_i]}{i} \frac{c}{i c/2} + (1 - \frac{\bE[K_i]}{i}) \frac{c}{i (c+1)/2} = \frac{2 (i c + \bE[K_i])}{(c+1) i^2}
$$

This gives us a recurrence relation in $S_i$:
\begin{align*}
\bE[S_{i-1}] &= \sum_{j \geq i} \bE[\frac{c}{M_j}] = \bE[ S_ i ] + \bE[ \frac{c}{M_i} ] \\
& \approx \bE[S_i] + \frac{2 (i c + \bE[K_i])}{(c+1) i^2} \approx \bE[S_i] + \frac{2 (i c + k e^{-\bE[S_i]})}{(c+1) i^2}
\end{align*}

We relax this recurrence relation to a differential equation
\begin{align*}
\frac{d \bE[S_i]}{di} &= -\frac{2 (k e^{-\bE[S_i]} + c i)}{(c+1) i^2} \\
\bE[S_n] &= 0
\end{align*}
which can be solved in closed form to obtain
$$
\bE[S_i] = \log \Bigl[\frac{(i/n)^{\frac{2}{c+1}-1} (2 k + (c-1) n) - 2 k}{(c-1) i} \Bigr]
$$

This derivation makes a number of unwarranted independence and monotonicity assumptions on the behavior of the random variables, which do not hold in general. However, as we will see, this argument does accurately capture the \emph{worst-case} behavior for all the random variables. That is, even though the random variables are not independent, any dependency would only give us better bounds. 

In the following theorem, we carry out the high-level approach discussed in Section~\ref{stoch-proc-sec} and prove that our heuristic formula is in fact a correct bound:
\begin{theorem}
\label{Amainoddthm}
Let $c > 1$. Define the function
$$
f(i,n,k) = \log \Bigl[\frac{(i/n)^{\frac{2}{c+1}-1}  (2 k + (c-1) n) - 2 k}{(c-1) i} \Bigr]
$$

Suppose $G$ is a graph with $n$ vertices and a minimum cut weight of $c$. Suppose that $G$ has at most $k$ min-cuts, where $k$ is an integer in the range $0 \leq k \leq 2 n$. Then for $3 \leq i \leq n$ we have that
$\bE_{\mathcal{CA}(G)}[S_i] \leq f(i,n,k)$. 
\end{theorem}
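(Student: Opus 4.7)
The plan is to prove the bound by induction on $n$, following the ``generalized Jensen's inequality'' template outlined in Section~\ref{stoch-proc-sec}. To keep the induction honest against the possibility that a single contraction strictly increases the minimum cut, I will strengthen the inductive hypothesis slightly: the bound $\bE_{\mathcal{CA}(G)}[S_i] \leq f(i,n,k)$ should hold for any $G$ on $n$ vertices whose minimum cut is at least $c$ and which has at most $k$ cuts of weight exactly $c$, where $S_i$ is computed with $c$ (not the current min-cut) in the numerator throughout. The base case is $i=n$: $S_n$ is an empty sum, and $f(n,n,k) = \log 1 = 0$.

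For the inductive step, one step of the Contraction Algorithm selects a uniformly random edge $F_n$ and yields $H = G/F_n$ on $n-1$ vertices, whose min-cut is still at least $c$. Let $K$ count the weight-$c$ cuts of $H$; then $K \leq 2(n-1)$ almost surely (by Proposition~\ref{prop:mincuts-odd-c} when $H$ still has min-cut exactly $c$, and trivially when it does not, since then $K = 0$). Using the decomposition $S_i = c/M_n + S_i(H)$ and the induction hypothesis applied to $H$,
$$\bE_{\mathcal{CA}(G)}[S_i] \leq \frac{c}{M_n} + \bE[f(i, n-1, K)].$$
A weight-$c$ cut $C$ of $G$ survives precisely when $F_n \notin C$, and distinct surviving cuts remain distinct in $H$, so $\bE[K] = k(1 - c/M_n)$ exactly. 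The function $f(i, n-1, \cdot)$ is the logarithm of a positive quantity that is affine in $k$ with nonnegative slope (since $(i/n)^{2/(c+1)-1} \geq 1$ for $i \leq n$ and $c \geq 1$), hence is both increasing and concave in $k$, and Jensen's inequality gives $\bE[f(i, n-1, K)] \leq f(i, n-1, k(1 - c/M_n))$. Finally, Proposition~\ref{prop:mincuts-odd-c} also forces $M_n \geq (n(c+1)-k)/2$: at most $k$ vertices can have degree exactly $c$ (since each such vertex's singleton neighborhood is a weight-$c$ cut), and the remaining $n-k$ vertices have degree $\geq c+1$.

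Setting $u = c/M_n$, the whole proof therefore reduces to the purely analytic inequality
$$u + f\bigl(i, n-1, k(1-u)\bigr) \leq f(i, n, k), \qquad u \in \bigl(0,\, \tfrac{2c}{n(c+1)-k}\bigr].$$
This is the main obstacle. I would treat the left-hand side as a function $g(u)$, examine its derivative, and try to show monotonicity on the interval so that only the endpoint $u^\ast = 2c/(n(c+1)-k)$ requires explicit verification. Because $f$ was obtained by integrating exactly the differential equation that captures the worst-case trajectory of the discrete process, I expect equality at $u = u^\ast$, with strict slack for smaller $u$ (which correspond to graphs denser than the Bixby bound forces). If $g$ turns out to be nonmonotone on the interval, direct substitution of the explicit formula for $f$ should still reduce the question to a tractable single-variable calculus problem in $u$, with $c, n, i, k$ fixed.
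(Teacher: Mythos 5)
Your overall skeleton is exactly the paper's: induction on $n$ with $i$ fixed, the decomposition $\bE_{\mathcal{CA}(G)}[S_i] = c/M_n + \bE_{G'}\bigl[\bE_{\mathcal{CA}(G')}[S_i]\bigr]$, the observation that $f(i,n-1,\cdot)$ is the logarithm of an affine function of $k$ with nonnegative slope (hence increasing and concave, Proposition~\ref{Aprop1x1}) so that Jensen converts $\bE[f(i,n-1,K)]$ into $f(i,n-1,\bE[K])$, and the edge bound coming from the fact that at most $k$ vertices can have degree exactly $c$. Your strengthening of the inductive statement to graphs with minimum cut at least $c$ (with $S_i$ still normalized by the original $c$) is a sensible tightening of a point the paper passes over silently, and using $\bE[K]\le k(1-c/M_n)$ in place of the paper's $k e^{-c/M_n}$ is harmless.

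The gap is that you stop precisely where the real work begins. The inequality $u + f\bigl(i,n-1,k(1-u)\bigr) \le f(i,n,k)$ over the admissible range of $u = c/M_n$ is the technical heart of the theorem; the paper spends Appendix~\ref{oddc-appendix} on it (monotonicity in $m$ is Proposition~\ref{Aprop1x2}, and the endpoint verifications are Propositions~\ref{A1x3} and~\ref{A1x4}, which require genuine estimates such as $e^{2x}(1-x)\le 1+x$ and a second-order Taylor bound for $\log(r-1)$), while you only sketch a plan. Moreover the plan as stated is off in one respect: the endpoint $u^{\ast} = 2c/(n(c+1)-k)$ is the right one only when $k\le n$. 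For $n < k \le 2n$ the bound $M_n\ge (n(c+1)-k)/2$ is weaker than the trivial $M_n \ge nc/2$, so the achievable range is $u\le 2/n < u^{\ast}$ and the endpoint check must be done at $u = 2/n$; this is exactly the paper's case split between Propositions~\ref{A1x3} and~\ref{A1x4}, and insisting on $u^{\ast}$ would force you to prove something stronger than is needed (and possibly false). Nor should you count on exact equality at the endpoint: $f$ solves a relaxed continuous-time equation, and the discrete endpoint inequalities hold with slack and still need real estimates. A secondary wrinkle: you invoke Proposition~\ref{prop:mincuts-odd-c} to guarantee $K\le 2(n-1)$, but that proposition requires $c$ odd, whereas the theorem is stated for all $c>1$; the clean fix is to note that the upper bound on $k$ is never used in the analytic propositions (Propositions~\ref{A1x3} and~\ref{A1x4} together cover all $k\ge 0$), so the inductive statement can simply drop it. Until the single-variable inequality is actually established, with the $k\le n$ versus $k\ge n$ dichotomy, the proof is incomplete.
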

\begin{proof}
For simplicity, we will defer some technical analysis of the function $f$ to Appendix~\ref{oddc-appendix}.

First, note that by Proposition~\ref{Aprop1x5}, the function $f$ is well-defined.

We induct on $n$. When $n = i$, we have $S_i = 0 = f(i,n,k)$.

Now suppose $n \geq i+1$, and $G$ has $m$ edges and $k \leq 2 n$ cuts of weight $c$. In the first step of the Contraction Algorithm, we select an edge of $G$ to contract, arriving at a new graph $G'$. So $\bE_{\mathcal{CA}(G)}[S_i] = c/m + \bE_{G'} \bE_{\mathcal{CA}(G')}[S_i]$. We have broken the expectation into two components. First, we randomly select the next subgraph $G'$; then, we continue the Contraction Algorithm on that subgraph.

The graph $G'$ has $n-1$ vertices and has $K' \leq 2 (n-1)$ min-cuts, where $K'$ is a random variable. Each min-cut survives to $H$ with probability $(1 - c/m)$, and so $\bE[K'] \leq k (1 - c/m) \leq k e^{-c/m}$.

By the inductive hypothesis,  $$
\bE_{\mathcal{CA}(G')}[S_i] \leq f(i,n-1,K')
$$

By Proposition~\ref{Aprop1x1}, this is a concave-down increasing function of $K'$, hence by Jensen's inequality
$$
\bE_{\mathcal{CA}(G')}[S_i] \leq f(i,n-1,k e^{-c/m})
$$
and hence
$$
\bE_{\mathcal{CA}(G)}[S_i] \leq c/m + f(i,n-1,k e^{-c/m})
$$

The neighborhood of each vertex of $G$ defines a cut, and for $n \geq 3$ these are all distinct. Hence at most $k$ vertices may have the minimum degree $c$, while the others must have degree at least $c+1$.

First suppose $k \leq n$. That implies that 
$$
m \geq n c/2 + (n-k)/2
$$

By Proposition~\ref{Aprop1x2}, the expression $c/m + f(i,n-1,k e^{-c/m})$ is decreasing in $m$. Then we have the bound
\begin{align*}
\bE_{\mathcal{CA}(G)}[S_i] &\leq \frac{c}{n c/2 + (n-k)/2} + f(i,n-1,k e^{- \frac{c}{n c/2 + (n-k)/2}}) \\
&\leq f(i,n,k) \qquad \text{by Proposition~\ref{A1x3}}.
\end{align*}

Suppose $k \geq n$. Then $m \geq n c/2$, so we have the bound
\begin{align*}
\bE_{\mathcal{CA}(G)}[S_i] &\leq \frac{c}{n c/2} + f(i,n-1,k e^{-\frac{c}{n c/2}} ) \\
&\leq f(i,n,k) \qquad \text{by Proposition~\ref{A1x4}}.
\end{align*}

This completes the induction.
\end{proof}

The function $f$ is technically not defined at $c = 1$. However, it approaches the limit $f(i,n,k) = \log \bigl( (n/i) + (k/i) \log (n/i) \bigr)$, and this function has similar properties to the case when $c > 1$. We omit the full analysis of the case when $c = 1$, as it is essentially identical to Theorem~\ref{Amainoddthm}. For that case, we obtain the result:
\begin{proposition}
\label{Aoddthm1}
Suppose $G$ is a graph with $n$ vertices and $c = 1$, and with $k \leq n$ weight-one cuts (aka bridges). Then for $n \geq i \geq 1$ we have
$$
\bE_{\mathcal{CA}(G)}[S_i] \leq \log\Bigl[(n/i) + (k/i) \log(n/i) \Bigr]
$$
\end{proposition}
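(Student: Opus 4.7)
The plan is to follow the template of the proof of Theorem~\ref{Amainoddthm} verbatim, substituting the $c=1$ closed form
\[
g(i,n,k) := \log\bigl[(n/i) + (k/i)\log(n/i)\bigr]
\]
for the function $f$. A short L'Hopital computation (differentiating the numerator and denominator of $f(i,n,k)$ in $c$ and evaluating at $c=1$) confirms that $g$ is precisely $\lim_{c\to 1^{+}} f(i,n,k)$, so this is the natural analogue, and one may reasonably hope that all the monotonicity/concavity lemmas used for $f$ transfer to $g$ by a continuity argument.

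I would prove $\bE_{\mathcal{CA}(G)}[S_i] \le g(i,n,k)$ by induction on $n$. The base case $n = i$ gives $S_i = 0 = g(i,i,k)$ since $\log[1 + (k/i)\log 1] = 0$. For the inductive step, condition on the first contracted edge and write
\[
\bE_{\mathcal{CA}(G)}[S_i] = \tfrac{1}{m} + \bE_{G'}\bE_{\mathcal{CA}(G')}[S_i],
\]
where $m$ is the edge count of $G$ and $G'$ is the random one-step contraction. Then $G'$ has $n-1$ vertices, and its number of bridges $K'$ satisfies $\bE[K'] \le k(1-1/m) \le k e^{-1/m}$ because each bridge survives a single contraction with probability $1-1/m$; moreover the structural fact that the bridges of a connected graph form a forest gives $K' \le n-2$, so the inductive hypothesis applies. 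Using the $c=1$ analog of Proposition~\ref{Aprop1x1} (concavity and monotonicity of $g$ in $k$) together with Jensen's inequality yields
\[
\bE_{G'}\bE_{\mathcal{CA}(G')}[S_i] \le g\bigl(i, n-1, k e^{-1/m}\bigr).
\]

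To finish, observe that in any graph with $c=1$ on $n \ge 3$ vertices, the neighborhood-cut of a vertex of degree $d$ has weight $d$, so such a cut is a bridge iff $d=1$, and these bridges are all distinct across different degree-$1$ vertices. Hence at most $k$ vertices have degree $1$, giving $2m \ge k + 2(n-k)$, i.e.\ $m \ge n - k/2$. Using monotonicity of $1/m + g(i,n-1, k e^{-1/m})$ in $m$ (the analog of Proposition~\ref{Aprop1x2}) and substituting the worst case $m = n - k/2$ reduces the induction to the one-variable calculus inequality
\[
\tfrac{1}{n-k/2} + g\bigl(i, n-1,\, k e^{-1/(n-k/2)}\bigr) \;\le\; g(i,n,k),
\]
which is the $c=1$ specialization of Proposition~\ref{A1x3}. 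The only case distinction used in Theorem~\ref{Amainoddthm} that disappears here is the $k \ge n$ branch, since structurally $k \le n-1$ for any connected graph with $c=1$. The main obstacle is verifying this terminal inequality and the accompanying monotonicity/concavity properties of $g$; the cleanest route is to take the limit $c \to 1^{+}$ in the already-established versions of Propositions~\ref{Aprop1x1}, \ref{Aprop1x2}, and~\ref{A1x3} and check that the limiting inequalities remain valid (all functions involved are smooth in $c$ on $(1, \infty)$ with well-defined extensions to $c = 1$). This is precisely the sense in which, as the paper remarks, the proof is ``essentially identical'' to the odd-$c$ case.
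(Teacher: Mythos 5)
Your proposal is correct and follows essentially the same route as the paper, which omits this proof precisely because it is the $c=1$ transcription of the induction in Theorem~\ref{Amainoddthm}: the same conditioning on the first contracted edge, the same bound $\bE[K'] \leq k e^{-1/m}$ with Jensen applied to the concave increasing dependence on $k$, the same degree-count bound $m \geq n - k/2$, and the same terminal calculus inequality. The only caveats are that Propositions~\ref{Aprop1x2} and~\ref{A1x3} are stated only for $c \geq 3$, so the ``take the limit $c \to 1^{+}$'' shortcut is not literally licensed and the $c=1$ analogues must be checked directly, as your fallback plan says (this is routine; e.g.\ the terminal inequality reduces to $(n-1)e^{1/(n-k/2)} - n \leq k \log\frac{n}{n-1}$), and that, exactly as in Theorem~\ref{Amainoddthm}, the distinctness of vertex-neighborhood cuts needs $n \geq 3$, so the induction as written really covers $i \geq 3$ rather than the stated $i \geq 1$ --- a boundary restriction the paper itself glosses over.
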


\begin{lemma}
\label{Acrudeoddlemma}
Suppose $G$ has minimum cut $c$, for $c$ odd. Then for any integers $1 \leq i \leq n$ we have
$$
\bE[S_i] \leq \frac{2 c}{c+1} \log(n/i) + O(1)
$$
\end{lemma}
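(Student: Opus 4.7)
The plan is to invoke Theorem~\ref{Amainoddthm} with the maximum allowed number of min-cuts $k = 2n$, which is justified for odd $c$ by Proposition~\ref{prop:mincuts-odd-c}. Since Proposition~\ref{Aprop1x1} tells us that $f(i,n,k)$ is (increasing and) concave-down in $k$, substituting the upper bound $k = 2n$ only weakens the estimate, giving $\bE_{\mathcal{CA}(G)}[S_i] \le f(i,n,2n)$. All that remains is to massage $f(i,n,2n)$ into the claimed form.

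The core computation is straightforward algebra. Substituting $k = 2n$ yields
\[
f(i,n,2n) = \log\!\Bigl[\tfrac{(i/n)^{\frac{2}{c+1}-1}(c+3)n - 4n}{(c-1)i}\Bigr].
\]
Rewrite $(i/n)^{\frac{2}{c+1}-1} = (n/i)^{\frac{c-1}{c+1}}$, discard the $-4n$ summand (this weakens the bound but preserves validity), and combine the powers of $n/i$ using the identity $\frac{c-1}{c+1} + 1 = \frac{2c}{c+1}$. This gives
\[
f(i,n,2n) \le \log\!\Bigl[\tfrac{c+3}{c-1}\,(n/i)^{\frac{2c}{c+1}}\Bigr] = \tfrac{2c}{c+1}\log(n/i) + \log\tfrac{c+3}{c-1},
\]
which is precisely the desired bound (with the additive constant depending on $c$, hidden inside the $O(1)$).

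For the corner case $c = 1$, the function $f$ is not defined, so one must instead invoke Proposition~\ref{Aoddthm1}. With $k \le n$ (bridges of a connected graph lie in any spanning tree), this proposition gives $\bE[S_i] \le \log\!\bigl[(n/i) + (n/i)\log(n/i)\bigr]$, which has the same leading coefficient $\frac{2c}{c+1} = 1$ on $\log(n/i)$ as required (absorbing the lower-order $\log\log$ term into the additive error, in keeping with the crude nature of this lemma).

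I expect no serious obstacle here: all the stochastic-process analysis — the induction in $n$, the concavity/monotonicity arguments, the differential-equation motivation — has already been carried out in the proof of Theorem~\ref{Amainoddthm}. The only work is recognizing that the exponent $\frac{c-1}{c+1}$ appearing in $f$, once multiplied against the extra factor $(n/i)$ coming from the $(c-1)i$ in the denominator, produces exactly $\frac{2c}{c+1}$, and handling the definitional annoyance at $c=1$ via the separate proposition.
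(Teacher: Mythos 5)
Your main case ($c\ge 3$) is essentially the paper's own proof: apply Theorem~\ref{Amainoddthm} with $k=2n$ and simplify $f(i,n,2n)$. Your variant of the algebra (dropping the $-4n$ term rather than bounding $\log\bigl(\frac{(c+3)-4(i/n)^{1-2/(c+1)}}{c-1}\bigr)$ directly) is correct and gives the same constant, uniformly bounded by $\log 3$ for $c\ge 3$.

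There are, however, two boundary issues. The substantive one is your $c=1$ case: Proposition~\ref{Aoddthm1} with $k\le n$ gives $\bE[S_i]\le \log\bigl[(n/i)(1+\log(n/i))\bigr]=\log(n/i)+\log\bigl(1+\log(n/i)\bigr)$, and the second term grows like $\log\log(n/i)$; it is \emph{not} $O(1)$ and cannot be ``absorbed into the additive error,'' since the lemma's error term is a genuine constant. So as written your argument does not establish the stated bound for $c=1$. The case is still easy, but by a different route: for $c=1$ every contraction-subgraph $G_r$ is connected, so $M_r\ge r-1$ deterministically, hence $S_i\le\sum_{r=i+1}^n \frac{1}{r-1}\le \log(n/i)+O(1)$ with no appeal to Proposition~\ref{Aoddthm1} at all. (The paper itself only says ``a similar proof applies'' here, but it does not assert the false absorption step.)

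The minor one is the range of $i$: Theorem~\ref{Amainoddthm} is stated only for $3\le i\le n$, while the lemma claims $1\le i\le n$. You apply the theorem across the whole range without comment. The paper bridges this explicitly by noting $S_1\le S_3+O(1)$ (which follows from $M_r\ge rc/2$, so $c/M_2+c/M_3\le O(1)$), reducing to $i\ge 3$; your write-up should include this reduction, or the invocation of the theorem at $i=1,2$ is outside its hypotheses.
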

\begin{proof}
Observe that $S_1 \leq S_3 + O(1)$, so it suffices to show this for $3 \leq i \leq n$. Suppose that the graph $G$ has $n$ vertices and $k \leq 2 n$ cuts of weight $c$. If  $c \geq 3$, then by Theorem~\ref{Amainoddthm} we have
\begin{align*}
\bE_{\mathcal{CA}(G)} [S_i] &\leq f(i,n,2 n) \\
&=\log \left(\frac{(c+3) (n/i)^{2-\frac{2}{c+1}}-4 (n/i)}{(c-1)}\right) \\
&= (2 - \frac{2}{c+1}) \log(n/i) + \log \left(\frac{(c+3) - 4 (i/n)^{1 - \frac{2}{c+1}}}{(c-1)}\right) \\
&= \frac{2 c}{c+1} \log(n/i) + O(1)
\end{align*}

A similar proof, using Proposition~\ref{Aoddthm1}, applies when $c = 1$.
\end{proof}

We can now estimate the probability of selecting the $\alpha$-cut $C$:
\begin{theorem}
\label{Aoddcutthm}
Suppose $c$ is odd and $C$ is an $\alpha$-cut. 
Then the Contraction Algorithm with parameter $\alpha$ selects $C$ with probability $\Omega(n^{-2 \alpha \frac{c}{c+1}})$.
\end{theorem}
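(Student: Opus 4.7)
The plan is to chain together Lemma~\ref{Aslemma2}, the ``factoring out $C$'' mechanism of Corollary~\ref{corr-gslashl}, and the odd-$c$ bound of Lemma~\ref{Acrudeoddlemma}. Write $|C| = a c$ with $a \in [1, \alpha]$. Applying Lemma~\ref{Aslemma2} with cut-weight parameter $a$ and algorithm parameter $\alpha' = \alpha$ (valid since $\alpha \geq a$), and then Jensen's inequality to the exponential, gives
$$
\bP(\text{select }C) \;\geq\; e^{-O(\alpha)} \exp\bigl(-a\,\bE_{\mathcal{CP}(G,C)}[S_{\lceil 2a\rceil}]\bigr).
$$
So it suffices to prove a bound of the form $\bE_{\mathcal{CP}(G,C)}[S_{\lceil 2a\rceil}] \leq \frac{2c}{c+1}\log n + O(1)$, with the implied constants (as well as the final $\Omega$-constant) allowed to depend on $\alpha$ and $c$.

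The next step is to ``factor out'' $C$ by applying Corollary~\ref{corr-gslashl} with $L = C$ and $i=j=\lceil 2a\rceil$. Since every edge of $C$ is contracted when forming $G \slash C$, no edge of $C$ survives, and hence $\mathcal{CP}(G\slash C, C)$ coincides with $\mathcal{CA}(G\slash C)$. The corollary then yields
$$
\bE_{\mathcal{CP}(G,C)}[S_{\lceil 2a\rceil}] \;\leq\; \bE_{\mathcal{CA}(G\slash C)}[S_{\lceil 2a\rceil}] \;+\; 2\log\bigl(1 + ac/\lceil 2a\rceil\bigr),
$$
and the residual error is $O(\log c)$, which will be absorbed into the final $(c,\alpha)$-dependent constant.

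The main step is then bounding $\bE_{\mathcal{CA}(G\slash C)}[S_{\lceil 2a\rceil}]$. A subtlety here is that $S_j = \sum_{r>j} c/M_r$ is normalized by the min-cut $c$ of the \emph{original} graph $G$, while the min-cut $c'$ of $G \slash C$ may be strictly larger than $c$ (contraction never decreases min-cut). I would therefore split into two cases. \emph{Case A:} $c' = c$, which is odd by hypothesis on $G$. Then Lemma~\ref{Acrudeoddlemma} applies to $G \slash C$ directly and gives $\bE_{\mathcal{CA}(G \slash C)}[S_{\lceil 2a\rceil}] \leq \frac{2c}{c+1}\log(n/a) + O(1)$. \emph{Case B:} $c' \geq c+1$. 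Here, I would write $S_j = \frac{c}{c'} \sum_{r>j} c'/M_r$ and bound the $c'$-normalized sum by the trivial Proposition~\ref{Aeqn:simple-s-bound}, obtaining $\bE_{\mathcal{CA}(G \slash C)}[S_{\lceil 2a\rceil}] \leq \frac{c}{c'}\cdot 2\log(n/\lceil 2a\rceil) \leq \frac{2c}{c+1}\log(n/a)$. Both cases produce the required $\frac{2c}{c+1}\log n + O(1)$ estimate.

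The main conceptual obstacle, in my view, is exactly this bookkeeping in Case B: one must notice that the ``$c$'' in $S_j$ is an external normalization constant, so the $(c/c')$-rescaling converts the generic $2\log(n/i)$ bound into one with the sharp $\frac{2c}{c+1}$ coefficient whenever $c' \geq c+1$, and one does not need $c'$ itself to be odd. Assembling all the estimates and using $a \leq \alpha$ in the final exponent of $n$ (so that $n^{-2ac/(c+1)} \geq n^{-2\alpha c/(c+1)}$) yields $\bP(\text{select }C) \geq \Omega(n^{-2\alpha c/(c+1)})$, as required.
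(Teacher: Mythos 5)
Your skeleton is the same as the paper's: Lemma~\ref{Aslemma2} plus Jensen, then ``factoring out'' $C$ via Corollary~\ref{corr-gslashl} with $L=C$ (so that $\mathcal{CP}(G\slash C,C)=\mathcal{CA}(G\slash C)$), then the odd-$c$ bound of Lemma~\ref{Acrudeoddlemma}. Your Case~A/Case~B split is a genuinely nice touch: since $S$ is normalized by the min-cut $c$ of the \emph{original} graph, when the min-cut $c'$ of $G\slash C$ satisfies $c'\geq c+1$ the trivial bound of Proposition~\ref{Aeqn:simple-s-bound} already rescales to the coefficient $\frac{2c}{c+1}$, with no oddness needed for $c'$. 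This makes explicit a point the paper's own proof passes over silently when it applies Lemma~\ref{Acrudeoddlemma} directly to $G\slash C$.

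The gap is in the constants, and it is not cosmetic. The theorem's $\Omega(\cdot)$ is meant uniformly in $\alpha$ and $c$ (both may grow with $n$: the paper explicitly treats $\alpha$ as a possibly slowly growing function, and $c$ is a multigraph min-cut), and the paper's proof is engineered precisely to get that uniformity. Your choice $i=j=\lceil 2a\rceil$ in Corollary~\ref{corr-gslashl} leaves an additive term $2\log\bigl(1+ac/\lceil 2a\rceil\bigr)=\Theta(\log c)$, which after multiplying by $a$ in the exponent becomes a multiplicative loss of $c^{-\Theta(a)}$; moreover the $e^{-O(\alpha)}$ prefactor from Lemma~\ref{Aslemma2} is never absorbed. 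So you prove only $\Omega_{\alpha,c}\bigl(n^{-2\alpha c/(c+1)}\bigr)$, as you concede. That is a strictly weaker statement: the whole content of the theorem beyond the trivial Corollary~\ref{Asimplecorr} is a gain of $n^{2\alpha/(c+1)}$, and already for $c=\Theta(\log n)$ your loss $c^{-\Theta(a)}$ can swallow it entirely; correspondingly the disjointness/counting argument behind Theorem~1.2 would only yield $c^{O(\alpha)}n^{2\alpha c/(c+1)}$ cuts. The repair is exactly the paper's parameter choice: for $n\geq \alpha c^2$ apply Corollary~\ref{corr-gslashl} with $j=\lceil\alpha c^2\rceil$, so that Lemma~\ref{Acrudeoddlemma} gives $\frac{2c}{c+1}\log\frac{n}{\alpha c^2}$, whose $-\frac{4c}{c+1}\log c$ cancels the $2\log\frac{c(c+1)}{2}$ overhead and whose $-\frac{2c}{c+1}\log\alpha$ supplies an $\alpha^{\Omega(\alpha)}$ factor that beats $e^{-O(\alpha)}$; the remaining regimes $n\leq 2\alpha$ and $2\alpha<n<\alpha c^2$ (which your write-up also leaves implicit) are handled separately, the latter via Corollary~\ref{Asimplecorr} using $n<\alpha c^2$.
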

\begin{proof}
First, suppose $n \geq \alpha c^2$ (which is the most interesting and difficult case). We then apply Corollary~\ref{corr-gslashl} with $L = C, j = \lceil \alpha c^2 \rceil$:
\begin{align*}
\bE_{\mathcal{CP}(G,C)}[ S_{\lceil 2 \alpha \rceil}] & \leq \bE_{\mathcal{CP}(G \slash C, C)}[ S_{\lceil \alpha c^2 \rceil}] + 2 \log(\frac{\alpha c + \lceil \alpha c^2 \rceil}{2 \alpha})  \\
 &\leq \bE_{\mathcal{CA}(G \slash C)}[S_{\lceil \alpha c^2 \rceil}] + 2 \log(\frac{c(c+1)}{2}) + 1
\end{align*}
where here we note that $\mathcal{CP}(G \slash C, C) = \mathcal{CA}(G \slash C)$.

Let $r$ be the number of vertices in $G \slash C$. Applying Lemma~\ref{Acrudeoddlemma}:
\begin{align*}
\bE_{\mathcal{CA}(G \slash C)}[ S_{\lceil \alpha c^2 \rceil}] &\leq \max(0, \frac{2 c}{c+1} \log ( \frac{r}{\lceil \alpha c^2 \rceil} )) + O(1) \leq \frac{2 c}{c+1} \log ( \frac{n}{\alpha c^2 } ) + O(1)
\end{align*}

So
\begin{align*}
\bE_{\mathcal{CP}(G,C)}[ S_{\lceil 2 \alpha \rceil}] &\leq  \frac{2 c}{c+1} \log ( \frac{n}{\alpha c^2} ) + 2 \log(\frac{c(c+1)}{2}) + O(1)\\
 &=  \frac{2 c}{c+1} \log n - \Omega(\log \alpha)  - \frac{4 c}{c+1} \log c + 2 \log \bigl( c(c+1) \bigr) + O(1)\\
 &\leq  \frac{2 c}{c+1} \log n - \Omega(\log \alpha) + O(1)
\end{align*}

By Lemma~\ref{Aslemma2}, the probability of selecting $C$ is bounded by:
\begin{align*}
\bP(\text{Contraction Algorithm selects $C$}) & \geq \exp( -O(\alpha) - \alpha \bE_{\mathcal{CP}(G, C)}[S_{\lceil 2 \alpha \rceil}] ) \\
&\geq \exp(-O(\alpha) - \frac{2 c}{c+1} \alpha \log n + \Omega(\alpha \log \alpha)  ) \\
&\geq \Omega( n^{ -\frac{2 c}{c+1} \alpha} ) 
\end{align*}

To finish the proof, we need to deal with the cases in which $n$ is small. When $n \leq 2 \alpha$, then observe that cut $C$ is selected with probability exactly $2^{1-n} \geq \Omega(n^{-2 \alpha \frac{c}{c+1}})$. Next suppose that $2 \alpha < n < \alpha c^2$. Then we apply Corollary~\ref{Asimplecorr}, which does not take account of the fact that $c$ is odd:
\begin{align*}
\bP( \text{Select $C$} ) &\geq e^{-O(\alpha)} ( \frac{n}{\alpha} \bigr)^{-2 \alpha} \geq n^{-\alpha \frac{2 c}{c+1}}  e^{-O(\alpha)} n^{-\frac{2 \alpha}{c+1}} \alpha^{2 \alpha} \geq n^{-\alpha \frac{2 c}{c+1}}  e^{-O(\alpha)} (\alpha c^2)^{-\frac{2 \alpha}{c+1}} \alpha^{2 \alpha} \\
&\geq n^{-\alpha \frac{2 c}{c+1}}  e^{-O(\alpha)} (c^{1/(c+1)})^{-4 \alpha} \alpha^{\alpha} \qquad \text{as $c \geq 1$} \\
&\geq n^{-\alpha \frac{2 c}{c+1}}  e^{-O(\alpha)} \alpha^{\alpha} \qquad \text{as $x^{-1/(x+1)} \geq \Omega(1)$ for $x \geq 1$} \\
&\geq \Omega(n^{-\alpha \frac{2 c}{c+1}})
\end{align*}
\end{proof}

The following example shows that we cannot achieve any probability of the form $n^{-\alpha x}$ where $x$ is a constant with $x < \frac{2 c}{c+1}$:
\begin{observation}
Let $c$ be odd, and let $\alpha = k \frac{c+1}{2 c}$ with $k$ an integer and $\alpha \leq 2 n$. Then there is a graph $G$ whose min-cut has weight $c$ and which has $\Omega \Bigl( (\frac{n}{2 \alpha})^{\frac{2 c}{c+1} \alpha} \Bigr)$ distinct $\alpha$-cuts.
\end{observation}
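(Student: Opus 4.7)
My plan is to exhibit an explicit multigraph that is essentially a cycle, with one ``weak'' edge placed to lower the min-cut from the natural value $c+1$ down to $c$, and then to show by direct enumeration that it has many small cuts. Take $G$ to be the multigraph on $v_1, \ldots, v_n$ in which $\{v_i, v_{i+1}\}$ has multiplicity $(c+1)/2$ for $i=1,\ldots,n-1$ and $\{v_n, v_1\}$ has multiplicity $(c-1)/2$; these are nonnegative integers because $c$ is odd. For $c=1$ the weak edge vanishes and $G$ degenerates into a path, which is handled by the same argument.

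The first key step is to verify that the min-cut equals $c$. I would observe that $v_1$ and $v_n$ have degree $(c-1)/2 + (c+1)/2 = c$ while all other vertices have degree $c+1$, so the min-cut is at most $c$. The matching lower bound follows from a short case analysis: any nontrivial cut of a cycle graph uses an even number $t \geq 2$ of cycle-edges as transitions, contributing weight $t(c+1)/2$ or $t(c+1)/2 - 1$ (depending on whether the weak edge is among the transitions), and the minimum over $t \geq 2$ is $c$, attained only at $t=2$ with the weak edge included.

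The second key step is counting $\alpha$-cuts. Since $\alpha c = k(c+1)/2$, a cycle cut with $t$ transitions is automatically an $\alpha$-cut whenever $t \leq k$, and distinct transition sets yield distinct cuts. For $k$ even this already produces at least $\binom{n}{k}$ $\alpha$-cuts. The elementary estimate $\binom{n}{k} \geq (n/k)^k$, combined with the identity $n/k = \frac{c+1}{c} \cdot n/(2\alpha)$, then gives $\binom{n}{k} \geq \bigl(\tfrac{c+1}{c}\bigr)^k (n/(2\alpha))^k \geq (n/(2\alpha))^k$, which matches the desired lower bound (with implicit constant $1$ in this regime; smaller values of $n$ or $\alpha$ make the target bound $O(1)$, which is trivial since every graph has at least one cut).

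The main technical obstacle is the parity of $k$: when $k$ is odd, the even-transition constraint on cycle cuts forces $t \leq k-1$, losing a factor of roughly $n/k$ relative to the target. To close this gap I would augment $G$ with a small structural modification -- for instance a pendant vertex joined to $v_1$ by a few parallel edges, or a single chord placed so as not to drive the min-cut below $c$ -- designed to create cuts whose ``effective'' transition count can be odd. I would then re-verify that the min-cut of the augmented graph remains $c$ and that the additional cuts bring the total up to $\Omega((n/(2\alpha))^k)$. The binomial comparison is routine; the genuinely delicate ingredient in both the basic and the augmented construction is the exhaustive case check for the min-cut.
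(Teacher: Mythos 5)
Your construction and your even-$k$ count are exactly the paper's: the paper takes the same cycle with $n-1$ bundles of $(c+1)/2$ edges and one bundle of $(c-1)/2$ edges, and simply counts the $\binom{n-1}{k}$ ways to choose $k$ heavy bundles, declaring each such choice a cut of weight $k(c+1)/2$; your $\binom{n}{k}\ge (n/(2\alpha))^k$ estimate for even $k$ is the same argument. Where you differ is that you noticed the parity constraint (crossing-bundle sets of a cycle always have even size), which the paper's one-line proof silently ignores; for the purpose the observation actually serves -- showing the exponent $\frac{2c}{c+1}$ in Theorem 1.2 cannot be lowered -- restricting to even $k$ is enough, and your argument fully covers that.

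However, the odd-$k$ branch of your plan is a genuine gap, and the repairs you float would not close it. Adding a chord can only (weakly) increase the weight of every vertex bipartition and the min-cut, so it can never create new $\alpha$-cuts. A pendant vertex must carry at least $c$ parallel edges to keep the min-cut at $c$, and after $k-1$ heavy transitions the remaining weight budget is only $k(c+1)/2-(k-1)(c+1)/2=(c+1)/2<c$ for $c\ge 3$, so separating the pendant is unaffordable; splitting its $c$ edges between two attachment points does not help either, because making the attachment points lie on opposite shores either forces one specific bundle to be a transition (adjacent attachments) or imposes a parity condition on the arcs (distant attachments), and in every case a gadget occupying $O(1)$ fixed locations multiplies the $t\le k-1$ count by only an $O(1)$ factor, leaving you at $O\bigl((n/(2\alpha))^{k-1}\bigr)$, a factor of roughly $n/k$ short of the target. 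So either restrict the statement to even $k$ (which is all the paper's own proof really establishes, and all that is needed for the optimality claim), or be aware that the odd case needs a genuinely different construction rather than a local modification of the cycle.
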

\begin{proof}
Consider a cycle graph consisting of bundles of $(c+1)/2$ edges between each adjacent pair of vertices, except for one edge bundle of $(c-1)/2$ edges. This graph has minimum cut $c$. 
Suppose we select any $k$ edge-bundles which use the $(c+1)/2$ edges to fail. This gives a cut of weight $k (c+1)/2$, which is $k \frac{c+1}{2 c}$ times the minimum cut as indicated.
There are $\binom{n-1}{k}$ such choices, so the total number of such $\alpha$-cuts is at least
$$
\binom{n-1}{k} = \binom{n-1}{\frac{2 \alpha c}{c+1}} = \Omega\Bigl(  (\frac{n}{2 \alpha})^{\frac{2 c}{c+1} \alpha} \Bigr)
$$
\end{proof}

\section{The effects of graph reliability on the Contraction Algorithm}
\label{Azsec}
We now show how the graph reliability affects the behavior of the Contraction Algorithm. Our goal in this section is to obtain a quantitative form of the following simple relationship: a graph which has high reliability will have fewer $\alpha$-cuts than the cycle graph, for any value of $\alpha \geq 1$. The key to this is to draw a connection between $\bar Z$ (the expected number of failed cuts) and the behavior of the Contraction Algorithm. If $\bar Z$ is large (say as large as $n^2 p^c$), then the graph $G$ could be essentially like the cycle graph, in which case Karger's analysis of the Contraction Algorithm would be tight. However, when $\bar Z$ is small, then the graphs encountered during the Contraction Algorithm have more edges which implies that any target cut $C$ is retained with higher probability.

\textbf{In this section, we will assume $U(p) < n^{-K}$ for some constant $K$; thus the conclusions of Proposition~\ref{k-prop} hold.}

Throughout this section, we recall that
$$
p^c = n^{-2-\delta} \qquad \bar Z = n^{-2-\delta+\beta}
$$
and that $\delta \geq \delta_0 > 0$, where $\delta_0$ is an arbitrarily small constant.

We begin with a warm-up exercise, which will illustrate some of the ideas in the proof. We then discuss how this simple proof falls short and how to improve it, at the cost of greater complexity.

\begin{proposition}
\label{simplecaprop}
Define
$$
h(x) = \begin{cases}
2(2+\delta)( \log(3 - \beta + \delta) - \log(2 - \beta + \delta + x) ) & \text{if $x \geq \beta$} \\
2(2+\delta)( \log(3 - \beta + \delta) - \log(2 + \delta) ) + 2 (\beta - x) & \text{if $x < \beta \leq 1$} \\
2 (1 - x) & \text{if $\beta \geq 1$}
\end{cases}
$$
Then for any integer $i \in [2,n]$ we have
$$
S_{i} \leq h \bigl( \frac{\log i}{\log n} \bigr) \log n
$$

Here, $x$ should be thought of as a parameter which measures how far along we are in the contraction process: $x = 1$ is the beginning of the process, while $x = o(1)$ is the end.
\end{proposition}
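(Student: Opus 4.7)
The plan is to prove the bound on $S_i$ directly and deterministically, rather than by the induction-on-$G$ approach used in Section~\ref{Asec:oddc}. The key point is that both pieces of the argument---the lower bound on $M_r$ and the monotonicity of $\bar Z$ under contraction---hold pathwise, not just in expectation, so the function $h$ arises simply from integrating a pointwise inequality.

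The first step is to establish a lower bound on $M_r$ in terms of $\bar Z$. At stage $r$, the graph $G_r$ has the $r$ singleton cuts $(\{v\}, V(G_r)\setminus\{v\})$, each contributing $p^{\deg(v)}$ to the partition function $\bar Z^{(r)}$ of $G_r$. Since contraction can only \emph{delete} cuts of the original graph, we have $\bar Z^{(r)} \le \bar Z$ deterministically. Combined with Jensen's inequality on the exponential,
$$
\bar Z \ge \bar Z^{(r)} \ge \sum_v p^{\deg(v)} \ge r\, p^{2 M_r/r},
$$
which gives $M_r \ge \frac{r \log(r/\bar Z)}{2\log(1/p)} = \frac{rc\log(r/\bar Z)}{2(2+\delta)\log n}$. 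Combining with the trivial bound $M_r \ge rc/2$ (from the min-cut condition) yields, deterministically,
$$
\frac{c}{M_r} \le \min\!\left(\frac{2}{r},\ \frac{2(2+\delta)\log n}{r\log(r/\bar Z)}\right).
$$
A direct computation shows that the $\bar Z$-bound is the tighter of the two exactly when $r > \bar Z \cdot n^{2+\delta} = n^{\beta}$.

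The second step is to sum these bounds for $r = i+1,\dots,n$. Writing $x = \log i/\log n$ and $\log(r/\bar Z) = (\log r/\log n + 2+\delta-\beta)\log n$, the change of variable $y = \log r/\log n$ turns the $\bar Z$-piece into
$$
\int_{y_0}^{1}\frac{2(2+\delta)}{y+2+\delta-\beta}\,dy\cdot\log n = 2(2+\delta)\bigl[\log(3+\delta-\beta)-\log(y_0+2+\delta-\beta)\bigr]\log n,
$$
which is precisely the first branch of $h(x)\log n$ (taking $y_0 = x$ when the $\bar Z$-bound dominates throughout). The trivial piece integrates to $2(\text{length})\log n$. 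Splitting the interval $[i,n]$ at the crossover point $r = n^{\beta}$ gives the three cases of $h$: (i) when $x \ge \beta$, the $\bar Z$-bound is used on all of $[i,n]$; (ii) when $x < \beta \le 1$, use the trivial bound on $[i, n^\beta]$ and the $\bar Z$-bound on $[n^\beta, n]$, producing the linear term $2(\beta-x)$ plus the value of the log piece at $y_0 = \beta$; (iii) when $\beta \ge 1$, the crossover point exceeds $n$, so the trivial bound is used throughout, giving $2(1-x)$.

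The main obstacles are minor and technical. First, one needs to replace integrals by discrete sums and absorb the $O(1)$ discretization error; this is harmless because the bound $h(x)\log n$ grows like $\log n$, but one should verify monotonicity of the integrand so that $\sum_{r=i+1}^n f(r) \le \int_i^n f(r)\,dr$ (or the other direction, with a small additive correction). Second, the two bounds on $M_r$ must be combined cleanly at the crossover $r = n^{\beta}$, and the computation in case (ii) requires checking that the value of the log-piece at $y=\beta$ simplifies to $2(2+\delta)[\log(3-\beta+\delta)-\log(2+\delta)]$, which it does by direct substitution. Neither step requires any induction on $G$ or concavity gymnastics; the entire proposition reduces to a pointwise lower bound on $M_r$ plus a one-variable calculus exercise.
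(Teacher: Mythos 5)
Your proposal is correct and follows essentially the same route as the paper: bound $M_r$ deterministically via the singleton (vertex-degree) cuts of $G_r$, whose contributions to the partition function are at most $\bar Z$, apply convexity to get $r p^{2M_r/r} \leq \bar Z$, and then sum/integrate $c/M_r$, using the trivial bound $M_r \geq rc/2$ below the crossover $r = n^{\beta}$ to produce the other branches of $h$. The paper compresses the $x<\beta$ and $\beta\geq 1$ cases into a remark and Proposition~\ref{Aeqn:simple-s-bound}, but the substance is identical to your argument.
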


Before we give the proof, we give a more intuitive description of the somewhat mysterious function $h(x)$. Suppose that we run the Contraction Algorithm starting at the original graph, but terminate it after reaching a subgraph with $n^x$ vertices (instead of running it all the way to a graph with $2 \alpha$ vertices). The probability that the cut $C$ survives to $G_{n^x}$ is roughly $n^{-\alpha h(x)}$. So when $x = 1$, then $h(x) = 0$ and this makes sense because the cut $C$ survives with probability one. If we assume that $\alpha$ is a small constant, then we terminate the algorithm at $x \approx 0$ and so the cut $C$ survives with probability $n^{-\alpha h(0)}$; this is essentially the probability that the cut $C$ is actually selected by the Contraction Algorithm.

\begin{proof}
Let $x = \frac{\log i}{\log n}$.  For $\beta \geq 1$, this is simply Proposition~\ref{Aeqn:simple-s-bound}.

Let us next consider the case $\beta \leq x \leq 1$. Consider the subgraph $G_r$ during during the Contraction Algorithm for $r > i$. As $r \geq 3$, the neighborhood of each vertex $v$ determines a distinct cut of $G_r$ with weight $d_v$, where $d_v$ is the degree of $v$. As all the cuts of $G_r$ are cuts of $G$ as well, we must have
$$
\sum_v p^{d_v} \leq \sum_{\text{$C$ a cut of $G$}} p^{|C|} = \bar Z
$$

Now note that $r p^{\frac{\sum_v d_v}{r}}\leq \sum_v p^{d_v}$, hence  $r p^{2 M_r/r} \leq \bar Z$, which in turn implies that
$$
M_r \geq \frac{r \log(\bar Z/r) }{2 \log p}
$$

Hence 
\begin{align*}
S_i &\leq \sum_{r =  i+1}^n \frac{2 c \log p}{r \log(\bar Z/r)} \leq \int_{r = i}^n \frac{2 c \log p}{r \log(\bar Z/r)} \  dr = -2 c \log p (\log \log (r/\bar Z)) \Bigr|_{r=n^x}^n \\
&= 2 (2 + \delta) \log n (\log (3 - \beta + \delta) - \log(2 - \beta + \delta + x)) \\
&= h(x) \log n
\end{align*}

A similar calculation applies in the case $x \leq \beta$.
\end{proof}

This simple proof demonstrates the effect of the graph reliability on the behavior of the Contraction Algorithm. In fact, this Proposition~\ref{simplecaprop} is already sufficient to obtain a substantially improved run-time compared to Karger's algorithm. We will also need this Proposition~\ref{simplecaprop} for technical reasons later in this paper.

However, Proposition~\ref{simplecaprop} is not tight, because it assumes that the value of $\bar Z$ does not change during the execution of the Contraction Algorithm. But in fact, most of the cuts in the graph are being contracted away, so $\bar Z$ should be decreasing rapidly. 

We will track this behavior in terms of a parameter which is slightly more general than $\bar Z$. Suppose we are given a target $\alpha$-cut $C$. For any contraction subgraph $H$, we define the associated graph parameter 
\begin{equation}
\label{aeqn}
A_{\gamma}^{H} = \negthickspace \negthickspace  \sum_{\text{cuts $C'$ of $H$}} \negthickspace \negthickspace e^{-\gamma (|C' - C|)/c}
\end{equation}

This is the expected number of failed cuts of $H$, if all the edges in $C$ fail and all the edges outside $C$ fail independently with probability $e^{-\gamma/c}$. Be aware that $C$ may not be a cut of the graph $H$ (if some edges of $C$ were contracted); in (\ref{aeqn}), one should interpret $C$ as a collection of edges which survive in $H$.

Proposition~\ref{step-a-prop} shows how the parameter $A$ changes over the course of a Contraction Process.

\begin{proposition}
\label{step-a-prop}
Suppose that $C$ is a cut of $G$, and $L$ is a subset of the edges of $C$. Suppose that $H$ is a contraction-subgraph of $G$ with $m$ edges. Let $H'$ denote the graph obtained from $H$ by one step of the Contraction Process for $L$ (that is, selecting an edge of $H - L$ uniformly at random and contracting it.) Then
$$
\bE[A_{\gamma - c/m}^{H'}] \leq A_{\gamma}^H
$$
\end{proposition}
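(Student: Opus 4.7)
The plan is to expand both sides as sums over cuts of $H$ and prove the inequality term-by-term, exploiting the fact that contracting an edge either leaves a cut intact or destroys it entirely.

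First, I would note the standard correspondence: cuts of $H' = H/F$ are in bijection with cuts $C'$ of $H$ such that $F \notin C'$ (the endpoints of $F$ lie on the same shore of $C'$), and under this bijection the edge set of the cut is preserved (when the contracted edge is not in the cut, no parallel edges are lost either). In particular, $|C'-C|$ computed in $H'$ equals $|C'-C|$ computed in $H$. Writing $m' = |H - L|$ and taking expectation over the uniform choice of $F$ in $H - L$, this gives
\begin{equation*}
\bE[A_{\gamma - c/m}^{H'}] = \sum_{\substack{C' \text{ cut of } H}} e^{-(\gamma - c/m)|C'-C|/c} \cdot \bP(F \notin C').
\end{equation*}

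Comparing term-by-term against $A_\gamma^H = \sum_{C'} e^{-\gamma|C'-C|/c}$, it suffices to prove, for each individual cut $C'$ of $H$, that
\begin{equation*}
\bP(F \notin C') \leq e^{-|C'-C|/m},
\end{equation*}
since $e^{-(\gamma - c/m)|C'-C|/c} = e^{-\gamma|C'-C|/c} \cdot e^{|C'-C|/m}$.

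To establish this termwise inequality, I would first compute $\bP(F \notin C') = 1 - |C' \setminus L|/m'$, since $F$ is uniform over $H - L$ and the edges of $C'$ lying outside $L$ are exactly $C' \setminus L$ (as $C' \subseteq H$). Applying $1 - x \leq e^{-x}$ gives $\bP(F \notin C') \leq e^{-|C' \setminus L|/m'}$. Now the two hypotheses combine cleanly: since $L \subseteq C$, we have $C' \setminus L \supseteq C' \setminus C$ and hence $|C' \setminus L| \geq |C' - C|$; and since $H - L \subseteq H$, we have $m' \leq m$, so $1/m' \geq 1/m$. Multiplying these gives $|C' \setminus L|/m' \geq |C' - C|/m$, and therefore $\bP(F \notin C') \leq e^{-|C' \setminus L|/m'} \leq e^{-|C'-C|/m}$, as required. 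Summing over cuts of $H$ yields the claim.

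There isn't really a hard step here; the whole argument is a termwise comparison. The only thing to get right is the bookkeeping about what ``cut of $H'$'' means in terms of cuts of $H$ and keeping $L$, $C$, $m$, and $m'$ straight — in particular, using that $L \subseteq C$ to relate $|C' \setminus L|$ to $|C' - C|$, which is precisely where the hypothesis on $L$ enters.
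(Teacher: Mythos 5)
Your proof is correct and follows essentially the same route as the paper: a termwise comparison over cuts of $H$, bounding the survival probability of each cut $C'$ by $1 - \frac{|C'-L|}{|E(H)-L|} \leq e^{-|C'-C|/m}$ via $L \subseteq C$, $|E(H)-L| \leq m$, and $1-x \leq e^{-x}$, and then absorbing this factor into the shifted exponent. The only difference is cosmetic (you apply $1-x\le e^{-x}$ before rather than after comparing the ratios), so there is nothing to add.
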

\begin{proof}
Consider some cut $C'$. In a single iteration of the contraction process, we select an edge of $G - L$ uniformly at random. There are $|E(H) - L|$ such edges, hence we select an edge of $C' - L$ with probability $\frac{|C' - L|}{|E(H) - L|}$. So the probability of retaining $C'$ is
$$
\bP(\text{retain $C'$}) \leq 1 - \frac{|C' - L|}{|E(H) - L|} \leq 1 - \frac{|C' - C|}{m} \leq \exp(-\frac{|C'-C|}{m})
$$
The contribution of $C'$ to $A_{\gamma-c/m}^{H'}$ is $\exp( -(\gamma-c/m) \frac{|C' - C|}{c} )$ if $C'$ is retained, and is zero otherwise. Summing over all such $C'$:
\begin{align*}
\bE[A_{\gamma-c/m}^{H'}] &\leq \sum_{C'} \exp( -(\gamma-c/m) \frac{|C' - C|}{c} ) \exp(-\frac{|C' - C|}{m}) \\
&= \sum_{C'} \exp( -\gamma \frac{|C' - C|}{c} ) = A_{\gamma}^H
\end{align*}
\end{proof}

Our strategy here is to guess a formula for $\bE[S_i]$ in terms of various parameters, and then prove it true by an induction. The induction proof is not informative on its own, so we give a heuristic derivation of our formula. This amounts to estimating $M_r$ for the various stages of the Contraction Process for $C$. In this discussion, we ignore any considerations of the fact that $S_i$ is a random variable, and we treat it as essentially deterministic.

Suppose that the cuts of $G$, other than the target cut $C$, have negligible overlap with $C$. In this case, $A^G_{-\log(p^c)} = \sum_{C'} p^{|C' - C|} \approx \sum_{C'} p^{C'} = \bar Z$. Also, applying Proposition~\ref{step-a-prop} multiple times, and ignoring all distinctions between expected value and actual values for the variables and all variable dependencies, we have that for the graph $G_i$:
$$
A^{G_i}_{-\log(p^c) - S_i} = A^{G_i}_{-(\log p^c) -c/M_{i+1} - c/M_{i+2} + \dots - c/M_n} \leq A^{G}_{-(\log p^c)} = \bar Z
$$

Now suppose that in $G_i$, each vertex has degree $d$. Since the neighborhood of each vertex of $G_i$ defines a distinct cut, and these cuts are all assumed to be nearly disjoint from $C$, 
$$
\sum_v \exp \bigl( (c \log p + S_i) d/c \bigr) \leq A^{G_i}_{-\log(p^c) + S_i} = \bar Z
$$

Which implies $$
d \geq \frac{c \log (\bar Z/i)}{S_i + c \log p} 
$$
and hence
$$
M_i = i d/2 \geq \frac{i c/2 \log (\bar Z/i)}{S_i + c \log p}
$$

Thus we have a recurrence relation in $S_i$:
$$
S_{i-1} = S_i + c/M_i \leq S_i + \frac{2 (c \log p+S_i)}{i \log(\bar Z/i)}
$$

We relax this to a differential equation
\begin{align*}
\frac{d S_i}{di} &= -\frac{2 (S_i + c \log p)}{i \log (Z/\bar i)} \\
S_n &= 0
\end{align*}
which can be solved in closed form to obtain
$$
S_i =  (-\log(p^c)) \Bigl( 1 - \frac{\log^2(\bar Z/i)}{\log^2(\bar Z/n)}  \Bigr)
$$

This derivation is completely non-rigorous, as it makes a number of unwarranted independence and monotonicity assumptions. However, as we will see, all of these assumptions turn out to be the worst-case behavior. Hence the above bound is essentially correct.

As in Section~\ref{Asec:oddc}, we will use this bound as the basis of an induction argument. However, the situation is more complicated because of the way we must transform the contraction process for $C$ into an unconditioned contraction algorithm. In Section~\ref{Asec:oddc}, we handled this by applying Lemma~\ref{Alemma:stoch-dom} with $L = C$. This worked because $C$ was small compared to $n$. We cannot assume this here; instead, we will define $L$ to be a small random subset of the edges of $C$. This will remove the influence of $C$ from cuts $C'$ which heavily overlap $C$. We will show that in fact the worst-case behavior of the contraction process comes about when all other cuts of $G$ have negligible overlap with $C$; in that case, we can essentially ignore $C$.

\begin{lemma}
\label{Acontractlemma2}
There is a subset of the edges $L \subseteq C$ with the following properties:
\begin{enumerate}
\item $|L| \leq \lceil \alpha (2+\delta) \log  n \rceil$
\item $A_{-\log(p^c)}^{G \slash L} \leq \bar Z$
\end{enumerate}

(We can interpret this as follows. The parameter $A_{-\log(p^c)}^{G \slash L}$ counts the expected number of failed cuts, given that all edges in $L$ are \emph{retained} and all edges in $C - L$ are \emph{removed} and all other edges fail \emph{randomly with probability $p$})
\end{lemma}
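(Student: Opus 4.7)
First, I would rewrite the target inequality in a workable form. Using $e^{-(-\log p^c)/c} = p$ and noting that cuts of $G \slash L$ correspond exactly to cuts $C'$ of $G$ disjoint from $L$ as edge sets, the parameter unpacks to
\[
A^{G \slash L}_{-\log(p^c)} = \sum_{\substack{C' \text{ cut of } G \\ C' \cap L = \emptyset}} p^{|C' - C|}.
\]
The goal is therefore to choose $L \subseteq C$ with $|L| \leq t := \lceil \alpha(2+\delta)\log n\rceil$ so that this sum is at most $\bar Z = \sum_{C'} p^{|C'|} = \sum_{C'} p^{|C' \cap C|} p^{|C' - C|}$.

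If $|C| \leq t$, I would simply take $L = C$. Then the cuts of $G \slash L$ are exactly the cuts of $G$ disjoint from $C$, so the surviving sum is trivially at most $\bar Z$, and the size constraint is immediate.

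For the substantive case $|C| > t$, the plan is to apply the probabilistic method with a \emph{fixed-size} random $L$: sample $L$ uniformly among the subsets of $C$ of cardinality exactly $t$. The size bound then holds deterministically. For any cut $C'$ with $k = |C' \cap C|$, a standard hypergeometric computation yields
\[
\bP(C' \cap L = \emptyset) = \binom{|C|-k}{t} \Big/ \binom{|C|}{t} \leq (1 - t/|C|)^{k},
\]
where the inequality comes from writing the ratio as $\prod_{i=0}^{k-1} (|C|-t-i)/(|C|-i)$ and bounding each factor by $1 - t/|C|$. The heart of the argument is to verify $1 - t/|C| \leq p$, equivalently $t \geq (1-p)|C|$. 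This follows from $|C| \leq \alpha c$ combined with the elementary inequality $1 - p \leq -\log p$ for $p \in (0,1]$: indeed, $(1-p)|C| \leq (1-p)\alpha c \leq -\alpha c \log p = \alpha(2+\delta)\log n \leq t$. Consequently $\bP(C' \cap L = \emptyset) \leq p^{|C' \cap C|}$, and summing over $C'$ gives $\bE[A^{G \slash L}_{-\log(p^c)}] \leq \bar Z$. Since $|L| \equiv t$ is deterministic, this expectation bound produces a specific realization of $L$ with both desired properties.

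The main subtlety---and the reason I would not use the more obvious i.i.d.\ Bernoulli$(1-p)$ scheme on the edges of $C$---is that the i.i.d.\ scheme gives $\bE[A^{G \slash L}_{-\log(p^c)}] = \bar Z$ exactly and $\bE[|L|] \leq \alpha(2+\delta)\log n$, but offers no joint guarantee: FKG-style monotonicity actually works against us, since $A^{G \slash L}_{-\log(p^c)}$ is decreasing in $L$ while $|L|$ is increasing, so the two ``small'' events are negatively correlated. Fixing $|L|$ to $t$ outright absorbs all randomness into the identities of the edges in $L$ and lets the expectation bound on $A^{G \slash L}_{-\log(p^c)}$ alone do the work.
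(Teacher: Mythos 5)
Your proposal is correct and matches the paper's argument in essence: both choose $L$ as a uniformly random subset of $C$ of size exactly $\lceil \alpha(2+\delta)\log n\rceil$, bound the avoidance probability of each cut $C'$ by $p^{|C'\cap C|}$ via the hypergeometric ratio, and conclude $\bE[A^{G \slash L}_{-\log(p^c)}] \leq \bar Z$ by the probabilistic method. The only difference is cosmetic — you expand the ratio as a product of $|C'\cap C|$ factors and use $1-p \leq -\log p$, while the paper expands it over the $|L|$ factors and uses $1-u \leq e^{-u}$.
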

\begin{proof}
Choose $L$ to be a random subset of $C$ of size exactly $x = \lceil \alpha(2+\delta) \log n\rceil $. (If this number is larger than the number of edges in $C$, we simply set $L = C$, in which case the lemma is trivially true.)  Now consider the contribution of some cut $C'$ to $A_{-\log (p^c)}^{G \slash L}$. If $L$ contains some edge in $C \cap C'$, then $C'$ contributes zero; else it contributes $p^{|C' - C|}$. 

So, supposing that $|C \cap C'| = s c$, we have:
\begin{align*}
\bE[\text{Contribution of cut $C'$ to $A_{-\log(p^c)}^{G \slash L}$}] &= p^{|C'| - s c} \bP( C' \cap L = \emptyset ) \\
&\leq p^{|C'|} p^{-s c} \frac{\binom{\alpha c - s c}{x}}{\binom{\alpha c}{x}} \\
&= p^{|C'|} p^{-s c} \frac{ (\alpha c - s c) (\alpha c - s c - 1) \dots (\alpha c - s c - (x-1))}{ (\alpha c) (\alpha c - 1) \dots (\alpha c - (x-1)) } \\
&\leq p^{|C'|} p^{-s c} (\frac{\alpha c - s c}{\alpha c})^{x} = p^{|C'|} n^{(2+\delta) s} (1 - \frac{s}{\alpha})^{x} \\
&\leq p^{|C'|} n^{s (2+\delta)} e^{-\frac{s}{\alpha} \alpha(2+\delta) \log n}  = p^{|C'|}
\end{align*}

Thus, summing over all $C'$, we have $\bE[A_{-\log(p^c)}^{G \slash L}] \leq \sum_{C'} p^{|C'|} = \bar Z$. In particular, there exists some subset $L$ with $A_{-\log(p^c)}^{G \slash L} \leq \bar Z$.
\end{proof}

We are now ready to introduce the main induction argument:
\begin{theorem}
\label{Amainrelthm}
Define the function
$$
f(i,r,a,\gamma) = \begin{cases}
\gamma \Bigl( 1 - \frac{\log^2 (a/i)}{\log^2 (a/r)} \Bigr) & \text{if $a \in (0,1]$} \\
2 \log(r/i) & \text{if $a > 1$}
\end{cases}
$$

Fix a graph $G$ and a target cut $C$ of $G$, and let $H$ be a contraction-subgraph of $G$ with $r$ vertices. Let $\gamma$ be a real number in the range $[2 \log r, \infty)$ and let $i,r$ be integers with $100 \leq i \leq r \leq n$. Then
$$
\bE_{\mathcal{CP}(H,C)}[ S_i] \leq f(i,r,A^{H}_{\gamma},\gamma)
$$

We emphasize here that $C$ is not necessarily a cut of $H$; in the expression $\mathcal{CP}(H,C)$ and $A^{H}$, we view $C$ simply as a subset of the edges of $H$.
\end{theorem}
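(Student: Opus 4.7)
I would proceed by induction on the number of vertices $r$ of $H$, with the base case $r=i$ being immediate since $S_i$ is then an empty sum. For the inductive step, I would write
\[
\bE_{\mathcal{CP}(H,C)}[S_i] = \frac{c}{m} + \bE_{H'}\bigl[\bE_{\mathcal{CP}(H',C)}[S_i]\bigr],
\]
where $m = M_r$ is the edge count of $H$ and $H'$ is the random contraction-subgraph produced by one step of $\mathcal{CP}(H,C)$. The goal is to bound the two pieces so that their sum fits inside $f(i,r,a,\gamma)$, with $a = A^H_\gamma$. The regime $a > 1$ will follow directly from Proposition~\ref{Aeqn:simple-s-bound}, so the substantive case is $a \in (0,1]$.

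Within the inductive step, I would first lower-bound $m$ using that each vertex $v$ of $H$ defines a cut of weight $d_v$: letting $\ell_v$ be the number of $C$-edges at $v$, convexity of $x \mapsto e^{-x}$ gives
\[
A^H_\gamma \geq \sum_v e^{-\gamma(d_v-\ell_v)/c} \geq r\,e^{-2\gamma(m-|C|_H)/(rc)},
\]
which rearranges to $c/m \leq 2\gamma/(r\log(r/a))$. Second, setting $\gamma' = \gamma - c/m$, Proposition~\ref{step-a-prop} gives $\bE[A^{H'}_{\gamma'}] \leq a$; a short check using $\gamma \geq 2\log r$ and $a \leq 1$ shows $\gamma' \geq 2\log(r-1)$, so the inductive hypothesis applies to $H'$ and yields $\bE_{\mathcal{CP}(H',C)}[S_i] \leq f(i, r-1, A^{H'}_{\gamma'}, \gamma')$. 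Jensen's inequality on the outer expectation, together with monotonicity and concavity of $f$ in its third argument (properties of $f$ that would need to be verified separately, plausibly in a technical appendix), then reduces the bound to $c/m + f(i, r-1, a, \gamma - c/m)$.

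The main obstacle I foresee is verifying the purely analytic ``one-step'' inequality
\[
\frac{c}{m} + f(i, r-1, a, \gamma - c/m) \leq f(i, r, a, \gamma),
\]
which is the discrete shadow of the ODE from which $f$ was reverse-engineered and so should hold essentially by construction. Writing $u = c/m$ and Taylor-expanding in $(r,\gamma)$, the inequality reduces at first order to $u(1 - \partial_\gamma f) \leq \partial_r f$; in the $a \leq 1$ regime one computes $\partial_r f = 2\gamma\log^2(a/i)/(r|\log(a/r)|^3)$ and $1 - \partial_\gamma f = \log^2(a/i)/\log^2(a/r)$, so the desired inequality becomes $u \leq 2\gamma/(r\log(r/a))$, which is precisely the bound on $c/m$ extracted from $A^H_\gamma$. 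The delicate parts are controlling the second-order remainder in the Taylor expansion (where the hypotheses $r \geq 100$ and $\gamma \geq 2 \log r$ should supply enough slack), rigorously handling the boundary between the two pieces of $f$, and verifying the requisite monotonicity and concavity of $f$ in its third argument used for the Jensen step; these are tedious but routine calculus facts that I would relegate to an appendix.
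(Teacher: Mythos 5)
Your proposal follows essentially the same route as the paper's proof: induction on $r$ with the decomposition $c/m + \bE_{H'}[\cdot]$, the vertex-degree/AM--GM lower bound on $m$ extracted from $A^H_\gamma$ giving $c/m \leq 2\gamma/(r\log(r/a))$, Proposition~\ref{step-a-prop} plus Jensen via monotonicity and concavity of $f$ in its third argument (the paper's Proposition~\ref{Aprop2x1}, which also handles the branch boundary by truncating at $a=1$), and finally the one-step analytic inequality. The only divergence is cosmetic: where you propose a Taylor expansion with remainder control for that last inequality, the paper instead proves monotonicity of $c/m + f(i,r-1,a,\gamma-c/m)$ in $m$ (Proposition~\ref{Aprop2x2}) and then verifies the inequality at the extremal $m$ by direct algebraic manipulation (Proposition~\ref{Aprop2x3}), which is the same routine-calculus content you would relegate to an appendix.
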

\begin{proof}
We defer the proof of some technical properties of $f$ to Appendix~\ref{mainrel-appendix}, for sake of clarity.

We induct on $r$.  When $ r= i$, we have $ S_i = f(i,r,a,\gamma) = 0$.

Now suppose $r \geq i+1$, and $H$ has $m$ edges with $A^{H}_{\gamma} = a$. We may assume that $a \leq 1$, as otherwise this follows immediately from Proposition~\ref{Aeqn:simple-s-bound}.

So $\bE_{\mathcal{CP}(H,C)}[ S_i] = c/m + \bE_{H'} \bE_{\mathcal{CP}(H',C)}[ S_i]$. We have broken the expectation into two components: first, we select an edge $e$ of $H - C$ to contract, leading to the graph $H' = H \slash e$; second, we continue the contraction process on the subgraph $H'$, which has $r-1$ vertices.

Note that $m \geq r c/2$, so $\gamma - c/m \geq 2 \log r - 2/r \geq 2 \log (r-1)$. So the induction hypothesis applies to the graph $H'$ with $\gamma' = \gamma - c/m$ giving
$$
\bE_{\mathcal{CP}(H,C)} \leq c/m + \bE_{H'} \bigl[   f(i,r-1,A^{H'}_{\gamma - c/m}, \gamma  - c/m) \bigr].
$$

By Proposition~\ref{step-a-prop}, we have $\bE_{H'} [A_{\gamma - c/m}^{H'}] \leq a$. So by Proposition~\ref{Aprop2x1},
$$
\bE_{\mathcal{CP}(H,C)} \leq c/m + f(i,r-1,a, \gamma  - c/m).
$$

We will now bound the number of edges $m$ of the graph $H$. For each vertex $v \in H$, let $d_v$ be the number of incident edges \emph{not including the edges of $C$}. Each vertex then corresponds to a cut which has $d_v$ edges outside $C$, and as $r \geq 3$ these cuts are all distinct. So
$$
\sum_{v \in H} \exp(-\gamma d_j/c)\leq A_{\gamma}^{H} = a.
$$

By the arithmetic mean-geometric mean (AM-GM) inequality,
\begin{align*}
\sum_{v \in H} \exp(-\gamma d_j/c) &\geq r \Bigl( \prod_v (\exp(-\gamma d_j/c)) \Bigr)^{1/r} = r \exp( \frac{ -\gamma \sum_v d_j}{r c}) \\
&\geq r \exp( \frac{ -\gamma (m/2)}{r c}) \qquad \text{as $m \geq \sum_v d_v/2$ and $\gamma > 0$}
\end{align*}

This implies that 
$$
m \geq \frac{r c \log(a/r)}{-2 \gamma}
$$

By Proposition~\ref{Aprop2x2}, the quantity $ c/m + f(i,r-1,a,\gamma - c/m) $ is decreasing in $m$. Hence
\begin{align*}
\bE_{\mathcal{CP}(H,C)}[ S_i] &\leq \frac{-2 \gamma}{r \log(a/r)} + f(i,r-1,a,\gamma + \frac{2 \gamma}{r \log(a/r)} ) \\
&\leq f(i,r,a,\gamma) \qquad \text{by Proposition~\ref{Aprop2x3}}
\end{align*}
\end{proof}

We use this to estimate the probability $\bE_{\mathcal{CP}(G,C)}[S_i]$ for a target cut $C$.

\begin{theorem}
\label{Acutboundthm}
Define the function $\bar h: [0,1] \rightarrow \mathbf R_{+}$ by
$$
\bar h(x) = \frac{(\delta+2)(1 - x)(5 - 2 \beta + 2 \delta + x)}{(3 - \beta + \delta)^2}
$$

Let $C$ be an $\alpha$-cut of $G$, and let $i$ be an integer in the range $i = \lceil 2 \alpha \rceil, \dots, n$. Then
$$
\bE_{\mathcal{CP}(G,C)}[S_i] \leq \bar h \bigl( \frac{\log i}{\log n} \bigr) \log n + O(\log  \log  n)
$$

(The function $\bar h$ has the same interpretation as $h$, however $\bar h(x)$ is (typically) smaller than $h(x)$ and so it provides a tighter bound on the probability of the cut $C$ surviving (partially) through the Contraction Algorithm.)

\end{theorem}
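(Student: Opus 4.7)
The plan is to apply Theorem~\ref{Amainrelthm} to a carefully chosen contracted graph, with parameters calibrated so that the closed form of $f$ reduces exactly to $\bar h \log n$. First, invoke Lemma~\ref{Acontractlemma2} to produce a subset $L \subseteq C$ with $|L| \leq \lceil \alpha(2+\delta) \log n \rceil$ and $A^{G \slash L}_{-\log p^c} \leq \bar Z$. Then apply Corollary~\ref{corr-gslashl} with $j = i$ to pass from $\mathcal{CP}(G,C)$ to $\mathcal{CP}(G \slash L, C)$:
$$
\bE_{\mathcal{CP}(G,C)}[S_i] \leq \bE_{\mathcal{CP}(G \slash L, C)}[S_i] + 2 \log\!\bigl( 1 + |L|/i \bigr).
$$
Since $i \geq \lceil 2\alpha \rceil$, we have $|L|/i = O(\log n)$, so the additive error is $O(\log \log n)$, which is absorbed into the claimed error term.

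Next, apply Theorem~\ref{Amainrelthm} to $H = G \slash L$ with $r = n - |L|$, $\gamma = -\log p^c = (2+\delta)\log n$, and $a = A^{G \slash L}_\gamma \leq \bar Z$. The hypothesis $\gamma \geq 2 \log r$ holds trivially, and $a \leq \bar Z \leq n^{-\Omega(1)} < 1$ by Proposition~\ref{k-prop}, so $a$ lies in the regime where the nontrivial branch of $f$ applies. Assuming the monotonicity fact that $f(i,r,a,\gamma)$ is increasing in $a$ on $(0,1]$ (which follows from setting $u = -\log a$ and observing that $(u + \log i)^2 / (u + \log r)^2$ is increasing in $u$ whenever $r \geq i$), we conclude
$$
\bE_{\mathcal{CP}(G \slash L, C)}[S_i] \leq f\bigl( i,\, n-|L|,\, \bar Z,\, (2+\delta) \log n \bigr).
$$

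Finally, substitute $\bar Z = n^{-2-\delta+\beta}$ and write $x = \log i / \log n$, $u = 3+\delta-\beta$, $v = 2+\delta-\beta+x$, so that $\log(\bar Z/n) = -u \log n$ and $\log(\bar Z/i) = -v \log n$. The algebra $u^2 - v^2 = (u-v)(u+v) = (1-x)(5 + 2\delta - 2\beta + x)$ then gives
$$
f\bigl( i,\, n,\, \bar Z,\, (2+\delta) \log n \bigr) = (2+\delta) \log n \cdot \frac{(1-x)(5 + 2\delta - 2\beta + x)}{(3+\delta-\beta)^2} = \bar h(x) \log n,
$$
matching the leading term exactly. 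Using $r = n - |L|$ instead of $r = n$ perturbs $\log(\bar Z / r)$ by only $O(|L|/n)$; differentiating $f$ in its third argument shows this translates into an additive discrepancy of $O(\log \log n)$ in the final bound. The main obstacle is the monotonicity lemma for $f$ in $a$ (an appendix-style calculation) together with the bookkeeping of these two separate $O(\log \log n)$ error terms; the substitution that collapses $f$ onto $\bar h$ is purely mechanical once the parameters $\gamma = -\log p^c$ and $a = \bar Z$ are identified as the right choices.
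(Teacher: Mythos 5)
Your main line---Lemma~\ref{Acontractlemma2}, then Corollary~\ref{corr-gslashl} to pass to $G \slash L$, then Theorem~\ref{Amainrelthm} with $\gamma = -\log(p^c)$ and $a \leq \bar Z$, then the substitution collapsing $f$ onto $\bar h(x)\log n$---is exactly the paper's argument, and the monotonicity-in-$a$ step you flag as the main obstacle is fine (it is Proposition~\ref{Aprop2x1}; only monotonicity, not concavity, is needed here since $A^{G\slash L}_{-\log (p^c)}\leq \bar Z$ is a deterministic bound, and your $u=-\log a$ computation is correct). The real gaps are elsewhere. First, your error bookkeeping silently assumes $\delta$ is bounded: $|L| \leq \lceil \alpha(2+\delta)\log n\rceil$, so with $i \geq \lceil 2\alpha\rceil$ the additive term from Corollary~\ref{corr-gslashl} is $2\log(1+|L|/i) = O(\log((2+\delta)\log n))$, which is $O(\log\log n)$ only when $\delta \leq \log^{O(1)} n$. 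There is no a priori upper bound on $\delta$ (it can be huge when $p$ is tiny), and then the $O(\log \delta)$ term swamps $O(\log\log n)$. The paper disposes of this by treating $\delta \geq \log n$ separately, where the trivial bound $S_i \leq 2\log(n/i)$ of Proposition~\ref{Aeqn:simple-s-bound} already suffices because $2-2x \leq \bar h(x) + 3/\delta$; you need this case (or an equivalent) before your estimate of the additive error is legitimate.

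Second, your application of Theorem~\ref{Amainrelthm} with first argument $i$ violates its hypotheses, which require $100 \leq i \leq r$. In the statement being proved, $i$ can be as small as $\lceil 2\alpha \rceil$ (e.g.\ $2$ or $3$), and the contracted graph $H = G\slash L$ need not have $n-|L|$ vertices (it has $r \geq n - |L|$, since some edges of $L$ may disappear before being contracted), and $r$ can even fall below $i$. The paper's fix is to apply Corollary~\ref{corr-gslashl} with $j = i+100$ rather than $j=i$ (still only an $O(\log\log n)$ cost) and to split into the cases $r \geq i+100$, where Theorem~\ref{Amainrelthm} legitimately applies and monotonicity of $f$ in its first two arguments replaces $(j,r)$ by $(i,n)$, and $r < i+100$, where the trivial bound $S \leq 2\log(r/i) = O(1)$ finishes. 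Without these adjustments your invocation of Theorem~\ref{Amainrelthm} is not justified for small $i$ or small $r$. Both gaps are repairable with exactly the paper's devices, but as written the proof is incomplete.
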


\begin{proof}
Let $x = \frac{\log i}{\log n}$, so $x \in [0,1]$.

If $\delta \geq \log n$, the Proposition~\ref{Aeqn:simple-s-bound} gives $S_i \leq 2 \log(n/i) = (2 - 2 x) \log n$. One may verify that $(2 - 2x) \leq \bar h(x) + 3 /\delta$. This implies that $S_i \leq (\bar h(x) + \frac{3}{\log n}) \log n \leq \bar h(x) \log n + O(1)$ and we are done.\footnote{Observe that $\bar h(x)$ is a rational function of the parameters $\beta, \delta, x$, the coefficients of which are rational numbers.  Thus, the statement that $(2 - 2 x) \leq \bar h(x) + 3 /\delta$ for $\beta, \delta, x$ in the appropriate range is equivalent to a first-order sentence in the theory of real-closed fields. A classical result of Tarski is that the theory of real-closed fields is decidable; many symbolic algebra packages have implemented practical algorithms for this.  We used the Mathematica function \emph{Reduce} which implements these algorithms to check that this inequality holds. For the remainder of the paper, whenever we refer to mechanically checking an inequality, we mean that employ the Mathematica software package which proves that it is a validity.} 

So we may assume that $\delta < \log n$ for the remainder.

Let $L \subseteq C$ be as given by Lemma~\ref{Acontractlemma2}, so that $A^{G \slash L}_{-\log(p^c)} \leq \bar Z$ and $L \leq \lceil \alpha (2 + \delta) \log n \rceil \leq O(\alpha \log^2 n)$. Let $H = G \slash L$, and suppose that $H$ has $r \leq n$ vertices. There are now two cases depending on the size of $r, i$.

In the first (and most important) case, suppose that $r \geq i + 100$. Then we apply Corollary~\ref{corr-gslashl} with $j = i + 100$:
\begin{align*}
\bE_{\mathcal{CP}(G,C)}[ S_{i}] & \leq \bE_{\mathcal{CP}(H, C)}[ S_j] + 2 \log ( \frac{|L| + i + 100}{i}) \\
&\leq f(j, r, A^{H}_{-\log(p^c)}, -\log(p^c)) + 2 \log ( \frac{|L| + i + 100}{i}) \qquad \text{by Theorem~\ref{Amainrelthm}, as $r \geq j \geq 100$} \\
&\leq f(j, r, A^{H}_{-\log(p^c)}, -\log(p^c)) + O(\log \log n) \qquad \text{as $i \geq 2 \alpha, |L| \leq \lceil \alpha (2 + \delta) \log n \rceil \leq O(\alpha \log^2 n)$}
\end{align*}

Simple inspection shows that $f(k,\ell,a, \gamma)$ is a decreasing function of $k$ and an increasing function of $\ell$. Thus $f(j,r,A^H_{-\log(p^c)},-\log(p^c)) \leq f(i,n, A^H_{-\log(p^c)},-\log(p^c))$. Furthermore, using Proposition~\ref{Aprop2x1} and the fact that $A^H_{-\log(p^c)} \leq \bar Z \leq 1$, we have that $ f(i,n, A^H_{-\log(p^c)},-\log(p^c)) \leq f(i,n,\bar Z, -\log(p^c))$. So 
$$
\bE_{\mathcal{CP}(G,C)}[ S_{i}] \leq f(i, n, \bar Z, -\log(p^c)) + O(\log \log n)
$$

Now substitute $i = n^x, \bar Z = n^{-2-\delta+\beta}, p^c = n^{-2-\delta}$, and observe that $f(i, n, \bar Z, -\log(p^c)) = \bar h(x) \log n$.

The next case is when $r < i + 100$. We apply Corollary~\ref{corr-gslashl} with $j = i$:
\begin{align*}
\bE_{\mathcal{CP}(G,C)}[S_{i}] & \leq \bE_{\mathcal{CP}(H, C)}[ S_i ] + 2 \log ( \frac{|L| + i}{i} ) \leq 2 \log(r/i) + 2 \log ( 1 + |L|/i ) \\
&\leq 2 \log(\frac{i+100}{i}) + O(\log \log n) \qquad \text{as $i \geq 2 \alpha, |L| \leq \lceil \alpha (2 + \delta) \log n \rceil \leq O(\alpha \log^2 n)$} \\
&\leq O(\log \log n) \leq \bar h(x) \log n+ O(\log \log n)
\end{align*}
\end{proof}

Our next result introduces an important parameter $\bar h(0)$. Essentially, $\bar h(0)$ measures the rate of decay of the number of $\alpha$-cuts. If we make no assumptions about the graph reliability, then the number of $\alpha$-cuts may be as large as $n^{2 \alpha}$. When we take the reliability of $G$ into account, we show that instead the number of $\alpha$-cuts is like $n^{\bar h(0) \alpha}$. Crucially, barring some exceptional cases, we have $\bar h(0) < 2$.

\begin{corollary}
\label{Acutboundthm2}
For $\alpha \in [1,n/2]$ we have 
$$
\bE_{\mathcal{CP}(G,C)} [ S_{\lceil 2 \alpha \rceil}] \leq \bar h(0) \log  n + O(\log  \log  n) - \Omega(\log \alpha)
$$
\end{corollary}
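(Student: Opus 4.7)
The plan is to deduce this corollary as a direct algebraic consequence of Theorem~\ref{Acutboundthm}. Setting $i = \lceil 2\alpha \rceil$ and $x = \frac{\log \lceil 2\alpha \rceil}{\log n}$, that theorem already gives us
$$
\bE_{\mathcal{CP}(G,C)}[S_{\lceil 2\alpha \rceil}] \;\leq\; \bar h(x)\log n + O(\log\log n),
$$
so the entire task reduces to showing the deterministic inequality $\bar h(x)\log n \leq \bar h(0)\log n - \Omega(\log\alpha)$. Since $\log\alpha \leq x\log n + O(1)$, this in turn is equivalent to the bound $\bar h(0) - \bar h(x) \geq \Omega(x)$, valid uniformly over the parameter ranges $\beta,\delta$ permitted by Proposition~\ref{k-prop} and for $x \in [0,1]$.

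The key step is a direct calculation: expanding the definition of $\bar h$, I would compute
$$
\bar h(0) - \bar h(x) \;=\; \frac{(\delta+2)\bigl[(5 - 2\beta + 2\delta) - (1-x)(5 - 2\beta + 2\delta + x)\bigr]}{(3-\beta+\delta)^2}
\;=\; \frac{(\delta+2)\,x\,(4 - 2\beta + 2\delta + x)}{(3-\beta+\delta)^2},
$$
so that after dropping the nonnegative $x^2$ term I am left with bounding $\frac{2(\delta+2)(\delta+2-\beta)}{(3-\beta+\delta)^2}$ from below. Substituting $s = \delta+2-\beta$, the ratio becomes $\frac{2(\delta+2)s}{(s+1)^2}$, and since $\beta \geq 0$ gives $\delta+2 \geq s$, this is at least $\frac{2s^2}{(s+1)^2}$. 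By Proposition~\ref{k-prop} we have $\delta \geq \beta + \Omega(1)$, which forces $s \geq 2 + \Omega(1)$; hence $\frac{s}{s+1}$ is bounded below by a positive constant, and the whole coefficient of $x$ is $\Omega(1)$.

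Combining these steps yields $(\bar h(0) - \bar h(x))\log n \geq \Omega(x\log n) = \Omega(\log\lceil 2\alpha \rceil) = \Omega(\log\alpha)$, which after rearrangement gives the claim. The separate hypothesis $\alpha \leq n/2$ is used only to guarantee that $i = \lceil 2\alpha \rceil \leq n$, so Theorem~\ref{Acutboundthm} applies; and the hypothesis $\alpha \geq 1$ ensures $\lceil 2\alpha\rceil \geq 2$ so that $\log\alpha$ is a meaningful asymptotic quantity.

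The only mild obstacle is keeping track of the constraint $\delta \geq \beta + \Omega(1)$ on the denominator, since a naive bound using only $\beta \in [0,2]$ and $\delta \geq \delta_0$ does not yield a uniform $\Omega(1)$ coefficient on $x$ when $\delta$ is small and $\beta$ is close to $\delta$. Invoking Proposition~\ref{k-prop} to separate $\beta$ from $\delta$ is what makes the constant factor uniform, and is the one place where the standing assumption $U(p) < n^{-K}$ gets used.
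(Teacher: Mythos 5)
Your proof is correct and follows essentially the same route as the paper: apply Theorem~\ref{Acutboundthm} at $i = \lceil 2\alpha \rceil$ and then show $\bar h(0) - \bar h(x) \geq \Omega(x)$, which the paper obtains by integrating the derivative bound $\bar h'(x) \leq -\Omega(1)$ of Proposition~\ref{hbar-deriv-prop} rather than by your direct factorization $\bar h(0)-\bar h(x) = \frac{(\delta+2)x(4-2\beta+2\delta+x)}{(3-\beta+\delta)^2}$. One small remark: your closing concern is unfounded --- the constraints $\delta \geq \delta_0$ and $\beta \in [0,2]$ already force $s = \delta + 2 - \beta \geq \delta_0$ and hence an $\Omega(1)$ coefficient (with the constant depending on $\delta_0$, which the paper explicitly allows), so the separation $\delta \geq \beta + \Omega(1)$ from Proposition~\ref{k-prop}, while available here, is not actually needed for this step.
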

\begin{proof}
We have
\begin{align*}
\bE[S_{\lceil 2 \alpha \rceil}] &\leq \bar h(\frac{\log \lceil 2 \alpha \rceil}{\log n}) \log n + O(\log \log n) \qquad \text{by Theorem~\ref{Acutboundthm}} \\
&= \bar h(0) \log n + \log n \int_{x=0}^{\frac{\log \lceil 2 \alpha \rceil}{\log n}} \bar h'(x) dx + O(\log \log n) \\
&\leq \bar h(0) \log n - \log n \times \frac{\log \lceil 2 \alpha \rceil}{\log n} \times \Omega(1) + O(\log \log n) \qquad \text{by Proposition~\ref{hbar-deriv-prop}} \\
&= \bar h(0) \log n - \Omega(\log \alpha) + O(\log \log n)
\end{align*}

\end{proof}

\begin{corollary}
\label{Acutboundthm3}
The Contraction Algorithm with parameter $\alpha$ outputs any given $\alpha$-cut with probability at least $n^{-\alpha \bar h(0)} \log ^{-O(\alpha)} n$. In particular, the number of such cuts is at most $n^{\alpha \bar h(0)} \log ^{O(\alpha)} n$.
\end{corollary}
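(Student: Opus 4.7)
The plan is to chain together the three key pieces that have already been set up: Lemma~\ref{Aslemma2}, the Jensen inequality remark following it, and Corollary~\ref{Acutboundthm2}. Essentially all of the analytical work has been done; this corollary is a bookkeeping exercise.

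First, I would restrict attention to the interesting range $\alpha \in [1, n/2]$; the case $\alpha > n/2$ handles itself because Lemma~\ref{Aslemma2} already gives probability $e^{-O(\alpha)}$ in that regime, which is easily at least $n^{-\alpha \bar h(0)} \log^{-O(\alpha)} n$ for any graph. In the main range, Lemma~\ref{Aslemma2} applied with $\alpha' = \alpha$ yields
\[
\bP(\text{select } C) \;\geq\; e^{-O(\alpha)} \, \bE_{\mathcal{CP}(G,C)}\bigl[ e^{-\alpha S_{\lceil 2\alpha \rceil}} \bigr].
\]
By Jensen's inequality (exactly as noted immediately after Lemma~\ref{Aslemma2}), I can move the expectation inside:
\[
\bE_{\mathcal{CP}(G,C)}\bigl[ e^{-\alpha S_{\lceil 2\alpha \rceil}} \bigr] \;\geq\; \exp\bigl( -\alpha\, \bE_{\mathcal{CP}(G,C)}[S_{\lceil 2\alpha \rceil}] \bigr).
\]

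Next I plug in Corollary~\ref{Acutboundthm2}, which gives $\bE_{\mathcal{CP}(G,C)}[S_{\lceil 2\alpha \rceil}] \leq \bar h(0) \log n + O(\log \log n) - \Omega(\log \alpha)$. Multiplying by $\alpha$ and exponentiating produces
\[
\bP(\text{select } C) \;\geq\; e^{-O(\alpha)} \cdot n^{-\alpha \bar h(0)} \cdot \exp\bigl(-O(\alpha \log \log n)\bigr) \cdot \alpha^{\Omega(\alpha)}.
\]
The factor $\exp(-O(\alpha \log \log n))$ is exactly $\log^{-O(\alpha)} n$, and the $\alpha^{\Omega(\alpha)}$ term is $\geq 1$ for $\alpha \geq 1$, so it can be dropped. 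The $e^{-O(\alpha)}$ factor is absorbed into $\log^{-O(\alpha)} n$ for $n$ large enough (since $\log \log n$ eventually exceeds any constant), giving the first claim.

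For the second claim (the bound on the number of $\alpha$-cuts), I use the standard bucket argument: the Contraction Algorithm outputs a single cut as a random variable, so the events ``$C$ is the output'' over distinct cuts $C$ are disjoint, and their probabilities sum to at most $1$. Applying the first part of the corollary, the number of $\alpha$-cuts is therefore at most $n^{\alpha \bar h(0)} \log^{O(\alpha)} n$. There is no real obstacle here: the entire difficulty was concentrated in Theorem~\ref{Amainrelthm} and Corollary~\ref{Acutboundthm2}, and the only minor subtlety is absorbing the $e^{-O(\alpha)}$ factor, which is routine.
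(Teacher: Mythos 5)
Your proposal is correct and follows essentially the same route as the paper's proof: Lemma~\ref{Aslemma2} combined with the Jensen's-inequality remark and Corollary~\ref{Acutboundthm2}, with the $e^{-O(\alpha)}$ factor absorbed into $\log^{-O(\alpha)}n$ and the cut count obtained by summing the disjoint selection probabilities. The only cosmetic difference is in the edge case $\alpha \geq n/2$, which the paper settles via the exact $2^{1-n}$ probability and $\bar h(0)\geq 10/9$ (Proposition~\ref{hbar-min}) while you absorb it into the log factor; both are fine.
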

\begin{proof}
If $n \leq 2 \alpha$, then $C$ is selected with probability exactly $2^{1-n} \geq \Omega(n^{-\alpha})$; by Proposition~\ref{hbar-min}, $\bar h(0) \geq 10/9$ so this is  $\geq n^{-\alpha \bar h(0)}$.

Otherwise, by Lemma~\ref{Aslemma2}, the probability of selecting any given $\alpha$-cut $C$ is at least 
$$
\bP(\text{select $C$}) \geq \exp(-O(\alpha) -\alpha \bE_{\mathcal{CP}(G,C)}[ S_{\lceil 2 \alpha \rceil}]).
$$
By Corollary~\ref{Acutboundthm2} this is at least $\exp(-O(\alpha)) \exp(\Omega(\alpha \log  \alpha)) n^{-\alpha \bar h(0)} \log ^{-O(\alpha)} n \geq n^{-\alpha \bar h(0)} \log^{-O(\alpha) } n$.
\end{proof}

\section{Bounds on $\alpha^*$}
\label{Asec:smallcutunreliability}
\textbf{In this section, we will assume $U(p) < n^{-K}$ for some constant $K$; thus the conclusions of Proposition~\ref{k-prop} hold.}

Recall that the approach of \cite{karger} is to show that most unreliability is due to the failure of a small cut. We can use our bound on the number of cuts to estimate this more precisely than Theorem~\ref{Akargerthm1}.

\begin{proposition}
\label{Afaillemma}
Define $U_{\alpha}(p)$ to be the probability that some cut of weight $\leq \alpha c$ fails. Then for all $\alpha \geq 1$ we have the bounds
\begin{align*}
U(p) - U_{\alpha}(p) &\leq n^{\alpha (\bar h(0) - 2 - \delta)} \log ^{O(\alpha)} n \\
U(p) - U_{\alpha}(p) &\leq O(n^{-\alpha \delta})
\end{align*}

\end{proposition}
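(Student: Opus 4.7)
The second bound is immediate: $U(p)-U_\alpha(p)$ is at most the probability that \emph{some} cut of weight $> \alpha c$ fails (since any failure event missed by $U_\alpha$ requires a large cut to fail), and Corollary~\ref{Acutfailcorr1} bounds this quantity by $O(n^{-\alpha\delta})$. So the entire task is to establish the first, tighter bound, and my plan is to redo the union-bound-plus-integration-by-parts argument of Corollary~\ref{Aexpsumcorr1}, but plugging in the \emph{improved} cut count from Corollary~\ref{Acutboundthm3} in place of the generic $n^{2\alpha}$ estimate from Theorem~\ref{Acutsetthm1}.

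Concretely, I start from the union bound
\[
U(p) - U_\alpha(p) \;\le\; \sum_{\text{cuts $C$},\,|C|\ge \alpha c} p^{|C|}
\]
and invoke Proposition~\ref{Aexpsumlemma} with $g(x) = p^{xc} = n^{-x(2+\delta)}$ and $F(x) = n^{x\bar h(0)}\log^{Cx} n$, where $C$ is the implicit constant hidden in the $O(\alpha)$ exponent of Corollary~\ref{Acutboundthm3}. This $F$ is monotone and, by Corollary~\ref{Acutboundthm3}, validly upper bounds the number of $\alpha$-cuts for every $\alpha\ge 1$. The boundary term yields
\[
F(\alpha)g(\alpha) \;=\; n^{\alpha(\bar h(0)-2-\delta)}\,\log^{O(\alpha)} n,
\]
which already matches the target bound.

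It remains to dispose of the integral $\int_\alpha^\infty g(x)\,dF(x)$. Writing $F(x) = \exp\bigl(x(\bar h(0)\log n + C\log\log n)\bigr)$, we have $dF(x) = F(x)\bigl(\bar h(0)\log n + C\log\log n\bigr)\,dx$, so
\[
\int_{\alpha}^{\infty} g(x)\,dF(x) \;=\; \bigl(\bar h(0)\log n + C\log\log n\bigr)\int_{\alpha}^{\infty} \exp\bigl(x\bigl((\bar h(0)-2-\delta)\log n + C\log\log n\bigr)\bigr)\,dx.
\]
Provided the exponent is bounded away from $0$ by $-\Omega(\log n)$, this integral is exactly of order $\frac{1}{|\text{exponent rate}|}$ times the value at $x=\alpha$, giving $O\bigl(n^{\alpha(\bar h(0)-2-\delta)}\log^{O(\alpha)} n\bigr)$, exactly the same order as the boundary term. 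The two contributions combine to give the claimed bound.

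The one delicate point, and the main obstacle in this plan, is justifying the convergence and the $O(1)/|\text{rate}|$ estimate of the integral, i.e., verifying that $\bar h(0) - 2 - \delta \le -\Omega(1)$ (uniformly in the allowed range $\beta\in[0,2]$, $\delta\ge \delta_0$). Using the explicit formula $\bar h(0) = (\delta+2)(5-2\beta+2\delta)/(3-\beta+\delta)^2$ from Theorem~\ref{Acutboundthm}, a direct algebraic calculation (cleanest at $\beta=2$, which maximizes $\bar h(0)$ for fixed $\delta$) gives $\bar h(0)-(2+\delta) = -\delta^2(\delta+2)/(1+\delta)^2$, and analogous computation shows negativity throughout $\beta\in[0,2]$; in particular the $C\log\log n$ correction in the exponent is $o(\log n)$ and the integrand decays geometrically in $x$ at rate $-\Omega(\log n)$, so integrating yields the stated $\log^{O(\alpha)} n$ factor without any additional loss in the exponent. (As in the proof of Theorem~\ref{Acutboundthm}, the algebraic inequality $\bar h(0) < 2+\delta$ is a first-order sentence in the theory of real-closed fields and can be mechanically verified if desired.)
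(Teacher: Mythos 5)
Your proposal is correct and follows essentially the same route as the paper: union bound plus Proposition~\ref{Aexpsumlemma} with $g(x)=p^{xc}$ and $F(x)=n^{\bar h(0)x}\log^{O(x)}n$ from Corollary~\ref{Acutboundthm3}, with the second bound taken directly from Corollary~\ref{Acutfailcorr1}. The only (minor) difference is in justifying convergence of the tail integral: the paper splits into cases, disposing of $\bar h(0)\geq 2$ by noting the first bound then follows from the second and using $\bar h(0)\leq 2$, $\delta\geq\delta_0$ otherwise, whereas you verify the uniform inequality $\bar h(0)-2-\delta\leq-\delta^2(\delta+2)/(1+\delta)^2\leq-\Omega(1)$ directly (your algebra at $\beta=2$, and the monotonicity in $\beta$, check out), which is an equally valid and slightly more unified way to close the argument.
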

\begin{proof}
We have
\begin{align*}
U(p) - U_{\alpha}(p) &= \bP( \text{Some cut of weight $> \alpha c$ fails and no cut of weight $\leq \alpha c$ fails}) \\
&\leq \bP( \text{Some cut of weight $> \alpha c$ fails})
\end{align*}

By Corollary~\ref{Acutfailcorr1}, this is at most $O(n^{-\alpha \delta})$, thus proving the second bound. If $\bar h(0) \geq 2$, then $n^{-\alpha \delta} \leq n^{\alpha(\bar h(0) - 2 - \delta)}$ and we automatically prove the first bound as well.

So we may suppose that $\bar h(0) \geq 2$. Then we estimate this by the union bound
$$
\bP( \text{Some cut of weight $> \alpha c$ fails}) \leq \sum_{|C| > \alpha c} p^{|C|}
$$

Proposition~\ref{Acutboundthm3} states that the number of cuts of weight $x c$ is at most $F(x) = n^{\bar h(0) x} \log^{O(x)} n$. Hence we may apply
Proposition~\ref{Aexpsumlemma}: this sum is bounded by
\begin{align*}
\sum_{C: |C| > \alpha c} p^{|C|} &\leq F(\alpha) p^{\alpha c} + \int_{x = \alpha}^{\infty} p^{x c} F'(x) dx \\
&= n^{h(0) \alpha} n^{\alpha(-2-\delta)} \log^{O(\alpha)} n + \int_{x = \alpha}^{\infty} n^{x(-2-\delta)} n^{\bar h(0) x} (\log^{O(x)} n) (\bar h(0) \log n + O(\log \log n)) dx \\
&\leq n^{\alpha (\bar h(0) - 2 - \delta)} \log^{O(\alpha)} n \qquad \text{for $\bar h(0) \leq 2$ and $\delta \geq \delta_0 > 0$} \\
\end{align*}

\end{proof}

As in Karger \cite{karger}, we estimate $U(p)$ by examining only the smallest cuts. We define $\alpha^*$ to be the minimal value of $\alpha$ such that
$$
U(p) (1 - \epsilon) \leq U_{\alpha^*}(p) \leq U(p)
$$

The parameter $\alpha^*$ plays a crucial role in the analysis. We can estimate $U(p)$ to the desired relative error by estimating $U_{\alpha^*}(p)$, which we can do by enumerating all the $\alpha^*$-cuts. This is a relatively small collection, which we can explicitly collect and list. 

For the rest of the paper, we define
\begin{equation}
\label{Aeqn:rho}
\rho = \log  (1/\epsilon)/\log  n.
\end{equation}
Although in general we allow $\rho$ to increase arbitrarily, we find that these bounds can become confusing because of the interplay between the asymptotic growth of $n$ and $\epsilon$. One can get most of the intuition behind these results by restricting attention to the case $\rho = O(1)$.

Using our improved bounds on the Contraction Algorithm, we tighten Theorem~\ref{Akargerthm1}.
\begin{lemma}
\label{Aalphastarlemma1}
We have
$$
\alpha^* \leq \frac{2 - \beta + \delta + \rho}{\delta} + O(\frac{1}{\log n})
$$

If $\beta \leq 3/2$, then we have
$$
\alpha^* \leq \frac{(3 - \beta + \delta)^2(2 - \beta + \delta + \rho)}{(2 - \beta + \delta)^2(2 + \delta)} + O(\frac{(1+\rho) \log \log n}{\log n})
$$

\end{lemma}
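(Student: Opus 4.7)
The plan is to invoke the two bounds of Proposition~\ref{Afaillemma} and combine each with $U(p) = \Theta(\bar Z) = \Theta(n^{-2-\delta+\beta})$ from Proposition~\ref{Aestimatelemma}. Since $\alpha^*$ is by definition the least $\alpha$ for which $U(p) - U_{\alpha}(p) \leq \epsilon U(p)$, it suffices for each bound to solve for the value of $\alpha$ that makes the right-hand side of the corresponding estimate at most $\epsilon U(p) = \Theta(n^{-2-\delta+\beta-\rho})$.

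For the first, weaker, bound I would simply use $U(p) - U_\alpha(p) \leq O(n^{-\alpha \delta})$. Taking logarithms (in base $n$) gives the condition $\alpha \delta \geq 2+\delta-\beta+\rho - O(1/\log n)$, which yields $\alpha^* \leq (2-\beta+\delta+\rho)/\delta + O(1/\log n)$ as claimed. This does not use $\bar h(0)$ at all, so it has no regime restriction.

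For the sharper bound under $\beta \leq 3/2$, I would use the first estimate of Proposition~\ref{Afaillemma}, namely $U(p) - U_\alpha(p) \leq n^{\alpha(\bar h(0) - 2 - \delta)} \log^{O(\alpha)} n$. Setting this $\leq \epsilon U(p)$ and again taking logs (base $n$) yields
\[
\alpha \bigl(2 + \delta - \bar h(0)\bigr) \geq 2 + \delta - \beta + \rho + O\!\left(\tfrac{\alpha \log \log n}{\log n}\right).
\]
The critical algebraic step is to expand $\bar h(0) = (\delta+2)(5-2\beta+2\delta)/(3-\beta+\delta)^2$ from Theorem~\ref{Acutboundthm} and observe the identity
\[
(3-\beta+\delta)^2 - (5-2\beta+2\delta) = (2-\beta+\delta)^2,
\]
which gives the clean factorization
\[
2 + \delta - \bar h(0) \;=\; \frac{(2+\delta)(2-\beta+\delta)^2}{(3-\beta+\delta)^2}.
\]
Dividing, and using that $\beta \leq 3/2$ together with $\delta \geq \delta_0 > 0$ makes the denominator bounded below by a positive constant, gives the stated closed form. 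Finally, since any admissible $\alpha$ is itself $O(1+\rho)$ (by the first bound just proved), the $O(\alpha \log \log n / \log n)$ slack collapses to the stated $O((1+\rho)\log \log n/\log n)$ error term.

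The only real obstacle is the algebraic identity $(3-\beta+\delta)^2 - (5-2\beta+2\delta) = (2-\beta+\delta)^2$, which is what makes $2+\delta-\bar h(0)$ factor nicely; everything else is routine substitution and manipulation of the two inequalities from Proposition~\ref{Afaillemma}. One minor point to verify is that $\beta \leq 3/2$ (plus $\delta \geq \delta_0$) indeed forces $\bar h(0) < 2 + \delta$ strictly, so that the denominator in the expression for $\alpha^*$ is bounded away from zero and the asymptotic error estimate is uniform in the allowed range of $\beta,\delta$.
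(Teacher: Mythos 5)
Your proposal is correct and follows essentially the same route as the paper: both bounds come from combining Proposition~\ref{Afaillemma} with $U(p) = \Theta(\bar Z) = \Theta(n^{-2-\delta+\beta})$ from Proposition~\ref{Aestimatelemma}, solving for the critical $\alpha$, and the closed form under $\beta \leq 3/2$ is exactly the paper's bound $\frac{\rho+2-\beta+\delta}{2+\delta-\bar h(0)}$ rewritten via the same factorization $2+\delta-\bar h(0) = \frac{(2+\delta)(2-\beta+\delta)^2}{(3-\beta+\delta)^2}$, with the $O((1+\rho)\log\log n/\log n)$ slack handled just as in the paper using $\alpha \leq O(1+\rho)$.
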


\begin{proof}

The first bound is easier. As shown in Proposition~\ref{Afaillemma}, the absolute error introduced by ignoring cuts of weight $\geq \alpha c$ is $U(p) - U_{\alpha} (p) = O(n^{-\alpha \delta})$. So the relative error is at most
\begin{align*}
\text{Rel Err} &=  O \bigl( \frac{n^{-\alpha \delta)}}{U(p)} \bigr) \\
&=  O \bigl( \frac{n^{-\alpha \delta}}{\bar Z} \bigr) \qquad \text{by Proposition~\ref{Aestimatelemma}} \\
&\leq  C n^{-\alpha \delta} n^{2 + \delta - \beta} \qquad \text{for some constant $C$}
\end{align*}

Elementary algebra shows that this is $\leq \epsilon$ for $\alpha \geq \frac{\log C - \log \epsilon + (2 - \beta + \delta) \log n}{\delta \log n} = \frac{2 - \beta + \delta + \rho}{\delta} + O(\frac{1}{\log n})$.
 
Next let us consider the case when $\beta \leq 3/2$. By Proposition~\ref{hbar-min} this implies $\bar h(0) \leq 2$. As shown in Proposition~\ref{Afaillemma}, the absolute error introduced by ignoring cuts of weight $\geq \alpha c$ is $U(p) - U_{\alpha} (p) = n^{\alpha (\bar h(0) - 2 - \delta)} \log ^{O(\alpha)} n$. So the relative error is at most
\begin{align*}
\text{Rel Err} &=  O \bigl( \frac{n^{\alpha (\bar h(0) - 2 - \delta)} \log ^{O(\alpha)} n}{U(p)} \bigr) \\
&=  O \bigl( \frac{n^{\alpha (\bar h(0)- 2 - \delta)  } \log ^{O(\alpha)} n}{\bar Z} \bigr) \qquad \text{by Proposition~\ref{Aestimatelemma}} \\
&=  n^{\alpha (\bar h(0) - 2 - \delta)} n^{2 + \delta - \beta} \log ^{O(\alpha)} n
\end{align*}

Set $\alpha = \frac{\rho + 2 - \beta + \delta}{2 + \delta - \bar h(0)} + \frac{\phi(1+\rho) \log \log n}{\log n}$, where $\phi$ is some constant be specified. Note that $\alpha \leq O(1+\rho)$, as $\delta > \delta_0 > 0$ and $\bar h(0) \leq 2$. 

\begin{align*}
\text{Rel Err} &\leq n^{-(\rho+2-\beta+\delta)} n^{\frac{\phi (1 + \rho) \log \log n (\bar h(0) - 2 - \delta)}{\log n}} (\log n)^{O(\alpha)} \\
&\leq n^{ -\rho - \Omega(\frac{\phi (1 + \rho) \log \log n}{\log n})} (\log n)^{O(1+\rho)} \qquad \text{as $\delta \geq \beta$ and $\alpha \leq O(1+\rho)$} \\
&= \epsilon (\log n)^{-\Omega(\phi (1 + \rho))} (\log n)^{O(1+\rho)} \\
&\leq \epsilon \qquad \text{for $\phi$ a sufficiently large constant}
\end{align*}

So 
$$
\alpha^* \leq  \frac{\rho + 2 - \beta + \delta}{2 + \delta - \bar h(0)} + O(\frac{(1+\rho) \log \log n}{\log n}) =  \frac{(3 - \beta + \delta)^2(2 - \beta + \delta + \rho)}{(2 - \beta + \delta)^2(2 + \delta)} + O(\frac{(1+\rho) \log \log n}{\log n})
$$
in this case.

\end{proof}

If we only want a very crude estimate for $\alpha^*$ we can use the following, which holds irrespective of the precise value of $K, \beta, \delta$:
\begin{corollary}
\label{amax-prop}
We have $\alpha^* \leq O(1 + \rho)$
\end{corollary}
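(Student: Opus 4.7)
The plan is to extract this crude bound directly from the first estimate in Lemma~\ref{Aalphastarlemma1}, without needing the more refined bound that requires $\beta \leq 3/2$. The key observation is that the hypothesis $U(p) < n^{-K}$ together with Proposition~\ref{k-prop} gives us a constant lower bound on $\delta$ (namely $\delta \geq K - 2$) and a constant upper bound on $\beta$ (namely $\beta \leq 2$), so the first bound of Lemma~\ref{Aalphastarlemma1} collapses to something of the form $O(1+\rho)$ once these bounds are substituted.

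Concretely, I would begin by writing
\[
\alpha^* \leq \frac{2 - \beta + \delta + \rho}{\delta} + O\!\left(\frac{1}{\log n}\right) = 1 + \frac{2 - \beta + \rho}{\delta} + O\!\left(\frac{1}{\log n}\right)
\]
using Lemma~\ref{Aalphastarlemma1}. Then I would invoke Proposition~\ref{k-prop}: the bound $\beta \in [0,2]$ gives $2 - \beta \leq 2$, while $\delta \geq K - 2 = \Omega(1)$ (where the implicit constant depends on $K$, consistently with the convention stated just before Section~\ref{Asec:statistical}). Substituting,
\[
\alpha^* \leq 1 + \frac{2 + \rho}{K - 2} + O\!\left(\frac{1}{\log n}\right) = O(1 + \rho),
\]
where the hidden constant depends on $K$ but not on $\beta, \delta$, or $n$.

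There is essentially no obstacle here; the content of the corollary is simply that the numerator and denominator of the main fraction in Lemma~\ref{Aalphastarlemma1} are both controlled by constants (times $1 + \rho$ for the numerator) whenever the operating assumption $U(p) < n^{-K}$ holds. The only mild care required is to note that we do \emph{not} need to invoke the second (tighter) bound of Lemma~\ref{Aalphastarlemma1}, since we make no assumption $\beta \leq 3/2$ in the hypothesis of the corollary; the first (weaker) bound already suffices because the constant factor in $O(1+\rho)$ is permitted to absorb any dependence on $K$.
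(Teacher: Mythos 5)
Your proof is correct and follows essentially the same route as the paper: both invoke the first (cruder) bound of Lemma~\ref{Aalphastarlemma1} and then use Proposition~\ref{k-prop} to control the numerator and denominator by constants depending only on $K$. The only cosmetic difference is that you bound $2-\beta \leq 2$ via $\beta \geq 0$, whereas the paper uses $\delta \geq \beta$; either inequality from Proposition~\ref{k-prop} does the job.
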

\begin{proof}
By Proposition~\ref{k-prop}, $\delta \geq K- 2 > 0$ and $\delta \geq \beta$ for $n$ sufficiently large.

By Lemma~\ref{Aalphastarlemma1}, 
\begin{align*}
\alpha^* \leq \frac{2 - \beta + \delta + \rho}{\delta} + O(\frac{1}{\log n})
\end{align*}

The restrictions $\delta \geq \delta_0 > 0$ and $\delta \geq \beta$ ensures that this is $O(1+\rho)$.
\end{proof}

\smallskip 

\textbf{Remark:} In light of Corollary~\ref{amax-prop}, it will be convenient to define a term $\alpha^*_{\text{max}}$, which is a computable upper bound of $\alpha^*$ and which satisfies $\alpha^*_{\text{max}} \leq O(1+\rho)$. The precise value of the constant term here will not be relevant.

\smallskip

It is tempting to simply modify Karger's original algorithm, using this improved estimate for $\alpha^*$ in place of Karger's estimate. We cannot do this directly, as this estimate depends on the parameter $\beta$ which we cannot simply compute. 

One approach would be to estimate $\alpha^*$ using a worst-case estimate for $\beta$. This would give usable and tighter bound than Karger's. For example, suppose that we use Monte Carlo estimation when $U(p) \geq n^{-2.73}$ and cut-enumeration when $U(p) < n^{-2.73}$. Some simple analysis (which we omit here) shows that $\alpha^* \leq 1.87 + O(\rho)$ whenever $U(p) < n^{-2.73}$. using Karger's algorithm for the remainder, this would give us a total runtime of $n^{3.73} \epsilon^{-O(1)}$. 

This is good, but we will take another approach which will allow us to do better. To explain our approach intuitively, consider the following process: We run $n^{\alpha^* \bar h(0) } (n/\epsilon)^{o(1)} $ independent executions of the Contraction Algorithm with parameter $\alpha^*_{\text{max}}$. This generates a large collection $\mathcal A$ of cuts, of various sizes. As we show in Proposition~\ref{av-prop-1}, with high probability $\mathcal A$ contains all the $\alpha^*$-cuts. As described in Section~\ref{Asec:statistical}, we may then use the collection $\mathcal A$ to estimate $U(p)$.

\begin{proposition}
\label{av-prop-1}
The Contraction Algorithm with parameter $\alpha^*_{\text{max}}$ finds any $\alpha^*$-cut $C$ with probability at least $n^{-\alpha^* \bar h(0)} (\epsilon/n)^{o(1)}$.
\end{proposition}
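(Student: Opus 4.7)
The plan is to combine Lemma~\ref{Aslemma2}, Jensen's inequality, and the expected-potential bound from Corollary~\ref{Acutboundthm2}, and then verify that all the accumulated multiplicative slack fits inside the factor $(\epsilon/n)^{o(1)}$. Let $C$ be any $\alpha^*$-cut, and write $|C| = \alpha c$ for some $\alpha \leq \alpha^*$. Apply Lemma~\ref{Aslemma2} with the target cut $C$ and Contraction Algorithm parameter $\alpha' = \alpha^*_{\text{max}}$; this is legal since $\alpha^*_{\text{max}} \geq \alpha^* \geq \alpha$. We obtain
$$
\bP(\text{select } C) \geq e^{-O(\alpha^*_{\text{max}})} \bE_{\mathcal{CP}(G,C)}[e^{-\alpha S_{\lceil 2 \alpha \rceil}}].
$$

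Next I would apply Jensen's inequality to the convex function $x \mapsto e^{-\alpha x}$, giving $\bE[e^{-\alpha S_{\lceil 2\alpha\rceil}}] \geq e^{-\alpha \bE[S_{\lceil 2\alpha \rceil}]}$. Plugging in the bound $\bE[S_{\lceil 2 \alpha \rceil}] \leq \bar h(0) \log n + O(\log \log n) - \Omega(\log \alpha)$ from Corollary~\ref{Acutboundthm2} yields
$$
\bP(\text{select } C) \geq e^{-O(\alpha^*_{\text{max}})} \cdot n^{-\alpha \bar h(0)} \cdot (\log n)^{-O(\alpha)} \cdot \alpha^{\Omega(\alpha)}.
$$
Dropping the harmless factor $\alpha^{\Omega(\alpha)} \geq 1$ and using $\alpha \leq \alpha^*$ to replace $n^{-\alpha \bar h(0)}$ by the smaller $n^{-\alpha^* \bar h(0)}$ gives
$$
\bP(\text{select } C) \geq n^{-\alpha^* \bar h(0)} \cdot e^{-O(\alpha^*_{\text{max}})} (\log n)^{-O(\alpha^*_{\text{max}})}.
$$

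The remaining task is bookkeeping: show that the second factor is at least $(\epsilon/n)^{o(1)}$. By Corollary~\ref{amax-prop}, $\alpha^*_{\text{max}} \leq O(1+\rho)$, so the prefactor has $\log$-magnitude $O((1+\rho)(1 + \log \log n))$. Against $\log(n/\epsilon) = (1+\rho) \log n$, this ratio is $O((1+\log \log n)/\log n) = o(1)$, so the prefactor is at least $(\epsilon/n)^{o(1)}$, as required.

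The proof is short and the only conceptual ingredient is identifying the right $\alpha$ and $\alpha'$ to feed into Lemma~\ref{Aslemma2}; the main obstacle is really just notational care in tracking how the $O(1+\rho)$ slack from $\alpha^*_{\text{max}}$ and the $\log \log n$ slack from Corollary~\ref{Acutboundthm2} combine to fit inside $(\epsilon/n)^{o(1)}$ for any allowed joint scaling of $n$ and $\epsilon$.
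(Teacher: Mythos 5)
Your argument is essentially the paper's own proof: Lemma~\ref{Aslemma2} applied with parameter $\alpha^*_{\text{max}}$, Jensen's inequality, the expectation bound of Corollary~\ref{Acutboundthm2}, and the bookkeeping via Corollary~\ref{amax-prop} showing $e^{-O(\alpha^*_{\text{max}})} (\log n)^{-O(\alpha^*_{\text{max}})} \geq (\epsilon/n)^{o(1)}$; your extra care in distinguishing the actual weight ratio $\alpha \leq \alpha^*$ and then using $\bar h(0) > 0$ to pass from $n^{-\alpha \bar h(0)}$ to $n^{-\alpha^* \bar h(0)}$ is correct and if anything slightly cleaner than the paper's wording.

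The one gap is that Corollary~\ref{Acutboundthm2} is stated only for $\alpha \in [1, n/2]$, and you never verify this hypothesis. Since $\alpha^* \leq O(1+\rho)$ with $\rho = \log(1/\epsilon)/\log n$, the quantity $\alpha^*$ (and hence possibly $\alpha$) can exceed $n/2$ when $\epsilon$ is exponentially small in $n$, a regime the proposition must still cover; in that case $\lceil 2\alpha \rceil > n$, the potential $S_{\lceil 2\alpha \rceil}$ degenerates, and your application of Corollary~\ref{Acutboundthm2} is not justified. The paper disposes of this case separately: when $\alpha^* \geq n/2$ the Contraction Algorithm outputs $C$ with probability $2^{1-n}$, while $n^{-\alpha^* \bar h(0)} \leq n^{-\Theta(n)}$ because $\bar h(0) \geq 10/9$ (Proposition~\ref{hbar-min}), so the claimed bound holds trivially. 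Alternatively, in that regime Lemma~\ref{Aslemma2} alone already yields a lower bound of $e^{-O(\alpha^*_{\text{max}})} \geq (\epsilon/n)^{o(1)}$, which suffices. With that edge case added, your proof is complete and matches the paper's.
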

\begin{proof}
By Corollary~\ref{amax-prop}, we have $\alpha^* \leq \alpha^*_{\text{max}}$. 

If $\alpha^* \geq n/2$, then the cut $C$ is found with probability $2^{1-n}$. By Proposition~\ref{hbar-min}, $n^{-\alpha^* \bar h(0)} \leq n^{-\Theta(n)}$, which is smaller than $2^{1-n}$ for $n$ sufficiently large. So we may assume that $\alpha^* \leq n/2$.

By Lemma~\ref{Aslemma2}, this cut $C$ is produced by the Contraction Algorithm with probability $\geq e^{-O(\alpha^*_{\text{max}})} e^{-\alpha^* \bE[S_{\lceil 2 \alpha^* \rceil }]}$. 

Note that $e^{-O(\alpha^*_{\text{max}})} \geq e^{-O(1 + \rho)}$. Also observe that $e^\rho = e^{-\log(1/\epsilon)/\log n} \geq \epsilon^{o(1)}$. So $e^{-O(\alpha^*_{\text{max}})} \geq \epsilon^{o(1)}$.

Also by Corollary~\ref{Acutboundthm2}, we have $\bE[S_{\lceil 2 \alpha^* \rceil}] \leq \bar h(0) \log n + O(\log \log n)$.  We have that $e^{-\alpha^* \bE[S_{\lceil 2 \alpha^* \rceil}]} \geq e^{-\alpha^* (\bar h(0) \log n + O(\log \log n) - \Omega(\log \alpha'))} \geq e^{-\alpha^* \bar h(0) \log n} \times (\log n)^{-O(\alpha^*)} \geq e^{-\alpha^* \bar h(0)} (\epsilon/n)^{o(1)}$.
\end{proof}

If we use this algorithm directly, then we must run completely separate instances of the Contraction Algorithm for each sample. Each execution of the Contraction Algorithm takes time $O(n^2)$ (to process the graph), and so the total work would be roughly $n^{\alpha^* \bar h(0) + 2}$.  Our goal now will be to reduce the work closer to $n^{\alpha^* \bar h(0)}$.

In \cite{karger-stein}, an efficient algorithm called the \emph{Recursive Contraction Algorithm} (RCA) was introduced for running multiple samples of the Contraction Algorithm simultaneously. This amortizes the work of processing the graph across the multiple iterations, effectively reducing the time for a single execution of the Contraction Algorithm from $O(n^2)$ to $O(1)$. 

In the next section, we will discuss how to combine the Recursive Contraction Algorithm with our improved analysis of the Contraction Algorithm. Unlike in \cite{karger-stein}, we will not be able to show that a single application of the RCA is as powerful as $n^2$ independent applications of the Contraction Algorithm. We will still show it is powerful enough to substantially reduce the cost of multiple applications of the Contraction Algorithm. The next section will examine the Recursive Contraction Algorithm in detail.

We note that we do not need to bound either $\alpha^*$ or $\bar h(0)$ individually, which would depend on knowing $\beta$. It suffices to give an upper bound on the \emph{product} $\alpha^* \bar h(0)$ irrespective of $\beta$.

\section{The Recursive Contraction Algorithm}
\label{Asec:rca}
The basic method of finding the $\alpha^*$-cuts is to run the Contraction Algorithm for many iterations and collect all the resulting cuts. Each iteration requires $O(n^2)$ work (to process the entire graph). As described in \cite{karger-stein}, this data-processing can be amortized across the multiple iterations. The resulting Recursive Contraction Algorithm, can enumerate all the $\alpha$-cuts in time $O(n^{2 \alpha} \log ^{2} n)$.

We will run the RCA for a large, fixed number of iterations. This will produce a large collection of cuts  $\mathcal A$, of various sizes. We will then show that the resulting collection will contain all the $\alpha^*$-cuts with high probability. Note that we do not know the precise value of $\alpha^*$ (it may depend on $\beta$), so we cannot simply discard cuts of weight $> \alpha^* c$ at this stage.

\textbf{In this section (with the exception of Section~\ref{further-app-sec}), we will assume $U(p) < n^{-K}$ for some constant $K > 2$; thus the conclusions of Proposition~\ref{k-prop} hold.}

We define the RCA with parameter $\alpha$ as follows:\footnote{This version is slightly different from that of \cite{karger-stein}; the algorithm in that paper uses 2 independent executions to reduce the graph to $n/\sqrt{2}$ vertices. The only effect of this change is to remove some quantization effects, which would complicate our proof.}
\ttfamily
\begin{enumerate}
\item[1.] Randomly contract edges of $G$ until the resulting graph $G'$ has at most $\min (n/2, \lceil 2 \alpha \rceil)$ vertices.  Run the RCA with parameter $\alpha$ on the subgraph $G'$.
\item[2.] If $G'$ has $\lceil 2 \alpha \rceil$ vertices, output a random cut of $G$.
\item[3.] Otherwise, perform 4 independent executions of the RCA on the graph $G'$.
\end{enumerate}
\rmfamily

The RCA was introduced in \cite{karger-stein}, which showed the following bounds on its performance:
\begin{theorem}[\cite{karger-stein}]
\label{rca-perf}
For any graph $G$, the RCA with parameter $\alpha'$ executes in time $O(n^2 \log n)$. For $\alpha > 1 + \Omega(1)$, it finds a target $\alpha$-cut $C$ with probability at least $\Omega(2^{-\alpha'} n^{2-2\alpha})$. 
\end{theorem}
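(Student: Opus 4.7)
The plan is to prove the runtime and success-probability claims separately, both by induction on the vertex count $n$.

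For the runtime, let $T(n)$ bound the worst-case time to run the RCA with parameter $\alpha'$ on an $n$-vertex graph. The contraction phase in step~1 reduces the graph from $n$ vertices to $\max(\lceil n/2 \rceil, \lceil 2\alpha' \rceil)$ vertices; on the adjacency-matrix representation, each of the $O(n)$ contractions merges two rows and columns in $O(n)$ time, so this phase costs $O(n^2)$. Step~3 then issues four recursive RCA calls on a graph of size at most $\lceil n/2 \rceil$, producing the recurrence $T(n) \leq 4\,T(\lceil n/2 \rceil) + O(n^2)$; the base case $n \leq \lceil 2\alpha' \rceil$ costs $O(1)$. The master theorem, with critical exponent $\log_2 4 = 2$ matching the per-level work, yields $T(n) = O(n^2 \log n)$.

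For the success probability, fix a target $\alpha$-cut $C$, and let $P(n)$ denote the probability that the RCA invoked on any $n$-vertex contraction-subgraph in which $C$ still survives ultimately outputs $C$. I aim to show by induction that $P(n) \geq A \cdot 2^{-\alpha'} n^{2-2\alpha}$ for a sufficiently small absolute constant $A$. The base case $n = \lceil 2\alpha' \rceil$ gives $P(n) \geq 2^{1 - \lceil 2\alpha' \rceil} = \Omega(2^{-\alpha'})$, since step~2 outputs a uniform cut of a graph with at most $2^{n-1}$ cuts. For the inductive step, the probability $q_n$ that $C$ survives the contraction from $n$ to $\lceil n/2 \rceil$ vertices satisfies, by the telescoping calculation underlying Lemma~\ref{Aslemma} applied more sharply as $\prod_{r=\lceil n/2 \rceil+1}^n (1 - 2\alpha/r)$ (using $M_r \geq rc/2$), the estimate $q_n \geq 4^{-\alpha}(1 - O(\alpha^2/n))$. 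Combined with the four independent calls in step~3 this gives the probabilistic recurrence $P(n) \geq 1 - (1 - q_n P(\lceil n/2 \rceil))^4$.

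Since $\alpha > 1 + \Omega(1)$, the inductive hypothesis forces $q_n P(\lceil n/2 \rceil) = o(1)$ for $n$ large, so $(1-x)^4 \leq 1 - 4x + 6x^2$ linearizes the recurrence to $P(n) \geq 4 q_n P(\lceil n/2 \rceil)(1 - O(q_n P(\lceil n/2 \rceil)))$. Substituting the inductive bound and $q_n \geq 4^{-\alpha}(1 - O(\alpha^2/n))$, the leading factor becomes
\[
4 \cdot 4^{-\alpha} \cdot A \cdot 2^{-\alpha'} (n/2)^{2-2\alpha} = A \cdot 2^{-\alpha'} n^{2-2\alpha},
\]
and the multiplicative corrections across the $\log_2 n$ levels of recursion telescope to an absolute constant since the errors $O(\alpha^2/n)$ and $O(q_n P(n/2))$ decay geometrically in $n$.

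The main obstacle is the regime transition: for small $n$ near $\lceil 2\alpha' \rceil$, the approximation $P(\lceil n/2 \rceil) = o(1)$ fails and the fourth power cannot be linearized. I would handle this by treating the small-$n$ range separately, showing $P(n) = \Omega(1)$ there via a constant number of recursive steps from the base case, and then choosing $A$ small enough to make the inductive bound hold on both ranges. The hypothesis $\alpha > 1 + \Omega(1)$ is essential: it ensures $4^{1-\alpha}$ is strictly below $1$, so the per-level multiplier $4 q_n$ decays geometrically rather than saturating, which is exactly what converts the $\Omega(1/\log n)$-style min-cut guarantee of \cite{karger-stein} into the cleaner polynomial rate $n^{2-2\alpha}$ claimed here.
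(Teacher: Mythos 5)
You should first note that the paper does not prove this statement at all: Theorem~\ref{rca-perf} is imported from \cite{karger-stein} (with a footnote adapting it to the four-way, factor-two variant of the RCA used here), so your write-up supplies an argument the paper omits, and the branching-recurrence strategy you chose is indeed the standard Karger--Stein route. The runtime half is fine: the contraction phase costs $O(n^2)$ on the adjacency matrix, the recurrence $T(n)\le 4T(\lceil n/2\rceil)+O(n^2)$ gives $O(n^2\log n)$, and your reading of step~1 as contracting to $\max(\lceil n/2\rceil,\lceil 2\alpha'\rceil)$ vertices (the paper's ``$\min$'' is evidently a typo) is the intended algorithm.

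The gap is in the success-probability recurrence. You assert $P(n)\ge 1-(1-q_nP(\lceil n/2\rceil))^4$, which would be valid if each of the four subcalls independently re-ran the contraction phase; in the RCA they do not --- all four recurse on the \emph{same} graph $G'$, so the four success events are positively correlated through the shared event that $C$ survives the contraction to $G'$. What conditioning on $G'$ actually yields is
\[
P(n)\;\ge\; q_n\Bigl(1-\bigl(1-P(\lceil n/2\rceil)\bigr)^4\Bigr),
\]
and since $x\mapsto 1-(1-x)^4$ is concave, this is \emph{smaller} than your expression, so your recurrence is not a consequence of the algorithm (this shared-prefix correlation is exactly the phenomenon the paper must fight in Sections~\ref{rcasec1}--\ref{rcasec3}). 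The damage is repairable: linearizing the correct recurrence gives $P(n)\ge 4q_nP(\lceil n/2\rceil)\bigl(1-\tfrac{3}{2}P(\lceil n/2\rceil)\bigr)$, i.e.\ the per-level loss is $O(P(\lceil n/2\rceil))$ rather than $O(q_nP(\lceil n/2\rceil))$; because $\alpha\ge 1+\Omega(1)$ forces the propagated lower bound on $P$ to decay geometrically up the tree (ratio $4^{1-\alpha}\le 4^{-\Omega(1)}$) from a value at most $1/4$ at the bottom, the product of corrections is still bounded below by a constant depending only on the gap, and your telescoping then goes through. Two smaller points: the base case yields $2^{1-\lceil 2\alpha'\rceil}=\Theta(4^{-\alpha'})$, which is not $\Omega(2^{-\alpha'})$ when $\alpha'$ grows, so the argument as written proves the bound with $e^{-O(\alpha')}$ in place of $2^{-\alpha'}$ (immaterial for the paper, which only invokes the theorem with $\alpha'=\alpha^*_{\text{max}}=O(1+\rho)$); and near the bottom of the recursion the estimate $q_n\ge 4^{-\alpha}(1-O(\alpha^2/n))$ degenerates, so the separate small-$n$ treatment you sketch must absorb an $e^{-O(\alpha)}$ (not $\Omega(1)$) survival factor there, exactly as in the $16^{-\alpha}$ term of Lemma~\ref{Aslemma}.
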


The analysis in \cite{karger-stein} does not take into account the graph reliability. We will show a better success probability than $n^{2 - 2 \alpha}$ taking account of this information. Our goal is to show that we can enumerate all the $\alpha^*$ cuts with high probability using $n^{1+o(1)} \epsilon^{-1.99}$ independent applications of the RCA. As the running time of RCA is $n^{2+o(1)}$, the overall run time will be $\leq n^{3+o(1)} \epsilon^{-1.99}$.

We observe first that the number of $\alpha^*$-cuts is at most $n^{2 \alpha^*}$. Thus, suppose we are able to show that any $\alpha^*$-cut $C$ is found with probability at least $n^{-1-o(1)} \epsilon^{1.98}$ in a single execution of the RCA; then by a standard analysis of the Coupon Collector Problem, this implies that after $O( \log n^{2 \alpha^*} \times n^{1+o(1)} \epsilon^{-1.98})$ iterations of the RCA that all such $\alpha^*$-cuts are found, with high probability. Observing that $\log (n^{2 \alpha^*}) = (n/\epsilon)^{o(1)}$, the total number of iterations is then bounded by $n^{1+o(1)} \epsilon^{-1.98-o(1)} \leq n^{1+o(1)} \epsilon^{-1.99}$, as desired.

This observation means that it suffices to consider a fixed target cut $C$ of weight $\alpha c$, for $\alpha \leq \alpha^*$, and to show that $C$ is found with probability at least $n^{-1-o(1)} \epsilon^{1.98}$. This will be our goal for the remainder of this section. \emph{For the remainder of this section, we let $C$ be a fixed target cut and $\alpha c$ its weight.}

(Note: significantly better exponents than $3, 1.99$ can be shown for $n, \epsilon$ respectively; but this requires more careful calculations. Improvements in these exponents would not improve the overall run time of our algorithm, so these are omitted.)

\subsection{Storing the cuts produced by the RCA}
As we have already discussed in Section~\ref{dsa-sec}, we do not wish to explicitly store all the cuts that will be output by the RCA. Rather, we wish to store them in a compressed format. We briefly discuss how the RCA is to able to output each cut in terms of this compressed representation.

First, we must store a unique index for each cut $C$, which we define by
$$
H(C) = \sum_{v \in A} H(v) \qquad \text{modulo $2^b$}
$$
where $A$ is the first shore of $C$.

To compute this, we note that during the intermediate steps of the RCA, every node in the resulting subgraphs $G_i$ corresponds to a set of vertices of $G$. For each such vertex $v \in G_i$, which corresponds to some subset $X_v \subseteq V$, we define $H(v) = \sum_{x \in X_v} H(x) \text{ mod $2^b$}$. This can be updated at each contraction step. At the end of the RCA, we must select a cut of the graph $G_{\lceil 2 \alpha \rceil}$. The cut $C'$ of the graph $G_{\lceil 2 \alpha \rceil}$ corresponds to a cut $C$ of $G$ and $H(C) = \sum_{v \in C'} H(v)$. A similar process can be used to count the weight of $C$.

Finally, we must keep track of a pointer which allows us to fully reconstruct any cut in its uncompressed form. To do so, we first store the complete random tape used by applications of the RCA. Then any cut output by the RCA can be reconstructed in terms of its tree-path through the recursive calls to the RCA.

These details are discussed more in \cite{karger}.

\subsection{Handling the easy cases}
Our analysis of the RCA will be technically challenging. As a preliminary, we can deal with some of the easy cases first.
\begin{proposition}
\label{Arcathm2}
The RCA enumerates $C$  with probability $\geq n^{-1-o(1)} \epsilon^{1.98}$, assuming that one of the following conditions is satisfied:
\begin{enumerate}

\item $\alpha \leq 3/2$
\item $\rho \geq 2.1$
\item $\beta \geq 1$
\end{enumerate}
\end{proposition}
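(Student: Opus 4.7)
The plan is to apply Theorem~\ref{rca-perf} directly in each case; the three hypotheses appear designed so that the Karger--Stein RCA bound already meets the target. With parameter $\alpha' = \alpha^*_{\text{max}}$, Theorem~\ref{rca-perf} gives an RCA success probability of at least $\Omega(2^{-\alpha^*_{\text{max}}} n^{2-2\alpha})$ for the target $\alpha$-cut. Corollary~\ref{amax-prop} yields $\alpha^*_{\text{max}} \leq O(1+\rho)$, so $2^{-\alpha^*_{\text{max}}} = n^{-O((1+\rho)/\log n)}$, which I would absorb into an $(n/\epsilon)^{-o(1)}$ correction. Writing $\epsilon = n^{-\rho}$, the target $n^{-1-o(1)}\epsilon^{1.98}$ becomes $n^{-1-1.98\rho-o(1)}$, so the task reduces to verifying $2\alpha \leq 3 + 1.98\rho + o(1)$ in each case.

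Case 1 ($\alpha \leq 3/2$) is immediate since $2\alpha \leq 3 \leq 3 + 1.98\rho$. For Case 2 ($\rho \geq 2.1$), I would combine the bounds on $\alpha^*$ from Lemma~\ref{Aalphastarlemma1} (taking whichever of the two bounds is tighter given the current $\beta$) with Proposition~\ref{k-prop}'s lower bound on $\delta$; the specific threshold $\rho \geq 2.1$ and the exponent $1.98$ of $\epsilon$ appear tuned so that the resulting algebraic inequality goes through, with the slack of $0.99\rho$ absorbing the linear-in-$\rho$ term of $\alpha^*$. For Case 3 ($\beta \geq 1$), I would use the first bound of Lemma~\ref{Aalphastarlemma1}, which via $2-\beta \leq 1$ simplifies to $\alpha^* \leq 1 + (1+\rho)/\delta + O(1/\log n)$; Proposition~\ref{k-prop}(2) then gives $\delta \geq \beta + \Omega(1) \geq 1 + \Omega(1)$, making the coefficient $1/\delta$ of $\rho$ bounded strictly below $1$ and yielding $\alpha^* \leq 3/2 + 0.99\rho + o(1)$ by direct algebra.

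The main obstacle is verifying the precise algebraic inequalities in Cases 2 and 3, since the constants hidden in $\alpha^*_{\text{max}} \leq O(1+\rho)$ depend on the threshold parameter $K$ from Proposition~\ref{k-prop}, and the numerical thresholds in the proposition statement are chosen so that the final inequality just barely holds. Pinning this down requires careful arithmetic but is essentially routine once the correct upper bound on $\alpha^*$ is invoked; no new conceptual idea beyond Theorem~\ref{rca-perf} and Lemma~\ref{Aalphastarlemma1} is needed for these easy cases, which is exactly why they are being separated out before the (substantially harder) analysis of the regime $\alpha > 3/2$, $\rho < 2.1$, $\beta < 1$ in the remainder of the section.
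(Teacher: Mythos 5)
Your Case 1 is correct and matches the paper, but your overall reduction---``apply Theorem~\ref{rca-perf} and verify $2\alpha \leq 3 + 1.98\rho + o(1)$''---cannot be carried out in the other two cases. For Case 2 ($\rho \geq 2.1$) no available bound on $\alpha^*$ makes the Karger--Stein estimate $n^{2-2\alpha}$ sufficient: nothing in the hypotheses prevents $\beta$ from being near $0$ and $\delta$ from being as small as $\delta_0 = K-2$, an arbitrarily small constant. In that regime the first bound of Lemma~\ref{Aalphastarlemma1} is roughly $(2+\rho)/\delta$, which blows up, and the second bound tends to $\tfrac{9(2+\rho)}{8} = 2.25 + 1.125\rho$, so $2\alpha^* \approx 4.5 + 2.25\rho$, which exceeds $3 + 1.98\rho$ for every $\rho$. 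The paper's proof of this case does not invoke Theorem~\ref{rca-perf} at all: it notes that one branch of the RCA is a plain run of the Contraction Algorithm, applies Proposition~\ref{av-prop-1} to get success probability $n^{-\alpha^* \bar h(0)}(n/\epsilon)^{-o(1)}$, and then uses the mechanically verified product bound $\alpha^* \bar h(0) \leq 5/2 + (5/4)\rho$, which is below $1 + 1.97\rho$ precisely when $\rho > 2.1$. The factor $\bar h(0) < 2$ coming from the reliability-improved cut counting of Section~\ref{Azsec} is essential here and cannot be replaced by $2$.

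Your Case 3 ($\beta \geq 1$) also has a genuine error: from the first bound you get $\alpha^* \leq 1 + \frac{2-\beta+\rho}{\delta} + O(1/\log n)$, and with $\beta = 1$ and only $\delta \geq \beta + \Omega(1)$ (the $\Omega(1)$ is $K-2$, which may be tiny) the constant term is about $2$, not $3/2$; the claimed $\alpha^* \leq 3/2 + 0.99\rho$ does not follow, and $n^{2-2\alpha}$ can then be as small as roughly $n^{-2-c\rho}$ with $c<2$, which is below the target $n^{-1-1.98\rho}$ already at small $\rho$. The paper handles $\beta \geq 1$ by splitting: for $\beta \geq 3/2$ the first bound together with $\delta \geq \beta \geq 3/2$ gives $\alpha^* \leq 4/3 + (2/3)\rho$, while for $1 \leq \beta \leq 3/2$ one must use the \emph{second} bound of Lemma~\ref{Aalphastarlemma1}, mechanically verified to be at most $3/2 + (3/4)\rho$ on that range. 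With that split your framework does suffice for Case 3, but as written your derivation does not; and Case 2 requires the different, reliability-based argument described above.
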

\begin{proof}
By Proposition~\ref{k-prop}, we have $\delta \geq \beta + \Omega(1)$ and $\delta \geq K - 2 > 0$ and $\beta \in [0,2]$.

We break this proof into a number of cases.

\textbf{Case I: \boldmath${\alpha \leq 3/2}$\unboldmath.} By Theorem~\ref{rca-perf} the cut $C$ is found with probability at least $n^{2-2 \alpha + o(1)} 2^{-\alpha^*_{\text{max}}} \geq n^{-1-o(1)} 2^{-O(1 + \rho)} \geq n^{-1-o(1)} \epsilon^{-o(1)}$ as desired. Hence for the remainder of this proof we assume $\alpha \geq 3/2$.

\textbf{Case II: \boldmath${\beta \geq 3/2}$\unboldmath.} By Theorem~\ref{rca-perf}, $C$ is found with probability at least $\Omega(2^{-{\alpha^*_{\text{max}}}} n^{2-2 \alpha}) \geq 2^{-O(1+\rho)} n^{2-2 \alpha*} \geq (\epsilon/n)^{o(1)} n^{2-2 \alpha^*}$.

By Lemma~\ref{Aalphastarlemma1}, we have $\alpha^* \leq \frac{2 - \beta + \delta + \rho}{\delta} + O(1/\log n)$. For $\beta \geq 3/2$ we have $\alpha^* \leq 4/3 + 2/3 \rho$. So $C$ is found with probability at least $(\epsilon/n)^{o(1)} n^{2 -4/3 - 2/3 \rho} = n^{-2/3 - o(1)} \epsilon^{2/3 +o(1)}$ as desired.

Thus, for the remainder of the proof, we assume $\beta \leq 3/2$.

\textbf{Case III: \boldmath$\rho > 2.1$\unboldmath.} In this case, we make the simple observation that a single application of the RCA is at least as powerful as a single application of the Contraction Algorithm (if we simply ignore all but one branch of the RCA). By Proposition~\ref{av-prop-1}, then $C$ is enumerated with probability at least $n^{-\alpha^* \bar h(0)} (n/\epsilon)^{-o(1)}$. 

Assuming that $\beta \leq 3/2$ (otherwise case II holds), then by Lemma~\ref{Aalphastarlemma1}:
\begin{align*}
\alpha^* \bar h(0) &\leq \Bigl( \frac{(3 - \beta + \delta)^2(2 - \beta + \delta + \rho)}{(2 - \beta + \delta)^2(2 + \delta)} + O(\frac{(1+\rho) \log  \log  n}{\log  n})) \Bigr) \bar h(0) \\
&= \frac{(5 -2 \beta +2 \delta) (2 -\beta +\delta +\rho)}{(2 -\beta +\delta)^2} + o(1+\rho) \\
&= \Bigl( \frac{5 - 2 \beta + 2 \delta}{2 - \beta + \delta} - o(1) \Bigr) + \rho \Bigl( \frac{5 - 2 \beta + 2 \delta}{(2 - \beta + \delta)^2} - o(1) \Bigr) 
\end{align*}

It can be verified mechanically that
\begin{align*}
\frac{5 - 2 \beta + 2 \delta}{2 - \beta + \delta} \leq 5/2 - \Omega(1) & \qquad \frac{5 - 2 \beta + 2 \delta}{(2 - \beta + \delta)^2} \leq 5/4 - \Omega(1)
\end{align*}

Hence $\alpha^* \bar h(0) \leq 5/2 + 5/4 \rho$. When $\rho > 2.1$, this quantity is less than $1 + 1.97 \rho$. Recalling that $\rho = \log(1/\epsilon)/\log n$, this means that $n^{-\alpha^* \bar h(0)} \geq n^{-1-o(1)} \epsilon^{1.98}$ as desired.

\textbf{Case IV: \boldmath$1 \leq \beta \leq 3/2$\unboldmath.} By Lemma~\ref{Aalphastarlemma1}
$$
\alpha^* \leq \frac{(3 - \beta + \delta)^2(2 - \beta + \delta + \rho)}{(2 - \beta + \delta)^2(2 + \delta)} + o(1+\rho)
$$
It can be verified mechanically that this is at most $ 3/2 +  (3/4) \rho$ for all $\beta \in [1,3/2]$ and $\delta \geq \beta$. So again by Theorem~\ref{rca-perf}, then $C$ is found with probability at least $n^{2 -2 \alpha - o(1)} \geq n^{-1 - o(1) - 3/2 \rho - o(1) \rho} \geq n^{-1-o(1)} \epsilon^{1.51}$
\end{proof}

\textbf{Simplifying assumptions.} Hence we can make the simplifying assumptions that $\alpha \geq 3/2, \rho \leq 2.1, \beta \in [0,1]$. As $\alpha^* \leq \alpha^*_{\text{max}} = O(1+\rho)$ by Proposition~\ref{amax-prop}, this implies that $\alpha \leq O(1)$ as well.  We summarize the following assumptions which we adopt in Sections~\ref{rcasec1}, \ref{rcasec2}, \ref{rcasec3}, and which which will not be stated explicitly there:

\begin{enumerate}
\item $\rho \leq 2.1$
\item $3/2 \leq \alpha \leq \alpha^* \leq \alpha^*_{\text{max}} \leq O(1)$
\item $\delta > \delta_0 > 0$
\item $\delta \geq \beta$
\item $\beta \in [0, 1]$
\end{enumerate}

In light of these simplifying assumptions, we can simplify Theorem~\ref{rca-perf}.
\begin{proposition}
\label{rca-perf2}
Let $H$ be a contraction-subgraph of $G$ with $r \leq n$ vertices. Then the RCA with parameter $\alpha^*_{\text{max}}$ selects cut $C$ with probability $\Omega(r^{2-2\alpha})$
\end{proposition}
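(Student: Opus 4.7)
The plan is to obtain this as an essentially immediate corollary of Theorem~\ref{rca-perf} applied to $H$ in place of $G$, with the bookkeeping done in light of the simplifying assumptions already in force. The two observations that make this work are, first, that since $H$ is a contraction-subgraph of $G$ every cut of $H$ descends to a cut of $G$, so the min-cut size $c_H$ of $H$ satisfies $c_H \geq c$; and second, that under the simplifying assumptions we have $3/2 \leq \alpha \leq \alpha^*_{\text{max}} = O(1)$.

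From here I would argue as follows. First, the claim is only meaningful when $C$ has actually survived into $H$, since otherwise the RCA cannot output $C$ and the probability is zero; so I assume $C$ is a cut of $H$. Its weight there is still $|C| = \alpha c$, and since $c_H \geq c$ we get $|C| \leq \alpha c_H$, meaning $C$ is an $\alpha$-cut of $H$ in the sense required by Theorem~\ref{rca-perf}. The precondition $\alpha > 1 + \Omega(1)$ in that theorem is satisfied because $\alpha \geq 3/2$, and $\alpha \leq \alpha^*_{\text{max}}$ ensures the RCA's parameter is large enough to permit $C$ to be output. Applying Theorem~\ref{rca-perf} to the graph $H$ (which has $r$ vertices) with RCA parameter $\alpha^*_{\text{max}}$ and target $\alpha$-cut $C$ then gives a success probability of at least $\Omega(2^{-\alpha^*_{\text{max}}} r^{2-2\alpha})$.

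Finally, I would use $\alpha^*_{\text{max}} = O(1)$ to absorb the prefactor: $2^{-\alpha^*_{\text{max}}} = \Omega(1)$, and the bound collapses to $\Omega(r^{2-2\alpha})$ as required. There is no real obstacle; the proposition is essentially a restatement of Theorem~\ref{rca-perf} specialized to a contraction-subgraph and to the parameter regime $\alpha \geq 3/2$, $\alpha^*_{\text{max}} = O(1)$ fixed by the simplifying assumptions, and is recorded here as a convenient reference for the more delicate arguments to follow in Sections~\ref{rcasec1}--\ref{rcasec3}.
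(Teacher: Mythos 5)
Your proposal is correct and matches the paper's own (very terse) proof: both simply invoke Theorem~\ref{rca-perf} on the $r$-vertex contraction-subgraph, using $\alpha \geq 3/2 > 1 + \Omega(1)$ and $2^{-\alpha^*_{\text{max}}} = 2^{-O(1)} = \Omega(1)$ to absorb the prefactor. Your extra remark that $c_H \geq c$ (so $C$ remains an $\alpha$-cut of $H$) is a harmless bit of added care that the paper leaves implicit.
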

 \begin{proof}
 We have $2^{-\alpha^*_{\text{max}}} \geq 2^{-O(1)} = \Omega(1)$ and $\alpha \geq 3/2 > 1 + \Omega(1)$
 \end{proof}

\subsection{The RCA as a branching process}
\label{rcasec1}
It is useful to view the sequence of graphs produced by the RCA in terms of a branching tree, in which there is a node in this branching process for each recursive call to the RCA. Every node in this process corresponds to an partial execution of the Contraction Algorithm. We view the root node (depth $0$) as corresponding to the original graph $G$.  The leaf nodes correspond to subgraphs which have $\leq \lceil 2 \alpha \rceil$ vertices. If our target cut $C$ appears in any of the subgraphs corresponding to a leaf node, it will be output at least once by the RCA. For any node $v$ of this branching process, we let $G^v$ be the corresponding subgraph. For the root node $v$ we have $G^v = G$. We say that node $v$ \emph{succeeds} if the cut $C$ remains in $G^v$.

To avoid quantization issues, it will be convenient to assume that $n$ is a power of two. We can always modify our original input graph to achieve this, by adding extra dummy vertices; each dummy vertex is connected to every other vertex (including the other dummy vertices) by infinitely many edges. This modification does not change $U(p)$, and it only changes $n$ by a constant factor. As we aim to give running times which are polynomial in $n$ and do not depend on $m$, this will not affect our runtime analysis by more than a constant factor.

Once we make this assumption, then we can see that, aside from the leaf nodes, the nodes at depth $i$ of the RCA branching process correspond to running the Contraction Algorithm from the original graph $G$ to a graph with exactly $n 2^{-i}$ nodes. The overall depth of this tree is $H = \lfloor \log_2( \frac{n}{\lceil 2 \alpha^*_{\text{max}} \rceil} ) \rfloor$. As $\alpha^*_{\text{max}} = O(1)$, we have that $\log_2 n - O(1) \leq H \leq \log_2 n$.

Our goal is to calculate the probability that some leaf node succeeds. In the ideal case (if the leaves corresponded to independent executions of the Contraction Algorithm), then the probability of a successful leaf node is nearly equal to the expected number of leaf nodes. 

The analysis of \cite{karger-stein} viewed the RCA as a type of percolation process, in which the survival of a target cut $C$ percolated from the root node to the leaf nodes. Starting at any given node at depth $i$, there is a certain probability that a cut $C$ is retained throughout all the contractions from that node to its children. All the survival events are independent.

Our analysis, however, must also keep track of the stochastic evolution of the subgraphs $G_i$ during the Contraction Process. Our improvements have been based on bounding the \emph{expected} number of edges in these graphs. Thus, we cannot assume that starting at a node $v$ at depth $i$, the probability that cut $C$ survives in two nodes is the square of the probability that it survives in one; the reason is that these events are both dependent on the random variable $G^v$. 

As a result of this, there is a significant correlation between the successes at nearby nodes of the RCA tree. This in turn lowers the overall probability that there is at least one successful leaf node. \emph{Dealing with this correlation among nearby nodes will be the major technical challenge for the analysis of the RCA}. This correlation becomes more severe at greater depths of the RCA tree.

\subsection{Marking nodes}
\label{rcasec2}
We wish to show that there is a good probability of having at least one successful node. We will do so by using Inclusion-Exclusion (IE):
$$
\bP( \text{At least one successful node} ) \geq \bE[ \text{\# successful nodes}] - \bE[ \text{\# pairs of successful nodes}]
$$

One somewhat paradoxical feature of Inclusion-Exclusion is that sometimes it can give worse estimates when the expected number of successes goes up. For example, suppose we have $n$ independent Bernoulli-$q$ random variables $X_1, \dots, X_n$. By IE, we have $\bP( \sum X_i \geq 1) \geq n q - \binom{n}{2} q^2$. When $q \rightarrow 0$, then this is essentially an accurate estimate, but when $q \rightarrow 1$ then the RHS becomes negative --- yielding a completely useless estimate.

We will avoid this problem for our analysis as follows. Every node $v$ at depth $k$ of the RCA corresponds to a partial execution of the Contraction Algorithm, up to a subgraph with $n 2^{-k}$ vertices. We can define the random variables $S_i, M_i, G_i, F_i$ for $i = n 2^{-k}, \dots, n$ for this node; we denote them by $S_i^v, M_i^v, G_i^v, F_i^v$. Observe that $G_{n 2^{-k}}^v = G^v$.

Now, if the cut $C$ survives to $G^v$, then mark the node $v$ with probability $\min(1, e^{\alpha(S^v_{n 2^{-k}} - B)})$, where $B$ is a threshold value which we will specify. All such markings are done independently. If the cut $C$ does not survive in $G^v$, then we do not mark $v$.

Clearly, the probability that some node survives to depth $k$ is at least equal to the probability that there is a marked node at depth $k$. It is the latter that we will estimate via IE.

For any execution of the Contraction Algorithm and any integer $k \geq 0$, we define
$$
T_k = \max(S_{n 2^{-k}},B)
$$
 
We will show a lower bound on the probability that a depth-$k$ node is marked, and an upper bound on the probability that two depth-$k$ nodes are both simultaneously marked.

\begin{proposition}
\label{mark-node-prop1}
Let $v$ be a node at depth $k$ of the RCA tree. The probability that node $v$ is marked is at least
$$
\bP( \text{$v$ is marked} ) \geq \Omega(\bE_{\mathcal{CP}(G,C)} [e^{-\alpha T_k}])
$$
\end{proposition}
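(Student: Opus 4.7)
The plan is to unwind the definition of the marking event and then use the relationship between $\mathcal{CA}(G)$ and $\mathcal{CP}(G,C)$ established in Proposition~\ref{cp-ca-prop} to rewrite everything as an expectation under the Contraction Process. Fix $i = n 2^{-k}$. Tracing from the root down to $v$, the distribution of the graph $G^v$ (ignoring the independent marking coin) is exactly the same as running the Contraction Algorithm on $G$ down to a subgraph $G_i$ with $i$ vertices. So
\begin{equation*}
\bP(v \text{ marked}) \;=\; \bE_{\mathcal{CA}(G)}\Bigl[ \mathbf{1}\bigl[C \text{ survives to } G_i\bigr] \cdot \min\bigl(1, e^{\alpha(S_i - B)}\bigr) \Bigr].
\end{equation*}

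Next, I would rewrite the indicator that $C$ survives to $G_i$ as a product of per-step survival factors using Proposition~\ref{cp-ca-prop}. Summing over histories that avoid $C$, this converts the $\mathcal{CA}(G)$ expectation into a $\mathcal{CP}(G,C)$ expectation weighted by $\prod_{r=i+1}^n (1 - |C|/M_r)$, exactly as in the proof of Corollary~\ref{Akillmanylemma1}. This yields
\begin{equation*}
\bP(v \text{ marked}) \;=\; \bE_{\mathcal{CP}(G,C)}\Bigl[ \prod_{r=i+1}^n \bigl(1 - \tfrac{|C|}{M_r}\bigr) \cdot \min\bigl(1, e^{\alpha(S_i - B)}\bigr) \Bigr].
\end{equation*}

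Now I apply the lower bound from Lemma~\ref{Aslemma}, namely $\prod_{r=i+1}^n(1 - |C|/M_r) \geq 16^{-\alpha} e^{-\alpha S_i}$ (valid since $i \geq 2\alpha$ along every path through the RCA tree, as the algorithm halts once a subgraph has $\lceil 2 \alpha^*_{\text{max}} \rceil \geq \lceil 2 \alpha \rceil$ vertices). The key algebraic simplification is then
\begin{equation*}
e^{-\alpha S_i} \cdot \min\bigl(1, e^{\alpha(S_i - B)}\bigr) \;=\; \min\bigl(e^{-\alpha S_i}, e^{-\alpha B}\bigr) \;=\; e^{-\alpha \max(S_i, B)} \;=\; e^{-\alpha T_k},
\end{equation*}
so the integrand collapses to $e^{-\alpha T_k}$ inside the expectation. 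Finally, under the simplifying assumptions of this section we have $\alpha \leq O(1)$, so $16^{-\alpha} = \Omega(1)$, giving
\begin{equation*}
\bP(v \text{ marked}) \;\geq\; 16^{-\alpha} \, \bE_{\mathcal{CP}(G,C)}[e^{-\alpha T_k}] \;=\; \Omega\bigl(\bE_{\mathcal{CP}(G,C)}[e^{-\alpha T_k}]\bigr),
\end{equation*}
as claimed. There is no serious obstacle here; the only delicate point is recognizing the identity $e^{-\alpha S_i}\min(1,e^{\alpha(S_i-B)}) = e^{-\alpha T_k}$, which is precisely what motivated the choice of marking probability $\min(1, e^{\alpha(S^v_i - B)})$ in the first place.
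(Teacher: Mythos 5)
Your proposal is correct and follows essentially the same route as the paper's own proof: convert the marking probability into a $\mathcal{CP}(G,C)$ expectation via Proposition~\ref{cp-ca-prop}, lower-bound the survival product by $16^{-\alpha}e^{-\alpha S_i}$ using Lemma~\ref{Aslemma}, collapse $e^{-\alpha S_i}\min(1,e^{\alpha(S_i-B)})$ to $e^{-\alpha T_k}$, and absorb $16^{-\alpha}$ using $\alpha \leq \alpha^*_{\text{max}} \leq O(1)$. Your explicit justification that $i = n2^{-k} \geq \lceil 2\alpha^*_{\text{max}}\rceil \geq 2\alpha$ makes Lemma~\ref{Aslemma} applicable is a detail the paper leaves implicit, but there is no substantive difference.
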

\begin{proof}
Let $i = n 2^{-k}$. The partial history $F^v_{> i}$ for the node $v$ has a distribution which is simply the Contraction Algorithm up to stage $i$. In order for the node $v$ to be marked, the edges of that history must all be disjoint to $C$. So, integrating over such edges,
{\allowdisplaybreaks
\begin{align*}
\bP( \text{$v$ marked} ) &= \sum_{ \substack{e_{i+1}, \dots, e_n\\\text{disjoint to $C$}}} \bP_{\mathcal{CA}(G)}( F_{>i} = \langle e_n, \dots, e_{i+1} \rangle) \times \min(1, e^{\alpha(S_i - B)}) \\
&= \sum_{ \substack{e_{i+1}, \dots, e_n\\\text{disjoint to $C$}}} \bP_{\mathcal{CP}(G,C)} ( F_{>i} = \langle e_n, \dots, e_{i+1} \rangle) \min(1, e^{\alpha(S_i - B)}) \prod_{r=i+1}^n (1 - \frac{|C|}{M_r})  \\
& \qquad \text{by Proposition~\ref{cp-ca-prop}} \\
&\geq 16^{-\alpha} \sum_{ \substack{e_{i+1}, \dots, e_n\\\text{disjoint to $C$}}} \bP_{\mathcal{CP}(G,C)} ( F_{>i} = \langle e_n, \dots, e_{i+1} \rangle) \min(1, e^{\alpha(S_i - B)}) e^{-\alpha S_i} \qquad \text{by Lemma~\ref{Aslemma}} \\
&\geq 16^{-\alpha} \sum_{ \substack{e_{i+1}, \dots, e_n\\\text{disjoint to $C$}}} \bP_{\mathcal{CP}(G,C)} ( F_{>i} = \langle e_n, \dots, e_{i+1} \rangle)  \min(e^{-\alpha S_i}, e^{-\alpha B}) \\
&= 16^{-\alpha} \sum_{ \substack{e_{i+1}, \dots, e_n\\\text{disjoint to $C$}}} \bP_{\mathcal{CP}(G,C)} ( F_{>i} = \langle e_n, \dots, e_{i+1} \rangle) e^{-\alpha T_k} \\
&= 16^{-\alpha} \bE_{\mathcal{CP}(G,C)}[e^{-\alpha T_k}]
\end{align*}
}

Now as $\alpha \leq \alpha^*_{\text{max}} \leq O(1)$ we have the claimed result.

\end{proof}

\begin{proposition}
\label{mark-node-prop2}
Let $v, v'$ be two nodes at depth $k$ of the RCA tree, whose nearest common ancestor $w$ is at depth $j \leq k$. The probability that $v, v'$ are both marked satisfies the bounds
\begin{align*}
\bP( \text{$v$ and $v'$ marked} ) & \leq \bE_{\mathcal{CP}(G,C)} [e^{-\alpha T_k}] \\
\bP( \text{$v$ and $v'$ marked} ) & \leq n^{\alpha h(1 - \frac{j}{\log_2 n})} \bE_{\mathcal{CP}(G,C)} [e^{-2 \alpha T_k}]
\end{align*}
\end{proposition}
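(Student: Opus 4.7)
\smallskip \noindent \textbf{Proof plan.}
The first bound is almost immediate: since $\{v \text{ and } v' \text{ marked}\} \subseteq \{v \text{ marked}\}$, I would repeat the derivation of Proposition~\ref{mark-node-prop1}, but this time using the \emph{upper} bound $\prod_{r=i+1}^n(1-|C|/M_r) \leq e^{-\alpha S_i}$ from Lemma~\ref{Aslemma} and the identity $\min(1,e^{\alpha(S_{i}-B)})\,e^{-\alpha S_i} = e^{-\alpha \max(S_i,B)} = e^{-\alpha T_k}$. This directly yields $\bP(v \text{ marked}) \leq \bE_{\mathcal{CP}(G,C)}[e^{-\alpha T_k}]$.

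For the second (and more interesting) bound, the key is to exploit the recursive structure: conditional on the subgraph $G^w$ at the nearest common ancestor $w$, the two descending sub-trees of the RCA evolve \emph{independently}. I would split the joint contraction history into three phases: (i) the common Contraction Algorithm run from $G$ down to $G^w$, which has $i_0 := n2^{-j}$ vertices; and (ii)--(iii) two independent branch Contraction Algorithms run from $G^w$ down to $G^v$ and $G^{v'}$, each with $i_k := n2^{-k}$ vertices. Writing $S^{\mathrm{com}}$ for the contribution of the common phase to $S_{i_k}^v$ (and similarly $S^{(2),v}_{i_k}$, $S^{(2),v'}_{i_k}$ for the two branches), we have $S_{i_k}^v = S^{\mathrm{com}} + S^{(2),v}_{i_k}$ and analogously for $v'$. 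Applying Proposition~\ref{cp-ca-prop} once per phase (to pass from $\mathcal{CA}$ to $\mathcal{CP}(\cdot,C)$) and the upper bound of Lemma~\ref{Aslemma} once per phase, the probability that both nodes are marked is bounded by
$$
\bE\Bigl[ e^{-\alpha S^{\mathrm{com}}} e^{-\alpha S^{(2),v}_{i_k}} e^{-\alpha S^{(2),v'}_{i_k}} \min(1, e^{\alpha(S_{i_k}^v - B)}) \min(1, e^{\alpha(S_{i_k}^{v'} - B)})\Bigr],
$$
where the expectation is over one common $\mathcal{CP}(G,C)$ down to $i_0$ and two independent $\mathcal{CP}(G^w,C)$ down to $i_k$.

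The crucial algebraic simplification is that $e^{-\alpha S^{(2),v}_{i_k}}\min(1,e^{\alpha(S^{\mathrm{com}}+S^{(2),v}_{i_k}-B)}) = e^{-\alpha(T_k^v - S^{\mathrm{com}})}$ (using $T_k^v \geq S^{\mathrm{com}}$, which is automatic since $T_k^v \geq S_{i_k}^v \geq S^{\mathrm{com}}$), and similarly for $v'$. After this collapse, the integrand becomes $e^{\alpha S^{\mathrm{com}}} e^{-\alpha(T_k^v + T_k^{v'})}$. I would then invoke Proposition~\ref{simplecaprop}, which gives the \emph{pointwise} (not just in-expectation) bound $S^{\mathrm{com}} \leq h(\log i_0/\log n)\log n = h(1 - j/\log_2 n)\log n$, noting $\log i_0/\log n = 1 - j\log 2/\log n = 1 - j/\log_2 n$. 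This allows me to factor the deterministic quantity $e^{\alpha S^{\mathrm{com}}} \leq n^{\alpha h(1 - j/\log_2 n)}$ out of the expectation.

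It remains to bound $\bE[e^{-\alpha(T_k^v + T_k^{v'})}]$. Conditional on $G^w$, the two branches are i.i.d., so $\bE[e^{-\alpha(T_k^v+T_k^{v'})} \mid G^w] = (\bE[e^{-\alpha T_k^v}\mid G^w])^2 \leq \bE[e^{-2\alpha T_k^v}\mid G^w]$ by Cauchy--Schwarz; taking total expectation and observing that concatenating a common $\mathcal{CP}(G,C)$ phase with one branch $\mathcal{CP}(G^w,C)$ phase yields exactly a $\mathcal{CP}(G,C)$ process stopped at $i_k$ vertices, we get $\bE[e^{-\alpha(T_k^v+T_k^{v'})}] \leq \bE_{\mathcal{CP}(G,C)}[e^{-2\alpha T_k}]$, completing the bound. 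The main obstacle, I expect, is not any single inequality but the bookkeeping: correctly tracking three distinct stochastic processes, justifying the merge of ``common + one branch'' back into a single $\mathcal{CP}(G,C)$, and ensuring that the independence used in the Cauchy--Schwarz step and the factorizations via Proposition~\ref{cp-ca-prop} really apply after conditioning on $G^w$.
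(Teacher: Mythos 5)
Your proposal is correct and follows essentially the same route as the paper: split the joint history at the common ancestor into a shared phase plus two independent branches, convert $\mathcal{CA}$ to $\mathcal{CP}(G,C)$ phase-by-phase via Proposition~\ref{cp-ca-prop}, bound the survival products with Lemma~\ref{Aslemma} and absorb the marking factors to get $e^{\alpha S^{\mathrm{com}}}e^{-\alpha T_k}e^{-\alpha T'_k}$, bound $e^{\alpha S^{\mathrm{com}}}$ pointwise by $n^{\alpha h(1-j/\log_2 n)}$ via Proposition~\ref{simplecaprop}, and reduce the cross term to $\bE_{\mathcal{CP}(G,C)}[e^{-2\alpha T_k}]$. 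Your final step (conditioning on $G^w$ and using $(\bE X)^2\le \bE X^2$ on the i.i.d.\ branches) is just a mild repackaging of the paper's Cauchy--Schwarz applied to the double Contraction Process, so the two arguments coincide in substance.
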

\begin{proof}
To show the first bound, we simply ignore node $v'$ completely. In this case, the proof is nearly identical to Proposition~\ref{mark-node-prop1}, except we use an upper bound instead of a lower bound. 

Let us turn our attention to the second bound. Let $i = n 2^{-k}$ and $t = n 2^{-j}$. The graph $G_w$ is derived by running the Contraction Algorithm up to $j$ nodes. Conditional on $G_w$, the nodes $v, v'$ are independent, and they are derived by running the Contraction Algorithm starting at $G_w$ down to $i$ nodes. We can thus view the joint distribution on the histories at $v, v'$ in terms of a ``double Contraction Algorithm'' denoted $\mathcal{DCA}(G)$. One can similarly define a ``double Contraction Process'' denoted $\mathcal{DCP}(G,C)$, where at every stage we select an edge uniformly from the edges outside $C$.

For this process, let $F_{i+1}, \dots, F_n$ be the resulting set of edges for the first leg, and let $F'_{i+1}, \dots, F'_n$ be the resulting set of edges for the second leg. Both $F_{>i}$ and $F'_{> i}$ are distributed according to the Contraction Process, but are not independent. Also, we must have $F_{\ell} = F'_{\ell}$ for $\ell = t+1, \dots, n$.

Now suppose we are given a sequence of edges $f_i, \dots, f_n, f'_i, \dots, f'_n$, with $f_{\ell} = f'_{\ell}$ for $\ell = t+1, \dots, n$; we wish to analyze whether they could be produced by double Contraction Algorithm. In the usual way for analyzing the Contraction Algorithm, let the resulting subgraphs by defined by $G_n = G'_n = G$, and $G_{\ell -1} = G_{\ell} \slash f_{\ell}$ etc. We similarly define $M_{\ell}, M'_{\ell}, S_{\ell}, S'_{\ell}$. Finally, we define $T_k = \max(S_i, B)$ and similarly for $T'_k = \max(S'_i, B)$. Finally we define $H = G_t$; this is the graph at which the two legs of the double Contraction Process first diverge.
{\allowdisplaybreaks
\begin{align*}
&\bP_{\mathcal{DCA}(G,C)} ( F_{>i} \vec f \wedge F_{>i} = \vec f') \\
&\qquad = \bP_{\mathcal{CA}(G)} (F_{>t} = \langle f_n, \dots, f_{t+1} \rangle)  \bP_{\mathcal{CA}(H)} (F_{>i} = \langle f_t, \dots, f_{i+1} \rangle) \bP_{\mathcal{CA}(H)} (F_{>i} = \langle f'_t, \dots, f'_{i+1} \rangle) \\
& \qquad = \bP_{\mathcal{CP}(G,C)} (F_{>t} = \vec f) \prod_{r=t+1}^n (1 - \frac{|C|}{M_r})  \bP_{\mathcal{CA}(H)} (F_{>i} = \langle f_t, \dots, f_{i+1} \rangle)  \prod_{r=i+1}^t (1 - \frac{|C|}{M_r}) \\
&\qquad \qquad \times \bP_{\mathcal{CA}(H)} (F_{>i} = \langle f'_t, \dots, f'_{i+1} \rangle)  \prod_{r=i+1}^t (1 - \frac{|C|}{M'_r})
\qquad \qquad \text{by Proposition~\ref{cp-ca-prop}} \\
&\qquad = \bP_{\mathcal{DCP}(G,C)} (F_{>i} = \vec f \wedge F'_{>i} = \vec f') \prod_{r=t+1}^n (1 - \frac{|C|}{M_r}) \prod_{r=i+1}^t (1 - \frac{|C|}{M_r}) \prod_{r=i+1}^t (1 - \frac{|C|}{M'_r}) \\
&\qquad \leq \bP_{\mathcal{DCP}(G,C)} (F_{>i} = \vec f \wedge F'_{>i} = \vec f') e^{-\alpha S_t} e^{-\alpha (S_i - S_t)} e^{-\alpha(S'_i - S_t)} \qquad \text{by Lemma~\ref{Aslemma}} \\
&\qquad = \bP_{\mathcal{DCP}(G,C)} (F_{>i} = \vec f \wedge F'_{>i} = \vec f') e^{\alpha S_t} e^{-\alpha S_i} e^{-\alpha S'_i} \\
&\qquad \leq n^{\alpha h(\frac{\log t}{\log n})} \bP_{\mathcal{DCP}(G,C)} (F_{>i} = \vec f \wedge F'_{>i} = \vec f')e^{-\alpha S_i} e^{-\alpha S'_i} \qquad \text{by Proposition~\ref{simplecaprop}}
\end{align*}
}

If we include the probability that both nodes are also marked, we have:
\begin{align*}
\bP( F^v_{>i} = \vec f \wedge F^{v'}_{>i} = \vec f' \wedge \text{$v, v'$ marked}) &= \bP_{\mathcal{DCA}(G,C)} ( F_{>i} = \vec f \wedge F_{>i} = \vec f') \wedge \bP( \text{$v, v'$ marked}) \\
&\leq n^{\alpha h(\frac{\log t}{\log n})} \bP_{\mathcal{DCP}(G,C)} (F_{>i} = \vec f \wedge F'_{>i} = \vec f')e^{-\alpha S_i} e^{-\alpha S'_i} \\
& \qquad \times \min(1, e^{\alpha(S_i - B)})  \min(1, e^{\alpha(S'_i - B)}) \\
&\leq n^{\alpha h( \frac{\log t}{\log n})} \bP_{\mathcal{DCP}(G,C)} (F_{>i} = \vec f \wedge F'_{>i} = \vec f')e^{-\alpha T_k} e^{-\alpha T'_k}
\end{align*}

Now, integrating over $\vec f, \vec f'$,
\begin{align*}
\bP( F^v_{>i} = \vec f \wedge F^{v'}_{>i} = \vec f' \wedge \text{$v, v'$ marked}) &\leq \sum_{\vec f, \vec f'} n^{\alpha h( \frac{\log t}{\log n})} \bP_{\mathcal{DCP}(G,C)} (F_{>i} = \vec f \wedge F'_{>i} = \vec f') e^{-\alpha T_k} e^{-\alpha T'_k} \\
&= n^{\alpha h( \frac{\log t}{\log n})} \bE_{T_k, T'_k \sim \mathcal{DCP}(G,C)} [e^{-\alpha T_k} e^{-\alpha T'_k}] \\
&\leq n^{\alpha h( \frac{\log t}{\log n})} \sqrt{ \bE_{T_k, T'_k \sim \mathcal{DCP}(G,C)} [e^{-2 \alpha T_k}] \bE_{T_k, T'_k \sim \mathcal{DCP}(G,C)} [e^{-2 \alpha T'_k}]} \\
& \qquad \text{by Cauchy-Schwarz} \\
&= n^{\alpha h( \frac{\log t}{\log n})} \sqrt{ \bE_{T_k \sim \mathcal{CP}(G,C)} [e^{-2 \alpha T_k}] \bE_{T'_k \sim \mathcal{CP}(G,C)} [e^{-2 \alpha T'_k}] } \\
&= n^{\alpha h( \frac{\log t}{\log n})} \bE_{T_k \sim \mathcal{CP}(G,C)} [e^{-2 \alpha T_k}]
\end{align*}

Finally, observe that $t = n 2^{-j}$, giving the final result.
\end{proof}

\subsection{Inclusion-Exclusion analysis}
\label{rcasec3}
We now move on to use Inclusion-Exclusion. We will not directly apply it to show that there is a leaf node where $C$ survives --- the correlation among the nodes is too severe for nodes deep in the RCA tree. We will count smaller-depth nodes instead.

\begin{proposition}
\label{mark-node-prop3}
For any integers $j, k$ with $0 \leq j \leq k \leq H$, we have
$$
\bE[ \text{ \# pairs of marked nodes at depth $k$} ] \leq O \Bigl( \bE_{T_k \sim \mathcal{CP}(G,C)} \Bigl[ 4^{2 k - j} e^{-\alpha T_k} + n^{\alpha h(1 - \frac{j}{\log_2 n})} e^{-2 \alpha T_k} \Bigr] \Bigr) \\
$$
\end{proposition}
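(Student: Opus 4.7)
The plan is to decompose the expected number of marked pairs by the depth of the nearest common ancestor (NCA) and apply whichever of the two bounds from Proposition~\ref{mark-node-prop2} is more favorable on each range. Concretely, I would first count pairs: a node at depth $j'$ of the RCA tree has $4^{k-j'}$ depth-$k$ descendants, split equally among its four children, so the number of depth-$k$ pairs with NCA at exactly depth $j'$ is $4^{j'} \cdot \binom{4}{2} \cdot (4^{k-j'-1})^2 = O(4^{2k-j'})$.

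With these counts, I would split at the threshold $j$ from the statement. For pairs with NCA depth $j' \geq j$, apply the first (uniform) bound $\bP(\text{both marked}) \leq \bE_{\mathcal{CP}(G,C)}[e^{-\alpha T_k}]$; summing the geometric series $\sum_{j'=j}^{k-1} O(4^{2k-j'}) = O(4^{2k-j})$ produces the first term of the claim. For pairs with NCA depth $j' < j$, apply the second bound to get a total contribution of
$$
\bE_{\mathcal{CP}(G,C)}[e^{-2 \alpha T_k}] \cdot \sum_{j' = 0}^{j-1} O(4^{2k - j'}) \cdot n^{\alpha h(1 - j'/\log_2 n)},
$$
and the task reduces to showing this sum matches $n^{\alpha h(1 - j/\log_2 n)}$ up to a constant. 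Writing $a_{j'}$ for the summand, the mean value theorem applied to $h$ gives a consecutive ratio $a_{j'+1}/a_{j'} = \tfrac{1}{4} \cdot 2^{-\alpha h'(\xi)}$ at some intermediate point $\xi \in [1-(j'+1)/\log_2 n,\, 1-j'/\log_2 n]$, so the sum is geometric and dominated by its terminal term provided $-\alpha h'(\xi) > 2$ strictly.

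The main obstacle is establishing this last strict inequality uniformly in $\xi$. Under the simplifying assumptions in force throughout Section~\ref{Asec:rca}---namely $\alpha \geq 3/2$, $\delta \geq \delta_0 > 0$, $\beta \in [0,1]$---I would verify it by case analysis on the two regions of $h$ from Proposition~\ref{simplecaprop}. On $\xi \geq \beta$, the explicit formula $-h'(\xi) = 2(2+\delta)/(2-\beta+\delta+\xi)$ is tightest at $\xi = 1$, where $-\alpha h'(1) = 2\alpha(2+\delta)/(3-\beta+\delta)$ exceeds $2$ iff $\alpha(2+\delta) > 3-\beta+\delta$; under $\alpha \geq 3/2$ this reduces to $\delta + 2\beta > 0$, which holds strictly with a margin of order $\delta_0$. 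On $\xi < \beta$, $-h'(\xi) = 2$ and $\alpha \geq 3/2$ trivially yields $-\alpha h'(\xi) \geq 3$. Hence the ratio is at least some constant $r_0 > 1$ (depending on $\delta_0$), the geometric sum collapses to a constant multiple of its last term $a_{j-1}$, and a short exponent calculation using the same lower bound on $-\alpha h'$ reduces $a_{j-1}$ to the claimed form. Finally, the boundary cases $j = 0$ (second group empty) and $j = k$ (first group empty) should be checked directly against the stated bound.
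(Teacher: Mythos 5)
Your proposal is correct and takes essentially the same route as the paper's proof: decompose pairs by the depth of the nearest common ancestor, apply the two bounds of Proposition~\ref{mark-node-prop2} split at the threshold $j$, and bound the shallow-ancestor contribution as a geometric series dominated by its last term via a uniform bound $-\alpha h'(\xi) \geq 2 + \Omega(1)$ (the paper cites Proposition~\ref{h-deriv-prop} together with $\alpha \geq 3/2$, while you re-derive the same derivative bound by direct case analysis on the two branches of $h$). Note that, exactly as in the paper's own concluding display, what the argument literally yields is the second term carrying an extra factor $4^{2k-j}$, which is the form actually used in Lemma~\ref{Arcalemma2}, so your "matches $n^{\alpha h(1-j/\log_2 n)}$ up to a constant" should be read with that factor included.
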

\begin{proof}
For each $i \leq k$, we count the number of level-$k$ nodes which are marked and have a nearest common ancestor at level $i$. As every node has four children, the total number of such nodes pairs is $4^k \times 3 \times 4^{k-i-1}$. For each such pair, we use Proposition~\ref{mark-node-prop2} to estimate the probability that both nodes are marked; we use the cruder estimate $\bE_{\mathcal{CP}(G,C)} [e^{-\alpha T_k}]$ for $i \geq j$ and the more refined estimate $n^{\alpha h(\frac{\log j}{\log n})} \bE_{\mathcal{CP}(G,C)} [e^{-2 \alpha T_k}]$ for $i < j$.

Thus, summing over all $i = 0, \dots, k$, we have
\begin{align*}
\bE[ \text{ \# pairs of marked nodes at depth $k$} ] &\leq \sum_{i=j+1}^k 4^k \times 3 \times 4^{k-i-1}  \bE_{\mathcal{CP}(G,C)} [e^{-\alpha T_k}] \\
&\qquad + \sum_{i=0}^j 4^k \times 3 \times 4^{k-i-1} n^{\alpha h(1 - \frac{i}{\log_2 n})} \bE_{\mathcal{CP}(G,C)} [e^{-2 \alpha T_k}] \\
&\leq 4^{2 k - j} \bE_{\mathcal{CP}(G,C)} [e^{-\alpha T_k}] + 4^{2 k} \bE_{\mathcal{CP}(G,C)} [e^{-2 \alpha T_k}] \sum_{i=0}^j 4^{-i} n^{\alpha h( 1 - \frac{i}{\log_2 n})} 
\end{align*}

By Proposition~\ref{h-deriv-prop} we have that $h'(x) \leq -4/3 - \Omega(1)$ for all $x \in [0,1]$.  Thus, for all $x \geq 1 - \frac{j}{\log_2 n}$, we have $h(x) \leq h( 1 - \frac{j}{\log_2 n}) - (4/3 - \Omega(1)) (x - (1 - \frac{j}{\log_2 n}))$. So we estimate
\begin{align*}
4^{-i} n^{\alpha h(1 - \frac{i}{\log_2 n})} &\leq 4^{-i} n^{\alpha h(1 - \frac{j}{\log_2 n}) - \frac{\alpha (4/3 - \Omega(1))}{\log_2 n}} \leq 4^{-i} n^{\alpha h(1 - \frac{j}{\log_2 n})} 2^{-\alpha (j-i) (4/3 - \Omega(1))} \\
&\leq 4^{-i} n^{\alpha h(1 - \frac{j}{\log_2 n})} 2^{-3/2 (j-i) (4/3 - \Omega(1))} \qquad \text{as $\alpha > 3/2$} \\
&= 4^{-j} n^{\alpha h(1 - \frac{j}{\log_2 n})} (1 - \Omega(1))^{j - i}
\end{align*}

So $\sum_{i=0}^j 4^{-i} n^{\alpha h( 1 - \frac{i}{\log_2 n})} \leq O \Bigl( 4^{-j} n^{\alpha h(1 - \frac{j}{\log_2 n})} \Bigr)$. This gives us our final estimate

\begin{align*}
\bE[ \text{ \# pairs of marked nodes at depth $k$} ] &\leq  4^{2 k - j} \bE_{\mathcal{CP}(G,C)} [e^{-\alpha T_k}]) + O( 4^{2 k - j}  n^{\alpha h(1 - \frac{j}{\log_2 n})}  \bE_{\mathcal{CP}(G,C)} e^{-2 \alpha T_k} ) \\
&= O \Bigl( \bE_{\mathcal{CP}(G,C)} [ 4^{2 k - j} e^{-\alpha T_k} + n^{\alpha h(1 - \frac{j}{\log_2 n})} e^{-2 \alpha T_k} ] \Bigr)
\end{align*}
\end{proof}

\begin{lemma}
\label{Arcalemma2}
Suppose that $j,k$ are integers in the range $0 \leq j \leq k \leq H$.  Let $\psi > 0$ be any sufficiently small constant. Then the RCA with parameter $\alpha^*_{\text{max}} $ finds the target cut $C$ with probability at least
\begin{align*}
&\bP( \text{RCA tree includes $C$} ) \geq \Omega \Bigl( \bE_{\mathcal{CP}(C,G)} \Bigl[ 4^{\alpha k} n^{2 - 2 \alpha} e^{-\alpha T_k} - \psi 4^{2 \alpha k - j} n^{4 - 4 \alpha} \bigl( e^{-\alpha T_k} + n^{\alpha h(1 - \frac{j}{\log_2 n})} e^{-2 \alpha T_k} \bigr) \Bigr] \Bigr)
\end{align*}
\end{lemma}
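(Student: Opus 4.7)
The plan is to apply a Paley-Zygmund (second-moment) argument to a family of indicator variables. For each depth-$k$ node $v$ of the RCA tree, let $X_v$ be the indicator that $v$ is marked (in the sense of Section~\ref{rcasec2}) and the RCA subtree rooted at $v$ outputs the target cut $C$ at some leaf, and set $X = \sum_v X_v$. Because the markings are post-hoc random decorations that do not alter the actual execution of the RCA, the RCA finds $C$ whenever some $X_v = 1$, so it suffices to prove the lower bound for $\bP(X \geq 1)$. I will lower-bound $\bE[X]$ to recover the first term in the claim, upper-bound $\bE[X(X-1)]$ to recover the subtracted term, and then invoke Paley-Zygmund in the form $\bP(X \geq 1) \geq \bE[X]^2/\bE[X^2] = \bE[X]^2/(\bE[X] + \bE[X(X-1)])$. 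A brief algebraic check (splitting on whether $\bE[X(X-1)] \lessgtr \bE[X]/\psi$) confirms that this last quantity is $\Omega(\bE[X] - \psi\,\bE[X(X-1)])$ for any sufficiently small constant $\psi > 0$, with the hidden constant depending on $\psi$.

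For $\bE[X]$, condition on the subgraph $G^v$ at depth $k$. The marking of $v$ and the execution of the RCA subtree rooted at $v$ are independent since they use disjoint random bits. Whenever $C \in G^v$ (necessary for $v$ to be marked), Proposition~\ref{rca-perf2} applied to $G^v$ (which has $n\,2^{-k}$ vertices) gives $\bP(\text{sub}_v \text{ outputs } C \mid G^v) \geq \Omega((n\,2^{-k})^{2-2\alpha})$. Combined with Proposition~\ref{mark-node-prop1} for the marking probability, this yields
\[
\bE[X_v] \geq \Omega\bigl((n\,2^{-k})^{2-2\alpha}\bigr)\cdot \Omega\bigl(\bE_{\mathcal{CP}(G,C)}[e^{-\alpha T_k}]\bigr).
\]
Summing over the $4^k$ depth-$k$ nodes and using the identity $4^k (n\,2^{-k})^{2-2\alpha} = 4^{\alpha k} n^{2-2\alpha}$ reproduces the first term of the lemma exactly.

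For a pair of distinct depth-$k$ nodes $v, v'$ with nearest common ancestor at some depth $j'$, the two subtrees use disjoint random coins and so are conditionally independent given $G^v, G^{v'}$, as are the two markings. Using a matching upper bound $\bP(\text{sub}_v \text{ outputs } C \mid G^v) \leq O((n\,2^{-k})^{2-2\alpha})$ gives $\bE[X_v X_{v'}] \leq O((n\,2^{-k})^{4-4\alpha})\,\bP(v, v' \text{ both marked})$. Summing by common-ancestor depth exactly as in the proof of Proposition~\ref{mark-node-prop3} (and using the pair bounds from Proposition~\ref{mark-node-prop2}) then produces
\[
\bE[X(X-1)] \leq O\bigl(4^{2\alpha k - j}\, n^{4-4\alpha}\,\bE_{\mathcal{CP}(G,C)}[e^{-\alpha T_k} + n^{\alpha h(1-j/\log_2 n)}\,e^{-2\alpha T_k}]\bigr),
\]
which is precisely the subtracted term, and applying Paley-Zygmund as above finishes the proof. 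The main obstacle is justifying the upper bound $\bP(\text{sub}_v \text{ outputs } C \mid G^v) \leq O((n\,2^{-k})^{2-2\alpha})$, since Proposition~\ref{rca-perf2} only supplies the matching lower bound. I expect to establish it by a separate Markov-type first-moment argument on the expected number of leaves of the RCA subtree that output $C$ (of which there are $(n\,2^{-k})^2$), combined with an upper bound on the single-leaf Contraction-Algorithm success probability obtained from Corollary~\ref{Akillmanylemma1} and the upper bound in Lemma~\ref{Aslemma}, together with the $O(1)$ selection probability at the $\lceil 2\alpha^*_{\text{max}}\rceil$-vertex base graph; everything else is bookkeeping.
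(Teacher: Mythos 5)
Your overall skeleton (first moment of the marked-and-successful nodes recovers the first term, pair counts via Propositions~\ref{mark-node-prop2} and \ref{mark-node-prop3} recover the subtracted term, then a second-moment/Bonferroni step) is the same as the paper's, and the Paley--Zygmund algebra at the end is fine. The gap is the step you yourself flag: the upper bound $\bP(\text{subtree of } v \text{ outputs } C \mid G^v) \leq O((n2^{-k})^{2-2\alpha})$ is simply not true for an arbitrary contraction-subgraph $G^v$, and no Markov-type first-moment argument can establish it. The quantity $(n2^{-k})^{2-2\alpha}$ is a worst-case \emph{lower} bound on the Contraction Algorithm's success probability, not an upper bound: in a multigraph where the edges outside $C$ have huge multiplicity (e.g.\ two dense blobs joined by the $\alpha c$ edges of $C$), every $M_r \gg \alpha c$, so $\prod_r (1-|C|/M_r) \approx 1$, the cut survives to the base graph with probability $\Omega(1)$, and the final uniform selection costs only $2^{1-\lceil 2\alpha^*_{\text{max}}\rceil} = \Omega(1)$; hence a single leaf, let alone the $\approx (n2^{-k})^2$ leaves of the subtree, outputs $C$ with probability $\Omega(1) \gg (n2^{-k})^{2-2\alpha}$. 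Your proposed ingredients (Corollary~\ref{Akillmanylemma1} plus the upper bound of Lemma~\ref{Aslemma}) only give $e^{-\alpha S}$ with $S$ possibly arbitrarily close to $0$, so they cannot supply the needed decay; worse, conditioning on $v$ being marked biases toward exactly those subgraphs in which $C$ survives easily. Without this upper bound, $\bE[X_v X_{v'}]$ can be as large as $\bP(v,v' \text{ marked})$ and the $n^{4-4\alpha}$ factor in your second-moment term is lost, so the stated bound does not follow.

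The paper's proof avoids ever needing an upper bound on the per-subtree success probability by an attenuation (thinning) device: letting $q_v \geq \phi'(n2^{-k})^{2-2\alpha}$ be the true conditional success probability of the subtree below a marked node $v$ (only the lower bound from Proposition~\ref{rca-perf2} is used), it independently \emph{discards} each marked node's subtree with just the right probability so that every marked node becomes a ``winner'' with probability \emph{exactly} $q = \phi(n2^{-k})^{2-2\alpha}$, independently across nodes. Discarding only decreases the chance of finding $C$, so monotonicity is preserved, and now both moments factor cleanly: $\bE[\#\text{winners}] = q\,\bE[\#\text{marked}]$ and $\bE[\#\text{pairs of winners}] = q^2\,\bE[\#\text{pairs of marked}]$, after which inclusion--exclusion and the choice $\phi = \min(\phi', \psi c_1/c_2)$ give the stated bound. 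If you insert this thinning step into your argument (defining $X_v$ as ``marked, subtree finds $C$, and not discarded''), your Paley--Zygmund variant goes through; as written, the proposal has a genuine gap at the claimed upper bound.
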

\begin{proof}
Each marked node at depth $k$ of the RCA tree corresponds to a subgraph with $n 2^{-k}$ vertices which contains cut $C$. Hence, if we execute the remaining stages of the RCA, by Proposition~\ref{rca-perf2} it will generate some leaf node containing $C$ with probability $\Omega( (n 2^{-k})^{2 - 2 \alpha})$. So each such marked node $v$ has some independent probability $q_v \geq \phi' (n 2^{-k})^{2 - 2 \alpha}$ that a leaf node below it contains cut $C$; here $\phi'$ is some constant.

Let $\phi$ be any constant such that $\phi \in (0,\phi']$ and let $q = \phi (n 2^{-k})^{2 - 2 \alpha}$. Now we apply an additional randomization step: we \emph{discard} $v$ and all its subtree below $v$ with probability $q_v - q$. This is a valid probability as $q_v \geq q$. Also, it is easy to see now that for any marked node, the probability that there is a non-discarded leaf node containing $C$ is \emph{exactly} equal to $q$.

Note that we have applied two ``attenuation'' steps which can only reduce the probability of finding $C$ among the nodes of the RCA-tree. First, some unmarked nodes may have a leaf node containing $C$, and some marked nodes will contain $C$ but be discarded. 
If a marked node $v$ at depth $k$ contains some leaf node containing $C$, and $v$ is not discarded, we say that $v$ is a \emph{winner}. Each marked node is a winner with probability exactly equal to $q$.

By Proposition~\ref{mark-node-prop1}, the expected number of marked leaf nodes is $\Omega(4^k \bE_{\mathcal{CP}(G,C)}[e^{-\alpha T_k}])$ and so the expected number of winners is $\Omega(q 4^k \bE_{\mathcal{CP}(G,C)}[e^{-\alpha T_k}])$.

 By Proposition~\ref{mark-node-prop3}, the expected number of pairs of marked leaf nodes is $O \Bigl( \bE_{\mathcal{CP}(G,C)} \Bigl[ 4^{2 k - j} e^{-\alpha T_k} + n^{\alpha h(1 - \frac{j}{\log_2 n})} e^{-2 \alpha T_k} \Bigr] \Bigr)$ and so the expected number of pairs of winners is $O \Bigl( q^2 4^{2 k - j} \bE_{\mathcal{CP}(G,C)} \Bigl[ e^{-\alpha T_k} + n^{\alpha h(1 - \frac{j}{\log_2 n})} e^{-2 \alpha T_k} \Bigr] \Bigr)$.

By Inclusion-Exclusion, the probability that there is at least one winner can be lower bounded by:
\begin{align*}
\bP( \text{RCA tree includes $C$} ) &\geq \bP( \text{There is a winner at depth $k$}) \\
&\geq \Omega \Bigl( q 4^k \bE_{\mathcal{CP}(G,C)}[e^{-\alpha T_k}]) - O\Bigl( q^2 4^{2 k - j} \bE_{\mathcal{CP}(G,C)} \Bigl[ e^{-\alpha T_k} + n^{\alpha h(1 - \frac{j}{\log_2 n})} e^{-2 \alpha T_k} \Bigr]\Bigr) \Bigr) \\
&= \bE_{\mathcal{CP}(G,C)} \Bigl[ c_1 \phi 4^{\alpha k} n^{2 - 2 \alpha} e^{-\alpha T_k} - c_2 \phi^2 4^{2 \alpha k - j} n^{4 - 4 \alpha} \bigl( e^{-\alpha T_k} + n^{\alpha h(1 - \frac{j}{\log_2 n})} e^{-2 \alpha T_k} \bigr) \Bigr] \\
& \qquad \text{where $c_1, c_2$ are constants}
\end{align*}

Set $\phi = \min(\phi', \frac{\psi c_1}{c_2})$ and we lower-bound this expression as
$$
 \Omega \Bigl( \bE_{\mathcal{CP}(C,G)} \Bigl[ 4^{\alpha k} n^{2 - 2 \alpha} e^{-\alpha T_k} - \psi 4^{2 \alpha k - j} n^{4 - 4 \alpha} \bigl( e^{-\alpha T_k} + n^{\alpha h(1 - \frac{j}{\log_2 n})} e^{-2 \alpha T_k} \bigr) \Bigr] \Bigr)
$$
\end{proof}

\begin{lemma}
\label{Arcalemma3}
Let $y \in [0,1]$, and define $t = \max(h(y), \bar h(y/\alpha))$. Then the RCA with parameter $\alpha_{\text{max}}^*$ finds the target cut $C$ with probability at least
$$
\bP(\text{RCA succeeds}) \geq n^{2 - \alpha t - 2 y - o(1)}
$$
\end{lemma}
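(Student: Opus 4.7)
The plan is to instantiate Lemma~\ref{Arcalemma2} with carefully chosen $k$, $j$, and $B$, and verify that the positive term dominates both negative terms. Set $L = \log_2 n$ and pick
$$
k = \lfloor L(1 - y/\alpha) \rfloor, \qquad j = \lfloor L(1-y) \rfloor, \qquad B = t \log n + c \log\log n,
$$
for a sufficiently large constant $c$. The hypotheses $\alpha \geq 3/2 > 1$ and $y \in [0,1]$ guarantee $0 \leq j \leq k$; the boundary case $y \to 0$, where $k$ would formally exceed $H$, is handled by capping $k$ at $H$, which shifts the effective $y$ by $O(1/\log n) = o(1)$. A short calculation gives
$$
4^{\alpha k} n^{2-2\alpha} = n^{2-2y + o(1)}, \qquad 4^{2\alpha k - j} n^{4 - 4\alpha} = n^{2 - 2y + o(1)}, \qquad n^{\alpha h(1 - j/L)} = n^{\alpha h(y)},
$$
so the first negative term in Lemma~\ref{Arcalemma2} matches the positive term's prefactor up to a factor of $\psi$, and the second negative term carries an extra $n^{\alpha h(y)}$.

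The key estimates are a lower bound on $\mu := \bE_{\mathcal{CP}(G,C)}[e^{-\alpha T_k}]$ and an upper bound on $\nu := \bE_{\mathcal{CP}(G,C)}[e^{-2\alpha T_k}]$. Since $T_k \geq B$ by construction, $\nu \leq e^{-2\alpha B} = n^{-2\alpha t - o(1)}$ immediately. For $\mu$, Theorem~\ref{Acutboundthm} gives $\bE[S_{n 2^{-k}}] \leq \bar h(y/\alpha) \log n + O(\log\log n) \leq t \log n + O(\log\log n)$, using $t \geq \bar h(y/\alpha)$. For $c$ larger than the hidden constant, Markov's inequality applied to the nonnegative variable $S_{n 2^{-k}}$ yields
$$
\bP\bigl( S_{n 2^{-k}} \leq B \bigr) \geq 1 - \frac{t \log n + O(\log\log n)}{t \log n + c \log\log n} = \Omega\!\left( \frac{\log\log n}{\log n}\right).
$$
On this event $T_k = B$, giving $\mu \geq \Omega(\log\log n/\log n) \cdot e^{-\alpha B} \geq n^{-\alpha t - o(1)}$.

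Let $P$, $N_1$, $N_2$ denote the positive term and the two negative terms in Lemma~\ref{Arcalemma2}. We then have $P \geq n^{2-2y-\alpha t - o(1)}$, $N_1 \leq \psi P \cdot n^{o(1)}$, and
$$
N_2 \leq \psi n^{2-2y+\alpha h(y)} \nu \leq \psi P \cdot n^{\alpha(h(y) - t)} \cdot (\log n)^{-\Omega(1)}.
$$
Since $h(y) \leq t$ by definition of $t = \max(\cdot,\cdot)$ and $\alpha c > 1$, we conclude $N_2 = o(P)$. Taking $\psi$ a sufficiently small constant then gives $P - N_1 - N_2 = \Omega(P)$, so Lemma~\ref{Arcalemma2} yields $\bP(\text{RCA succeeds}) \geq \Omega(P) \geq n^{2 - \alpha t - 2y - o(1)}$.

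The main technical obstacle is calibrating $B$: a choice exactly at the mean $B = t \log n$ leaves $\bP(S_{n 2^{-k}} \leq B)$ possibly near zero, while a loose $B = C t \log n$ with $C > 1$ would inflate the $e^{-\alpha B}$ exponent to $-\alpha C t$ rather than $-\alpha t$. Padding by $c \log\log n$ uses precisely the $O(\log\log n)$ slack in Theorem~\ref{Acutboundthm} to give $\Omega(\log\log n/\log n)$ margin in Markov while only contributing an $n^{-o(1)}$ factor to $\mu$. A secondary subtlety is the mismatch between the ``$h(y)$'' factor in the negative term, which naturally wants $j = L(1-y)$, and the ``$-2y$'' in the positive exponent, which wants $k = L(1-y/\alpha)$; these choices are compatible (i.e.\ $j \leq k$) only because $\alpha \geq 3/2 > 1$.
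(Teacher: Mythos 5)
Your proposal is correct, and it follows the paper's skeleton up to the final analytic step: you choose essentially the same $j \approx (1-y)\log_2 n$ and $k \approx (1-y/\alpha)\log_2 n$ (capped at $H$) and feed them into Lemma~\ref{Arcalemma2}, exactly as the paper does. Where you genuinely diverge is in calibrating the threshold $B$ and lower-bounding $\bE_{\mathcal{CP}(G,C)}[e^{-\alpha T_k}]$. The paper sets $B = t\log n$ exactly, kills the term $n^{\alpha t}e^{-2\alpha T_k}$ \emph{pointwise} inside the expectation via $n^{\alpha t}e^{-\alpha T_k} \leq n^{\alpha t}e^{-\alpha B} = 1$, and then proves $\bE[e^{-\alpha T_k}] \geq n^{-\alpha t - o(1)}$ by a two-case analysis on $r = \bP(S > B)$, using $\bE[S \mid S > B] \leq \bE[S]/r$ together with Jensen's inequality in the large-$r$ case. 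You instead pad $B$ by $c\log\log n$, which lets a single Markov inequality give $\bP(S \leq B) \geq \Omega(\log\log n/\log n)$ and hence $\mu \geq n^{-\alpha t - o(1)}$ directly, and you control the second negative term \emph{in expectation} via $\nu \leq e^{-2\alpha B}$, using the extra $(\log n)^{-\alpha c}$ gained from the padding (with $\alpha c > 1$) to beat the $\log n/\log\log n$ loss in $\mu$. This trades the paper's case analysis for a slightly more delicate comparison of $\nu$ against $\mu$; both routes work, and yours is arguably a bit simpler to verify.

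Two small points to tighten. First, in the final accounting you write $N_1 \leq \psi P\, n^{o(1)}$ and then absorb it with a constant $\psi$; as literally stated this does not follow, since $n^{o(1)}$ is unbounded. What you actually need (and what your earlier displayed identities already give, because $|k - (1-y/\alpha)\log_2 n| \leq O(1)$, $|j - (1-y)\log_2 n| \leq O(1)$ and $\alpha = O(1)$) is that $4^{2\alpha k - j}n^{4-4\alpha} \leq O(1)\cdot 4^{\alpha k}n^{2-2\alpha}$, i.e.\ the prefactors agree up to a \emph{constant} factor; then a sufficiently small constant $\psi$ suffices, and the same constant-factor precision should be used in the $N_2$ comparison. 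Second, your padding constant collides notationally with $c$, the min-cut size used throughout the paper; rename it to avoid confusion.
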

\begin{proof}
Let $x = y/\alpha$, so that $t = \max(h(y), \bar h(x))$.  We set \begin{align*}
j &= \min \bigl( H,  \lfloor (1 - y) \log_2 n  \rfloor \bigr) \\
k &= \min \bigl( H,  \lfloor (1 - x)  \log_2 n \rfloor \bigr)
\end{align*}

Clearly $j,k$ are integers and $0 \leq j \leq k \leq H$. Also, as $H \geq \log_2 n - O(1)$, we have that $| j - (1-y) \log_2 n | \leq O(1)$ and $|k - (1 - x) \log_2 n| \leq O(1)$.

We now observe that $h(1 - \frac{j}{\log_2 n}) = h( y \pm O(1/\log n) )$; as the derivative of $h$ is bounded by constants (Proposition~\ref{h-deriv-prop}), this is at most $t + O(1/\log n)$. Similarly, the derivative of $\bar h$ is bounded by constants, so $\bar h (1 - \frac{k}{\log_2 n}) \leq t + O(1/\log n)$.

We now apply Lemma~\ref{Arcalemma2} to obtain:
$$
\bP( \text{RCA tree includes $C$} ) \geq \Omega \Bigl( \bE_{\mathcal{CP}(G,C)} \Bigl[ 4^{\alpha k} n^{2 - 2 \alpha} e^{-\alpha T_k} -\psi 4^{2 \alpha k - j} n^{4 - 4 \alpha} \bigl( e^{-\alpha T_k} + n^{\alpha h(1 - \frac{j}{\log_2 n})} e^{-2 \alpha T_k} \bigr) \bigr] \Bigr)
$$
where $\psi$ is a constant parameter to be determined. Using the fact that $\alpha = O(1)$ and $|k - (1-x) \log_2 n| \leq O(1)$, we see that the terms can be bounded as
\begin{align*}
4^{\alpha k} n^{2 - 2 \alpha} &\geq \Omega(n^{2 - 2 y}) \\
4^{2 \alpha k - j} n^{4 - 4 \alpha} &\leq O(n^{2 - 2 y}) \\
n^{h(1 - \frac{j}{\log_2 n})} &\leq O(n^t)
\end{align*}

So
$$
\bP( \text{RCA tree includes $C$} ) \geq n^{2 - 2 y}  \Omega \Bigl( \bE_{\mathcal{CP}(G,C)} \Bigl[ \Omega(e^{-\alpha T_k}) - \psi O \bigl( e^{-\alpha T_k} + n^{\alpha t} e^{-2 \alpha T_k} \bigr) \Bigr] \Bigr)
$$

By choosing $\psi$ to be a sufficiently small constant, we obtain:

\begin{align*}
\bP( \text{RCA tree includes $C$} ) &\geq n^{2 - 2 y} \Omega  \Bigl( \bE_{\mathcal{CP}(G,C)} \Bigl[ e^{-\alpha T_k} - \tfrac{1}{3} \bigl( e^{-\alpha T_k} + n^{\alpha t} e^{-2 \alpha T_k} \bigr) \Bigr] \Bigr) \\
&= n^{2 - 2 y} \Omega  \Bigl( \bE_{\mathcal{CP}(G,C)} \Bigl[ e^{-\alpha T_k} (1 - \tfrac{1}{2} n^{\alpha t} e^{-\alpha T_k}) \Bigr] \Bigr)
\end{align*}

At this point, we are ready to specify the threshold value $B = t \log n$; recall that $T_k = \max(S_{n 2^{-k}},B)$. By definition of $T_k$ we have that $n^{\alpha t} e^{-\alpha T_k} \leq n^{\alpha t} e^{-\alpha B} = 1$ and so
\begin{align*}
\bP( \text{RCA tree includes $C$}) &\geq n^{2 -2 y} \Omega \Bigl( \bE_{\mathcal{CP}(G,C)}[e^{-\alpha T_k} (1 - \tfrac{1}{2}) \Bigr) \\
&= n^{2 - 2 y} \Omega ( \bE_{\mathcal{CP}(G,C)}[e^{-\alpha T_k}])
\end{align*}

We now claim that 
\begin{equation}
\label{goal-eqn1}
\bE_{\mathcal{CP}(G,C)} [e^{-\alpha T_k}] \geq n^{-\alpha t - o(1)}
\end{equation}
which will give us our claimed result. 

Let us write $S = S_{n 2^{-k}}$, and so $T_k = \max(S,B)$. Let $r = \bP(S > B)$. We show (\ref{goal-eqn1}) by considering separate cases for the size of $r$.

\textbf{Case I: \boldmath$r \leq \frac{1 + \alpha t \log n}{2 + \alpha t \log n}$\unboldmath.} In this case,
\begin{align*}
\bE_{\mathcal{CP}(G,C)} [e^{-\alpha T_k}] &\geq \bP(S \leq B) \times \bE_{\mathcal{CP}(G,C)} [e^{-\alpha T_k} \mid S \leq B] \\
&= (1-r) e^{-\alpha B} \\
& \geq (1 - \frac{1 + \alpha t \log n}{2 + \alpha t \log n})  e^{-\alpha B} \\
&= \frac{n^{-\alpha t}}{2 + \alpha t \log n} \\
&\geq n^{-\alpha t - o(1)} \qquad \text{as $\alpha, t \leq O(1)$}
\end{align*}

\textbf{Case II: \boldmath$r > \frac{1 + \alpha t \log n}{2 + \alpha t \log n}$\unboldmath.} Let $s = \bE_{\mathcal{CP}(G,C)}[S]$. Observe that $\bE_{\mathcal{CP}(G,C)}[S \mid S \geq B] \leq \bE_{\mathcal{CP}(G,C)}[S]/P(S \geq B) = s/r$. So
\begin{align*}
\bE_{\mathcal{CP}(G,C)} [e^{-\alpha T_k}] &\geq \bE[e^{-\alpha T_k} \mid S > B] \bP(S > B) \\
&= \bE[e^{-\alpha S} \mid S > B] \bP(S > B) \qquad \text{$S = T_k$ for $S > B$} \\
&\geq  e^{-\alpha\bE[S \mid S > B]} \bP(S > B) \qquad \text{by Jensen's inequality} \\
&\geq r e^{-\alpha s/r} \geq \tfrac{1}{2} e^{-\alpha s \times \frac{2 + \alpha t \log n}{1 + \alpha t \log n}}
\end{align*}

We now estimate $s$ using Theorem~\ref{Acutboundthm} as:
\begin{align*}
s = \bE_{\mathcal{CP}(G,C)}[S] &\leq \bar h(\frac{\log(n 2^{-k})}{\log n}) \log n + O(\log \log n) \\
&\leq \bar h(1 - \frac{k}{\log_2 n}) \log n + O(\log \log n) \\
&\leq \bar h(x) \log n + O(1) +  O(\log \log n) \\
&\leq t \log n + O(\log \log n)
\end{align*}

So 
\begin{align*}
\bE_{\mathcal{CP}(G,C)} [e^{-\alpha T_k}] &\geq \tfrac{1}{2} \exp(-\alpha \bigl( t \log n + O(\log \log n) \bigr)  \times \frac{2 + \alpha t \log n}{1 + \alpha t \log n}) \\
&\geq n^{-o(1)} \exp(-\alpha  t \log n  \times \frac{2 + \alpha t \log n}{1 + \alpha t \log n}) \qquad \text{as $\alpha = O(1)$} \\
&= n^{-o(1)} e^{1 - \alpha t \log n - \frac{2}{2 + \alpha t \log n}} \geq n^{-o(1)} n^{-\alpha t} 
\end{align*}

In either case, we have shown (\ref{goal-eqn1}).
\end{proof}

\begin{theorem}
\label{Arcathm3}
The RCA with parameter $\alpha^*_{\text{max}}$ finds any target $\alpha^*$-cut $C$ with probability at least $n^{-1-o(1)} \epsilon^{1.98}$. 
\end{theorem}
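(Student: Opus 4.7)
The proof will follow directly from Lemma~\ref{Arcalemma3} once we exhibit an appropriate choice of parameter $y$. Proposition~\ref{Arcathm2} has already disposed of the cases $\alpha \leq 3/2$, $\rho \geq 2.1$, and $\beta \geq 1$, so we may assume the simplifying conditions $3/2 \leq \alpha \leq \alpha^* \leq \alpha^*_{\text{max}} = O(1)$, $\rho \leq 2.1$, $\beta \in [0,1]$, and $\delta \geq \max(\beta, \delta_0) > 0$. With $t = \max(h(y), \bar h(y/\alpha))$, Lemma~\ref{Arcalemma3} delivers $\bP(\text{RCA succeeds}) \geq n^{2 - \alpha t - 2y - o(1)}$, and since we want $\bP \geq n^{-1 - 1.98\rho - o(1)}$ the target reduces to exhibiting some $y \in [0,1]$ for which
$$
\alpha \max\bigl(h(y),\ \bar h(y/\alpha)\bigr) + 2y \ \leq \ 3 + 1.98\,\rho + o(1).
$$

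My plan is to choose $y$ as the crossover point $y^\star$ at which $h(y^\star) = \bar h(y^\star/\alpha)$. Since $h$ is continuous, strictly decreasing (Proposition~\ref{h-deriv-prop}), and satisfies $h(1) = 0 < \bar h(1/\alpha)$, while $h(0)$ is typically larger than $\bar h(0)$, the intermediate value theorem provides some $y^\star \in [0,1)$ where the two curves meet (and if $h(0) \le \bar h(0)$ already, one just takes $y^\star = 0$). At this crossover the $\max$ collapses and the constraint becomes $\alpha \bar h(y^\star/\alpha) + 2 y^\star \leq 3 + 1.98\rho + o(1)$. I would then substitute the closed forms
$$
h(y) = 2(2+\delta)\bigl(\log(3-\beta+\delta) - \log(2-\beta+\delta+y)\bigr), \qquad \bar h(x) = \frac{(\delta+2)(1-x)(5 - 2\beta + 2\delta + x)}{(3-\beta+\delta)^2},
$$
together with the bound $\alpha \leq \frac{(3-\beta+\delta)^2(2-\beta+\delta+\rho)}{(2-\beta+\delta)^2(2+\delta)} + o(1)$ from Lemma~\ref{Aalphastarlemma1}, reducing the problem to a purely algebraic inequality in $\beta, \delta, \rho$, and $y$ over the compact semi-algebraic region $\beta \in [0,1]$, $\delta \geq \beta$, $\rho \in [0, 2.1]$, with $y = y^\star$ defined implicitly. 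As in earlier portions of the paper (cf.\ the proof of Lemma~\ref{Aalphastarlemma1}), such first-order sentences in the theory of real-closed fields can be verified mechanically using Mathematica's \emph{Reduce}.

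The main obstacle is the geometry of the worst-case parameter triple $(\beta,\delta,\rho)$: because $h$ and $\bar h$ have rather different functional shapes in $y$, the crossover $y^\star$ does not admit a clean closed form, and the optimization over $(\beta,\delta)$ near the boundary $\rho \to 2.1^{-}$ is delicate, as there is very little slack in the constant $1.98$ there (indeed this is precisely the threshold Case~III of Proposition~\ref{Arcathm2} had to stop short of). A secondary technical point is to check that $y^\star \in [0,1]$ and $y^\star/\alpha \in [0,1]$ so that both $h$ and $\bar h$ are evaluated in their intended domains; in the edge case where the crossover would exceed $1$, one simply takes $y = 1$, at which $h(1) = 0$ automatically and the requirement degenerates to $\alpha\,\bar h(1/\alpha) + 2 \leq 3 + 1.98\rho$, which can be verified analogously using Lemma~\ref{Aalphastarlemma1} and the explicit form of $\bar h$. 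Once these algebraic verifications are in place, the lemma yields the bound $n^{-1 - o(1)}\epsilon^{1.98}$ claimed in the theorem.
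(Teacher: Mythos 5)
Your overall skeleton is the same as the paper's: dispose of the easy regimes via Proposition~\ref{Arcathm2}, invoke Lemma~\ref{Arcalemma3}, and then pick a good $y$ --- indeed the paper's Appendix~\ref{bound-appendix} also chooses $y$ (approximately) at the crossover $h(y) = \bar h(y/a_{\text{max}})$ after replacing $\alpha$ by the upper bound from Lemma~\ref{Aalphastarlemma1}. The gap is in how you propose to certify the resulting inequality. The statement you want to hand to \emph{Reduce} is not a first-order sentence about real-closed fields: $h$ contains logarithms, and your $y^\star$ is defined by the transcendental equation $h(y^\star) = \bar h(y^\star/\alpha)$, so it cannot be eliminated within the Tarski framework. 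The paper's footnote about mechanical verification is explicitly restricted to rational functions (such as $\bar h$ and the expressions appearing in the proof of Proposition~\ref{Arcathm2}); for the part of the argument where $h$ enters, the paper does \emph{not} use quantifier elimination, but instead proves Proposition~\ref{dprop7} by an explicit finite grid over $(\beta,\delta,\rho)$ with step $0.001$, using monotonicity of $\bar h$ in $\beta$ and $\delta$ to obtain interval upper bounds, and treats $\delta \geq 3/2$ separately with $y = 1$ (where only the rational function $\bar h$ survives, so symbolic verification does apply). As written, your verification step therefore does not go through; you would need either rational upper and lower envelopes for $h$ (to stay inside the decidable theory), or a grid/interval-arithmetic computation of the kind the paper actually carries out.

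Two smaller points. First, substituting the upper bound $\alpha \leq a + o(1)$ into $2y + \alpha \max(h(y), \bar h(y/\alpha))$ requires knowing that this expression is monotone in $\alpha$ (the nontrivial part being that $x \bar h(y/x)$ is increasing in $x$); this is exactly the paper's Proposition~\ref{dprop3}, and you should either prove it or keep $\alpha$ as a constrained variable in the verification. Second, your region $\{\beta \in [0,1],\ \delta \geq \beta,\ \rho \in [0,2.1]\}$ is not compact ($\delta$ is unbounded), so any exhaustive numerical check needs a separate large-$\delta$ case (the paper's $\delta \geq 3/2$, $y=1$ case) before restricting to a box. The inequality you are after is true --- the paper in fact establishes the stronger bound $2y + T(y,\alpha) \leq 2.95 + 1.5\rho + o(1)$ (Proposition~\ref{dprop99}), giving success probability $n^{-1 - 1.5\rho - o(1)} \geq n^{-1-o(1)} \epsilon^{1.98}$ --- but the justification of this quantitative step needs to be repaired along the lines above.
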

\begin{proof}
If the simplifying assumptions do not hold, then this is already covered by Proposition~\ref{Arcathm2}.

By Lemma~\ref{Arcalemma3}, $C$ is found with probability at least $n^{2 - \alpha \max(h(y), \bar h(y/\alpha)) - 2 y - o(1)}$ in each iteration of the RCA, for any $y \in [0,1]$. By Proposition~\ref{dprop99}, there is $y \in [0,1]$ so that this at least $n^{2 - 3 - 1.5 \rho - o(1)} = n^{-1 - 1.5 \rho - o(1)}$.
\end{proof}

We defer the proof of Proposition~\ref{dprop99} to  Appendix~\ref{bound-appendix}. It is simple conceptually but it is an extremely tedious exercise, based on computer checking of a very large number of cases.

\subsection{Further applications of the Recursive Contraction Algorithm}
\label{further-app-sec}
In this Section~\ref{further-app-sec}, we do not make any assumption on the size of $U(p)$.

The Recursive Contraction Algorithm provides a powerful approach to enumerate all the approximately-minimum cuts in a given graph $G$. Our reliability-estimation algorithm uses it in a very specific way based on bounds for the number of cuts of $G$. However, we can slightly improve the analysis of the Recursive Contraction Algorithm in other situations. Given a particular value of $\alpha$ and a given graph $G$ (with no information about the cut structure or reliability of $G$), how to enumerate the $\alpha$-cuts of $G$. The Recursive Contraction Algorithm was first introduced in \cite{karger-stein} to solve this problem. However, the original description of the RCA was not parametrized in the best way, leading to slightly sub-optimal running time. We improve this here.

\begin{theorem}
There is an algorithm to enumerate, with high probability, all $\alpha$-cuts of $G$ in time $O(n^{2 \alpha} \log  n)$.
\end{theorem}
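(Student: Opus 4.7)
The plan is to employ the Recursive Contraction Algorithm (RCA) with a branching factor tuned to the target $\alpha$, folding the many independent runs required by the naive approach of \cite{karger-stein} into a single recursive procedure. By Theorem~\ref{Acutsetthm1}, there are at most $n^{2\alpha}$ distinct $\alpha$-cuts, so it suffices to show each is found with probability $\geq 1 - n^{-2\alpha-1}$ and then union-bound.

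First I would recall that, by Theorem~\ref{rca-perf}, one execution of the standard RCA with parameter $\alpha$ finds any specific target $\alpha$-cut with probability $p = \Omega(2^{-\alpha} n^{2-2\alpha})$ in time $O(n^2 \log n)$. Repeating $T = \Theta(\alpha (\log n)/p) = \Theta(2^{\alpha} \alpha \cdot n^{2\alpha-2} \log n)$ times independently yields all $\alpha$-cuts with high probability, but at total cost $O(n^{2\alpha} \log^2 n)$. To shave a logarithmic factor, I would combine the $T$ runs into a single recursive procedure with a larger branching factor. The standard RCA uses branching factor $4$ over $O(\log n)$ levels, with each level doing $\Theta(n^2)$ work, so the per-run cost is $O(n^2 \log n)$. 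Increasing the branching factor to $B \geq 4$ changes the cost recurrence to $T'(n) = B \cdot T'(n/2) + O(n^2)$, whose solution is $T'(n) = O(n^{\log_2 B} + n^2 \log n)$, with the leaves dominating once $B > 4$. Choosing $B$ so that the $B^H$ leaves (where $H = \lfloor \log_2(n/\lceil 2\alpha\rceil)\rfloor$) match the required number of samples, namely $B^H = \Theta(n^{2\alpha-2} \log n)$, gives total cost $O(n^{2\alpha} \log n)$.

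The main obstacle is verifying that this modified recursion preserves the per-cut success probability of Theorem~\ref{rca-perf}. The plan is to extend the inductive survival analysis of \cite{karger-stein}: for a fixed target $\alpha$-cut $C$, bound the expected number of leaves where $C$ survives (which should be $\Omega(\log n)$ by the choice of $B$), and then apply a second-moment or branching-process argument to show concentration. The key technical delicacy is controlling the dependency between sibling branches of the recursion tree, which share random choices up to their nearest common ancestor; this is handled by an inclusion-exclusion estimate in the spirit of Sections~\ref{rcasec2}--\ref{rcasec3}, but considerably simpler here because no graph-reliability parameters enter. Combining the per-cut bound with a union bound over the at most $n^{2\alpha}$ $\alpha$-cuts gives the claimed high-probability enumeration in time $O(n^{2\alpha} \log n)$.
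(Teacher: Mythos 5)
Your high-level plan---reparametrize the Recursive Contraction Algorithm so that the leaves dominate the cost and only one logarithmic factor is paid for collecting all cuts---is the same idea as the paper's, but two quantitative steps in your write-up fail. First, you match the number of leaves $B^H$ to the number of required \emph{runs} of the standard RCA, $\Theta(n^{2\alpha-2}\log n)$, while claiming the expected number of leaves retaining a fixed cut $C$ is $\Omega(\log n)$. A single leaf of your tree corresponds to one contraction path from $n$ down to $\lceil 2\alpha\rceil$ vertices, which retains $C$ with probability only about $n^{-2\alpha}e^{O(\alpha\log\alpha)}$ (Corollary~\ref{Asimplecorr}); the figure $n^{2-2\alpha}$ from Theorem~\ref{rca-perf} already includes the boost from the roughly $n^2$ leaves inside one standard RCA run. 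With your stated $B^H$, the expected number of surviving leaves is $\Theta(e^{O(\alpha\log\alpha)}\log n/n^2)\ll 1$ and one run costs only about $n^{2\alpha-2}\log n$, so the two halves of your accounting are off from each other by a factor $n^2$. Second, even after correcting to $B^H\approx n^{2\alpha}\log n$ (i.e.\ $B\approx 4^{\alpha}$ per halving), the induced branching process is essentially critical: surviving leaves cluster in subtrees, $\bE[\#\text{pairs}]$ exceeds $\bE[\#\text{survivors}]$ by a factor on the order of the depth $\log n$, so inclusion--exclusion or second-moment arguments cannot give per-run success probability better than $\Theta(1/\log n)$. Indeed no single run of any such recursion can find a worst-case cut with probability $1-n^{-2\alpha-1}$: on the cycle graph the target cut is already destroyed during the root's shared contraction phase with probability about $1-2^{-2\alpha}$. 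You are therefore forced to repeat the whole recursion $\Theta(\alpha\log n)$ times, and with the critical parametrization this lands back at $\Theta(n^{2\alpha}\log^2 n)$, no better than \cite{karger-stein}.

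The repair is what the paper's RCA2 does: keep the recursion strictly \emph{subcritical} but leaf-heavy in cost, with effective branching per halving strictly between $4$ and $4^{\alpha}$ (RCA2 uses branching factor $2$ with a more aggressive contraction schedule, effectively about $2^{5/2}$ per halving of the vertex count). Then one run costs $O(n^{5/2})$, and because for a subcritical process $\bP(\text{at least one survivor})=\Theta(\bE[\#\text{survivors}])$, it finds any fixed $\alpha$-cut with probability $n^{5/2-2\alpha}e^{\Omega(\alpha\log\alpha)}$; repeating about $n^{2\alpha-5/2}\log n$ times and applying the coupon-collector/union bound over the at most $n^{2\alpha}$ cuts pays only a single $\log n$. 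Note also that this subcritical window is empty for $\alpha$ near $1$ (and Theorem~\ref{rca-perf} itself requires $\alpha>1+\Omega(1)$), which is why the paper treats $\alpha<3/2$ separately via the cut data structure of \cite{lincut}; your proposal does not address that range.
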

\begin{proof}
First, suppose $\alpha < 3/2$. Then \cite{lincut} describes a data structure to represent all the $\alpha$-cuts in time $O(n^2)$ and to enumerate all the $\alpha$-cuts in time $O(n^2 \log  n)$.

Finally, suppose $3/2 \leq \alpha \leq \sqrt{n}$. We execute the following variant of the Recursive Contraction Algorithm, which we denote RCA2
\ttfamily
\begin{enumerate}
\item[1.] Randomly contract edges of $G$ until the resulting graph $G'$ has at most $\min(n^{-2/5}, \lceil 2 \alpha \rceil)$ vertices.
\item[2.] If $G'$ has $\lceil 2 \alpha \rceil$ vertices, output a random cut of $G$.
\item[3.] Otherwise, perform 2 independent executions of RCA2 on the graph $G'$.
\end{enumerate}
\rmfamily

A simple application of the Master Theorem for recurrences along with arguments from \cite{karger-stein} shows that this algorithm executes in time $O(n^{5/2})$. Also using arguments from \cite{karger-stein}, we can show that for any $\alpha$-cut $C$, and any execution of this algorithm, the cut is selected with probability $n^{5/2 - 2 \alpha} \exp(\Omega(\alpha \log \alpha))$.

Thus we can view the process of finding all $\alpha$-cuts as a Coupon Collector Problem. As there are at most $n^{2 \alpha}$ cuts, if we run $(2 \alpha \log n) \times n^{2 \alpha - 5/2} \exp(-\alpha \log \alpha) = n^{2 \alpha - 5/2} \log n$ independent executions of RCA2, then we find all cuts with high probability. This gives a total run time of $O(n^{2 \alpha} \log  n)$. 

\end{proof}

\section{Putting it all together}
\label{put-it-sec}
We may now put all the pieces of the algorithm together. Our basic plan is to use the cut-enumeration if $U(p) < n^{-K}$, and use Monte-Carlo sampling when $U(p) \geq n^{-K}$, where $K$ is some chosen parameter close to $2$. 

How do we determine which of these two methods to apply? At first, it would appear that this decision requires knowing $U(p)$, which is what we are trying to determine in the first place.
A simple way to make this decision is to run a preliminary Monte-Carlo sampling for $\phi n^{K}$ trials, where $\phi > 1$ is constant. If during any of these samples we observe the graph become disconnected, we will use Monte-Carlo sampling to estimate $U(p)$; otherwise we use cut-enumeration.

This achieves at least a gross discrimination between the two regimes:

\begin{proposition}
\label{choose-prop}
Suppose $U(p) \geq n^{-K}$. Then the probability of observing the graph become disconnected, is bounded from below by a constant which approaches $1$ for $\phi$ sufficiently large.

Suppose $U(p) \leq \phi' n^{-K}$, for $\phi'$ a constant. Then the probability of observing the graph become disconnected, is bounded from above by a constant which approaches $0$ for $\phi'$ sufficiently small.
\end{proposition}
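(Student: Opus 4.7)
The proof is a direct calculation about a binomial random variable. Let $N = \lceil \phi n^K \rceil$ denote the number of preliminary Monte Carlo trials; each trial independently disconnects the graph with probability $U(p)$. Let $X$ be the number of trials in which disconnection is observed, so the algorithm chooses Monte Carlo sampling exactly when $X \geq 1$. The two parts of the proposition correspond to lower-bounding and upper-bounding $\bP(X \geq 1)$ respectively.

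For the first claim, I would use the standard inequality $(1-u)^N \leq e^{-N u}$ applied with $u = U(p)$. Under the hypothesis $U(p) \geq n^{-K}$, this gives
\[
\bP(X = 0) = (1 - U(p))^N \leq e^{-N U(p)} \leq e^{-\phi},
\]
so $\bP(X \geq 1) \geq 1 - e^{-\phi}$, which tends to $1$ as $\phi \to \infty$.

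For the second claim, I would apply Markov's inequality (equivalently, a union bound over the $N$ trials). Under the hypothesis $U(p) \leq \phi' n^{-K}$,
\[
\bP(X \geq 1) \leq \bE[X] = N \cdot U(p) \leq (\phi + 1) \phi'.
\]
Since $\phi$ is a fixed constant selected by the algorithm (chosen large enough to make the first bound useful), this upper bound is a constant multiple of $\phi'$ and hence approaches $0$ as $\phi' \to 0$.

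There is no real obstacle here: both directions are one-line estimates using $(1-u)^N \leq e^{-Nu}$ and Markov's inequality respectively. The only small conceptual point worth flagging is that $\phi$ and $\phi'$ play asymmetric roles; $\phi$ is fixed once and for all by the algorithm to make part~1's failure probability $e^{-\phi}$ as small as desired, while $\phi'$ is an external parameter quantifying how far below $n^{-K}$ the true unreliability sits, and the conclusion of part~2 must hold for this already-chosen $\phi$.
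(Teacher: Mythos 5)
Your proof is correct and follows essentially the same route as the paper: the paper also bounds $\bP(X=0) \leq e^{-\phi}$ (via a Chernoff-type bound, equivalent to your $(1-u)^N \leq e^{-Nu}$ estimate) and uses Markov's inequality for the second part. The only cosmetic difference is your explicit handling of the ceiling in the trial count, which changes the Markov bound from $\phi'$ to $(\phi+1)\phi'$ without affecting the conclusion.
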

\begin{proof}
In the first case, the number of times the graph becomes disconnected is a binomial random variable, with $\phi n^{K}$ trials and expected number of successes $\geq \phi$. By Chernoff's bound, the probability that there are zero successes is $\leq e^{-\phi}$; this approaches zero for $\phi$ sufficiently large.

In the second case, the expected number of successes is $\leq \phi'$. By Markov's inequality, the probability that that there is at least one success is $\leq \phi'$; this approaches zero for $\phi'$ sufficiently small.
\end{proof}

Now, when $U(p) \geq n^{-K}$, we will choose Monte-Carlo sampling with probability at least $0.99$. This is good because the cut-enumeration algorithm may not be well-behaved when $U(p) \geq n^{-K}$; most of our theorems completely break down in the regime $U(p) \geq n^{-2}$.

When $U(p) \leq \phi' n^{-K}$, we will use cut-enumeration with probability at least $0.99$.

When $\phi' n^{-K} < U(p) < n^{-K}$, we may use either Monte Carlo sampling or cut-enumeration. In this regime, \emph{either} of these two algorithms gives good performance. The Monte-Carlo sampling will have relative variance $O(n^{-K})$. We have already shown that when $U(p) < n^{-K}$ that cut-enumeration behaves correctly.

We now obtain our main result:

\begin{proposition}
\label{main-result-aa}
Let $\gamma > 0$ be any constant. Then we can estimate $U(p)$ in time $O(n^{3+\gamma} \epsilon^{-2})$ with probability $\geq 3/4$.
\end{proposition}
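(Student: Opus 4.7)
The plan is to calibrate $K$ to $\gamma$ and then invoke the two-regime decomposition already outlined: set, say, $K = 2 + \gamma/3$, run the preliminary discrimination step of Proposition~\ref{choose-prop} with $\phi$ large and $\phi'$ small enough that its ``right'' branch is chosen with probability at least $0.99$ outside the narrow overlap window $\phi' n^{-K} \leq U(p) \leq n^{-K}$ (and in that window either branch is acceptable). The preliminary step uses $\phi n^{K}$ Karger-sparsified Monte Carlo trials at cost $n^{1+o(1)}$ per trial, so it runs in $n^{K+1+o(1)} \le n^{3+\gamma+o(1)}$ time, which is absorbed into the target.

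First I would handle the Monte Carlo branch. Conditional on choosing this branch, we can assume $U(p) \geq \phi' n^{-K}$, so the standard Monte Carlo estimator has relative variance $\leq 1/U(p) \leq n^{K}/\phi'$; by Chebyshev, $O(n^{K} \epsilon^{-2})$ sparsified trials suffice to achieve relative error $\epsilon$ with probability at least $0.99$. Each trial takes $n^{1+o(1)}$ time by Karger's sparsification machinery, giving total cost $n^{K+1+o(1)}\epsilon^{-2} = n^{3+\gamma/3+o(1)}\epsilon^{-2} \le n^{3+\gamma}\epsilon^{-2}$ for $n$ large.

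Next I would handle the cut-enumeration branch. Conditional on choosing it, we may assume $U(p) < n^{-K}$ with $K > 2$, so the hypotheses of Sections \ref{Azsec}--\ref{Asec:rca} hold. By Theorem~\ref{Arcathm3}, one execution of the RCA with parameter $\alpha^{*}_{\text{max}}$ finds any fixed $\alpha^{*}$-cut with probability $\geq n^{-1-o(1)} \epsilon^{1.98}$. Since there are at most $n^{2\alpha^*}$ target cuts and $\log(n^{2\alpha^*}) = (n/\epsilon)^{o(1)}$, the standard coupon-collector bound shows that $N_0 = n^{1+o(1)} \epsilon^{-1.99}$ independent runs of the RCA produce, with high probability, a collection $\mathcal A$ containing every $\alpha^{*}$-cut (and in particular a min-cut). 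Each RCA call costs $n^{2+o(1)}$ by Theorem~\ref{rca-perf}, so the enumeration phase costs $n^{3+o(1)}\epsilon^{-1.99}$. I then feed $\mathcal A$ (stored in the compressed format of Section~\ref{dsa-sec}) to the estimator of Proposition~\ref{summarize-est-prop}, which approximates $U_{\mathcal A}(p)$ to relative error $\epsilon$ with probability $\geq 0.99$ in time $O(|\mathcal A|\,\text{polylog}(|\mathcal A|\,n) + n^{2}\epsilon^{-2})$; by Theorem~\ref{Akargerthm1} (or rather Proposition~\ref{Afaillemma} and the defining property of $\alpha^{*}$), $U_{\mathcal A}(p)$ agrees with $U(p)$ to within relative error $O(\epsilon)$. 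The total cost of this branch is therefore $n^{3+o(1)}\epsilon^{-1.99} + n^{2+o(1)}\epsilon^{-2} \le n^{3+\gamma}\epsilon^{-2}$ for $n$ sufficiently large, since $n^{o(1)} \le n^{\gamma/2}$ eventually.

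Finally, a union bound: the preliminary discriminator picks a ``safe'' branch with probability $\geq 0.99$; conditional on that, the Chebyshev step (MC branch) or the coupon-collector + Chebyshev step (cut-enumeration branch) each succeed with probability $\geq 0.99$. Adjusting $\phi, \phi'$, and the constants hidden in the sample counts, the total failure probability is at most $1/4$, yielding the $\geq 3/4$ success guarantee. The only slightly delicate point is the overlap window $\phi' n^{-K} < U(p) < n^{-K}$, where the choice of branch is non-deterministic; here I would simply verify that \emph{both} branches meet the claimed time and accuracy bounds (the MC branch still has relative variance $O(n^{K})$, and the cut-enumeration branch's analysis in Sections~\ref{Azsec}--\ref{Asec:rca} only required $U(p) < n^{-K}$), so either outcome of the discriminator is acceptable. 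The main obstacle, and the only nontrivial piece, is bookkeeping: ensuring that the $n^{o(1)}$ slacks from the RCA analysis, the coupon-collector factor, and the $\epsilon^{-1.99}$ versus $\epsilon^{-2}$ discrepancy all fit under $n^{3+\gamma}\epsilon^{-2}$ simultaneously, which is where the slack $\gamma > 0$ is spent.
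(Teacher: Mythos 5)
Your proposal is correct and follows essentially the same route as the paper's own proof: calibrate $K$ to the constant $\gamma$, use the preliminary Monte-Carlo discriminator of Proposition~\ref{choose-prop}, handle the large-$U(p)$ regime by sparsified Monte Carlo sampling and the small-$U(p)$ regime by $n^{1+o(1)}\epsilon^{-1.99}$ runs of the RCA (Theorem~\ref{Arcathm3} plus coupon collecting) followed by the estimator of Proposition~\ref{summarize-est-prop}, and observe that in the overlap window either branch is acceptable. The only differences (choosing $K=2+\gamma/3$ instead of $2+\gamma/2$, and spelling out the discriminator's cost and the coupon-collector bookkeeping) are cosmetic.
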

\begin{proof}
Let $K = 2 + \gamma/2$. 

Suppose that $U(p) < n^{-K}$ and we elect to use the cut-enumeration procedure. We run the RCA for $n^{3+o(1)} \epsilon^{-1.99}$ iterations, and let $\mathcal A$ denote the resulting set of cuts. By Theorem~\ref{Arcathm3}, $\mathcal A$ contains all the $\alpha^*$-cuts with high probability. In particular,
$$
(1-\epsilon) U(p) \leq U_{\alpha^*} (p) \leq U_{\mathcal A} (p) \leq U(p)
$$

So $U_{\mathcal A}(p)$ indeed estimates $U(p)$ to within relative error $\epsilon$. We next use the estimation algorithm of Section~\ref{Asec:statistical} with this collection. By Proposition~\ref{summarize-est-prop}, this estimates $U_{\mathcal A}(p)$, and hence $U(p)$, to within relative error $O(\epsilon)$. The total time for this is $O( |\mathcal A| \text{polylog}|\mathcal A| + n^2/\epsilon^2)$. As $|\mathcal A| \leq n^{3+o(1)} \epsilon^{-1.99}$, the term $|\mathcal A| \text{polylog} \mathcal |A|$ is at most $n^{3+o(1)} \epsilon^{-2}$. So the overall running time is indeed $n^{3+o(1)} \epsilon^{-2}$ as desired.

Suppose instead that $U(p) \geq \phi' n^{-K}$, where $\phi'$ is a constant chosen in Proposition~\ref{choose-prop}, and we elect to use Monte-Carlo sampling. The straightforward strategy for Monte Carlo sampling would require $O(m)$ time per sample. However, a sparsification method of \cite{karger} reduces this is $n^{1+o(1)}$ time per sample. (This sparsification is somewhat involved and we will not modify it; see \cite{karger} for a more thorough description). Overall, the time for Monte Carlo sampling $O(n^{K+1} \epsilon^{-2} \log^{O(1)} n)$.

The total work is $O(n^{3+\gamma} \epsilon^{-2})$ either way.

By Proposition~\ref{choose-prop}, with arbitrarily high constant probability, one of these two cases holds. This implies that we find a good estimate with probability $>3/4$, which is the goal of our FPRAS.

\end{proof}

Note that our estimates have been developed under the assumption that $\delta$ is uniformly bounded away from 0. This means that for any constant $\gamma > 0$, there exists an algorithm with running time $O(n^{3+\gamma} \epsilon^{-2})$. All of the asymptotic terms may depend arbitrarily on $\gamma$. As $\gamma \rightarrow 0$, then $\delta \rightarrow 0$ as well.

In fact one can produce these algorithms in a \emph{uniform} way. That is, there is a Turing machine $M$ which on input $t$ produces a Turing machine $ M_t$; in turn, $ M_t$ produces an estimate for $U(p)$ in time $n^{3+1/t} \epsilon^{-2}$. This follows from the fact that all the hidden constants in our proofs can be upper-bounded by \emph{computable} functions of $\lceil \frac{1}{K-2} \rceil$. (This is easy to show, but we have avoided doing so because explicitly tracking all these constant terms would make our analysis more difficult than it already is. This is a very weak property, as computable functions may grow very quickly.)

We first show a general condition for uniformization of FPRAS algorithm; we then conclude by applying it to estimating $U(p)$.

\begin{lemma}
\label{amalgamate-lemma}
Suppose that $M$ is a Turing machine with the following behavior: on input $t > 0$, it produces a Turing machine $ M_t$. Furthermore, $ M_t$ has the property that it estimates some statistic $X$ with relative error $\epsilon$ with probability $>3/4$, and $ M_t$ runs in time $\leq c_t n^{a+1/t} \epsilon^{-b}$.

Then there is an FPRAS for $X$ running in time $n^{a+o(1)} \epsilon^{-b}$.
\end{lemma}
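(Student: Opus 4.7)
The plan is to have $M'$, on input $(n,\epsilon)$, simulate $M_{t(n)}$ for a slowly-growing function $t\colon\mathbb{N}\to\mathbb{N}$ chosen so that simultaneously $t(n)\to\infty$ (so that $n^{1/t(n)}=n^{o(1)}$) and $c_{t(n)}=n^{o(1)}$. Once both conditions hold, the running time
$$
c_{t(n)}\cdot n^{a+1/t(n)}\cdot\epsilon^{-b} \;=\; n^{o(1)}\cdot n^{a+o(1)}\cdot\epsilon^{-b} \;=\; n^{a+o(1)}\,\epsilon^{-b},
$$
while the relative-error and success-probability guarantees are inherited automatically from $M_{t(n)}$.

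First I would verify that such a $t$ exists. Since $(c_\tau)_{\tau\ge 1}$ is a fixed sequence of finite constants (with no assumed rate of growth), the quantity $c_{\tau_0}+\tau_0$ is finite for each fixed $\tau_0$. Defining
$$
t(n) \;=\; \max\{\tau\ge 1 : c_\tau + \tau \le \log\log n\},
$$
one sees $t(n)\to\infty$: for any $\tau_0$, eventually $\log\log n\ge c_{\tau_0}+\tau_0$, forcing $t(n)\ge\tau_0$. Moreover by construction $c_{t(n)}\le\log\log n=n^{o(1)}$ and $1/t(n)=o(1)$, which is exactly what the runtime bound requires.

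Next I would make the construction effective, so that $M'$ does not need direct access to $(c_\tau)$. Set a total budget $B(n,\epsilon)=n^{a+1/\log\log\log n}\,\epsilon^{-b}$; for $\tau=1,2,3,\ldots$ use $M$ to construct $M_\tau$ and simulate it on $(n,\epsilon)$ for at most $2^{-\tau}B(n,\epsilon)$ steps; output the answer of the largest $\tau$ whose simulation halted within its budget (or a default if none did). The geometric schedule gives total work $\sum_\tau 2^{-\tau}B(n,\epsilon)\le B(n,\epsilon)=n^{a+o(1)}\epsilon^{-b}$, which is within the required time bound regardless of what $(c_\tau)$ turns out to be.

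The main obstacle is showing that, for $n$ sufficiently large, at least one $\tau$-iteration halts so that $M'$ returns a genuine estimate. I would argue that the iteration $\tau=t(n)$ always succeeds: its true running time is $c_{t(n)}\,n^{a+1/t(n)}\,\epsilon^{-b}$ and the budget allotted is $2^{-t(n)}B(n,\epsilon)$, so we must check $c_{t(n)}2^{t(n)}n^{1/t(n)}\le n^{1/\log\log\log n}$. By construction $c_{t(n)}+t(n)\le\log\log n$, so $c_{t(n)}2^{t(n)}\le(\log n)^{O(1)}$, while $n^{1/t(n)}\le n^{1/\log\log n}$ for $n$ large (once $t(n)\ge \log\log n$, else $t(n)$ has already grown past any threshold); combining these, the left side is dominated by $n^{1/\log\log\log n}$ for all sufficiently large $n$ (the slack $1/\log\log\log n$ in $B$ is chosen precisely to absorb both $(\log n)^{O(1)}$ and $n^{1/\log\log n}$). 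Once some $\tau$-simulation halts, its output is an $\epsilon$-accurate estimate of $X$ with probability $>3/4$, so $M'$ inherits the FPRAS property as claimed.
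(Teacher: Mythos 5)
Your non-effective observation (existence of a slowly growing $t(n)$ with $c_{t(n)}+t(n)\le\log\log n$) is fine, but the effective, budget-capped scheduler does not prove the lemma: the asserted inequality $c_{t(n)}2^{t(n)}n^{1/t(n)}\le n^{1/\log\log\log n}$ for large $n$ is false in general. Note first that by your own definition $t(n)\le\log\log n$, so the parenthetical ``once $t(n)\ge\log\log n$'' never applies; more importantly, $t(n)\to\infty$ only at a rate dictated by the unknown constants $c_\tau$. If, say, $c_\tau$ is a tower of exponentials of height $\tau$, then $t(n)$ grows like $\log^*(\log\log n)$, which is eventually far smaller than $\log\log\log n$, so $n^{1/t(n)}\gg n^{1/\log\log\log n}$; in fact for large $n$ \emph{every} iteration exceeds its allotted budget $2^{-\tau}B$ (for $\tau\le\log\log\log n$ because of the factor $n^{1/\tau}\ge n^{1/\log\log\log n}$, and for $\tau\ge\log\log\log n$ because $c_\tau$ already exceeds $n$). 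Then no simulation halts, your algorithm returns the default value, and correctness fails for all large $n$. The underlying point is that the $o(1)$ in the conclusion is necessarily non-uniform in the sequence $(c_t)$, so you cannot pre-commit to a fixed slack such as $n^{1/\log\log\log n}$: any fixed slack is defeated by a fast enough sequence of constants. Two further (smaller) issues: you ignore the time $M$ needs to construct $M_\tau$, which grows with $\tau$ in the same uncontrolled way, and your rule ``output the largest $\tau$ that halted within budget'' can bias correctness when the $M_\tau$ have randomized running times, since halting within budget may be correlated with the answer.

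The fix is to let the running time adapt rather than capping it, which is what the paper does: amplify each $M_t$ to success probability $1-1/n^2$ by taking the median of $O(\log n)$ runs, dovetail the construction and execution of $M'_1,\dots,M'_{\log n}$ with equal time-shares, and stop as soon as any one of them halts, outputting its answer. For any fixed $s$, once $\log n\ge 2s$ the total time is at most $\log n$ times the time to build and run $M'_{2s}$, i.e.\ at most $(k+c_{2s}n^{a+1/(2s)}\epsilon^{-b}\log n)\log n\le n^{a+1/s}\epsilon^{-b}$ for $n$ sufficiently large (with the threshold depending on $k$ and $c_{2s}$, which is exactly where the non-uniform $o(1)$ comes from); and the union bound over the at most $\log n$ amplified machines gives failure probability at most $\log n/n^2\le 1/4$ regardless of which machine happens to answer first.
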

\begin{proof}
First, by repeating any FPRAS for $\Omega(\log n)$ samples and taking the median, one can increase the success probability from $3/4$ to $1-1/n^2$. Let $ M'_t$ be the FPRAS derived by this amplification; clearly they are also computable and they have running times $O(c_t n^{a+1/t} \epsilon^{-b} \log n)$.

Now, consider the following algorithm: given some input vector $\vec x$ of bit-length $n$, we wish to estimate $X(\vec x)$. To do so, we dovetail the following computations: we run $ M(1), \dots,  M(\log n)$; as soon as $ M(i)$ terminates, we immediately run the resulting $ M'_i(\vec x)$. If any computation $ M'_i$ terminates, we output its answer and terminate.

We first claim that with probability $\geq 3/4$ this gives a correct answer. For, a necessary condition for this to fail is that one of $M'_i, \dots, M'_{\log n}$ fails on input $\vec x$; the total probability of this is at most $\log n \times 1/n^2 \leq 1/4$.

Next, we claim that for any fixed integer $s > 0$ that this algorithm runs in time $\leq n^{a+1/s} \epsilon^{-b}$ for $n$ sufficiently large. Let $t = 2 s$. Suppose that $M(t)$ takes time $k$ to compute $ M_t$. Then, for $\log n \geq t$, our algorithm runs $M'_t$ (among other algorithms). It would take time $k + c_t n^{a+1/t} \epsilon^{-b} \log n$ for algorithm $ M'_t$ to succeed. 

We are dovetailing $\log n$ separate computations. If the total run-time exceeds $(k + c_t n^{a+1/t} \epsilon^{-b} \log n) \log n$, then algorithm $ M_t$ will receive at least a share of $k + c_t n^{a+1/t} \epsilon^{-b} \log n$ runtime, and it will terminate with the answer (unless another process has terminated earlier). Thus, the overall run-time of this algorithm is at most $(k + c_t n^{a+1/t} \epsilon^{-b} \log n) \log n$. This is $\leq n^{a+1/s} \epsilon^{-b}$ for $n$ sufficiently large as $t > s$.

So, in all, our algorithm runs in time $\leq n^{a+1/t} \epsilon^{-b}$ for any $t > 0$ and $n$ sufficiently large. This implies it runs in time $n^{a+o(1)} \epsilon^{-b}$. It returns a correct answer with probability $\geq 3/4$, as required.
\end{proof}

\begin{theorem}
There is an algorithm to estimate $U(p)$ in time $n^{3+o(1)} \epsilon^{-2}$.
\end{theorem}
\begin{proof}
This follows from Lemma~\ref{amalgamate-lemma} and Proposition~\ref{main-result-aa}.
\end{proof}

\section{Concluding Remarks}
\label{sec:concl}
Two natural open questions are to see if a run-time bound such as $O(n^2/\epsilon^2)$ is possible, and to identify other possible applications of differential-equation approximations such as ours. 

It remains an outstanding open question to make progress on the approximability of $R(G,p) = R(p)$, the probability of $G$ remaining \emph{connected}: as pointed out by Leslie Goldberg to us, even very weak approximations here can be turned into PTAS-type approximations, and a proof of this from \cite{goldberg-jerrum:potts} is as follows. 
For example, suppose for some $\epsilon>0$ that there is a polynomial-time algorithm that
given as input any connected graph $H$ with $n_H$ vertices, produces a value $\hat{R}(H,p)$ satisfying
\begin{equation}
\label{eqn:weak-approx}
(1-\epsilon)^{\sqrt{n_H}} R(H,p) \leq \hat{R}(H,p) \leq (1+\epsilon)^{\sqrt{n_H}} R(H,p);
\end{equation}
note that this is a very weak approximation algorithm. However, it can be turned into essentially an $(1 \pm \epsilon)$--approximation algorithm as follows.
Take the connected input graph $G$ for which we want to well-approximate $R(G,p)$. Choose an arbitrary vertex $v$ of $G$, and construct a graph $H$ by taking $n_G$ copies of $G$, but by also graph-theoretically ``identifying" (i.e., fusing together) the copies of vertex $v$. 
Then $R(H,p) = R(G,p)^{n_G}$ and $n_H = n_G^2 - n_G + 1 \sim n_G^2$; in particular, $\sqrt{n_H} = n_G - 1/2 + o(1)$. 
Thus, by (\ref{eqn:weak-approx}),
\begin{eqnarray*}
(1 - \epsilon)^{\sqrt{n_H}} R(G,p)^{n_G} & \leq \hat{R}(H,p) \leq & (1+\epsilon)^{\sqrt{n_H}} R(G,p)^{n_G}, ~~\mbox{i.e.}, \\
(1 - \epsilon)^{1 - o(1)} R(G,p) & \leq (\hat{R}(H,p))^{1/n_G} \leq & (1+\epsilon)^{1 - o(1)} R(G,p);
\end{eqnarray*}
thus, $(\hat{R}(H,p))^{1/n_G}$ is an excellent approximation to $R(G,p)$. Thus, the approximability of $R(G,p)$ is either excellent, or very poor; determining the truth here is a very intriguing open question. 

\section{Acknowledgments}  We thank the SODA 2014 referees for their helpful comments, and Leslie Goldberg for a valuable discussion on $R(p)$. The discussion with Leslie Goldberg that is included in Section~\ref{sec:concl}, as well as part of Aravind Srinivasan's writing, transpired while Srinivasan attended the ICERM Workshop on \emph{Stochastic Graph Models} in March 2014. We thank the workshop organizers for their kind invitation. 

Thanks to Thomas Lange and to the anonymous journal reviewers for their many helpful comments and thorough reading of this paper.

\appendix

\section{Numerical analysis for Section~\ref{Asec:oddc}}
\label{oddc-appendix}

To complete the proof of Theorem~\ref{Amainoddthm}, we must verify certain concavity and monotonicity properties of the function $f$ defined by
$$
f(i,n,k) = \log \Bigl[\frac{(i/n)^{\frac{2}{c+1}-1}  (2 k + (c-1) n) - 2 k}{(c-1) i} \Bigr]
$$

\begin{proposition}
\label{Aprop1x5}
For $1 \leq i \leq n$, the function $f$ is well-defined.
\end{proposition}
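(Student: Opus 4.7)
The plan is to verify that the argument of the logarithm is strictly positive for all admissible $i, n, k$ and all odd $c > 1$; the denominator $(c-1)i$ is trivially positive, so the task reduces to showing
$$
(i/n)^{\frac{2}{c+1}-1}(2k + (c-1)n) - 2k > 0.
$$

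First I would observe that since $c$ is odd and $c > 1$, we have $c \geq 3$, so the exponent $\frac{2}{c+1} - 1$ is strictly negative. Combined with the hypothesis $i \leq n$ (so $i/n \leq 1$), raising a quantity at most $1$ to a negative power yields a value at least $1$; that is, $(i/n)^{\frac{2}{c+1}-1} \geq 1$. Consequently,
$$
(i/n)^{\frac{2}{c+1}-1}(2k + (c-1)n) \geq 2k + (c-1)n.
$$
Subtracting $2k$ from both sides gives the numerator bounded below by $(c-1)n$, which is strictly positive since $c > 1$ and $n \geq 1$. Hence the argument of $\log$ is at least $n/i > 0$, and $f(i,n,k)$ is well-defined (in fact nonnegative).

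There is no real obstacle here; the only subtlety is making sure we are in the regime $c \geq 3$ (guaranteed by $c$ odd and $c > 1$) so that the exponent $\frac{2}{c+1}-1$ is negative, and remembering that the proposition excludes the $c=1$ case (which the paper handles separately through Proposition~\ref{Aoddthm1}). A brief remark at the end can note that the bound $(c-1)n$ on the numerator also immediately yields $f(i,n,k) \geq \log(n/i) \geq 0$, which will be useful for the inductive base case in Theorem~\ref{Amainoddthm}.
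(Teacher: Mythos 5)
Your proof is correct, but it takes a different route from the paper. The paper establishes well-definedness by noting that the argument of the logarithm is a linear, strictly increasing function of $k$ (this is the content of Proposition~\ref{Aprop1x1}), so positivity needs only to be checked at $k=0$, where the argument reduces to $\frac{(i/n)^{\frac{2}{c+1}-1}(c-1)n}{(c-1)i} > 0$. You instead bound the argument directly: since $c>1$ the exponent $\frac{2}{c+1}-1$ is negative, so $(i/n)^{\frac{2}{c+1}-1}\geq 1$ for $i\leq n$, and (using $k\geq 0$, which is part of the hypotheses of Theorem~\ref{Amainoddthm}) the numerator is at least $(c-1)n$, giving the argument a lower bound of $n/i\geq 1$. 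Your version is more self-contained --- it does not lean on the monotonicity-in-$k$ proposition, which is stated as a property of $f$ before $f$ is even known to be well-defined --- and it yields the extra dividend $f(i,n,k)\geq \log(n/i)\geq 0$. Two small remarks: the appeal to oddness and $c\geq 3$ is unnecessary, since $c>1$ already makes the exponent negative (and the paper states the result for all $c>1$); and the step multiplying $2k+(c-1)n$ by a factor $\geq 1$ silently uses $k\geq 0$, which is worth saying explicitly since it is exactly the same hypothesis the paper's monotonicity argument needs.
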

\begin{proof}
We need to show that the argument of the logarithm is strictly greater than 0. By Proposition~\ref{Aprop1x1}, the argument is strictly increasing in $k$, so it suffices to show that it is positive at $k = 0$. When $k = 0$, it is equal to
$$
\frac{(i/n)^{\frac{2}{c+1}-1}  ((c-1) n)}{(c-1) i}
$$
which is clearly positive.
\end{proof}

\begin{proposition}
\label{Aprop1x1}
For $1 \leq i \leq n$ and $c \geq 1$ the function $f(i,n,k)$ is an increasing, concave-down function of $k$.
\end{proposition}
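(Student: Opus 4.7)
The plan is to observe that the argument of the logarithm defining $f(i,n,k)$ is an \emph{affine} function of $k$ with non-negative slope and strictly positive intercept, and then invoke the standard fact that the composition of $\log$ with such a function is increasing and concave.

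More concretely, set $u = (i/n)^{\frac{2}{c+1}-1}$. Since $c > 1$ the exponent $\tfrac{2}{c+1} - 1$ is $\le 0$, and since $i \le n$ we have $i/n \le 1$, so $u \ge 1$. A direct rearrangement yields
$$
\frac{(i/n)^{\frac{2}{c+1}-1}\bigl(2k + (c-1)n\bigr) - 2k}{(c-1)\,i} \;=\; \underbrace{\frac{2(u-1)}{(c-1)\,i}}_{=:\,\alpha}\; k \;+\; \underbrace{\frac{u\,n}{i}}_{=:\,\beta}.
$$
By the sign analysis above, $\alpha \ge 0$ and $\beta > 0$, so the affine function $g(k) := \alpha k + \beta$ is non-decreasing and strictly positive on $[0,\infty)$; this in particular re-confirms Proposition~\ref{Aprop1x5} (well-definedness).

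Since $\log$ is strictly increasing and strictly concave on $(0,\infty)$, and $g$ is affine with $g(k) > 0$, the composition $f(i,n,k) = \log g(k)$ is increasing and concave in $k$: one checks immediately $f'(k) = \alpha/(\alpha k + \beta) \ge 0$ and $f''(k) = -\alpha^2/(\alpha k + \beta)^2 \le 0$. The only edge case is $i = n$, where $u = 1$ forces $\alpha = 0$ and $g \equiv 1$; then $f \equiv 0$ trivially satisfies both properties (interpreting ``increasing'' in the non-strict sense, which is all that is needed later in the induction of Theorem~\ref{Amainoddthm} when applying Jensen's inequality). There is no real obstacle in the proof: the substance is entirely in recognizing the affine structure inside the logarithm.
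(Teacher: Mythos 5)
Your proof is correct and follows essentially the same route as the paper: both observe that the argument of the logarithm is affine in $k$ with nonnegative slope (your $\alpha = \tfrac{2(u-1)}{(c-1)i}$ is exactly the paper's slope $\tfrac{2n((i/n)^{2/(c+1)}-(i/n))}{(c-1)i^2}$ after simplification) and conclude by composing with the increasing concave $\log$. Your explicit handling of the degenerate case $i=n$ (where the slope vanishes) is a small extra care the paper glosses over, but it changes nothing of substance.
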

\begin{proof}
The function $f$ applies the logarithm function to an argument which is a linear function of $k$. The slope of that
linear function is 
$$
\frac{2 n ( (i/n)^{\frac{2}{c+1}}- (i/n)) }{(c-1)
   i^2} > 0
$$for $c > 1$ and $1 \leq i \leq n$.
\end{proof}

\begin{proposition}
\label{Aprop1x2}
Suppose $c \geq 3$ and $i,k,n \geq 0$. Then the expression
$$
c/m + f(i,n,ke^{-c/m})
$$
is a decreasing function of $m$.
\end{proposition}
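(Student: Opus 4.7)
The plan is to differentiate directly in $m$ and check the sign. Setting $u(m) := k e^{-c/m}$, so that $u'(m) = cu/m^2$, the chain rule gives
$$
\frac{d}{dm}\bigl(c/m + f(i,n,u)\bigr) \;=\; \frac{c}{m^2}\bigl(u\,f_k(i,n,u) - 1\bigr),
$$
where $f_k$ denotes the partial derivative of $f$ in its third argument. So the proposition reduces to the one-variable inequality $u\,f_k(i,n,u)\le 1$ for all admissible $u\ge 0$.

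For this I would exploit the key structural feature already flagged in the proof of Proposition~\ref{Aprop1x1}: writing $f(i,n,k)=\log A(k)$, the quantity $A(k)$ is an \emph{affine} function of $k$. Hence $A(k) - k\,A'(k)$ is constant in $k$ and equals $A(0)$, and a one-line computation gives
$$
A(0) \;=\; \frac{(i/n)^{\frac{2}{c+1}-1}(c-1)n}{(c-1)i} \;=\; (i/n)^{\frac{2}{c+1}-2}.
$$
For $0<i\le n$ and $c\ge 3$ the exponent $\frac{2}{c+1}-2$ is negative, so $A(0)\ge 1$; in particular $A(0)>0$. Combining these observations, and using $A(u)>0$ (which is exactly the condition that $f$ be well-defined at $(i,n,u)$, guaranteed by Proposition~\ref{Aprop1x5}),
$$
u\,f_k(i,n,u) \;=\; \frac{u\,A'(u)}{A(u)} \;=\; \frac{A(u)-A(0)}{A(u)} \;=\; 1 - \frac{A(0)}{A(u)} \;\le\; 1,
$$
which yields $\frac{d}{dm}\bigl(c/m+f(i,n,ke^{-c/m})\bigr)\le 0$ as claimed.

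This argument is routine; there is no real obstacle. The only sensitive point, and the one that dictates the hypothesis $c\ge 3$ rather than $c\ge 1$, is the factor $1/(c-1)$ appearing in $A$: one wants to stay away from $c=1$, where $f$ is only defined as a limit and the affine structure above degenerates. Affineness of $A$ in its last argument is what keeps the computation clean; without it one would have to differentiate implicitly and juggle several sign conditions simultaneously.
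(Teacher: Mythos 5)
Your proof is correct, but it is not the paper's argument. The paper substitutes $t = c/m$, differentiates in $t$ to get an explicit rational expression, argues (by differentiating in $i$, treated as a continuous parameter) that this expression is increasing in $i$, and then checks positivity only in the limit $i \to 0$. You instead exploit the fact that $e^{f(i,n,k)} = A(k)$ is affine in $k$, so that $u\,f_k(i,n,u) = \frac{u A'(u)}{A(u)} = 1 - \frac{A(0)}{A(u)}$, and positivity of $A(0)$ and of $A(u)$ (the latter being exactly well-definedness of $f$, Proposition~\ref{Aprop1x5}) immediately gives $u\,f_k \le 1$ and hence a nonpositive $m$-derivative; since $A(0)/A(u) > 0$ strictly, you in fact get strict decrease, matching the statement. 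Your route is cleaner: it avoids the paper's monotonicity-in-$i$ step (which is asserted rather than verified in detail) and the limiting computation, it isolates the only structural facts needed (affineness of $A$ in $k$ and positivity of $A$), and it visibly works for every $c > 1$, with the parameter range $1 \le i \le n$ needed only for well-definedness via Proposition~\ref{Aprop1x5} (the statement's ``$i,k,n \ge 0$'' is loose in the paper as well). One small quibble: your closing remark that the hypothesis $c \ge 3$ is what keeps the computation from degenerating is not quite right --- your own argument needs only $c > 1$, and $c \ge 3$ appears in the statement simply because that is the smallest odd $c$ exceeding $1$ in the intended application; this does not affect the validity of the proof.
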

\begin{proof}
Let $t = c/m$. We wish to show that $t + f(i,r,k e^{-t})$ is increasing in $t$.

Differentiating with respect to $t$, it suffices to show that
$$
\frac{2 k((i/n) - (i/n)^{\frac{2}{c+1}})}{(i/n)^{\frac{2}{c+1}} ((c-1) n e^t + 2k) - 2 (i/n) k} + 1 \geq 0 \
$$

By differentiating with respect to $i$ (viewing $i$ as a continuous parameter), we see that the left-hand side of the above expression is increasing in $i$. Hence it suffices to show that this holds when $i \rightarrow 0$. In this case, as $i \rightarrow 0$, the terms $(i/n)^{2/(c+1)}$ are negligible compared to the linear terms in $(i/n)$, so the above approaches a limit as $i \rightarrow 0$ of 
$$
\frac{(c-1) n e^t}{(c-1) n e^t + 2 k}
$$
which is obviously positive.
\end{proof}

\begin{proposition}
\label{A1x3}
For $c \geq 3, n \geq i+1$ and $k \leq n$ we have
$$
 \frac{c}{n c/2 + (n-k)/2} + f(i,n-1,k e^{- \frac{c}{n c/2 + (n-k)/2}}) \leq f(i,n,k)
$$
\end{proposition}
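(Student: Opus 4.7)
Set $t = \frac{c}{nc/2 + (n-k)/2} = \frac{2c}{(c+1)n - k}$, so the desired inequality reads
\[
t + f(i, n-1, k e^{-t}) \leq f(i, n, k). \qquad (\star)
\]
The plan is to interpolate continuously between the two sides and deduce $(\star)$ from a pointwise differential inequality. Define
\[
\phi(s) = s t + f(i,\, n - s,\, k e^{-s t}), \qquad s \in [0,1],
\]
treating $n-s$ as a positive real argument of $f$. Then $\phi(0) = f(i,n,k)$ and $\phi(1) = t + f(i,n-1,ke^{-t})$, so it suffices to prove $\phi'(s) \le 0$. A short calculation with $n' = n-s$, $k' = ke^{-st}$, and subscripts denoting partials of $f$, gives
\[
\phi'(s) = t - f_n(i, n', k') - t k' f_k(i, n', k'),
\]
so the claim reduces to the pointwise inequality $f_n(i,n',k') + t k' f_k(i,n',k') \ge t$.

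Write $f = \log g$ for the $g$ inside the logarithm in the statement. A direct computation yields, for any scalar $\tau$,
\[
g_n(i,n,k) + \tau k\, g_k(i,n,k) - \tau\, g(i,n,k) = \frac{(i/n)^{2/(c+1)-1}}{(c+1)\, i\, n}\bigl[\, 2(cn+k) - \tau (c+1) n^2 \,\bigr].
\]
Applied at $(n',k')$ with the local rate $\tilde t(s) = \frac{2c}{(c+1)n' - k'}$, and dividing by $g > 0$, this shows that
\[
f_n(i,n',k') + \tilde t\, k'\, f_k(i,n',k') \ge \tilde t \Longleftrightarrow k'(n' - k') \ge 0.
\]

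Two monotonicity facts then convert the local inequality (with rate $\tilde t$) into the one we need (with the fixed rate $t$). First, $\tilde t(s) \ge t$ for $s \in [0,1]$: differentiating the denominator $(c+1)n' - k'$ in $s$ gives $-(c+1) + tk'$, and $tk \le c+1$ by the elementary algebraic check $k(3c+1) \le (c+1)^2 n$, which holds for $c \ge 3$ and $k \le n$. Second, $k' f_k(i,n',k') \le 1$: from the explicit formula,
\[
k' f_k(i,n',k') \;=\; \frac{2k'(y'-1)}{2k'(y'-1) + (c-1)\, n'\, y'} \;\in\; (0,1), \qquad y' = (i/n')^{2/(c+1)-1},
\]
since $y' > 1$ (because the exponent is negative and $i < n'$) and $(c-1)n'y' > 0$. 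Rewriting the local inequality as $f_n(i,n',k') \ge \tilde t\bigl(1 - k' f_k(i,n',k')\bigr)$, the nonnegative coefficient $1 - k' f_k \ge 0$ combined with $\tilde t \ge t$ yields $f_n \ge t(1 - k' f_k)$, equivalently $\phi'(s) \le 0$.

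Finally we must check $k' \le n'$ for $s \in [0,1]$, in order to apply the local inequality: when $k \le n-1$ this is immediate because $k' \le k \le n - 1 \le n'$, and the corner $k = n$ reduces to $ne^{-st} \le n - s$, which follows from a short Taylor estimate using $t = 2/n \ge 1/n$. The main obstacle is purely organizational: the derivative calculation for $g$ is clean but dense, and one must carefully track the implication $\tilde t \ge t$, which is where the hypothesis $c \ge 3$ enters. After all the manipulation, the essential content is the trivial identity $k(n-k) \ge 0$, which is precisely the reason for the hypothesis $k \le n$.
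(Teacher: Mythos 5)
Your proposal is correct, but it proves the inequality by a genuinely different route than the paper. The paper exponentiates both sides, writes $\exp(\mathrm{LHS})-\exp(\mathrm{RHS})$ as an explicit two-term expression, and checks the sign of each term directly (using bounds of the form $e^{x}(1-1/n)^{a}\le 1$); your argument instead interpolates along $\phi(s)=st+f(i,n-s,ke^{-st})$ and reduces everything to the pointwise derivative inequality $f_n+\tilde t\,k' f_k\ge \tilde t$, which your clean identity for $g_n+\tau k g_k-\tau g$ shows is exactly equivalent to $k'(n'-k')\ge 0$. I verified the key computations: the displayed identity for $g$, the equivalence with $k'(n'-k')\ge0$ at the rate $\tilde t=2c/((c+1)n'-k')$, the bound $k'f_k=\frac{2k'(y'-1)}{2k'(y'-1)+(c-1)n'y'}\in[0,1)$, the monotonicity $\tilde t(s)\ge t$ via $tk\le c+1$ (equivalently $k(3c+1)\le (c+1)^2 n$), and the corner case $k=n$, where $ne^{-2s/n}\le n-s$ follows from $e^{-u}\le 1-u+u^2/2$ and $s\le n/2$. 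Two small remarks: you should say explicitly that the argument of the logarithm stays positive along the whole path $(i,n-s,ke^{-st})$ (the same monotonicity-in-$k$ argument as the paper's well-definedness proposition works for real $n'\in[n-1,n]$ with $i\le n-1\le n'$), so that $\phi$ is differentiable on $[0,1]$; and your closing remark that $c\ge3$ enters through $\tilde t\ge t$ is slightly off, since $k(3c+1)\le(c+1)^2n$ already holds for all $c\ge1$ when $k\le n$ — the hypothesis on $c$ is only needed so that $c>1$ and $f$ is defined. What your approach buys is robustness and transparency: it handles the $k$-dependent contraction rate uniformly and exposes the single essential fact $k\le n$, whereas the paper's endpoint computation is shorter but rests on verifying signs of specific algebraic terms.
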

\begin{proof}
It suffices to show that
$$
\exp\Bigl(  \frac{c}{n c/2 + (n-k)/2} + f(i,n-1,ke^{ -\frac{c}{n c/2 + (n-k)/2}}) \Bigr) - \exp(f(i,n,k)) \leq 0
$$

This expression can be rewritten as
\begin{equation}
\label{Ae1}
\frac{ -2 k \Bigl( n^{\frac{c-1}{c+1}} - (n-1)^{\frac{c-1}{c+1}} \Bigr) - (c-1) \Bigl( n^{\frac{2 c}{c+1}}-e^{\frac{2 c}{c n+n}} (n-1)^{\frac{2 c}{c+1}} \Bigr) }
{(c-1)
   i^{\frac{2 c}{c+1}}} \leq 0
   \end{equation}

The denominator is positive, and the term $n^{1- \frac{2}{c+1}} - (n-1)^{1 - \frac{2}{c+1}}$ is positive. Finally, the coefficient $\Bigl( n^{\frac{2 c}{c+1}}-e^{\frac{2 c}{c n+n}} (n-1)^{\frac{2 c}{c+1}} \Bigr)$ can be bounded as:
\begin{align*}
n^{\frac{2 c}{c+1}}-e^{\frac{2 c}{c n+n}} (n-1)^{\frac{2 c}{c+1}} &= n^{\frac{2 c}{c+1}} \Bigl( 1 - e^{\frac{2 c}{n(c+1)}} \bigl( 1- 1/n)^{\frac{2 c}{c+1}} \Bigr) \geq n^{\frac{2 c}{c+1}} \Bigl( 1 - e^{\frac{2 c}{n(c+1)}} (e^{-1/n})^{\frac{2 c}{c+1}} \Bigr) =0
\end{align*}

Thus, we verify that (\ref{Ae1}) holds.
\end{proof}

\begin{proposition}
\label{A1x4}
For $c \geq 3, k \geq n \geq i+1$ we have
$$
\frac{2}{n} + f(i,n-1,ke^{-\frac{2}{n}}) \leq f(i,n,k)
$$
\end{proposition}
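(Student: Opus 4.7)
The plan is to mirror the approach used in Proposition~\ref{A1x3}: exponentiate both sides to turn the logarithmic inequality into an algebraic one, then reduce cleanly. Using the closed form $\exp(f(i,n,k)) = \frac{(i/n)^{2/(c+1)-1}(2k + (c-1)n) - 2k}{(c-1)i}$ and observing that $e^{2/n}$ distributes inside the bracket when multiplied into $\exp(f(i,n-1,ke^{-2/n}))$, the two additive $-2k$ terms cancel. After factoring out the common positive $i^{2/(c+1)-1}$, the inequality reduces to showing
$$2k\bigl[(n-1)^{(c-1)/(c+1)} - n^{(c-1)/(c+1)}\bigr] + (c-1)\bigl[(n-1)^{2c/(c+1)}e^{2/n} - n^{2c/(c+1)}\bigr] \leq 0.$$

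Next I observe that the coefficient of $k$ on the left is strictly negative (since $(c-1)/(c+1) \in [0,1)$ and $n-1 < n$), so the whole expression is decreasing in $k$. Under the hypothesis $k \geq n$ the worst case is therefore $k=n$. Substituting $k=n$, using $2c/(c+1) = 1 + (c-1)/(c+1)$ to pull $(n-1)$ out of the second bracket, and dividing by $n^{2c/(c+1)}$, the claim collapses to the one-variable form
$$G(u) := (1-u)^{(c-1)/(c+1)}\bigl[2 + (c-1)(1-u)e^{2u}\bigr] \leq c+1, \qquad u := 1/n \in (0,1/2].$$

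The core of the work is this last inequality, which I would prove by passing to $F(u) := \log G(u) - \log(c+1)$ and showing $F \leq 0$ on $[0,1/2]$. Write $\phi(u) := 2 + (c-1)(1-u)e^{2u}$ and $r := (c-1)/(c+1)$. Direct evaluation gives $F(0) = 0$, and the one-line check $F'(0) = -r + (c-1)/\phi(0) = -r + (c-1)/(c+1) = 0$ shows the derivative also vanishes at the origin. The crux is establishing
$$F''(u) = -\frac{r}{(1-u)^2} + \frac{\phi''(u)}{\phi(u)} - \frac{\phi'(u)^2}{\phi(u)^2} < 0$$
throughout $[0,1/2]$. Explicit computation gives $\phi'(u) = (c-1)(1-2u)e^{2u}$ and, remarkably cleanly, $\phi''(u) = -4(c-1)u e^{2u} \leq 0$; meanwhile $\phi(u) \geq 2$ is manifestly positive on the interval. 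Each of the three summands in $F''$ is therefore non-positive, with the first strictly negative. Strict concavity together with $F(0) = F'(0) = 0$ forces $F$ to attain its maximum at $u = 0$, yielding $F(u) \leq 0$ on $[0,1/2]$, which is equivalent to $G(u) \leq c+1$ and completes the proof.

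The main obstacle is the sign of $\phi''$; without the clean form $-4(c-1)u e^{2u}$ one would be forced into a case analysis on $c$ and $n$. Everything preceding the $F''$ computation (the algebraic reduction and the monotonicity-in-$k$ argument) is routine bookkeeping that closely parallels Proposition~\ref{A1x3}.
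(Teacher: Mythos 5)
Your proposal is correct. Its skeleton coincides with the paper's proof of Proposition~\ref{A1x4}: exponentiate, note the $-2k$ terms cancel, observe the resulting expression is monotone (decreasing) in $k$ so that $k=n$ is the worst case, and reduce to the single-variable inequality $(1-u)^{\frac{c-1}{c+1}}\bigl(2+(c-1)(1-u)e^{2u}\bigr)\le c+1$ with $u=1/n$ --- this is exactly the quantity the paper denotes (\ref{Ae12}) after factoring out $n^{\frac{2c}{c+1}}$. Where you diverge is in how this scalar inequality is finished: the paper first applies the elementary bound $e^{2x}(1-x)\le 1+x$ to remove the exponential, and then argues that the resulting expression is increasing in $n$, negative at $n=1$, and tends to $0$ as $n\to\infty$; you instead work with $F(u)=\log G(u)-\log(c+1)$ directly, check $F(0)=F'(0)=0$, and use the clean computation $\phi''(u)=-4(c-1)ue^{2u}\le 0$ to conclude $F''<0$, so that concavity pins the maximum at $u=0$. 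Your computations check out ($\phi'(u)=(c-1)(1-2u)e^{2u}$, $\phi(u)\ge 2$ on the relevant range, and $F'(0)=-\tfrac{c-1}{c+1}+\tfrac{c-1}{c+1}=0$), and the restriction $u\in(0,1/2]$ is guaranteed since $n\ge i+1\ge 2$. The trade-off is minor: the paper's route avoids second derivatives at the cost of an auxiliary inequality and a separate monotonicity-in-$n$ argument, while yours keeps the exponential intact and gets everything from one sign computation; both are equally rigorous. (One trivial nitpick: the coefficient of $k$ is strictly negative because $\tfrac{c-1}{c+1}>0$ for $c\ge 3$; your parenthetical ``$\in[0,1)$'' would not give strictness at the endpoint $0$, but that case is excluded by $c\ge 3$.)
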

\begin{proof}
It suffices to show that
$$
\exp\Bigl( \frac{2}{n} + f(i,n-1,ke^{-\frac{2}{n}}) \Bigr) - \exp(f(i,n,k)) \leq 0
$$

This simplifies to showing that
\begin{equation}
\label{Ae11}
\frac{-2 k \Bigl( n^{\frac{c-1}{c+1}} - (n-1)^{\frac{c-1}{c+1}} \Bigr) - (c-1) \Bigl( n^{\frac{2 c}{c+1}} - e^{2/n} (n-1)^{\frac{2 c}{c+1}} \Bigr)}{(c-1)
   i^{ \frac{2 c}{c+1}}} \leq 0
\end{equation}

The denominator of (\ref{Ae11}) is positive, and the numerator is a decreasing function of $k$. Hence, as $k \geq n$, it suffices to show that this holds at $k = n$. There, the numerator of (\ref{Ae11}) is equal to:
\begin{equation}
\label{Ae12}
-(c+1) n^{\frac{2 c}{c+1}}+2 n (n-1)^{\frac{c-1}{c+1}}+(c-1) e^{2/n}
   (n-1)^{\frac{2 c}{c+1}}
   \end{equation}
   
So, we need to show (\ref{Ae12}) is negative. Here we have:
\begin{align*}
(\ref{Ae12}) &=  n^{\frac{2 c}{c+1}} \Bigl( -(c+1) + (1-1/n)^{\frac{c-1}{c+1}} (2 + (c-1) e^{2/n} (1 - 1/n)) \Big) \\
&\leq  n^{\frac{2 c}{c+1}} \Bigl( -(c+1) + (1-1/n)^{\frac{c-1}{c+1}} (2 + (c-1) (1+1/n) \Bigr) \qquad \text{as $e^{2 x} (1-x) \leq 1+x$ for $x \in [0,1]$}
\end{align*}

The derivative of the quantity $-(c+1) + (1-1/n)^{\frac{c-1}{c+1}} (2 + (c-1) (1+1/n))$ with respect to $n$ is $\frac{2 (c-1) c \left(\frac{n-1}{n}\right)^{-\frac{2}{c+1}}}{(c+1) n^3}$, which is clearly positive. Also clearly this quantity is negative at $n = 1$ (it approaches $-(c+1)$) and approaches 0 as $n \rightarrow \infty$. This implies that it is always negative, which in turn implies that (\ref{Ae12}) is negative as desired.

\end{proof}

\section{Numerical analysis for Section~\ref{Azsec}}
\label{mainrel-appendix}
\textbf{In this section, we will assume $U(p) < n^{-K}$ for some constant $K > 2$; thus the conclusions of Proposition~\ref{k-prop} hold.}

To complete the proof of Theorem~\ref{Amainrelthm}, we must verify certain concavity and monotonicity properties of the function $f$ defined by
$$
f(i,r,a,\gamma) = \begin{cases}
\gamma \Bigl( 1 - \frac{\log^2 (a/i)}{\log^2 (a/r)} \Bigr) & \text{if $a \in (0,1]$} \\
2 \log(r/i) & \text{if $a > 1$}
\end{cases}
$$

\begin{proposition}
\label{Aprop2x1}
Let $r, i$ be integers with $r \geq i \geq 100$ and let $\gamma, a$ be real numbers with $\gamma \geq 2 \log r, a \in (0,1)$. Let $A$ be any random variable with domain $(0,\infty)$ satisfying $\bE[A] \leq a$. Then
$$
\bE[ f(i,r,A,\gamma) ] \leq f(i,r,a,\gamma)
$$
\end{proposition}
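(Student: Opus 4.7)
\medskip

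The plan is to reduce the claim to a single application of Jensen's inequality, applied not to $f$ itself but to a carefully chosen concave majorant. Note first that a naive Jensen argument on $f(i,r,\cdot,\gamma)$ cannot work: the piecewise definition has a \emph{downward jump} at $a=1$, since $\gamma\geq 2\log r$ forces
\[
 f(i,r,1,\gamma)=\gamma\Bigl(1-\tfrac{\log^2 i}{\log^2 r}\Bigr)\;\geq\; 2\log(r/i)\;=\;\lim_{a\to 1^+}f(i,r,a,\gamma).
\]
To work around this, I would introduce the envelope $\hat f(a):=f\bigl(i,r,\min(a,1),\gamma\bigr)$ on $a\in(0,\infty)$; this equals $f(i,r,a,\gamma)$ on $(0,1]$ and is the constant $f(i,r,1,\gamma)$ on $(1,\infty)$. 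The displayed inequality above is exactly what is needed to conclude $\hat f(a)\geq f(i,r,a,\gamma)$ for all $a>0$.

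The core technical step is to prove that $f(i,r,\cdot,\gamma)$, viewed as a function of $a$, is \emph{increasing} and \emph{concave} on $(0,1]$. Writing $f=\gamma(1-g)$ with $g(a)=\log^2(a/i)/\log^2(a/r)$, this is equivalent to showing that $g$ is decreasing and convex on $(0,1]$. I would substitute $u=\log a$, $b=\log i$, $c=\log r$ and differentiate directly; on $u\leq 0$ one gets
\[
 \frac{dg}{da}=\frac{-2(c-b)(u-b)}{a\,(u-c)^3},\qquad
 \frac{d^2g}{da^2}=\frac{2(c-b)\,P(u)}{a^2(u-c)^4},
\]
where $P(u)=u^2-(b+c-2)u+(bc+c-3b)$. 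Monotonicity is immediate from the signs of $u-b,u-c,c-b$ on $u\leq 0$. For convexity it suffices to show $P(u)\geq 0$ for $u\leq 0$: since $i\geq 100$ gives $b,c\geq\log 100>4$, we have $P(0)=c(b{+}1)-3b\geq b(b{+}1)-3b=b(b-2)>0$, and the vertex of $P$ lies at $u=(b+c-2)/2>0$, so $P$ is decreasing on $(-\infty,0]$ and hence $P(u)\geq P(0)>0$ throughout. With $g$ convex and decreasing, $f|_{(0,1]}$ is concave and increasing, and in particular $f'(1^-)>0$; concatenating with the constant piece for $a>1$ therefore preserves concavity of $\hat f$ at the join (left-derivative $>0=$ right-derivative), so $\hat f$ is concave and nondecreasing on all of $(0,\infty)$.

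Given these two ingredients the proof is the following chain:
\[
 \bE[f(i,r,A,\gamma)] \;\leq\; \bE[\hat f(A)] \;\leq\; \hat f(\bE[A]) \;\leq\; \hat f(a) \;=\; f(i,r,a,\gamma),
\]
where the first inequality uses pointwise domination $\hat f\geq f(i,r,\cdot,\gamma)$, the second is Jensen applied to the concave $\hat f$, the third uses monotonicity of $\hat f$ together with $\bE[A]\leq a$, and the final equality holds because $a\in(0,1)$ so $\hat f(a)=f(i,r,a,\gamma)$.

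The main obstacle is the convexity verification for $g$ on $(0,1]$: it is the only place where the hypotheses $i,r\geq 100$ and $a\leq 1$ are genuinely used, and it requires the reduction to the quadratic $P(u)$ together with the (elementary) argument that $P(0)>0$ and that the vertex of $P$ lies to the right of $0$. Everything else --- the envelope construction, the dominance $\hat f\geq f$, and the Jensen chain --- is routine once this convexity has been established.
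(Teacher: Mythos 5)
Your proposal is correct and follows essentially the same route as the paper's proof: establish that $f(i,r,\cdot,\gamma)$ is increasing and concave on $(0,1]$ (your quadratic $P(u)$ is exactly the paper's bracket $\log(r/x)+(\log(r/x)-3)\log(i/x)$ after the substitution $u=\log x$), use $\gamma\geq 2\log r$ to check that the value at $a=1$ dominates the constant value $2\log(r/i)$ on $a>1$, and conclude by Jensen plus monotonicity. The only cosmetic difference is that you truncate the function via the envelope $\hat f(a)=f(i,r,\min(a,1),\gamma)$, whereas the paper truncates the random variable via $A'=\min(A,1)$; these are interchangeable bookkeeping choices.
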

\begin{proof}

Let us fix $i,r, \gamma$ and define the function $g:[0,1] \rightarrow \mathbf R$ by
$$
g(x) = f(i,r,x, \gamma)
$$

We claim first that $g'(x) \geq 0$ and $g''(x) \leq 0$; so $g$ is increasing and concave-down. For, we have that:
$$
g'(x) = \frac{-2 \gamma \log(i/r) \log(i/x)}{x \log^3 (r/x)} 
$$
As $r \geq i \geq 1$ and $x \in [0,1]$, the term $\log(i/r)$ is negative, the terms $\log(i/x), \log(r/x)$ are positive, and $\gamma$ is positive.
So overall $g'(x) \geq 0$.

The second derivative is given by
$$
g''(x) = \frac{-2 \gamma  \log (r/i) \bigl( \log (r/x) + (\log(r/x) - 3) \log(i/x) \bigr)}{x^2 \log^4 (r/x)}
$$
For $r \geq 21$, we have $\log(r/x) - 3 \geq 0$. The other terms here $\log(r/i), \log(r/x), \log(i/x), \gamma$ are all clearly positive. So overall $g''(x) \leq 0$.

We next claim that $g(1) \geq 2 \log (r/i)$, that is,
$$
\gamma (1 - \frac{\log^2 i}{\log^2 r}) \geq 2 \log (r/i)
$$

As $\gamma \geq 2 \log r$, it suffices to show that
$$
2 \log r (1 - \frac{\log^2 i}{\log^2 r}) - 2 \log (r/i) \geq 0
$$
The simplifies to showing $2 \times \log i \times (1 - \frac{\log i}{\log r}) \geq 0$, which is clear.

Thus, as $g(x)$ is increasing on $x \in (0,1]$ and $g(1) \geq 2 \log(r/i)$, it follows that $g(x) \leq g(1)$ for all $x \in (0,\infty)$.

Having shown these concavity properties of the function $g(x)$, we continue to prove the claimed result.  Consider the random variable $A' = \min(A,1)$. Observe that $g(A) \leq g(A')$. Also, clearly $\bE[A'] \leq \bE[A] \leq a$.  By Jensen's inequality $\bE[g(A')] \leq g(\bE[A'])$. As $g(x) $ is increasing on $x \in (0,1]$, we have that $g(\bE[A']) \leq g(a)$. This shows that $\bE[g(A)] \leq g(a)$ as desired.
\end{proof}

\begin{proposition}
\label{Aprop2x2}
For $a \leq 1$ and $r \geq i+1$, the expression
$$
c/m + f(i,r-1,a,\gamma - c/m)
$$
is non-increasing in $m$.
\end{proposition}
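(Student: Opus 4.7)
The plan is to reparametrize in terms of $t = c/m$, which is a strictly decreasing function of $m$. Thus proving non-increasing in $m$ is equivalent to proving non-decreasing in $t$ of the expression
$$
\Phi(t) = t + f(i, r-1, a, \gamma - t).
$$

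Since by hypothesis $a \in (0,1]$, the first case of the definition of $f$ applies and
$$
f(i, r-1, a, \gamma - t) = (\gamma - t)\left(1 - \frac{\log^2(a/i)}{\log^2(a/(r-1))}\right).
$$
Let $K = 1 - \frac{\log^2(a/i)}{\log^2(a/(r-1))}$, a constant independent of $t$. Then
$$
\Phi(t) = \gamma K + t(1 - K),
$$
so $\Phi'(t) = 1 - K$, and the only thing to check is that $K \leq 1$, i.e.\ that $\log^2(a/i) \geq 0$, which is trivial.

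For completeness I would also note that $K \geq 0$: since $a \leq 1$ and $i \leq r-1$, we have $0 < a/(r-1) \leq a/i \leq 1$, hence $|\log(a/(r-1))| \geq |\log(a/i)|$ and so $\log^2(a/i) \leq \log^2(a/(r-1))$. Thus $K \in [0,1]$, and in particular $1 - K \geq 0$, giving $\Phi'(t) \geq 0$ as required. There is no real obstacle here — this is a one-line algebraic observation once one sees that $f$ is linear in its fourth argument.
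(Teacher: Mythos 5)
Your proof is correct and is essentially the same argument as the paper's: both boil down to observing that, since $f$ is linear in its last argument with coefficient $K = 1 - \frac{\log^2(a/i)}{\log^2(a/(r-1))} \leq 1$, the expression equals $\gamma K + (c/m)(1-K)$ and hence has derivative $-\frac{c}{m^2}\cdot\frac{\log^2(a/i)}{\log^2(a/(r-1))} \leq 0$ in $m$ (the paper differentiates in $m$ directly, you in $t = c/m$, which is the same computation). Your extra remark that $K \geq 0$ is not needed, but harmless.
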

\begin{proof}
For $a \leq 1$, the derivative of this quantity with respect to $m$ is given by
$$
\frac{- c \log ^2\left(\frac{a}{i}\right)}{m^2 \log
   ^2\left(\frac{a}{r - 1}\right)}
$$
which is clearly $\leq 0$.
\end{proof}

\begin{proposition}
\label{Aprop2x3}
For $a \in [0,1], \gamma \geq 0$, and $r \geq 100$, we have
$$
\frac{-2 \gamma}{r \log(a/r)} + f(i,r-1,a,\gamma + \frac{2 \gamma}{r \log(a/r)} ) \leq f(i,r,a,\gamma)
$$
\end{proposition}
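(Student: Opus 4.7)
The plan is to reduce the statement to an elementary concavity fact about the function $x \mapsto \log^2(x/a)$. Set $t := -2\gamma/(r\log(a/r)) = 2\gamma/(r\log(r/a))$, which is $\geq 0$ since $r > 1 \geq a$. The desired inequality then reads $t + f(i,r-1,a,\gamma - t) \leq f(i,r,a,\gamma)$. Since $a \in (0,1]$, I substitute the formula $f(i,r,a,\gamma) = \gamma(1 - \log^2(a/i)/\log^2(a/r))$ on both sides; the constant $\gamma$'s cancel and, after dividing by the positive quantity $\log^2(a/i)$ (the cases $a = i$ and $\gamma = 0$, as well as $a \to 0^+$, are handled by inspection), the inequality collapses to
$$
(\gamma - t)\log^2(a/r) \geq \gamma \log^2(a/(r-1)).
$$
Using $t/\gamma = 2/(r\log(r/a))$ and $\log^2(a/x) = \log^2(x/a)$, this is equivalent to
$$
\log^2(r/a) - \log^2((r-1)/a) \geq \frac{2\log(r/a)}{r}.
$$

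Next, write $F(x) := \log^2(x/a)$, so that $F'(x) = 2\log(x/a)/x$ and the target inequality is exactly $F(r) - F(r-1) \geq F'(r)$. By the mean value theorem, $F(r) - F(r-1) = F'(\xi)$ for some $\xi \in (r-1,r)$, so it suffices to show that $F'$ is non-increasing on $[r-1,r]$. A direct computation gives
$$
F''(x) = \frac{2(1 - \log(x/a))}{x^2},
$$
which is $\leq 0$ precisely when $x \geq ea$. Under the hypotheses $a \leq 1$ and $r \geq 100$ we have $r-1 \geq 99 > e \geq ea$, so $F'$ is indeed non-increasing on $[r-1,r]$, giving $F'(\xi) \geq F'(r)$ as required. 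Finally, the sanity condition $\gamma - t \geq 0$ needed to justify dividing by $\gamma\log^2(a/r)$ (with the correct sign) follows from $r\log(r/a) \geq 100\log 100 > 2$.

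The main obstacle is really just the bookkeeping in the algebraic reduction: the statement looks intimidating because $\gamma$, $t$, and $\gamma - t$ appear with different logarithmic denominators, and one has to be careful about signs since $\log(a/r) < 0$. Once the reduction to $F(r) - F(r-1) \geq F'(r)$ is made, the remaining work is a two-line exercise: compute $F''$ and verify it is non-positive on the relevant interval using the crude bound $r - 1 \geq 99 > e$. No new ideas beyond Taylor/MVT are required.
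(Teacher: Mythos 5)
Your proof is correct. The algebraic reduction you perform is in substance the same as the paper's: after substituting the formula for $f$, cancelling $\gamma$, and clearing the positive factors $\log^2(a/i)$ and $\log^2(a/(r-1))\log^2(a/r)$, your target inequality $\log^2(r/a)-\log^2((r-1)/a)\geq 2\log(r/a)/r$ is, upon multiplying by $r^2$, exactly the paper's inequality (\ref{Ae2}). The difference is in how this is finished. The paper splits off the $a$-dependent term, observing that $\log a\leq 0$ and $1+r\log(1-1/r)\leq 0$ make it nonnegative, and then proves the remaining $a$-free inequality $r^2(\log^2 r-\log^2(r-1))\geq 2r\log r$ via a second-order Taylor bound on $\log(r-1)$. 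You instead keep the $a$-dependence and note that the inequality reads $F(r)-F(r-1)\geq F'(r)$ for $F(x)=\log^2(x/a)$, which follows from the mean value theorem once $F''(x)=2(1-\log(x/a))/x^2\leq 0$ on $[r-1,r]$, guaranteed by $r-1\geq 99>e\geq ea$. Your finish is a bit cleaner and uniform in $a$, while the paper's isolates an inequality valid already for $r\geq 4$; both are sound. One small remark: the condition $\gamma-t\geq 0$ is not actually needed to justify any of the divisions (you only divide by $\log^2(a/i)$, by the product $\log^2(a/(r-1))\log^2(a/r)$, and by $\gamma>0$), so that sentence is a harmless but superfluous sanity check.
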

\begin{proof}
We compute the RHS minus the LHS. After collecting terms this is given by:

$$
\frac{\gamma  \log ^2\left(\frac{a}{i}\right) \Bigl (  r (2 \log a) (1 + r \log(1-1/r) ) - 2 r \log r + r^2 ( \log^2(r) - \log^2(r-1) ) \Bigr)}
{r^2  \log^2 ( \frac{a}{r-1} ) \log ^2 (\frac{a}{r} )}
$$
and we want to show this is positive.

Removing terms with obvious signs, it suffices to show that
\begin{equation}
\label{Ae2}
r (2 \log a) (1 + r \log(1-1/r) ) - 2 r \log r + r^2 ( \log^2(r) - \log^2(r-1) ) \geq 0
\end{equation}

Simple calculus shows that $1 + x \log(1-1/x) \leq 0$ for all $x \geq 0$; as $a \leq 0$ this implies that $r (2 \log a) (1 + r \log(1-1/r) ) \geq 0$. Thus, in order to show (\ref{Ae2}) it suffices to show that

\begin{equation}
\label{Ae3}
r^2 ( \log^2 r - \log^2(r-1) ) - 2 r \log r \geq 0
\end{equation}

To show this, we take a second-order Taylor expansion of $\log x$ around $x = r$; this gives us that
$$
\log (r-1) \leq \log r - \frac{1}{r} - \frac{1}{2 r^2}
$$
Hence
\begin{align*}
r^2 ( \log^2 r - \log^2(r-1) ) - 2 r \log r &\geq r^2 (\log^2 r - (\log r - 1/r - 1/(2 r^2))^2) - 2 r \log r = \log r - (1 + \frac{1}{2 r})^2.
\end{align*}
Simple calculus shows that this is positive for $r \geq 4$.
\end{proof}

\section{Selecting $y \in [0,1]$ for Lemma~\ref{Arcalemma3}}
\label{bound-appendix}
\textbf{In this section, we will assume $U(p) < n^{-K}$ for some constant $K > 2$; thus the conclusions of Proposition~\ref{k-prop} hold. We also suppose throughout that \boldmath$3/2 \leq \alpha \leq \alpha^* \leq O(1), \rho \leq 2.1, \beta \leq 1, \delta \geq \beta$\unboldmath.} (Other cases are covered already in Proposition~\ref{Arcathm2}.)

For any $y \in [0,1]$, let us define
$$
T(y,\alpha) = \alpha \max(h(y), \bar h(y/\alpha))
$$

For any $y \in [0,1]$, Lemma~\ref{Arcalemma3} shows that the success probability of the RCA will be at least $n^{2 - 2 y - T(y,\alpha) - o(1)}$. We claim that it is always possible to select $y$ so that this quantity is at least $n^{-1 - o(1)} \epsilon^{-1.5-o(1)}$.  Thus, it suffices to show that
\begin{equation}
\label{dbnd}
2 y + T(y,\alpha) \leq 3 + 1.5 \rho
\end{equation}

Let us define
$$
a = \frac{(3 - \beta + \delta)^2(2 - \beta + \delta + \rho)}{(2 - \beta + \delta)^2(2 + \delta)}
$$

We prove (\ref{dbnd}) in two parts. First, we show that it suffices to bound $2 y + T(y,a)$ --- that is, when we replace  $\alpha$ by a quantity which is (approximately) an appropriate upper bound on $\alpha^*$.  Next, we show that $2 y + T(y, a)$ can be regarded as a scalar function depending on only a few real parameters, which can be tested exhaustively.

\begin{proposition}
\label{dprop3}
For any $y \in [0,1]$ we have
$$
T(y, \alpha) \leq T(y, a) + o(1)
$$
\end{proposition}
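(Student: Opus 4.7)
The plan is to combine the near-upper-bound on $\alpha^*$ coming from Lemma~\ref{Aalphastarlemma1} with a monotonicity/Lipschitz property of the function $T(y,\cdot)$ at fixed $y$. Under the simplifying assumptions in force (in particular $\beta \leq 1 \leq 3/2$, $\rho \leq 2.1$), Lemma~\ref{Aalphastarlemma1} gives
\[
\alpha \;\leq\; \alpha^* \;\leq\; a + O\!\left(\tfrac{(1+\rho)\log\log n}{\log n}\right) \;=\; a + o(1).
\]
Thus it suffices to show that for fixed $y \in [0,1]$ the map $\alpha \mapsto T(y,\alpha)$ is non-decreasing on the relevant interval $\alpha \in [3/2, O(1)]$, and has derivative bounded by a constant there; the conclusion $T(y,\alpha) \leq T(y, a+o(1)) \leq T(y,a) + o(1)$ then follows.

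For the monotonicity step, since $T(y,\alpha) = \max(\alpha h(y), \alpha \bar h(y/\alpha))$, it suffices to check each argument of the max separately. The first piece $\alpha h(y)$ is trivially increasing in $\alpha$, because direct inspection of the case definition of $h$ in Proposition~\ref{simplecaprop} shows $h(y) \geq 0$ on $[0,1]$ under our assumptions on $\beta, \delta$. For the second piece, set $z = y/\alpha$ and differentiate:
\[
\frac{d}{d\alpha}\bigl( \alpha \bar h(y/\alpha)\bigr) \;=\; \bar h(z) - z\,\bar h'(z).
\]
Plugging in the explicit rational form of $\bar h$ and simplifying, one obtains
\[
\bar h(z) - z\,\bar h'(z) \;=\; \frac{(\delta + 2)\bigl(5 - 2\beta + 2\delta + z + z^2\bigr)}{(3-\beta+\delta)^2},
\]
which is manifestly positive when $\beta \leq 1$, $\delta \geq \beta > 0$, and $z \geq 0$. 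So both pieces are increasing in $\alpha$, and hence so is their maximum.

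To pass from monotonicity to the $o(1)$ loss, I want the slopes in the previous paragraph to be bounded by a constant on the interval $\alpha \in [3/2, a+o(1)]$. Since $a \leq O(1)$ in the regime of the simplifying assumptions, and the quantities $h(y)$, $\bar h(z)$, $\bar h'(z)$ depend only on $\beta, \delta, \rho, y, z$, boundedness of these derivatives reduces to bounding $h(y)$, $\bar h(y/\alpha)$ and $\bar h'(y/\alpha)$ by constants on the interval. For $\bar h$ and $\bar h'$ this is immediate from the explicit rational expressions. For $h(y)$, the slight subtlety is that $h$ contains a factor of $(2+\delta)$, which is only a constant once $\delta$ is itself bounded; this is handled by splitting off the trivial case $\delta \geq \log n$ (in which $T(y,\alpha)$ is already controlled by the crude bound $S_i \leq 2\log(n/i)$ from Proposition~\ref{Aeqn:simple-s-bound}, as done inside the proof of Theorem~\ref{Acutboundthm}) and otherwise noting that $\delta, \beta$ are bounded in all remaining cases of interest.

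The expected main obstacle is not the monotonicity (which is a clean calculation) but the bookkeeping around the parameter regime: one has to be careful to ensure $a - \alpha = o(1)$ uses the \emph{tighter} second bound of Lemma~\ref{Aalphastarlemma1} (valid precisely because $\beta \leq 3/2$ holds under our simplifying assumptions), and to ensure that the $O(1)$-bound on derivatives of $T(y,\cdot)$ holds uniformly in the regime $\beta \in [0,1]$, $\delta \geq \beta > 0$, $\rho \leq 2.1$, rather than degenerating at the boundary.
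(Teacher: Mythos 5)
Your proposal follows essentially the same route as the paper's proof: bound $\alpha \leq a + o(1)$ via Lemma~\ref{Aalphastarlemma1}, then use monotonicity of $T(y,\cdot)$ in its second argument (the paper phrases this as the mechanically verified fact that $x \mapsto x \,\bar h(y/x)$ is increasing, together with $h(y) \geq 0$), plus boundedness of $h,\bar h$ and of their derivatives to absorb the $o(1)$ shift. Two small corrections: the derivative computation should give $\bar h(z) - z\,\bar h'(z) = \frac{(\delta+2)\,(5-2\beta+2\delta+z^2)}{(3-\beta+\delta)^2}$ (no $+z$ term), though positivity---and hence your monotonicity conclusion---is unaffected; and your closing remark that ``$\delta,\beta$ are bounded in all remaining cases of interest'' is not justified, since $\delta$ may grow with $n$ even under the simplifying assumptions. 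No case split on $\delta$ is actually needed: $h(1)=0$ together with the uniform bound $h'(x) \geq -2$ of Proposition~\ref{h-deriv-prop} gives $h(y) \leq 2$ for every $\delta$, which is precisely how the paper's proof keeps the Lipschitz constant of $T(y,\cdot)$ bounded.
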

\begin{proof}
By Lemma~\ref{Aalphastarlemma1}, we have $\alpha \leq \alpha^* \leq a + o(1 + \rho)$. As $\rho = O(1)$, this implies that $\alpha \leq a + o(1)$.

First, suppose that $\alpha = a + t$, where $0 \leq t \leq o(1)$. By Proposition~\ref{h-deriv-prop}, the derivatives of $h, \bar h$ are bounded by constants, and so $\alpha h(y) \leq a h(y) + o(1)$ and similarly $\alpha \bar h(y/\alpha) \leq a \bar h(y/a) + o(1)$. So $T(y,\alpha) \leq T(y,a) + o(1)$ as desired.

Next, suppose that $\alpha \leq a$. Then $\alpha h(y) \leq a h(y)$. Also, it can be mechanically verified that $x \bar h(y/x)$ is an increasing function of $x$; thus $\alpha \bar h(y/\alpha) \leq a \bar h(y/a)$. This implies that $T(y,\alpha) \leq T(y,a)$
\end{proof}

Next, we bound $2 y + T(y,a)$ by using the crude but effective method of using a $\epsilon$-net and essentially trying every value of $\beta, \delta$ exhaustively. 
\begin{proposition}
\label{dprop7}
There is $y \in [0,1]$ with $2 y + T (y,a) \leq 1 + 1.5 \rho + o(1)$.
\end{proposition}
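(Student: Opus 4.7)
The strategy is to exhibit, for each admissible parameter tuple $(\beta,\delta,\rho)$ with $\beta\in[0,1]$, $\delta\geq\max(\beta,\delta_0)$, and $\rho\in[0,2.1]$, an explicit $y=y(\beta,\delta,\rho)\in[0,1]$ for which the claimed inequality holds. The objective
$$
\Phi(y,\beta,\delta,\rho)\;:=\;2y+a\cdot\max\bigl(h(y),\bar h(y/a)\bigr)
$$
is piecewise real-analytic in all four variables: the piecewise definition of $h$ contributes three branches (the $y\geq\beta$ case, the $x<\beta\leq 1$ case, and the $\beta\geq 1$ case), the $\max$ contributes a further two, and $a$ is a rational function of $(\beta,\delta,\rho)$. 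On each cell of the resulting finite partition of $(y,\beta,\delta,\rho)$-space, $\Phi$ is smooth with explicitly bounded derivatives.

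The plan is in three stages. First, I would dispose of the non-compact direction $\delta\to\infty$. In that limit $a\to 1$, and for any fixed $y\in[0,1]$ both $h(y)$ and $\bar h(y/a)$ converge to $2(1-y)$ (the former because $\tfrac{2(2+\delta)}{2-\beta+\delta+y}\to 2$ and $\log\tfrac{3-\beta+\delta}{2-\beta+\delta+y}\to 0$ at the appropriate rate; the latter by direct computation from the definition of $\bar h$). Hence $\Phi(y,\beta,\delta,\rho)\to 2y+2(1-y)=2$, with rate $O(1/\delta)$. So for some computable cutoff $M$, the bound is verified uniformly on $\delta\geq M$ simply by taking, say, $y=0$.

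Second, on the compact region $(\beta,\delta,\rho)\in[0,1]\times[\delta_0,M]\times[0,2.1]$, I would discretize on an $\eta$-grid. Since all partial derivatives of $\Phi$ in $(\beta,\delta,\rho)$ are bounded on this compact set by some computable $L$, if at each grid point $(\beta_0,\delta_0,\rho_0)$ one can exhibit $y_*\in[0,1]$ with $\Phi(y_*,\beta_0,\delta_0,\rho_0)\leq 1+1.5\rho_0-3L\eta$, then the $o(1)$-strength bound propagates to the full $\eta$-cell. At each grid point, $y_*$ is found either by numerical minimization or, in each cell of $(y,\beta,\delta,\rho)$-space, by solving the explicit first-order optimality condition $\partial_y\Phi=0$, which reduces to a rational/logarithmic equation whose solution is computable in closed form. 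The pointwise inequality is then verified using real-quantifier elimination in the style of Mathematica's \texttt{Reduce}, exactly as in the mechanical verifications of Theorem~\ref{Acutboundthm} and Proposition~\ref{Arcathm2}.

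The main obstacle is purely combinatorial bookkeeping rather than conceptual: the piecewise nature of $h$, the two branches of the $\max$, and the separating surfaces $h(y)=\bar h(y/a)$ together slice $(y,\beta,\delta,\rho)$-space into a substantial number of cells, and the correct choice of $y_*$ varies from cell to cell (e.g.\ $y_*=0$ works when $\rho$ is large but not when $\rho$ is small, as one sees near $(\beta,\delta,\rho)=(0,\delta_0,0)$ where the bound is nearly tight). This is precisely what the lead-in to the proposition describes as ``an extremely tedious exercise, based on computer checking of a very large number of cases''; accordingly, I would rely on automated verification on the finite grid rather than attempting a human-legible closed-form argument.
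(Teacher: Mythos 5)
Your plan is essentially the paper's own proof: handle the unbounded-$\delta$ regime separately with an easy argument (the paper takes $y=1$ for $\delta\geq 3/2$ and mechanically verifies the resulting rational inequality, you take a limit $\Phi\to 2$ with an explicit cutoff), then cover the remaining compact box in $(\beta,\delta,\rho)$ by a fine grid, pick an explicit $y$ per cell (the paper balances $h(y)=\bar h(y/a_{\max})$, you solve $\partial_y\Phi=0$), and propagate the pointwise check over each cell by monotonicity/Lipschitz slack, delegating the finitely many cases to a computer exactly as the paper's C-code verification does. One remark: the constant ``$1+1.5\rho$'' in the statement is evidently a typo for ``$3+1.5\rho$'' (compare (\ref{dbnd}) and Proposition~\ref{dprop99}); your own observation that $\Phi\to 2$ as $\delta\to\infty$ already shows the literal bound cannot hold for small $\rho$, and both your argument and the paper's proof in fact establish the corrected bound ($2.95+1.5\rho$ on the compact region, $3+1.5\rho$ for large $\delta$).
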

\begin{proof}
First, suppose that $\delta \geq 3/2$. Then we set $y = 1$ and obtain that
$$
2 y + T(y,a) \leq 2 + a \max(h(1), \bar h(1/a))
$$

As $h(1) = 0$, this implies that $2 y + T (y,a) \leq 2 + a \bar h(1/a)$. This can be written as a rational function of $\beta, \delta, \rho$, and  it can be verified mechanically that $2 + a \bar h(1/a) \leq 3 + 1.5 \rho$.

We next assume that $\beta \in [0,1], \delta \in [\beta, 3/2]$. We will show that in this range we can choose $y$ such that
$$
2 y + T(y,a) \leq 2.95 + 1.2 \rho
$$

We loop over choices of $b = 0, s, 2s ,\dots, 1 - s$ and $d = b-s, b, b+s, b+2d, \dots, 3/2 - s$ and $r = 0, s, 2 s, \dots, 2.1$ where $s = 0.001$. For each such choice $b,d,r$ we choose some $y \in [0,1]$ and compute an upper bound on $2 y + T(y,a)$ for $\beta, \delta, \rho$ in the range $\beta \in [b, b+s], \delta \in [d,d+s], \rho \in [r, r+s]$. We denote this interval by $I_{b,d,r}$. (Note that the inequality $d \geq b - s$ comes from the fact that we must have $\delta \geq \beta$.)

Now, suppose we have fixed some choice of $b,d,y$ and we want to show that in this region 
\begin{equation}
\label{ye1}
2 y + a \max(h(y), \bar h(y/a) ) \leq 2.95 + 1.5 \rho
\end{equation}

It can be mechanically verified that in the interval $I_{b,d,r}$ we have 
$$
a \leq a_{\text{max}} = \frac{ (3 -b + d)^2 }{(2+d)(2 - b + d)} + \frac{ (r+s) (3 - b + d - s)^2}{(2+d)(2 - b + d - s)^2}
$$

As $\bar h$ is an increasing function, a sufficient condition to show (\ref{ye1}) is that
\begin{equation}
\label{ye2}
2 y + a_{\text{max}} \max(h(y), \bar h(y/a_{\text{max}})) \leq 2.95 + 1.5 r
\end{equation}

It is easy to compute using standard numerical techniques an upper bound on the quantity $\max( h(y), \bar h(y/a_{\text{max}}))$ over the interval $I_{b,d,r}$. (This upper bound is not affected by the value of $\rho$) For example, $\bar h(z)$ is an increasing function of $\beta, \delta$. 

Thus, we can compute an upper bound on $2 y + a_{\text{max}} \max(h(y), \bar h(y/a_{\text{max}}))$ in the interval $I_{b,d,r}$; if this upper bound is at most $2.95 + 1.5 r$, then this shows that (\ref{ye2}) and hence (\ref{ye1}) holds.

We wrote C code (omitted here) to loop over all $b, d, r$ and perform the above calculation. This procedure works for any choice of $y$. The method we used was to approximately satisfy the constraint $h(y) = \bar h(y/a_{\text{max}})$ at $\beta = b, \delta = d$. With this choice of $y$, we satisfy (\ref{ye2}) for all intervals.\footnote{Note that the relevant calculations were performed using double-precision floating point (with the default rounding mode), instead of rigorous interval arithmetic. This was done for the sake of easy implementation using standard numerical libraries. However, in fact, we observed that these constraints were both satisfied with significant slack --- we saw that $2 y + a_{\text{max}} \max(h(y), \bar h(y/a_{\text{max}})) - 2.95 - 1.5 r \leq -0.01$. Thus we do not expect that this slight numerical error will affect the validity of the results.}
\end{proof}

Putting these cases together, we  have our desired result:
\begin{proposition}
\label{dprop99}
Suppose $3/2 \leq \alpha \leq \alpha^*$ and $\rho \leq 2.1$ and $\beta \in [0,1]$. Then there is $y \in [0,1]$ such that
$$
2 y + T(y,\alpha) \leq 2.95 + 1.5 \rho + o(1)
$$
\end{proposition}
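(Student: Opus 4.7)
The plan is to deduce Proposition~\ref{dprop99} as an essentially immediate corollary of the two preceding propositions in this appendix, namely Proposition~\ref{dprop3} and Proposition~\ref{dprop7}. Recall that Proposition~\ref{dprop3} converts the bound in terms of the actual parameter $\alpha$ into a bound in terms of the ``canonical'' upper bound
$$
a = \frac{(3 - \beta + \delta)^2(2 - \beta + \delta + \rho)}{(2 - \beta + \delta)^2(2 + \delta)},
$$
which is (up to $o(1)$ terms) the bound on $\alpha^*$ from Lemma~\ref{Aalphastarlemma1}; Proposition~\ref{dprop7} then does the heavy lifting by producing, for each admissible triple $(\beta,\delta,\rho)$, an explicit $y \in [0,1]$ achieving the target inequality with $\alpha$ replaced by $a$.

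Concretely, I would first invoke Proposition~\ref{dprop7} to obtain a value $y^* \in [0,1]$ with
$$
2 y^* + T(y^*, a) \leq 2.95 + 1.5\rho + o(1).
$$
This is valid under our standing assumptions $\beta \in [0,1]$, $\delta \in [\beta, \infty)$, $\rho \leq 2.1$ (the $\delta \geq 3/2$ case being handled separately in that proof via $y = 1$). Then I apply Proposition~\ref{dprop3} with this same choice $y = y^*$, giving
$$
T(y^*, \alpha) \leq T(y^*, a) + o(1).
$$
Adding $2 y^*$ to both sides and chaining with the inequality above yields
$$
2 y^* + T(y^*, \alpha) \leq 2 y^* + T(y^*, a) + o(1) \leq 2.95 + 1.5\rho + o(1),
$$
which is exactly the claim of Proposition~\ref{dprop99}.

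Since the two ingredients are already proved, there is no real obstacle here — the only thing to verify is that our hypotheses ($3/2 \leq \alpha \leq \alpha^*$, $\rho \leq 2.1$, $\beta \in [0,1]$, and implicitly $\delta \geq \beta$ by Proposition~\ref{k-prop}) match the hypotheses of both Proposition~\ref{dprop3} and Proposition~\ref{dprop7}. Proposition~\ref{dprop3} requires precisely that $\alpha \leq \alpha^*$ (so that Lemma~\ref{Aalphastarlemma1} gives $\alpha \leq a + o(1)$) together with $\rho = O(1)$, both of which hold. Proposition~\ref{dprop7} only uses $\beta \in [0,1]$ and $\rho \leq 2.1$, which are assumed. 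The ``hard part'' of this whole enterprise is really contained in Proposition~\ref{dprop7}, whose proof relies on the exhaustive numerical $\epsilon$-net search over $(\beta,\delta,\rho)$; Proposition~\ref{dprop99} itself is merely the clean statement that packages those two propositions into the form actually used in the proof of Theorem~\ref{Arcathm3}.
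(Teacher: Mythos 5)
Your proposal is correct and is essentially the paper's own (implicit) proof: Proposition~\ref{dprop99} is obtained exactly by chaining Proposition~\ref{dprop7} (which supplies a $y\in[0,1]$ with $2y+T(y,a)$ bounded) with Proposition~\ref{dprop3} (which replaces $a$ by $\alpha$ at a cost of $o(1)$), under the same standing hypotheses. The only caveat is that the constant stated in Proposition~\ref{dprop7} ($1+1.5\rho$) is a typo relative to what its proof actually establishes ($3+1.5\rho$ in the $\delta\geq 3/2$ case, $2.95+1.5\rho$ otherwise), and you are right to chain with the bound the proof actually delivers.
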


\section{Properties of the functions $h, \bar h$}
We discuss here some properties of the functions $h(x), \bar h(x)$. We repeat the functions $h, \bar h$ for convenience. \textbf{In this section, we will assume $U(p) < n^{-K}$ for some constant $K > 2$; thus the conclusions of Proposition~\ref{k-prop} hold.}

\begin{align*}
h(x) &= \begin{cases}
2(2+\delta)( \log(3 - \beta + \delta) - \log(2 - \beta + \delta + x) ) & \text{if $x \geq \beta$} \\
2(2+\delta)( \log(3 - \beta + \delta) - \log(2 + \delta) ) + 2 (\beta - x) & \text{if $x \leq \beta \leq 1$} \\
2 (1 - x) & \text{if $\beta \geq 1$}
\end{cases} \\
\bar h(x) &= \frac{(\delta+2)(1 - x)(5 - 2 \beta + 2 \delta + x)}{(3 - \beta + \delta)^2} \\
\end{align*}

\begin{proposition}
\label{hbar-min}
We have $10/9 \leq \bar h(0) \leq 20/9$. Furthermore, if $\beta \leq 3/2$ then $\bar h(0) \leq 2$.
\end{proposition}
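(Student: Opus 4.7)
The plan is a direct single-variable optimization. First I would substitute $x = 0$ into the definition of $\bar h$ to obtain the explicit closed form
$$
\bar h(0) = \frac{(\delta+2)(5 - 2\beta + 2\delta)}{(3 - \beta + \delta)^2},
$$
and then reparametrize using $t = \delta - \beta$, so that $\delta = \beta + t$ and
$$
\bar h(0) = \frac{(\beta + t + 2)(5 + 2t)}{(3+t)^2}.
$$
By Propositions~\ref{Azbarboundcorr} and~\ref{k-prop} the feasible region is $\beta \in [0,2]$, $t \geq 0$ (actually $t \geq \Omega(1)$, but we will not need this). Crucially, for fixed $t$ the expression is \emph{linear} in $\beta$ with positive slope $(5+2t)/(3+t)^2 > 0$, so extrema over $\beta$ occur at the endpoints of the interval.

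For the lower bound $\bar h(0) \geq 10/9$, I would set $\beta = 0$ (the minimizer) and reduce to showing
$$
\ell(t) := \frac{(t+2)(2t+5)}{(t+3)^2} \geq 10/9 \qquad \text{for all } t \geq 0.
$$
A short computation gives $\ell'(t) = \frac{3t+7}{(t+3)^3} > 0$, so $\ell$ is increasing, and $\ell(0) = 10/9$ establishes the bound. For the upper bound $\bar h(0) \leq 20/9$, I would set $\beta = 2$ (the maximizer over $[0,2]$) and reduce to showing
$$
g(t) := \frac{(t+4)(2t+5)}{(t+3)^2} \leq 20/9 \qquad \text{for all } t \geq 0.
$$
Differentiating yields $g'(t) = \frac{-(t+1)}{(t+3)^3} < 0$, so $g$ is decreasing, and $g(0) = 20/9$ gives the bound.

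For the conditional upper bound $\bar h(0) \leq 2$ when $\beta \leq 3/2$, I would again use the monotonicity in $\beta$, this time setting $\beta = 3/2$ as the maximizer over $[0, 3/2]$. This reduces the claim to
$$
k(t) := \frac{(\tfrac{7}{2}+t)(2t+5)}{(t+3)^2} \leq 2 \qquad \text{for all } t \geq 0.
$$
A direct calculation gives $k'(t) = \frac{1}{(t+3)^3} > 0$, so $k$ is increasing, and $\lim_{t\to\infty} k(t) = 2$, which is precisely the claimed (non-strict) bound. There is no real obstacle here: the whole argument rests on noticing the linearity of $\bar h(0)$ in $\beta$, after which each of the three bounds collapses to a one-variable rational inequality verified by a single derivative computation. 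The most ``delicate'' step is just confirming the sign of each $g'$, $\ell'$, $k'$, which is entirely mechanical.
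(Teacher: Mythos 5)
Your proposal is correct, and all three derivative computations check out ($\ell'(t)=\tfrac{3t+7}{(t+3)^3}$, $g'(t)=\tfrac{-(t+1)}{(t+3)^3}$, $k'(t)=\tfrac{1}{(t+3)^3}$, and $k(t)\to 2$ from below). However, it takes a genuinely different route from the paper: the paper's proof of Proposition~\ref{hbar-min} is one line, observing that $\bar h(0)$ is a rational function of $\beta,\delta$ and delegating the inequalities over the region $\delta \geq \beta$, $\beta \in [0,2]$ to a mechanical check (the Tarski/real-closed-fields decision procedure via Mathematica's \emph{Reduce}, as described in the paper's footnote on mechanical verification). Your argument replaces that with a self-contained analytic proof: substitute $t=\delta-\beta$, note that $\bar h(0)=\frac{(\beta+t+2)(5+2t)}{(3+t)^2}$ is linear and increasing in $\beta$ for fixed $t\geq 0$, push $\beta$ to the relevant endpoint ($0$, $2$, or $3/2$), and settle each resulting one-variable rational inequality by a sign computation on the derivative. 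What the paper's approach buys is uniformity --- the same ``verify mechanically'' device is reused throughout (e.g.\ in Theorem~\ref{Acutboundthm}, Proposition~\ref{Arcathm2}, Appendix~\ref{bound-appendix}) without case-by-case calculus --- while your approach buys transparency and independence from software, and also exhibits where the extremes $10/9$, $20/9$, and $2$ actually occur ($(\beta,t)=(0,0)$, $(2,0)$, and $\beta=3/2$, $t\to\infty$, respectively). One small remark: the feasible region actually satisfies $t=\delta-\beta\geq \Omega(1)$ and $\delta\geq\delta_0>0$ by Proposition~\ref{k-prop}, but proving the bounds on the larger closed region $t\geq 0$, as you do (and as the paper's mechanical check also does), only strengthens the statement, so this is harmless.
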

\begin{proof}
$\bar h(0)$ is a rational function of $\beta, \delta$ so this fact can be verified mechanically for $\delta \geq \beta, \beta \in [0,2]$.
\end{proof}

\begin{proposition}
\label{h-deriv-prop}
We have $-2  \leq h'(x) \leq -4/3 - \Omega(1)$
\end{proposition}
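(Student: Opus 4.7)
The plan is to differentiate $h$ directly in each of the three branches of its piecewise definition and verify the two bounds case by case. First, one checks that $h$ is $C^1$ across the junction $x = \beta$ (the one-sided derivatives both equal $-2$), so it suffices to handle each branch on its own. The relevant domain is $x \in [0,1]$, coming from the substitution $x = \log i / \log n$ in the places where $h$ is invoked.

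In the two trivial branches---the branch $\beta \geq 1$ where $h(x) = 2(1-x)$, and the branch $\beta \leq 1$ with $x \leq \beta$ where the first term is $x$-independent and the second contributes $-2$---we immediately have $h'(x) = -2$ identically. Both inequalities hold with room to spare, since $-2 = -4/3 - 2/3$, and the $2/3$ serves as an absolute positive constant for the $\Omega(1)$ slack.

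The substantive branch is $\beta \leq 1$ with $x \in [\beta,1]$, where
$$h'(x) = -\frac{2(2+\delta)}{2-\beta+\delta+x}.$$
The lower bound is immediate: since $x \geq \beta$, the denominator is at least $2+\delta$, so $h'(x) \geq -2$. For the upper bound, using $x \leq 1$ gives $h'(x) \leq -\frac{2(2+\delta)}{3-\beta+\delta}$, so it suffices to show that
$$\frac{2(2+\delta)}{3-\beta+\delta} - \frac{4}{3} \;=\; \frac{2\delta + 4\beta}{3(3-\beta+\delta)}$$
is bounded below by a positive constant.

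The sole mild obstacle is ensuring this $\Omega(1)$ bound is uniform across $\beta \in [0,1]$ and all admissible $\delta \geq \delta_0$; no single estimate covers every $\delta$, but a dichotomy handles it. If $\delta \leq 3$, the denominator is at most $18$ while the numerator is at least $2\delta_0 > 0$, giving ratio at least $\delta_0/9$. If $\delta \geq 3$, the numerator is at least $2\delta$ and the denominator is at most $3(3+\delta) \leq 6\delta$, giving ratio at least $1/3$. In both regimes the quantity is $\Omega(1)$, completing the proof.
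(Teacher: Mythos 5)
Your proof is correct and follows essentially the same route as the paper: differentiate $h$ branch by branch, note $h'=-2$ on the trivial branches, and bound the rational expression $-2(2+\delta)/(2-\beta+\delta+x)$ on the branch $\beta\le x\le 1$. The only cosmetic difference is that the paper gets the upper bound by setting $\beta=0$, $x=1$ and using monotonicity in $\delta$ (giving $h'(x)\le -2+2/(3+\delta_0)$ directly), which avoids your dichotomy on $\delta$; both arguments are valid.
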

\begin{proof}
When $\beta \geq 1$ or $x \leq \beta$ then $h'(x) = -2$ and so this is clear. When
$\beta \leq x \leq 1$ we have
$$
h'(x) = \frac{-2(2+\delta)}{2 - \beta + \delta + x}
$$
As $\beta \leq 2 + o(1)$ and $\delta \geq \delta_0$, for $n$ sufficiently large this is negative. Hence the maximum value occurs at $\beta = 0, x = 1$ yielding
$$
h'(x) \leq -2 + \frac{2}{3 + \delta} \leq -2 + \frac{2}{3 + \delta_0} \leq -4/3 - \Omega(1)
$$

The minimum value of $h'(x)$ occurs at $\beta = x$ yielding $h'(x) \geq 2$.
\end{proof}

\begin{proposition}
\label{hbar-deriv-prop}
We have $-4 + \Omega(1)  \leq \bar h '(x) \leq - \Omega(1)$.
\end{proposition}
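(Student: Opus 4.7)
The plan is to compute $\bar h'(x)$ in closed form and then bound the resulting rational expression using the parameter ranges guaranteed by Proposition~\ref{k-prop}, namely $\delta \geq \delta_0 > 0$, $\beta \in [0,2]$, $\delta \geq \beta$, and $x \in [0,1]$.

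First I would differentiate directly. Writing $\bar h(x) = \frac{(\delta+2)(1-x)(5-2\beta+2\delta+x)}{(3-\beta+\delta)^2}$ and expanding $(1-x)(5-2\beta+2\delta+x)$ as a polynomial in $x$, the derivative with respect to $x$ is
$$\bar h'(x) = -\frac{2(\delta+2)(2-\beta+\delta+x)}{(3-\beta+\delta)^2}.$$
Since $\delta \geq \beta$ gives $2-\beta+\delta \geq 2$, the numerator is strictly positive, so $\bar h'(x) < 0$ as expected.

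For the upper bound $\bar h'(x) \leq -\Omega(1)$, I would substitute $u = \delta - \beta \geq 0$ to rewrite the ratio as $\frac{2(u+\beta+2)(2+u+x)}{(3+u)^2}$, and then use $\beta \geq 0$, $x \geq 0$ to get the lower bound $\frac{2(u+2)^2}{(u+3)^2}$. Since $\frac{(u+2)^2}{(u+3)^2}$ is increasing in $u$ and equals $4/9$ at $u=0$, this ratio is at least $8/9$, so $\bar h'(x) \leq -8/9 = -\Omega(1)$.

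For the lower bound $\bar h'(x) \geq -4 + \Omega(1)$, it suffices to show $\frac{2(\delta+2)(2-\beta+\delta+x)}{(3-\beta+\delta)^2} \leq 4 - \Omega(1)$. In the same $u$ variables, using $\beta \leq 2$ and $x \leq 1$, the numerator is at most $2(u+4)(u+3)$, so the ratio is at most $\frac{2(u+4)}{u+3} = 2 + \frac{2}{u+3} \leq 8/3$ (attained as $u \to 0$). Hence $\bar h'(x) \geq -8/3 = -4 + 4/3$, which gives the claimed $\Omega(1)$ slack.

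The main ``obstacle'' is really just bookkeeping: one has to verify that the worst-case corners of the parameter box ($u=0$, $\beta=2$, $x=1$ for the lower bound, etc.) do not push either bound to the boundary $-4$ or $0$. Because $\delta \geq \beta$ is enforced (so $u \geq 0$ and the denominator cannot be driven below $9$) and $\beta \leq 2$ (so $\beta + 2 \leq 4$), both bounds have a clean constant gap and no appeal to $\delta_0$ is actually needed beyond positivity.
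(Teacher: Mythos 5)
Your computation of $\bar h'(x) = -\tfrac{2(\delta+2)(2-\beta+\delta+x)}{(3-\beta+\delta)^2}$ matches the paper, and your bounds are correct, but your route is genuinely different from the paper's. The paper exploits monotonicity in $x$ (minimum at $x=1$, maximum at $x=0$), then for the lower bound uses only $\beta \leq 2$ together with $\delta \geq \delta_0$, and for the upper bound observes that $\bar h'(0)$ is U-shaped in $\beta$ and checks the endpoints $\beta=0$ and $\beta=2$, again invoking $\delta \geq \delta_0$ at $\beta=2$; consequently its $\Omega(1)$ constants depend on $\delta_0$ and its lower bound degrades toward $-4$ as $\beta \to 2$ with $\delta$ small. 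You instead substitute $u = \delta-\beta \geq 0$, which is legitimate here since the appendix operates under Proposition~\ref{k-prop} (so $\delta \geq \beta$ for $n$ large), and then bound the numerator and denominator monotonically to get the absolute constants $-8/3 \leq \bar h'(x) \leq -8/9$, with the bound $-8/3$ actually attained at $\beta=\delta=2$, $x=1$. What each approach buys: the paper's argument never needs $\delta \geq \beta$, only $\beta \in [0,2]$ and $\delta \geq \delta_0$, whereas yours leans on $\delta \geq \beta$ but returns uniform constants independent of $\delta_0$ and a strictly stronger lower bound; both are valid under the section's standing assumptions, and yours is arguably the cleaner derivation.
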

\begin{proof}
We compute the derivative of $\bar h$ as:
$$
\bar h'(x) = -\frac{2 (\delta +2) (-\beta +\delta +x+2)}{(-\beta +\delta +3)^2}
$$

This is clearly a decreasing function of $x$, so it achieves its minimum value at $x=1$.  There it is equal to
\begin{align*}
\bar h'(1) &= \frac{-2 (2+\delta)}{3 - \beta + \delta} \geq \frac{-2 (2+\delta)}{3 - 2 + \delta} \geq \frac{-2(2+\delta_0)}{1 + \delta_0} \geq -4 + \Omega(1)
\end{align*}

Similarly, $\bar h'(x)$ achieves its maximum value at $x = 0$ where we have
$$
\bar h'(0) = \frac{2 (\delta +2) (\beta -\delta -2)}{(-\beta +\delta +3)^2}
$$

One can see that this is U-shaped as a function of $\beta$. So its maximum value occurs at either $\beta = 0$ or $\beta = 2$. At those points respectively we have
$$
\bar h'(0) = \frac{-2(2+\delta)^2}{(3+\delta)^2} \leq -8/9 
$$
and 
\begin{align*}
\bar h'(0) &= -2 + \frac{2}{(1 + \delta)^2} \leq -2 + \frac{2}{(1 + \delta_0)^2} \leq -\Omega(1)
\end{align*}
\end{proof}

\begin{proposition}
\label{h2-prop}
For $\beta \in [0,1]$ and $x \in [0,1]$ we have $\bar h(x) \leq h(x)$.
\end{proposition}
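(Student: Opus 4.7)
The plan is to split on the piecewise definition of $h$: for $\beta \in [0,1]$ only the first two branches are relevant, so I would handle the regions $\beta \leq x \leq 1$ and $0 \leq x \leq \beta$ separately.

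For the range $\beta \leq x \leq 1$, I would introduce the change of variables $t = (2-\beta+\delta+x)/(3-\beta+\delta)$, which lies in $(0,1]$. Under this substitution, $h(x) = -2(2+\delta)\log t$ directly. For $\bar h$, the key algebraic observation is the identity $5-2\beta+2\delta+x = (3-\beta+\delta) + (2-\beta+\delta+x)$, together with $1-x = (3-\beta+\delta) - (2-\beta+\delta+x)$, so that $(1-x)(5-2\beta+2\delta+x) = (3-\beta+\delta)^2 - (2-\beta+\delta+x)^2$, yielding $\bar h(x) = (2+\delta)(1-t^2)$. The claimed inequality $\bar h \leq h$ then reduces to the elementary fact $1-t^2+2\log t \leq 0$ for $t\in(0,1]$, which follows since this expression vanishes at $t=1$ and has derivative $2(1-t^2)/t \geq 0$ on $(0,1]$.

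For the range $0 \leq x \leq \beta$, I would set $\phi(x) = h(x) - \bar h(x)$ and aim to show $\phi$ is non-increasing on $[0,\beta]$, which reduces matters to the bound $\phi(\beta) \geq 0$ — already established by Case A applied at $x=\beta$. On this interval, $h$ is affine with slope $-2$, while expanding $\bar h$ shows the coefficient of $x^2$ is $-(2+\delta)/(3-\beta+\delta)^2 < 0$, so $\bar h$ is concave and $\phi$ is convex. By convexity $\phi'$ is non-decreasing, so it suffices to verify $\phi'(\beta) \leq 0$. A direct derivative computation gives $\phi'(\beta) = -2 + 2(2+\delta)^2/(3-\beta+\delta)^2$, and this is $\leq 0$ iff $(2+\delta)^2 \leq (3-\beta+\delta)^2$, i.e., iff $\beta \leq 1$ — precisely the hypothesis.

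The main subtlety is spotting the algebraic identity $5-2\beta+2\delta+x = (3-\beta+\delta)+(2-\beta+\delta+x)$ in Case A, which converts $\bar h$ into a clean difference of squares and makes the reduction to the one-variable inequality $1-t^2+2\log t \leq 0$ transparent. Everything else — the calculus fact in Case A and the convexity/monotonicity argument in Case B — is routine; in particular, the crucial inequality $\phi'(\beta) \leq 0$ turns out to be equivalent to the hypothesis $\beta \leq 1$, which is a reassuring indication that this is the right decomposition.
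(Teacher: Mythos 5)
Your proof is correct, and it takes a genuinely different route from the paper's. The paper notes that $h(1)=\bar h(1)=0$ and then shows $\bar h'(x)\geq h'(x)$ on all of $[0,1]$, but it delegates both of the resulting rational-function inequalities to mechanical verification (the Mathematica/Tarski ``Reduce'' step it uses throughout). You instead prove the inequality directly and by hand: on $[\beta,1]$ the substitution $t=(2-\beta+\delta+x)/(3-\beta+\delta)$ together with the difference-of-squares identity turns the claim into the one-variable fact $1-t^2+2\log t\leq 0$ on $(0,1]$, and on $[0,\beta]$ you exploit that $\phi=h-\bar h$ is convex (since $h$ is affine there and $\bar h$ is concave), so it suffices to check $\phi'(\beta)\leq 0$, which reduces exactly to $(2+\delta)^2\leq(3-\beta+\delta)^2$, i.e.\ $\beta\leq 1$, and then $\phi(x)\geq\phi(\beta)\geq 0$ by the first case. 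This buys a fully self-contained, computer-free argument that also makes the role of the hypothesis $\beta\leq 1$ transparent; it is worth noting that your Case B is in substance the same as the paper's observation that $\bar h'\geq -2=h'$ for $x\leq\beta$, since $\bar h'$ is decreasing in $x$ and so attains its minimum on $[0,\beta]$ at the endpoint $x=\beta$ you check. The paper's derivative comparison is shorter to state and covers both regimes uniformly, but at the cost of opaque mechanical verification.
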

\begin{proof}
Observe that $h(1) = \bar h(1) = 0$. So it suffices to show that $\bar h'(x)  \geq h'(x)$ for $x \in [0,1]$. 
We have that $\bar h'(x) = \frac{-2 (\delta +2) (-\beta +\delta +x+2)}{(-\beta +\delta +3)^2}$. This is a rational function, and so it can be mechanically verified that $\bar h'(x) \geq -2$ for $x \in [0,1]$ and $\beta, \delta \in [0,1]$. As $h'(x) = -2$ when $x \leq \beta$, this automatically implies that $\bar h'(x) \geq h'(x)$ for $x \leq \beta$. 

When $x \geq \beta$, we have that $h'(x) = \frac{-2(2+\delta)}{2 - \beta + \delta + x}$. So to show that $\bar h'(x) \geq h'(x)$, we must show that
$$
\frac{-2 (\delta +2) (-\beta +\delta +x+2)}{(-\beta +\delta +3)^2} \geq \frac{-2(2+\delta)}{2 - \beta + \delta + x} \geq 0
$$

Again, this is a rational function, so it can be mechanically verified that this holds in the indicated range.
\end{proof}

 \end{document}